\newcommand{\am}[1]{{[\color{green}#1}]}
\newcommand{\prk}[1]{{[\color{red}PRK: #1}]}
\DeclareMathOperator{\Tr}{Tr}
\newcommand{\bE}{\mathbb{E}}
\newcommand{\bZ}{\mathbb{Z}}
\newcommand{\mbf}{\mathbf}
\newcommand{\cS}{\mathcal{S}}
\newcommand{\bN}{\mathbb{N}}
\newcommand{\ust}{^{\star}}
\newcommand{\ut}{^{(t)}}
\newcommand{\bR}{\mathbb{R}}
\newcommand{\te}{\theta}
\newcommand{\cI}{\mathcal{I}}
\newcommand{\bP}{\mathbb{P}}
\newcommand{\cE}{\mathcal{E}}
\newcommand{\cG}{\mathcal{G}}
\newcommand{\ui}{^{(\mathcal{I})}}
\newcommand{\cj}{\mathcal{J}}
\newcommand{\cT}{\mathcal{T}}
\newcommand{\pj}{\mbox{proj}}
\newcommand{\eps}{\epsilon}
\newcommand{\lm}{\lambda}
\newcommand{\cR}{\mathcal{R}}
\newcommand{\cF}{\mathcal{F}}
\newcommand{\Te}{\Theta}
\newcommand{\nal}[1]{\begin{align*}#1\end{align*}}
\newcommand{\al}[1]{\begin{align}#1\end{align}}
\newtheorem{assumption}{\textbf{Assumption}}
\newtheorem{definition}{Definition}[section]
\newtheorem{theorem}{Theorem}[section]
\newtheorem{proposition}{Proposition}[section]
\newtheorem{lemma}[theorem]{Lemma}
\newtheorem{remark}{Remark}
\newcommand{\norm}[1]{\left\lVert#1\right\rVert}
\DeclarePairedDelimiter\autobracket{(}{)}
\newcommand{\br}[1]{\autobracket*{#1}}
\title{Finite Time Regret Bounds for Minimum Variance Control of Autoregressive Systems with Exogenous Inputs}
\author{%
   Rahul Singh \\
Indian Institute of Science \\
   Bengaluru, Karnataka, India \\
   \texttt{rahulsingh@iisc.ac.in} \\
   \And
  Akshay Mete \\
  Texas A \& M University\\
  College Station, Texas, USA\\
  \texttt{akshaymete@tamu.edu} \\
  \AND
   Avik Kar \\
Indian Institute of Science \\
   Bengaluru, Karnataka, India \\
   \texttt{avikkar@iisc.ac.in} \\
   \And
  P. R. Kumar\\
  Texas A \& M University\\
  College Station, Texas, USA\\
  \texttt{prk@tamu.edu} \\
}
\begin{document}
\maketitle
	
\begin{abstract}
Minimum variance controllers have been employed in a wide-range of industrial applications. 
A key challenge experienced by many adaptive controllers is their poor empirical performance in the initial stages of learning \cite{lale22, mete2022augmented}. In 
this paper, we address the problem of initializing them so that they provide acceptable transients, and also provide an accompanying finite-time regret analysis, for adaptive minimum variance control of an auto-regressive system with exogenous inputs (ARX). Following \cite{lai1987asymptotically}, we consider a modified version of the Certainty Equivalence (CE) adaptive controller, which we call PIECE, that utilizes probing inputs for exploration. We show that it has a $C \log T$ bound on the regret after $T$ time-steps for bounded noise, and $C\log^2 T$ in the case of sub-Gaussian noise. 
The simulation results demonstrate the advantage of PIECE over the algorithm proposed in \cite{lai1987asymptotically} as well as the standard Certainty Equivalence controller especially in the initial learning phase.
To the best of our knowledge, this is the first work that provides finite-time regret bounds for an adaptive minimum variance controller. 

\end{abstract}

\section{Introduction}
Adaptive control theory focuses on developing controllers for systems with unknown models.
Due to its extensive industrial applications, adaptive control of linear systems is 
one of the most exhaustively studied problems in control theory.  Traditionally, the analysis of adaptive controllers has focused on the holy grail of three asymptotic properties: stability, i.e., is the overall nonlinear system formed by the linear system in feedback with the nonlinear adaptive controller is stable or bounded in some appropriate sense; self-optimality, i.e., does the adaptive controller result in a long-term average performance that is optimal; and self-tuning, i.e., do the parameters of the adaptive control law converge to those of an optimal control law for the system. With the advent of reinforcement learning, there has been a renewed focus on the topic of adaptive control of linear systems in recent years \cite{simchowitz2020naive,dean2019safely}. 
Although these recent works consider the same problem, they focus on the finite-time analysis and sample complexity of such adaptive controllers instead of only asymptotic properties. The recent works on adaptive linear systems have majorly focused on the popular LQG problem \cite{abbasi2011regret,abeille_17,faradonbeh_rce,faradonbeh2018input,dean2018regret,mete2022augmented,lale22,simchowitz2020naive,jedra2022minimal}. In this paper, we revisit another popular controller called the minimum variance (MV) controller.

Consider an auto-regressive linear system with exogenous inputs (ARX system),
\al{
y_t= a_1 y_{t-1} + a_2 y_{t-2} + \ldots +a_p y_{t-p} +b_1 u_{t-1} + b_2 u_{t-2}+\cdots + b_q u_{t-q}+ w_t, ~ \forall~ t \in \bN \label{def:arx}
}
where $y_t, u_t$ are the output and input at time $t$ respectively, and $w_t$ is the system noise at time $t$ that is i.i.d.~of mean 0 and variance $\sigma^2$. The MV controller has as its goal the minimization of the variance of the output, and is given by~\cite{aastrom2012introduction},
$$
u_t = -(1/b_1)\br{a_1 y_{t} + a_2 y_{t-1} + \ldots +a_p y_{t-p+1} + b_2 u_{t-1}+\cdots + b_q u_{t-q+1}}.\label{MV control law}
$$ 
The development of the MV controller has its origins in the paper-making industry \cite{astrom1967computer} where a minimum paper thickness has to be guaranteed to customers, and reduction of the variance of the thickness allows one to set a much lower ``mean plus three standard deviations" set-point and thereby save considerably on the paper pulp. 


If the system parameter vector
$\te\ust:=\br{a_1,a_2,\ldots,a_p,b_1,b_2,\ldots,b_q}'$
is not known, 
the ``self-tuning regulator'' approach was proposed in \cite{aastrom1973self}.
It has had a wide-range of industrial applications \cite{astrom1987adaptive,dumont1982self,seborg1986adaptive,sastry1977self,bongtsson1984experiences,kallstrom1979adaptive}. 
Based on the Certainty Equivalence (CE) approach, it replaces the unknown parameter $\te\ust$ in (\ref{MV control law}) by ${\theta}$ obtained as a least-squares estimate (LSE) or from a stochastic gradient estimator.
The adaptive control law with the stochastic gradient estimator was shown to be stable and self-optimizing in \cite{goodwin1981discrete}, and self-tuning in \cite{becker1985adaptive,kumar1987self}.
For LSEs, the stability and self-optimality, 
and self-tuning
under a modification to the adaptive control law featuring an additional diminishing excitation, were established in \cite{guo1991astrom}. The robustness of a minimum variance controller to modeling assumptions, e.g., perturbations in the space of transfer functions, which lead to infinite dimensional systems, was shown in \cite{praly1989robust}.

The finer issue of asymptotic regret, $\cR_T :=\sum_{t=1}^T(y_t-w_t)^2$, that has since become mainstream in reinforcement learning (RL), was examined in 
\cite{lai1986asymptotically,lai1987asymptotically} by introducing ``exploration episodes" where the algorithm, which we refer to as LW, utilizes probing inputs (in contrast to the diminishing excitation approach of \cite{guo1991astrom}).
Asymptotically logarithmic regret was established by showing that
\nal{\limsup_{T \to \infty} \frac{R_T}{\log T} = \sigma^2(p+q-1). \label{asymptotically-optimal-regret}}
It was also shown to be asymptotically optimal with the RHS a lower bound on achievable regret.

\subsection{Our contributions}
Recently there has been great interest in finite time analysis of adaptive controllers in the reinforcement learning literature \cite{abbasi2011regret,abeille_17,faradonbeh_rce,faradonbeh2018input,dean2018regret,mete2022augmented,lale22,simchowitz2020naive,jedra2022minimal}.
Our central concern in this paper is on improving the transient performance of the system and establishing finite-time bounds. 
Surprisingly, the finite time analysis of learning algorithms for a MV controller has been an open problem. 
In this paper, we take the first step in that direction and design an adaptive MV controller with a finite time regret bound, which also appears to give good empirical performance in the initial stage of learning.

A key challenge experienced by many adaptive controllers is their poor empirical performance especially at the initial stages of learning \cite{lale22, mete2022augmented}.
For improving the transient performance, it is important to properly adapt the system in the initial phase. Otherwise, the states of the system can reach arbitrarily high values before settling down to what is predicted by the asymptotic theory. 
This is even more exacerbated for the MV-CE control law which can be written as
$u_t = \lm_t' \psi_t$, 
where
\al{
\lm_t := (-1\slash b_{1,t}) \left(a_{1,t},\ldots,a_{p,t},{b}_{2,t},\ldots,{b}_{q,t} \right)',  \psi_t := \br{y_t,\ldots,y_{t-p+1},u_{t-1},\ldots, u_{t-q+1} }'
}
Since ${\lambda}_t$ involves a division by ${b}_{1,t}$ 
it is susceptible to large errors even for modest values of estimation error of ${b}_t$, which in-turn leads to high regret especially during the initial time steps. To overcome this shortcoming, we propose a modification that clips the inputs suggested by the CE rule to a compact set $[-B_u,B_u]$. The value of threshold $B_u$ is chosen based upon a fine-grained analysis of learning regret, and utilizes knowledge of $\Theta$, a compact set in which the true parameter $\te\ust$ is known to reside. To establish finite-time regret bounds, as well as improve initial performance, we 
also add exploratory episodes where noise is injected to gather information about $\theta^\star$.
The resulting algorithm, which we call PIECE, is a simplified and optimized version of~\cite{lai1987asymptotically}.

It is computationally efficient since it maintains an estimate of the optimal gain and LSEs of the unknown ARX parameters, both of which can be updated recursively and hence require $O(1)$ computation at each time step. 

The empirical results demonstrate that PIECE does not suffer a large regret at the beginning of the experiments, unlike LW and the CE controller. This highlights the benefit of the improved exploration strategy proposed here compared to LW.
Our empirical results also show that the resulting algorithm has much lower empirical regret compared to LW or the standard CE controller.
We also establish (see Theorems \ref{thm:regret_bounded} and \ref{th:estimation_error}) that when the process noise $\{w_t\}$ is bounded,
PIECE enjoys a finite-time regret less than $C \sigma^2 (p+q-1) \log T $ after $T$ steps, where $C \approx 1$.
This closely matches the asymptotically optimal regret of  (\ref{asymptotically-optimal-regret}).
If it is conditionally sub-Gaussian the bound is $C\sigma^2 (p+q-1)$ with $C \approx 1$.

Central to the finite-time performance results of the proposed algorithm is the fine analysis of the growth-rate of the minimum eigenvalue of the covariance matrix associated with LSE that is fed samples collected by the learning algorithm. We show that the minimum eigenvalue grows as $A\sqrt{N\ui_t}$, where $N\ui_t$ is the number of exploratory steps until $t$. The pre-factor $A$ is a function of system parameters, and (i) increases with the value of variance of the process noise, (ii) decreases with $a_1$, (iii) decreases as the stability margin $1-\rho$ (defined in the sequel) approaches 0. 
We also quantify the time (which unlike \cite{lai1987asymptotically} does not depend upon the sample path) after which the bound is guaranteed to hold. All of these improvements are visible when one compares the empirical performance of LW with PIECE. PIECE is seen to outperform LW and CE by a huge margin, as shown in the simulations Section \ref{sec:simulations}.

\subsection{Related Work}
Finite-time analysis of adaptive controllers for linear system was initiated by \cite{abbasi2011regret}.  They proposed an algorithm based on the OFU principle and showed that it has a regret bound of $C\sqrt{T}$ for LQG systems.
Since then, a variety of learning algorithms have been proposed for adaptive control of LQG systems. These algorithms can be categorized into three categories: (i) Algorithms based on the optimism under uncertainty (OFU) principle which include OFULQ \cite{abbasi2011regret}, StabL \cite{lale22},  OSLO \cite{cohen2019learning} and ARBMLE \cite{mete2022augmented}. (ii) Modified versions of CE algorithms (Self-tuning regulators)  which include CEC($\mathcal{T}$) \cite{jedra2022minimal}, CECCE \cite{simchowitz2020naive,mania2019certainty}, RCE \cite{faradonbeh_rce}, and IP \cite{faradonbeh2018input}. All of these algorithms are shown to achieve $C\sqrt{T}$ regret. (iii)  Another category is Thompson Sampling based algorithms which include  TS \cite{abeille_17}.

A lower bound of $C\sqrt{T}$ was shown for the LQG setting in \cite{simchowitz2020naive} for any algorithm. Notably, CEC$(\mathcal{T})$ is shown to have regret bound that matches the lower bound given in \cite{simchowitz2020naive} (under certain conditions on dimension of the system). In the scenario where either the $A$ or $B$ matrix is known, \cite{jedra2022minimal,cassel2020logarithmic} showed that it is possible to achieve logarithmic regret. OFU \cite{abbasi2011regret} lacked a efficient implementation. First efficient OFU based algorithm was given by \cite{cohen2019learning}. An exhaustive discussion of these recent works on the LQG setting and their comparison can be found in \cite{jedra2022minimal}.

\subsection{Organization of the Paper}
We describe the ARX system model and notation in Section \ref{sec:system_model}. In Section \ref{sec:algo}, we describe the adaptive MV control algorithm called PIECE. The main technical results on regret analysis are provided in Section \ref{sec:regret_main}. Section \ref{sec:simulations} includes the results of the simulation experiments for various ARX systems. We conclude with a brief discussion on open problems in Section \ref{sec:discussion}. All proofs and details on experiments are provided in the appendices.
\subsection{Notation}
Let $\det(M)$,$\Tr(M)$ and $\|M\|$  denote the determinant, trace, and operator norm induced by the Euclidean norm, respectively, of a matrix $M$. 
For a vector $x$, let $x'$ be its transpose,
and $\|x\|$ its Euclidean norm.
For two numbers $x\wedge y$ denotes their maximum and $x\vee y$ their minimum.~We use the abbreviations ``w.h.p.'' to denote ``with high probability.'' Throughout, to keep notation simple, we use $\lesssim$ and $\gtrsim$ in order to hide problem dependent constants.~All the vectors will be column vectors.~For a vector $x$ and a vector space $S$, we let $\pj\br{x,S}$ be the projection of $x$ onto $S$. $\bN$ denotes the set of natural numbers, $\bZ$ the set of integers and $\bZ_+$ the set of positive integers.

 \section{System Model}\label{sec:system_model}

The ARX model (\ref{def:arx}) can be written as:
\al{
y_t = \phi'_{t-1} \te\ust + w_t}
where,
$\te\ust:=\br{a_1,a_2,\ldots,a_p,b_1,b_2,\ldots,b_q}' \text{ and }\phi_{s-1}:= \br{y_{s-1},y_{s-2},\ldots,y_{s-p}, u_{s-1},\ldots,u_{t-q} }.$

We make the following assumption regarding the unknown linear system \eqref{def:arx}. 
\begin{assumption}\label{assum:unit_circle}
	The parameter $\te\ust$ associated with the ARX model~\eqref{def:arx} satisfies the following condition: the polynomials $s^p - a_1 s^{p-1} - a_2 s^{p-2}-\ldots-\alpha_p$ and $b_1 s^{q-1}+b_2 s^{q-2} +\ldots + b_q $ have all zeros inside the open unit disk. Moreover, $b_1 \neq 0$.
\end{assumption}
The latter minimum phase assumption is necessary for internal stability of the MV control law \cite{kumar1986stochastic}.
The former assumption can be replaced by the assumption that a stabilizing linear control law is known, and is also used for regret analysis in \cite{lai1986asymptotically,lai1987asymptotically}.

Define the vectors $Y_t := \br{y_t,y_{t-1},\ldots,y_{t-p+1}}'$ and $U_t := \br{u_t,u_{t-1},\ldots,u_{t-q+2}}'$, and the matrices
$
	A:= 
	\begin{pmatrix}
		a_1 & \cdots & a_{p-1} & a_p \\
		I_{p-1} &        &     & 0 
	\end{pmatrix}\label{def:A}
	, \mbox{ and }
$
$
	B :=  
	\begin{pmatrix}
		-b_2 \slash b_1 & \cdots & -b_q \slash b_1 \\
		I_{q-2} &   &  0
	\end{pmatrix},
$
where $I_{p-1},I_{q-2}$ are identity matrices of sizes $p-1$ and $q-2$ respectively. We have $\|A^{n}\|\le C_1 \rho^n,\|B^{n}\|\le C_1 \rho^n$, where $\rho<1$ can be taken to be any number greater than the spectral radii of $A$ and $B$. When we want to depict the dependence of these quantities upon the coefficients in a parameter vector $\te$, we use $\rho(\te),A(\te),B(\te),C_1(\te)$. For objects pertaining to $\te\ust$ (the true parameter), we suppress this dependence and simply write $\rho,A,B,C_1$, etc. 


\subsection*{Regret of an adaptive Minimum Variance Controller }
We will design algorithms which generate $\{u_t\}$ for the case when the system parameter $\te\ust$ is unknown. An adaptive control algorithm or a learning algorithm $\mathcal{A}$ 
is a sequence of measurable functions that at each time $t$ maps the observation history to control. Noting that the MV controller results in $y_t \equiv w_t$, we judge the performance of an algorithm $\mathcal{A}$ by its cumulative learning regret, 
\al{
	\cR_T(\mathcal{A}) := \sum_{t=1}^{T} (y_t-w_t)^{2}.\label{def:cr}
}


\begin{assumption}\label{assum:noise_var}
The noise $\{w_t\}$ is assumed to be a martingale difference sequence with respect to  filtration $\{\cF_t\}$, with conditional variance bounded away from 0, i.e.,
	\al{
		\inf_t \bE\br{w^2_t|\cF_{t-1}} >c_1>0,~\mbox{ a.s.}.~\label{cond:3}
	}
Also,
\al{
\bE\br{w^2_t|\cF_{t-1}} \le \sigma^2,~\mbox{ a.s. },~\forall t.
}
\end{assumption}


 Within this setup we consider two possibilities, either bounded or sub-Gaussian noise: 
\begin{assumption}[Bounded noise]\label{assum:bounded_noise}
	$\{w_t\}$ is uniformly bounded a.s., i.e.,
	\al{
		|w_t| \le B_w, a.s. ~\forall t.
	}
\end{assumption} 
We present the main result on regret and a proof sketch under Assumption~\ref{assum:bounded_noise} in Section~\ref{sec:bounded_noise}. This is then relaxed in Section~\ref{sec:unbounded} to allow for unbounded noise:
\begin{assumption}[Conditionally sub-Gaussian noise]\label{assum:sub_gaussian} 
For all $\gamma \in\bR$, and $\sigma$,
	\al{
		\sup_{t} \bE\left\{\exp \br{\gamma |w_t |} \Big| \cF_{t-1} \right\} \le \exp\br{\gamma^2\sigma^2 \slash 2}, a.s.   ~\forall t. \label{def:sub_gauss}
	}
\end{assumption}

We also assume the following prior information about the unknown system, as in \cite{lai1987asymptotically}:
\begin{assumption}\label{assum:Theta}
	The learning algorithm has knowledge of a compact set $\Theta \subset \bR^{p+q}$ that contains the true parameter value $\te\ust$.
\end{assumption}

\section{PIECE: An adaptive minimum variance control Algorithm} \label{sec:algo}
The PIECE algorithm is presented in~Algorithm~\ref{algo}.~It divides the total operation time into two parts: (i) Exploration: This consists of a sequence of intervals during which white noise, by which is meant i.i.d., mean 0 and constant variance noise, is used as the control input to ensure sufficient excitation of the system, which in-turn yields consistent estimates, and (ii) Exploitation: The rest of the time, where a standard CE controller is applied by generating controls that are optimal under the assumption that the least squares estimates are equal to the true parameter values. Though this structure is inspired by the algorithm of~\cite{lai1987asymptotically} (henceforth dubbed LW), we will highlight in the sequel some major differences which allow for finite-time regret analysis and much better transient performance. 

\textbf{\emph{Exploration}}: The set of exploration time instants is denoted by $\cI$. For $t\in \cI$, $u_t$ is an i.i.d.~mean 0 sequence that is independent of $\{w_t\}$,  and bounded:
\al{
	|u_t|\le B_w,~a.s.~t\in\cI.\label{def:bounded_input}
}
The reason why we clip inputs at $B_w$ is that during this phase the algorithm is essentially open-loop. Consequently, it behaves ``conservatively'' and avoids using inputs of large magnitudes.
Let $N\ui(t)$ to denote the number of exploratory steps until $t$. $\cI$ is composed of multiple episodes, each comprising of a set of consecutive time steps. For $i=1,2,\ldots,$ the $i$-th such exploratory episode begins at time $n_i$, and ends at time $n_i + m_i$. The first episode begins at time $t=1$, i.e., $n_1 = 1$, and lasts until the following stopping-time,
\al{
n_1 + m_1 := \max \left\{ \tau , \inf \{t: b_{1,t} \neq 0 \}, H_1(\Theta,\eps)   \right\},
}
where $\tau := \inf \left\{ t :  \sum_{s=1}^{t}\phi_s \phi'_s \mbox{ is invertible} \right\}$, $b_{1,t}$ denotes the estimate of $b_1$ generated at time $t$, and $H_1(\Theta,\eps)$ depends upon the model parameter set $\Theta$ and $\eps$ is a parameter choice that decides the length of the first exploratory phase in the algorithm, and is detailed in the Appendix. Its affect on the regret is shown in Theorem~\ref{thm:regret_bounded}.
The first exploratory phase serves as a special ``warm-up'' phase, and is of longer duration than the remaining ones. It arises naturally out of the regret analysis, with sufficient exploration in the first few time-steps allowing us to bound the regret as $C\log T$. For the remaining episodes, $i=2,3,\ldots$,
\al{
n_i = \exp(i^2),~m_i = H, \label{def:n_i} \text{ where } H  = \Big\lceil  m\ust + \log_{\rho}\br{\frac{1}{3C_1 q}}  \Big\rceil, 
}
and $m\ust =  \Bigg\lceil\frac{1}{\log \rho}\log \br{\frac{B_w}{\br{\sup_{\te\in\Theta}C_1(\te) \br{\| Y_{0}\| C_1(\te)  +B_u C_1(\te)\left\{1+  \sum_{\ell=1}^{q} |b_\ell(\te)|   \right\}} }}} \Bigg\rceil$, where $B_u$ is defined below.

\textbf{\emph{Estimates}}:~Let $\te^{(\cI)}_t$ be the least-squares estimate (LSE) of $\te\ust$ based upon only the samples in $\cI$, 
\al{
\te^{(\cI)}_t :=  \br{\sum_{s\le t,s\in\cI} \phi_s \phi'_s }^{-1}\left(\sum_{s\le t,s\in\cI} \phi_s y_{s+1}\right).\label{def:lse_I}
}
Since the estimate of $b_1$ in $\te^{(\cI)}_t$ might be $0$, we modify it slightly as follows so that the resulting estimate can be used for estimating $\lm$,
\al{
	\tilde{\te}^{(\cI)}_t :=
	\begin{cases}
		\te^{(\cI)}_t \mbox{ if } b^{\cI}_{1,t} \neq 0,\\
		\tilde{\te}^{(\cI)}_{t-1} \mbox{ otherwise }
	\end{cases}\label{eq:theta_estimate}
	.
}
Let $\lm := -\frac{1}{b_1}\br{a_1, a_2,\ldots,a_p,b_2,\ldots,b_q}'$.~Since $\tilde{b}^{(\cI)}_{1,t} \neq 0$, we use it to estimate $\lm$ as follows,
\al{
	\tilde{\lm}^{(\cI)}_t := -\frac{1}{\tilde{b}^{(\cI)}_{1,t} } \br{\tilde{a}^{(\cI)}_{1,t},\tilde{a}^{(\cI)}_{2,t},\ldots,\tilde{b}^{(\cI)}_{2,t},\ldots,\tilde{b}^{(\cI)}_{q,t}}.
}
Even though we later show $\tilde{\te}^{(\cI)}_t ,\tilde{\lm}^{(\cI)}_t$ to be consistent, they need not be efficient since they use only a small fraction of the total available samples. Hence, while generating $\{u_t\}$, for most of the time we will directly estimate the parameter $\lm$ using all the available samples by the following recursive estimator, 
\al{
	\lm_t = \lm_{t-1} +  P_t \psi_t\br{u_t -  \tilde{b}^{(\cI)}_{1,t-1} y_{t+1}  - \lm'_{t-1} \psi_t },\label{def:rec_lambda}
}
where $P_t$ is obtained recursively as follows,
\al{
	P^{-1}_t = P^{-1}_{t-1} + \psi_t \psi'_t.
}


\textbf{\emph{Exploitation}:} Re-write the system equation~\eqref{def:arx} as follows,
\al{
	y_{t+1}= b_1 \br{ u_t - \lm'\psi_t  }+ w_{t+1}, t=1,2,\ldots.\label{def:re_arx}
}
The inputs are chosen according to the certainty equivalence (CE) rule, i.e., we assume $\lm_{t-1}$ is the true value of the optimal gain which yields minimum variance.~More specifically, for times $t\not\in \cI$, we have,\footnote{We note that a similar control law $u_t = k_l \vee \frac{\hat{a}_t}{\hat{b}_t}\wedge k_u$ was proposed in econometrics 
\cite{anderson1976some} for the simple model $y_{t+1}=a+bu_t+w_{t+1}$.}
\al{
	u_t = \br{-B_u} \vee z_t  \wedge \br{B_u},
	\label{def:control_rule}
}
where,
\al{ 
z_t := 
\begin{cases}
	\lm'_{t-1} \psi_t ~\mbox{ if } \Big| \lm'_{t-1} \psi_t - \tilde{\lm}^{(\cI)}_{t-1} \psi_{t} \Big| \le B_2 \times \frac{\log N^{(\cI)}_t}{\sqrt{N^{(\cI)}_t}}\|\psi_{t}\|, \\
	\tilde{\lm}^{(\cI)}_{t-1} \psi_{t} \mbox{ otherwise }
\end{cases},
\label{def:_z_t}
}
where $\vee,\wedge$ denote maximum and minimum operators respectively, and the parameter $B_2>0$ is user-specified.
Note that $\tilde{\lm}^{(\cI)}_t$ is used to provide ``diagnostic checks'' on $\lm_{t-1}$, i.e., in the event that the inputs prescribed by $\tilde{\lm}^{(\cI)}_t$ and $\lm_{t-1}$ differ significantly, the algorithm detects that the input prescribed by $\lm_{t-1}$ is ``bad'' and  falls back on the estimate $\tilde{\lm}^{(\cI)}_{t-1}$. 

\textbf{\emph{Clipping Inputs:}} Let $M(\Theta) := \br{1+ \sup_{\te\in\Theta} \frac{C_1(\te)}{1-\rho(\te)} \left\{1+  \sum_{\ell=1}^{q} |b_\ell(\te)|  \right\}}$. The clipping threshold $B_u$~in \eqref{def:control_rule} is given by $B_u = \frac{B_w}{\delta^2_1}\br{1+ M(\Theta)}$, where $\delta_1$ is any constant that satisfies the following inequalities,
\al{
\delta_1 &\le \frac{1}{(p+q) \sup_{\te\in\Theta} \br{1+ \|\lm(\te)\| } },	\br{\sup_{\te\in\Theta}\|\lm(\te)\|+1} \left[M(\Theta)+q\right] \delta_1 \le 1\\
~\delta_1  &\le \frac{1}{3}  \inf_{\te\in\Theta}\left[\frac{b_1(\te)\br{1-\rho(\te)}}{C_1(\te)}\right]
\left[   \frac{\delta^2_1}{2B_w M(\Theta)} +\sup_{\te\in\Theta} \sum_{\ell=1}^{p}| a_\ell(\te) | \right]^{-1}.
}
To see why a solution exists, we note that the first two inequalities admit solution set trivially, while in the third case, the l.h.s. is a monotone decreasing function with value $0$ for $\delta_1=0$, while the r.h.s. is increasing and has a positive value for $\delta_1=0$.
\begin{algorithm}[ht]
   \caption{Probing Inputs for Exploration in Certainty Equivalence (PIECE)}
   \label{algo}
\begin{algorithmic}
   \STATE {\bfseries Input} The exploration set~$\cI$, $B_2>0$.
    \IF{$t \in \cI$} 
  \STATE Generate an exploratory white noise input $u_t$ such that $|u_t|\le B_w$ 
 and has mean $0$.
    \ELSE 
    \STATE  Compute the estimates $\tilde{\theta}^{(\cI)}_{t-1}$, $\tilde{\lm}^{(\cI)}_{t-1}$  and $\lm_t$ as defined in \eqref{eq:theta_estimate}.
    \STATE
    \nal{
	u_t = 
	\begin{cases}
		\br{-B_u} \vee \br{\lm'_{t-1} \psi_t} \wedge \br{B_u} ~\mbox{ if } \Big| \lm'_{t-1} \psi_t - \br{\tilde{\lm}^{(\cI)}_{t-1} }'\psi_{t} \Big| \le B_2 \times \frac{\log N^{(\cI)}_t}{\sqrt{N^{(\cI)}_t}}\|\psi_t\|, \\
		\br{-B_u} \vee \br{  \br{\tilde{\lm}^{(\cI)}_{t-1} }' \psi_{t}} \wedge \br{B_u} ~\mbox{ otherwise }.
	\end{cases}
}\ENDIF
\end{algorithmic}
\end{algorithm}
\section{Regret Analysis}\label{sec:regret_main}
We now state our key results which quantify (i) an upper-bound on regret, and (ii) the estimation error $\|\te\ust - \te_t\|$ of the PIECE algorithm. We perform the analysis under the two separate assumptions on $\{w_t\}$. Section~\ref{sec:bounded_noise} considers the case when the noise is uniformly bounded, i.e. $|w_s|\le B_w$, and shows that the regret of the algorithm is upper-bounded by $C\log T  + D$. This is relaxed in Section~\ref{sec:unbounded}, where the noise is allowed to be unbounded, but has to be conditionally sub-Gaussian (Assumption~\ref{assum:sub_gaussian}). Then PIECE suffers a regret that is at most $C \log^2 T +D$.
 Precise values of constants and bounds are given in the Theorems below and in the Appendix.

\subsection{Bounded Noise}\label{sec:bounded_noise}
\begin{theorem}\label{thm:regret_bounded}
Consider the ARX system~\eqref{def:arx} in which $\{w_t\}$ satisfies Assumptions~(\ref{assum:noise_var},\ref{assum:bounded_noise}).~For every $\delta>0$, there is a set having probability at least $1-6\delta$, such that for every $\eps>0$ the cumulative regret until $T$ can be bounded as follows,
\al{
\cR_T \le \br{1+c(\eps)}\sigma^2(p+q-1) \log T 
+ H L_1(\rho) \sqrt{\log T} + L_2(\eps,\delta,\rho),\label{res:mainres_bounded}
}
where $c(\eps):= \br{1- \frac{\br{1+\eps}^2}{\br{1-\eps}^2\br{1-2\eps}  }}^{-1}\left[\frac{1 }{\br{1-\eps}^2\br{1-2\eps} }\right]-1\to0$ as $\eps\to0$, 
while~$L_1(\rho) \to \infty$ as $\rho \nearrow 1$, and $L_2\to\infty$ as $\eps \searrow 0$, $\rho \nearrow 1$, or $\delta \searrow 0$.

\end{theorem}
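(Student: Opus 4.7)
The plan is to split the regret $\cR_T = \sum_{t=1}^T (y_t - w_t)^2$ into contributions from exploration times $t \in \cI$ and exploitation times $t \notin \cI$, and to show that the dominant $\log T$ term arises solely from the exploitation phase. Because $n_i = \exp(i^2)$ and $m_i = H$, the count $N^{(\cI)}(T)$ is at most $H\sqrt{\log T} + m_1$, and by Assumption~\ref{assum:unit_circle} the unforced dynamics satisfy $\|A^n\|,\|B^n\| \le C_1 \rho^n$, so on any exploratory segment $y_t$ stays uniformly bounded (both $w_t$ and the probing input are bounded by $B_w$). The exploration contribution to $\cR_T$ is therefore at most $O(H\sqrt{\log T})$, which together with the switching transients will be absorbed into the middle term $H L_1(\rho)\sqrt{\log T}$ and the residual $L_2(\eps,\delta,\rho)$ in \eqref{res:mainres_bounded}.

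On exploitation steps, rewriting \eqref{def:re_arx} gives $y_{t+1} - w_{t+1} = b_1(u_t - \lm^{\star'}\psi_t)$, so the per-step regret equals $b_1^2 (u_t - \lm^{\star'}\psi_t)^2$. I would first show, on a high-probability event, that (i) the clipping in \eqref{def:control_rule} is inactive, exploiting the three inequalities used to calibrate $\delta_1$ and $B_u$ together with an inductive boundedness argument for $\|\psi_t\|$, and (ii) the diagnostic check passes, because both $\tilde\lm^{(\cI)}_{t-1}$ and $\lm_{t-1}$ converge to $\lm^\star$ while $\tilde\lm^{(\cI)}_{t-1}$ enjoys the explicit rate $\log N^{(\cI)}_t / \sqrt{N^{(\cI)}_t}$ that defines the threshold in \eqref{def:_z_t}. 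Consequently, on the good event, the exploitation contribution equals $\sum_{t\notin\cI} b_1^2 ((\lm_{t-1}-\lm^\star)'\psi_t)^2$, and the whole problem reduces to bounding this prediction-error sum for the recursive LSE $\lm_t$.

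The heart of the argument is then the self-normalized martingale identity for the update \eqref{def:rec_lambda}: using a Sherman--Morrison / log-determinant telescoping together with $P_t^{-1} = P_{t-1}^{-1} + \psi_t\psi_t'$, the sum $\sum_{t \le T} ((\lm_{t-1}-\lm^\star)'\psi_t)^2$ is controlled by $\log \det(P_T^{-1}/P_0^{-1})$, which grows like $(p+q-1)\log T$. Combining this with a Peña--Lai--Shao style self-normalized deviation bound for the martingale $\sum_t \psi_t w_{t+1}$ produces, on an event of probability at least $1-\delta$, a bound of the form $(1+c(\eps))\sigma^2(p+q-1)\log T$, where $\eps$ arises as the slack in separating $w_{t+1}$ from the prediction error and in applying the tail bound; sending $\eps \to 0$ removes the slack, giving the stated $c(\eps)\to 0$.

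The main obstacle will be establishing, on a high-probability event, the minimum-eigenvalue growth $\lambda_{\min}(\sum_{s \le t,\, s\in\cI}\phi_s\phi_s') \gtrsim A\sqrt{N^{(\cI)}_t}$ alluded to in the introduction, since this is what makes $\tilde\lm^{(\cI)}_t$ a reliable ``watchdog'' and keeps the CE estimator $\lm_t$ from being derailed during transients. A secondary challenge is pinning down the sharp prefactor $(1+c(\eps))\sigma^2(p+q-1)$ that matches the Lai--Wei asymptotic \eqref{asymptotically-optimal-regret}; this forces careful use of the warm-up length $H_1(\Theta,\eps)$ and the clipping constant $B_u$ so that every ``bad'' contribution (clipping active, diagnostic fallback triggered, estimators atypical, warm-up incomplete, LSE covariance too small, martingale tail event) lives in one of six exceptional events whose union has probability at most $6\delta$, matching the theorem statement.
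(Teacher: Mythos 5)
Your overall skeleton does match the paper's proof: the same exploration/exploitation split with $N\ui_T \approx H\sqrt{\log T}$ from $n_i=\exp(i^2)$, the inductive clipping-inactive argument calibrated by $\delta_1,B_u$ (the paper's Propositions~\ref{prop:6}--\ref{prop:9} and Lemma~\ref{lemma:1}), log-determinant telescoping plus self-normalized martingale concentration for the dominant term, and the minimum-eigenvalue lower bound as the key technical obstacle (Theorem~\ref{th:1}, via finite-time projection theory). However, your step (ii) — that the diagnostic check passes on a high-probability event — is a genuine mis-step. The paper never proves this, and it is not provable by your convergence argument: the threshold $B_2\log N\ui_t/\sqrt{N\ui_t}\,\|\psi_t\|$ in \eqref{def:_z_t} shrinks at essentially the same rate as the watchdog's own error bound $\cE(N\ui_t;\te\ust,\delta)$, while the quantity being tested is precisely the deviation of $\lm_{t-1}$, for which no a priori rate beating the threshold exists — its reliability is the conclusion of the regret analysis, not an available hypothesis. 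Your reduction of the exploitation regret to $\sum_{t\notin\cI} b_1^2\br{(\lm_{t-1}-\lm)'\psi_t}^2$ is therefore valid only on the pass branch; and you cannot simply charge fallback steps to the watchdog's rate, since $\sum_t \br{\log N\ui_t}^2\|\psi_t\|^2/N\ui_t$ is nearly linear in $T$ when $N\ui_t \approx H\sqrt{\log t}$. The paper's repair is Theorem~\ref{th:reg_n_i}: in \emph{both} branches the instantaneous regret is bounded by the mismatch $(e_t-w_t)^2$, the fallback case handled by playing the threshold against the estimation bound, and the cumulative mismatch is then controlled once.

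The second gap is in the heart of your argument: $\lm_t$ in \eqref{def:rec_lambda} is not a clean recursive LSE, because the pseudo-regression target uses $\tilde{b}\ui_{1,t-1}$ in place of $b_1$. Summing the Sherman--Morrison identity (the paper's Lemma~\ref{lemma:recursion_qt}, the Lai--Wei recursion for $q_t = \Tr\br{b_1^2(\lm_t-\lm)P_t^{-1}(\lm_t-\lm)'}$) leaves terms carrying $\gamma_t = b_1/b_{1,t}$, and in particular $\cT_2 = \sum_s \left[b_1(1-\gamma_s)(u_s-\lm'\psi_s)\right]^2$, which is bounded only by $\eps^2 \sum_t r_t$ — i.e., it is proportional to the regret itself (Proposition~\ref{prop:t2}). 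Consequently $c(\eps)$ does not arise merely as "slack in applying the tail bound," as you suggest: it comes from a self-bounding fixed-point step, moving the $\eps^2\cR_T$ term to the left-hand side and inverting, which produces exactly the multiplicative factor $\br{1-\frac{(1+\eps)^2}{(1-\eps)^2(1-2\eps)}}^{-1}$ in the statement (Theorem~\ref{th:regret}). Without this step your "log-det controls the prediction-error sum" claim does not close, because the effective noise in the pseudo-regression is $\gamma_t w_t$ plus a regret-sized bias. The remaining ingredients of your plan — $\cT_{4,2}$ bounded by $\sigma^2(p+q-1)\log T$ plus lower-order deviations via $\sum_s \psi_s'P_s\psi_s\,\bE\br{w_s^2|\cF_{s-1}} \le \sigma^2\log\det\br{\sum_s\psi_s\psi_s'}$, and the excitation bound $\lambda_{\min}\ge \beta_3 N\ui_t$ — are exactly as in the paper.
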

\begin{proof}[Outline of Proof]
    The instantaneous regret at time $t$, denoted $r_t$, can be shown to be equal to $b^2_1\br{ u_{t-1} - \lm'\psi_{t-1}  }^2$. We analyze this separately for exploratory times $t\in \cI$ and for $t\notin \cI$.
    
    \emph{Regret during $t\in \cI$}: We bound $r_t$ for $t\in\cI$ by $|u_{t-1}|^2$ plus terms $\leq c' \|\psi_{t-1}\|^2$.~We derive an upper-bound on $|y_t|$ that holds uniformly for all times after an initial phase. Since during $\cI$, the magnitude of input is bounded by $B_w$, upon combining this with the bound on $|y_t|$ it yields an upper-bound on $\|\psi_t\|$. This shows that the regret incurred during the exploratory episodes is $\leq c'' N\ui_t$. 
    
    \emph{Regret during $t\notin \cI$}: To derive an upper-bound, we relate the instantaneous regret $r_t$ with the ``prediction error'' $e_t := y_{t+1} -b_{1,t}\br{u_{t} - \lm'_{t-1 } \psi_{t} }$, where $b_{1,t}\br{u_{t} - \lm'_{t-1 } \psi_{t} }$ is the prediction of the algorithm about the next observation; if $\te\ust$ were known, this error would have been $w_{t+1}$. This observation allows us to show that the instantaneous regret can be bounded by the ``mismatch'' $(e_t -w_t)^2$, but only when this mismatch is ``not too large.'' Following this, the proof for $t\notin \cI$ is split into the following two parts: 
    \begin{enumerate}[(i)]
        \item Ensuring that under the proposed algorithm, the mismatch $(e_t -w_t)^2$ does not become too large. While this cannot be ensured at all  times and for all sample paths, we show that under PIECE algorithm, this does hold for most of the time steps  
        after a sufficiently large duration, with a high probability.~We show that a sufficient condition for this to occur is that under PIECE, the inputs $u_t$ are not clipped too often, i.e., the condition $|z_t|< B_u$ holds. To ensure this, PIECE (i) uses exploratory episodes of sufficiently large duration $H$~\eqref{def:n_i}, (ii) explores using white noise of sufficiently small magnitude ($|u_t|\le B_w,~t\in \cI$). Since the roots of the polynomials are strictly inside the unit circle (Assumption~\ref{assum:unit_circle}), when the estimation error $\|\te\ust-\te_t\|$ is sufficiently small, the magnitude of the output for times $t\notin \cI$ lying between two consecutive episodes can be bounded.  
        \item Deriving an upper-bound on the cumulative mismatch $\sum_{t\notin \cI}(e_t -w_t)^2$. The analysis relies upon a recursion for the quantity $q_t := 
        \Tr\br{b^2_1\br{\lm_t - \lm}P^{-1}_t\br{\lm_t -\lm}'}$. Upon summing up this recursion, it can be shown  that after sufficiently large $t$, with high probability, the mismatch can be controlled by deriving upper-bounds on six terms which mostly involve ``discrete-time martingale transforms.'' The rest of the analysis relies upon carefully bounding these terms using concentration results for self-normalized martingales~\cite{pena2009self,abbasi2011improved} and the Azuma-Hoeffding inequality for unbounded martingale difference sequences~\cite{tao_vu}.   
    \end{enumerate}
In both (i) and (ii) above, we need to control the estimation error $\|\te_t-\te\ust\|$ associated with LSE. 
Therefore, we provide the following finite-time guarantees on the performance of the LSE operating under PIECE algorithm in Theorem~\ref{th:estimation_error_main}.    
\end{proof}

We note that by letting $\eps \searrow 0$, we are able to match the pre-constant as well as the logarithmic growth rate of the asymptotically optimal regret (\ref{asymptotically-optimal-regret}) of~\cite{lai1987asymptotically}. Furthermore, our bounds also quantify the transient performance and how it is affected by various parameters such as $\delta,B_w,B_u$, and the operator norm dependent quantity $\rho$.

\begin{theorem}\label{th:estimation_error_main}
Consider the ARX system~\eqref{def:arx} in which $\{w_t\}$ satisfies Assumptions~(\ref{assum:noise_var},\ref{assum:bounded_noise}), and LSE is given by~\eqref{def:lse_I}.~On a high-probability set having probability greater than $1-4\delta$, the estimation error can be bounded as follows,
	\al{
		\|\te\ust - \te_t\|^{2} \le \frac{\nicefrac{1}{2}\br{p+q}\log \br{\br{C_1  \| Y_{0}\|  +  \frac{C_1 B_u}{1-\rho}\left\{1+  \sum_{\ell=1}^{q} |b_\ell|  \right\} + q B_u}^2 N\ui_t } - \log\br{\delta}}{N\ui_t}.\label{eq:bound_I}
	}	
\end{theorem}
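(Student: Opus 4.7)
The plan is to perform a standard least-squares error decomposition. Writing $\te\ui_t - \te\ust = (V\ui_t)^{-1} S\ui_t$ with $V\ui_t := \sum_{s\le t,s\in\cI}\phi_s\phi_s'$ and $S\ui_t := \sum_{s\le t,s\in\cI}\phi_s w_{s+1}$, the elementary inequality
\[
\|\te\ui_t-\te\ust\|^{2} \le \frac{\|S\ui_t\|^2_{(V\ui_t)^{-1}}}{\lambda_{\min}(V\ui_t)}
\]
reduces matters to two ingredients: (a) a self-normalized upper bound on the numerator, and (b) a high-probability, linear-in-$N\ui_t$ lower bound on $\lambda_{\min}(V\ui_t)$. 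Assumption~\ref{assum:noise_var} guarantees that $\{S\ui_t\}$ is a vector-valued martingale adapted to $\{\cF_s\}$, and Assumption~\ref{assum:bounded_noise} gives $B_w$-sub-Gaussian increments.

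For (a), I would invoke the self-normalized tail inequality of Abbasi-Yadkori--P\'al--Szepesv\'ari (2011): with probability at least $1-\delta$,
\[
\|S\ui_t\|^{2}_{(V\ui_t)^{-1}} \le 2 B_w^{2}\log\!\br{\det(V\ui_t)^{1/2}\slash \delta}.
\]
Using the AM--GM bound $\det(V\ui_t) \le (\Tr(V\ui_t)/(p+q))^{p+q}$, the task reduces to bounding $\Tr(V\ui_t)=\sum_{s\le t,s\in\cI}\|\phi_s\|^{2}$ by a deterministic quantity. To that end I would iterate the state-space recursion $Y_{s+1}=A Y_s + e_1\br{b_1 u_s+\cdots+b_q u_{s-q+1}+w_{s+1}}$, apply the stability bound $\|A^{n}\|\le C_1\rho^{n}$ granted by Assumption~\ref{assum:unit_circle}, use the clipping rule~\eqref{def:control_rule} to conclude $|u_s|\le B_u$ at every $s$ (noting $B_w\le B_u$ so exploratory inputs also obey this bound), and use $|w_s|\le B_w\le B_u$. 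Summing the resulting geometric series yields $\|Y_s\|\le C_1\|Y_0\|+\frac{C_1 B_u}{1-\rho}\{1+\sum_{\ell=1}^{q}|b_\ell|\}$, and combining with the direct bound $|u_{s-j}|\le B_u$ on the input coordinates of $\phi_s$ gives $\|\phi_s\|\le C_\phi$ uniformly, where $C_\phi$ is precisely the constant appearing inside the logarithm of~\eqref{eq:bound_I}. Substituting back produces the $\tfrac{1}{2}(p+q)\log(C_\phi^{2}N\ui_t)-\log\delta$ numerator.

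The main technical obstacle is (b), the lower bound $\lambda_{\min}(V\ui_t)\gtrsim N\ui_t$. The intuition is that within each exploratory episode the inputs are i.i.d., mean-zero, with conditional variance bounded below by $c_1$, so they should excite every mode of $\phi_s$; the minimum-phase part of Assumption~\ref{assum:unit_circle} is what propagates this excitation from the input coordinates to the output coordinates of $\phi_s$, guaranteeing that the conditional covariance $\bE[\phi_s\phi_s'\mid\cF_{s-1}]$ is uniformly positive definite for $s\in\cI$. Turning this conditional statement into a bound on the empirical Gram matrix requires a matrix-martingale concentration argument (matrix Freedman or matrix Azuma) applied to $V\ui_t-\sum_{s\in\cI,s\le t}\bE[\phi_s\phi_s'\mid\cF_{s-1}]$. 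The delicate part is controlling the autoregressive coupling between consecutive episodes; this is where the wide inter-episode spacing $n_i=\exp(i^{2})$ in~\eqref{def:n_i} together with $\|A^{n}\|\le C_1\rho^{n}$ is exploited, since it makes the system nearly forget its state between episodes and thereby decorrelates the persistent-excitation contributions of distinct episodes.

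Finally, a union bound over (i) the self-normalized event of (a), (ii) the matrix-concentration event in (b), and (iii) two auxiliary events associated with the deterministic bound on $\|\phi_s\|$ being active after the warm-up phase $H_1(\Theta,\eps)$, accounts for the $1-4\delta$ probability in the statement and yields~\eqref{eq:bound_I}. I expect step (b) to carry most of the technical weight; once the linear lower bound on $\lambda_{\min}(V\ui_t)$ is established, the rest is a mechanical assembly of the self-normalized inequality, AM--GM, and the state-space unrolling.
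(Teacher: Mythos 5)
Your skeleton coincides with the paper's up to the crux: the decomposition $\|\te\ui_t-\te\ust\|^{2}\le \|S\ui_t\|^{2}_{(V\ui_t)^{-1}}\slash\lm_{\min}(V\ui_t)$, the self-normalized martingale bound on the numerator (the paper's event $\cG_{LSE}$, Lemma~\ref{lemma:est_error_1}), and the deterministic bound $\|\phi_s\|\le C_1\|Y_0\|+\frac{C_1B_u}{1-\rho}\{1+\sum_{\ell=1}^{q}|b_\ell|\}+qB_u$ from unrolling the stable dynamics (Lemmas~\ref{lemma:state_bound},~\ref{lemma:phi_bounded}) are exactly the paper's steps (a) and the $\lm_{\max}$ control. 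The genuine gap is in your step (b). Under the filtration convention used here ($u_s$ is $\cF_{s-1}$-measurable, $y_{s-1}$ is $\cF_{s-1}$-measurable, and the regressor row is $(y_{s-1},\ldots,y_{s-p},u_s,\ldots,u_{s-q})$), the regressor $\phi_s$ is itself predictable, so $\bE[\phi_s\phi_s'\mid\cF_{s-1}]=\phi_s\phi_s'$: your centered matrix martingale $V\ui_t-\sum_{s}\bE[\phi_s\phi_s'\mid\cF_{s-1}]$ is identically zero, and matrix Freedman/Azuma applied to it yields nothing. Even with a shifted indexing convention, the one-step conditional covariance is a rank-one (at best low-rank) perturbation of a deterministic matrix, since all but one or two coordinates of $\phi_s$ are measurable one step back; your claim that it is ``uniformly positive definite for $s\in\cI$'' is therefore false. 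Positive definiteness of a conditional covariance only emerges after conditioning on a sigma-field lagged by roughly $p+q$ steps, and restoring a martingale structure then forces a block/skip-sampling argument with control of cross terms between overlapping blocks — a substantial construction that your sketch neither identifies nor supplies.

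The paper sidesteps this entirely with a finite-time version of the Lai--Wei projection machinery (Appendix~\ref{sec:lmin}): $\lm_{\min}(\Psi_t'\Psi_t)$ is sandwiched, up to a factor $p+q$, by the minimal squared residual of projecting each column of the design matrix onto the span of the remaining columns (Lemma~\ref{lemma:lmin_proj}); the input columns are handled by $\lm_{\min}\br{\sum_s U_sU_s'}\ge \frac{\sigma^2_e}{2}N\ui_t$, proved via scalar Azuma--Hoeffding on the entries $\sum_s \tilde{u}_s\tilde{u}_{s+j-i}$ plus the Gershgorin circle theorem (Lemma~\ref{lemma:input_excite}), which works precisely because the exploratory inputs are exogenous i.i.d.; and the output columns are handled by lower-bounding the noise projection residual $\|\underline{w}(\ell)-\hat{\underline{w}}_0(\ell)\|^2\ge \frac{c_1 t}{3}$ using the conditional-variance floor $c_1$ of Assumption~\ref{assum:noise_var} together with self-normalized bounds (Theorems~\ref{th:proj_1}--\ref{th:proj_3}, Propositions~\ref{prop:11},~\ref{prop:12}), then propagating these residual bounds column by column through an induction. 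Two further misattributions in your intuition: the excitation of the output coordinates comes from the noise-variance floor, not from the minimum-phase condition (which is used for internal stability of the MV law and for bounding $\|U_t\|$); and the spacing $n_i=\exp(i^2)$ plays no role in the Gram-matrix lower bound — since the probing inputs are independent of everything else, no inter-episode decorrelation is needed, and the spacing only dictates $N\ui_t$ and hence the exploration regret.
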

\begin{proof}[Outline of Proof]
Let $V_t = \sum_{s=1}^{t}\phi_s \phi'_s$ be the covariance matrix associated with the LSE at $t$. It can be shown that w.h.p. 	
$\|\te\ust - \te_t \|^{2} \leq c''\br{\frac{\log \lm_{\max}(V_t)}{\lm_{\min}(V_t)}}$, and hence it suffices to upper-bound $\lm_{\max}(V_t)$ and lower-bound $\lm_{\min}(V_t)$. $\lm_{\max}(V_t)$ is bounded by $\sum_s \|\phi_s\|^2$, which in turn is shown to be $\leq c'''tB^2_u$. The key challenge in the proof is to derive a lower-bound on $\lm_{\min}(V_t)$ that holds for all times $t$ greater than some finite time w.h.p.
\end{proof}

\subsection{Unbounded Noise}\label{sec:unbounded}
To deal with unbounded noise, we slightly modify the PIECE algorithm as follows. Firstly, the exploratory episodes are changed to have $n_i = \exp(i)$, and, in the definition of $m\ust$, $B_w$ is replaced by $\sqrt{\log\br{T\slash \delta}}$, with this quantity serving as a high-probability upper-bound on $\{w_t\}$. While deciding the threshold $B_u$ for clipping inputs, once again $B_w$ is replaced by $\sqrt{\log\br{T\slash \delta}}$. Let $\tilde{H}$ be the resulting episode duration.
\begin{theorem}\label{thm:regret_unbounded}
Under Assumption \ref{assum:sub_gaussian}, the regret of  PIECE can be bounded as follows: For every $\delta>0$, there is a set having probability atleast $1-7\delta$ such that for every $\eps>0$, the cumulative regret until $T$ can be bounded as follows,
\al{
\cR_T \le \br{1+c(\eps)} \left[\sigma^2(p+q-1) \log T + \log\br{T\slash \delta} \right]
+ \tilde{H} \tilde{L}_1(\rho)\log T + \tilde{L}_2(\eps,\delta,\rho),\label{res:mainres_unbounded}
}
where $c(\eps)$ is as in Theorem~\ref{thm:regret_bounded},~while~$\tilde{L}_1(\rho) \to \infty$ as $\rho \nearrow 1$, and $\tilde{L}_2\to\infty$ as $\eps \searrow 0$, $\rho \nearrow 1$, or $\delta \searrow 0$. 
\end{theorem}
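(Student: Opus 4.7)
The plan is to reduce the sub-Gaussian case to the bounded-noise analysis of Theorem~\ref{thm:regret_bounded} by conditioning on a high-probability noise-boundedness event, then replaying the previous proof with a carefully chosen effective noise bound. First, I would use Assumption~\ref{assum:sub_gaussian} together with a Chernoff bound and a union bound over $t\le T$ to establish an event $\mathcal{E}_0$ of probability at least $1-\delta$ on which $\sup_{t \le T}|w_t| \le \sqrt{2\sigma^2 \log(2T/\delta)} =: \tilde B_w$. This is precisely the substitution indicated in the theorem statement: on $\mathcal{E}_0$, the noise is deterministically bounded by $\tilde B_w \asymp \sqrt{\log(T/\delta)}$, so the algorithm's constants $B_u$, $m^\star$, $H$ (recomputed with $\tilde B_w$ in place of $B_w$) are valid thresholds.

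Next, on the event $\mathcal{E}_0$ I would re-run the proof of Theorem~\ref{thm:regret_bounded} essentially verbatim, but tracking dependence on $\tilde B_w$. The reason the modified episode schedule $n_i=\exp(i)$ is needed is that the exploration-count grows as $N^{(\cI)}_t\asymp \tilde H\log t$, instead of $\tilde H\sqrt{\log t}$ with $n_i=\exp(i^2)$, which is required to keep the LSE estimation error of Theorem~\ref{th:estimation_error_main} small enough that the ``mismatch not too large'' condition controlling $(e_t-w_t)^2$ continues to hold w.h.p.\ on most time-steps (in particular, the input-clipping event $\{|z_t|<B_u\}$ should occur for all but vanishingly many $t$). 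This step uses the same diagnostic-check argument and the same recursion for $q_t=\Tr(b_1^2(\lm_t-\lm)P_t^{-1}(\lm_t-\lm)')$ as in the bounded case.

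The regret accounting then decomposes as before: for $t\in\cI$ the per-step regret is $\lesssim \|\psi_{t-1}\|^2 + |u_{t-1}|^2 \lesssim \tilde B_w^2$, and since $|\cI\cap[1,T]| \lesssim \tilde H\log T$, this contributes the $\tilde H\tilde L_1(\rho)\log T$ term. For $t\notin\cI$, the mismatch $\sum(e_t-w_t)^2$ is again controlled via the self-normalized martingale inequality of \cite{abbasi2011improved}, which already admits a sub-Gaussian form, and via an Azuma--Hoeffding-type bound for sub-Gaussian martingale difference sequences (\cite{tao_vu}). The leading term reproduces $(1+c(\eps))\sigma^2(p+q-1)\log T$, and the concentration inequalities for the sub-Gaussian noise introduce an additional additive $(1+c(\eps))\log(T/\delta)$ from substituting $\tilde B_w^2\asymp\log(T/\delta)$ in the ``residual'' terms that were $O(1)$ in the bounded case.

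The main obstacle I anticipate is precisely tracking where $B_w$ enters the bounded-noise proof so that the replacement $B_w\mapsto\tilde B_w=\sqrt{c\log(T/\delta)}$ yields only a $\log(T/\delta)$ additive penalty inside the leading $(1+c(\eps))[\cdots]$ bracket, and does not multiply the $\log T$ factor. In particular one must check that the ``good'' time steps where $u_t$ is not clipped and where the diagnostic check sides with $\lm_{t-1}$ still cover all but $o(\log T)$ indices, despite $B_u$ now scaling with $\tilde B_w$; this requires verifying that the LSE bound of Theorem~\ref{th:estimation_error_main}, re-derived with $B_u\asymp \tilde B_w$, still drives $\|\te\ust-\te_t\|$ below the threshold needed by the unclipping argument for $t$ larger than an $\eps,\delta,\rho$-dependent constant. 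The remaining bookkeeping --- the union bound combining $\mathcal{E}_0$ with the $1-6\delta$ event of Theorem~\ref{thm:regret_bounded} to obtain the claimed $1-7\delta$ probability, and absorbing transient constants into $\tilde L_2(\eps,\delta,\rho)$ --- is routine.
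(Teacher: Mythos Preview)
Your proposal is correct and follows essentially the same route as the paper: condition on the sub-Gaussian tail event $\cG_w=\{|w_t|\le \sigma\sqrt{\log(T/\delta)}\ \forall t\le T\}$ (the paper's Lemma~\ref{lemma:noise_bound}), replace $B_w$ by $B_w(T):=\sigma\sqrt{\log(T/\delta)}$ throughout the bounded-noise analysis, switch to the denser exploration schedule $n_i=\exp(i)$, and then track how each term in the Theorem~\ref{thm:regret_bounded} proof scales with $B_w(T)$; the extra $\delta$ from $\cG_w$ plus the $6\delta$ from the bounded-noise good sets gives the claimed $1-7\delta$. The one point the paper makes more explicit than you do is \emph{why} the denser schedule is forced: the constant $\beta_3$ in the minimum-eigenvalue bound of Theorem~\ref{th:1} has $B_w^2$ in its denominator, so after the substitution $\beta_3\propto 1/\log(T/\delta)$, and one needs $N^{(\cI)}_t\asymp \log t/\beta_3$ (hence $n_i=\exp(i)$ and $\tilde H\asymp 1/\beta_3$) to keep $\lm_{\min}(V_t^{\cI})\gtrsim$ const.\ rather than decaying --- this is exactly the ``$B_u\asymp\tilde B_w$'' issue you flag as the main obstacle, so your anticipated difficulty is the right one.
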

We note that in comparison with Theorem~\ref{thm:regret_bounded}, there is an additional $\log\br{T\slash \delta}$ term that arises due to an increase in the high probability upper-bound on the norms of $\|Y_t\|,\|U_t\|$ . This, in-turn happens due to an increase in the magnitudes of noise, exploratory inputs and inputs during the exploitation phase as compared with the bounded noise case. It is shown in~\cite{lai1987asymptotically} that the regret of LW is asymptotically $\sigma^2(p+q-1)\log T$ under the assumption that $\sup_{t} \bE\left\{\exp \br{\gamma |w_t |} \Big| \cF_{t-1} \right\} < \infty,~\mbox{ a.s. }$ for some $\gamma>0$. It remains to be seen if the finite-time regret of our proposed algorithm can be improved so that it matches this asymptotically as $T\to\infty$.

\section{Simulations}\label{sec:simulations}
In this section, we compare the performance of the PIECE algorithm with the algorithm proposed in \cite{lai1986asymptotically} (LW), as well as the standard CE controller. Each simulation experiment is performed for $1000$ steps.
The reported results are the averaged values over the $50$ runs. Results for more examples, and technical details on implementation, are provided in the
Appendix. 
The examples of the ARX systems considered in the experiments are the following:\\
\textsc{Example I:} This represents a linear system with $p=4$ and $q=4$ given by $y_t= 1.18 y_{t-1} - 0.48 y_{t-2} + 0.45 y_{t-3} - 0.41 y_{t-4}+ 0.28 u_{t-1} + 0.14 u_{t-2} + 0.16 u_{t-3} + 0.03 u_{t-4} + w_t$.\\\\
\textsc{Example II:} This represents a linear system with $p=2$ and $q=3$ given by $
        y_t = - 0.01 y_{t-1} - 0.46 y_{t-2} + 0.1 u_{t-1} + 0.086 u_{t-2} + 0.02 u_{t-3} + w_t,$
\label{def:example2}\\\\
\textsc{Example III:} This represents a linear system with $p=6$ and $q=6$ given by
        $y_t = - 0.66 y_{t-1} - 0.79 y_{t-2} + 0.2 y_{t-3} - 0.03 y_{t-4} + 0.09 y_{t-6} + 0.32 u_{t-1} + 0.06 u_{t-2} - 0.2 u_{t-3} - 0.01 u_{t-4} - 0.03 u_{t-5} + 0.001 u_{t-6} + w_t$
\label{def:example3}\\\\
\textbf{Cumulative Regret}
In Figure \ref{fig:cumulative_regret}, we plot the logarithm of the cumulative regret, $\log(R_t)$. Table \ref{table:1} highlights the cumulative regret at the end of the experiment. One of the key issues with many adaptive controllers is their empirical performance in the initial phase of learning \cite{lale22,mete2022augmented}. It is evident from the empirical results that CE as well as LW both suffer from this issue. As described in Section \ref{sec:algo}, the PIECE algorithm differs from LW with regard to the clipping of the input as well as the choice of exploration episodes. The benefits of these modifications are clearly evident as they lead to significantly lower regret in the initial stages of the experiment. 

\textbf{Estimation Error:}
In Figure \ref{fig:estimation_error}, we plot the estimation error $||\theta_t-\te\ust||^2$. It is interesting to note that LW has better estimation error than PIECE. This reiterates the point that the exploration scheme in PIECE is more efficient in achieving lower regret, which is the primary objective of the controller at the cost of a higher estimation error.
  \begin{figure}[ht]
     \centering
          \begin{subfigure}[b]{0.32\textwidth}
         \centering
         \includegraphics[width=\textwidth]{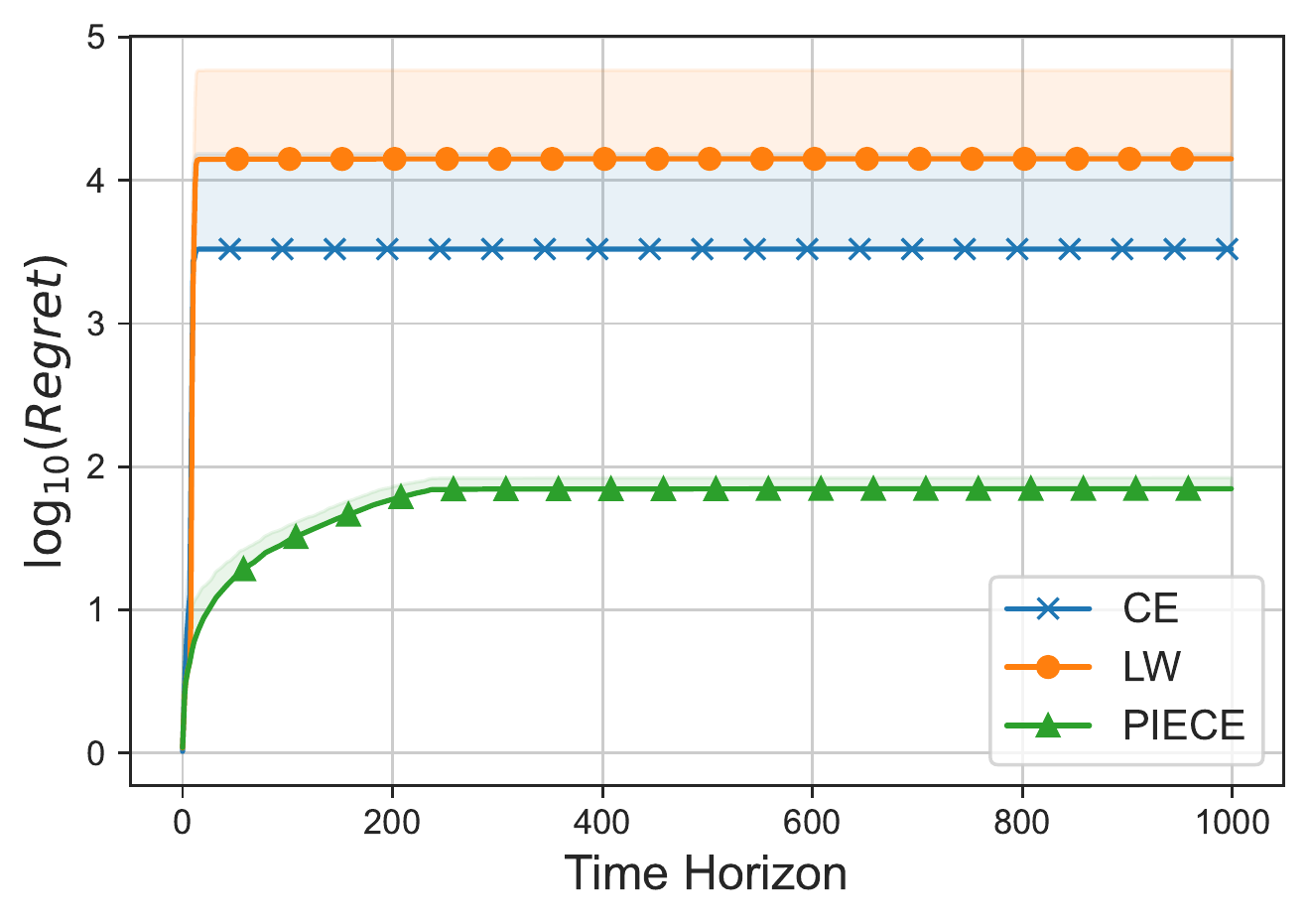}
         \caption{Example I}
         \label{fig:2_ex_1}
     \end{subfigure}
     \hfill
          \begin{subfigure}[b]{0.32\textwidth}
         \centering
         \includegraphics[width=\textwidth]{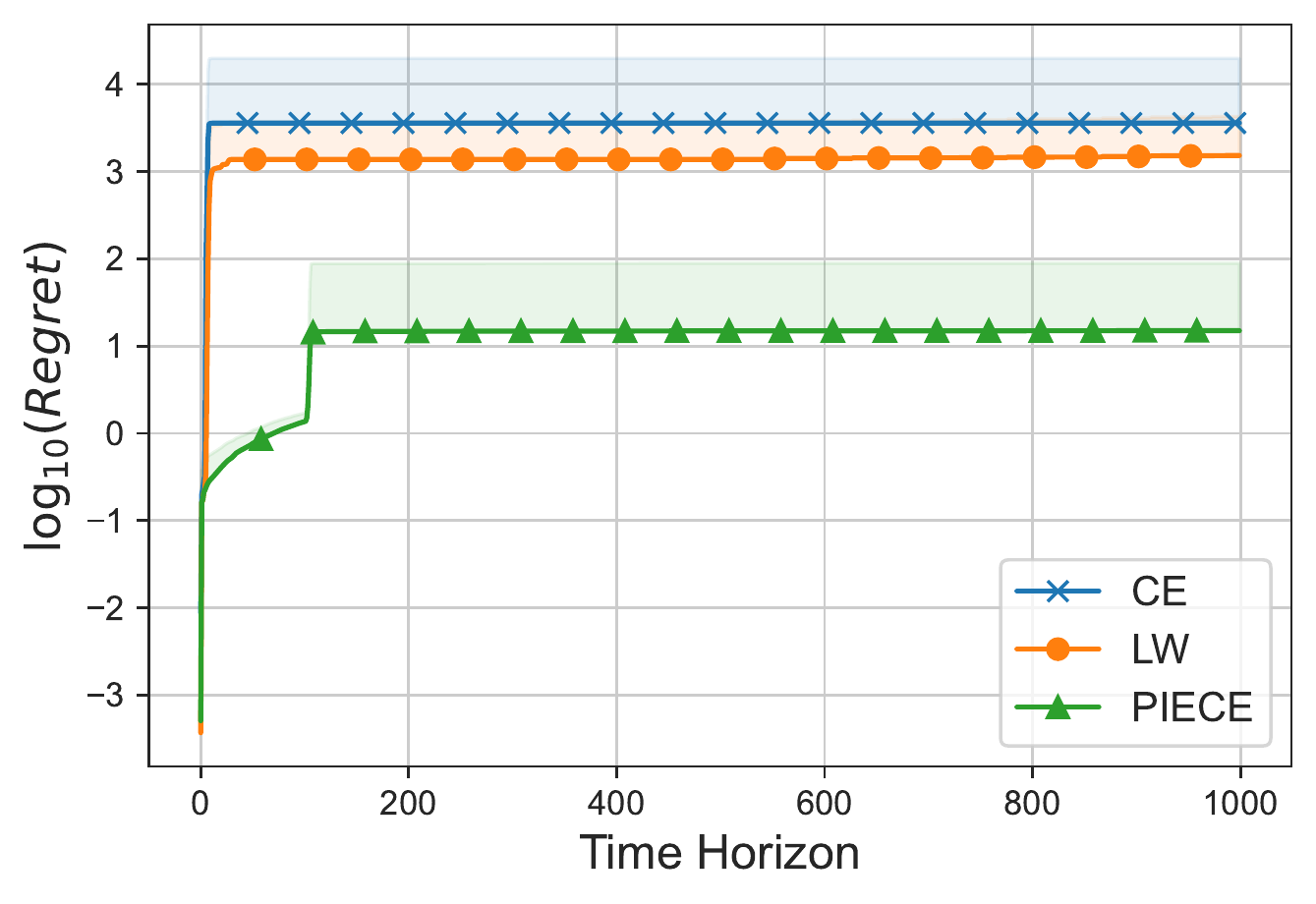} 
         \caption{Example II}         
         \label{fig:2_ex_2}
     \end{subfigure}
          \hfill
               \begin{subfigure}[b]{0.32\textwidth}
         \centering
         \includegraphics[width=\textwidth]{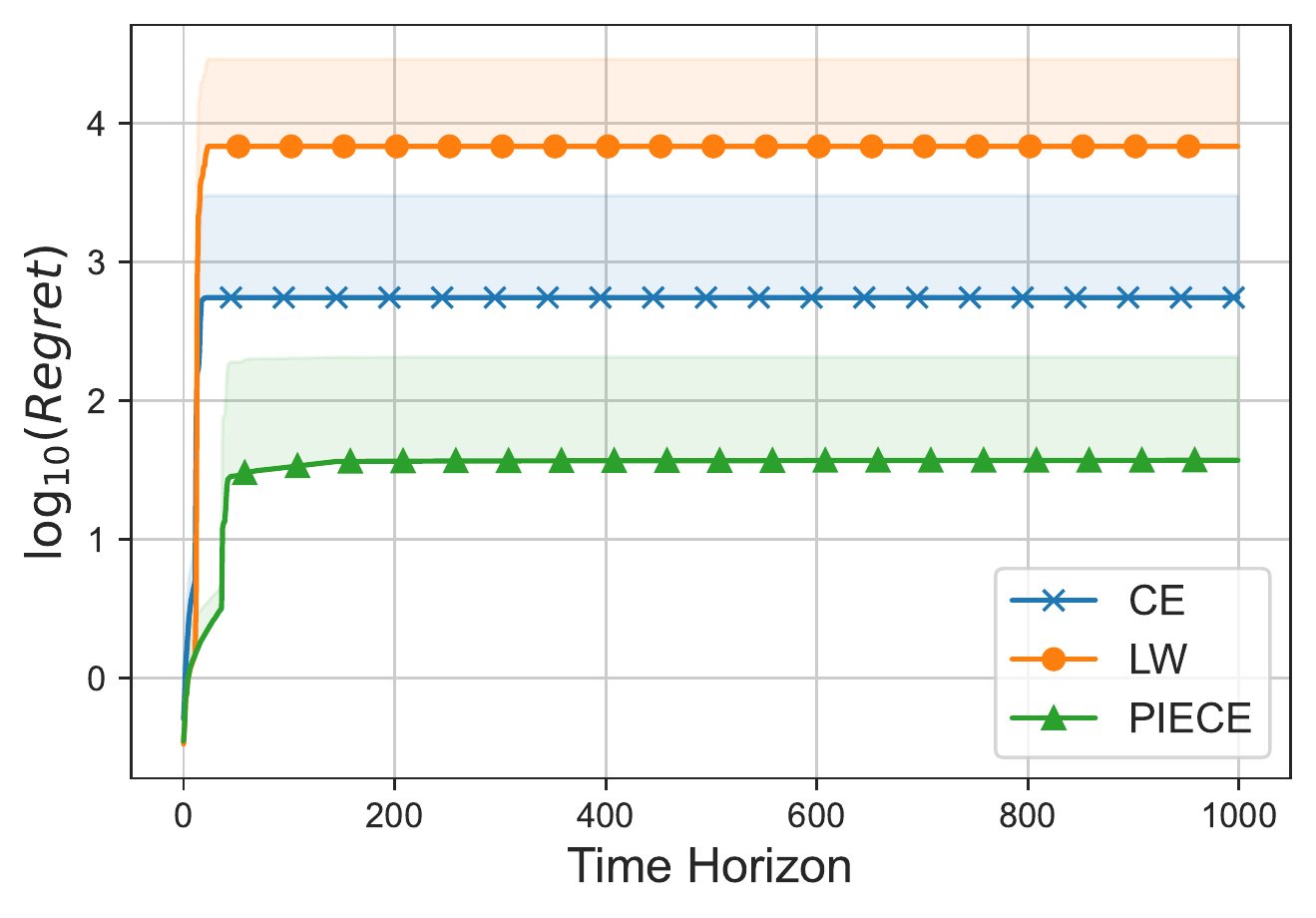}
         \caption{Example III}
         \label{fig:2_ex_3}
     \end{subfigure}
        \caption{Log(Cumulative Regret) averaged over 50 runs.}
        \label{fig:cumulative_regret}
\end{figure}
\begin{figure}[ht]
     \centering
          \begin{subfigure}[b]{0.32\textwidth}
         \centering
         \includegraphics[width=\textwidth]{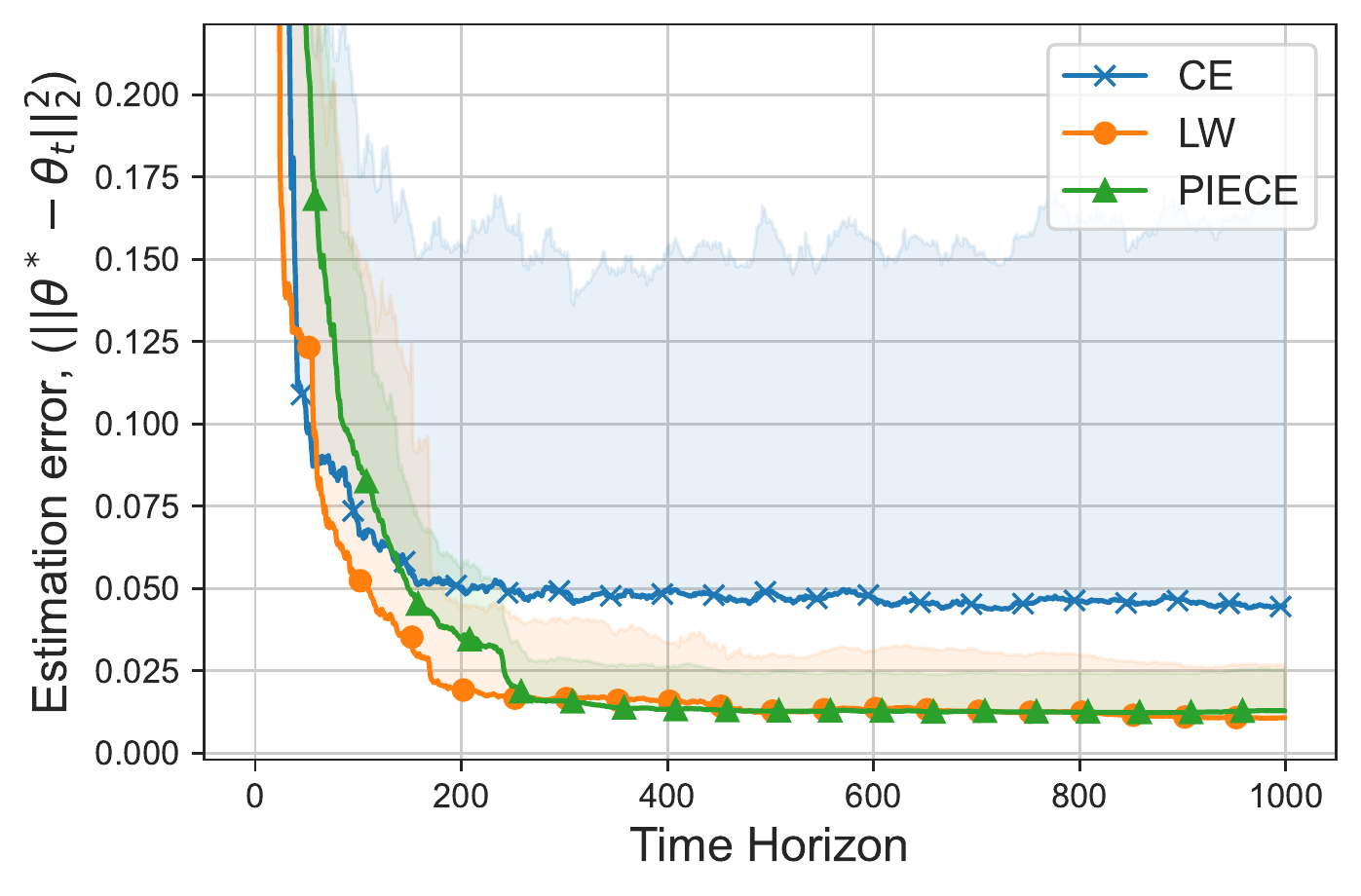}
         \caption{Example I}
         \label{fig:1_ex_1}
     \end{subfigure}
     \hfill
          \begin{subfigure}[b]{0.32\textwidth}
         \centering
         \includegraphics[width=\textwidth]{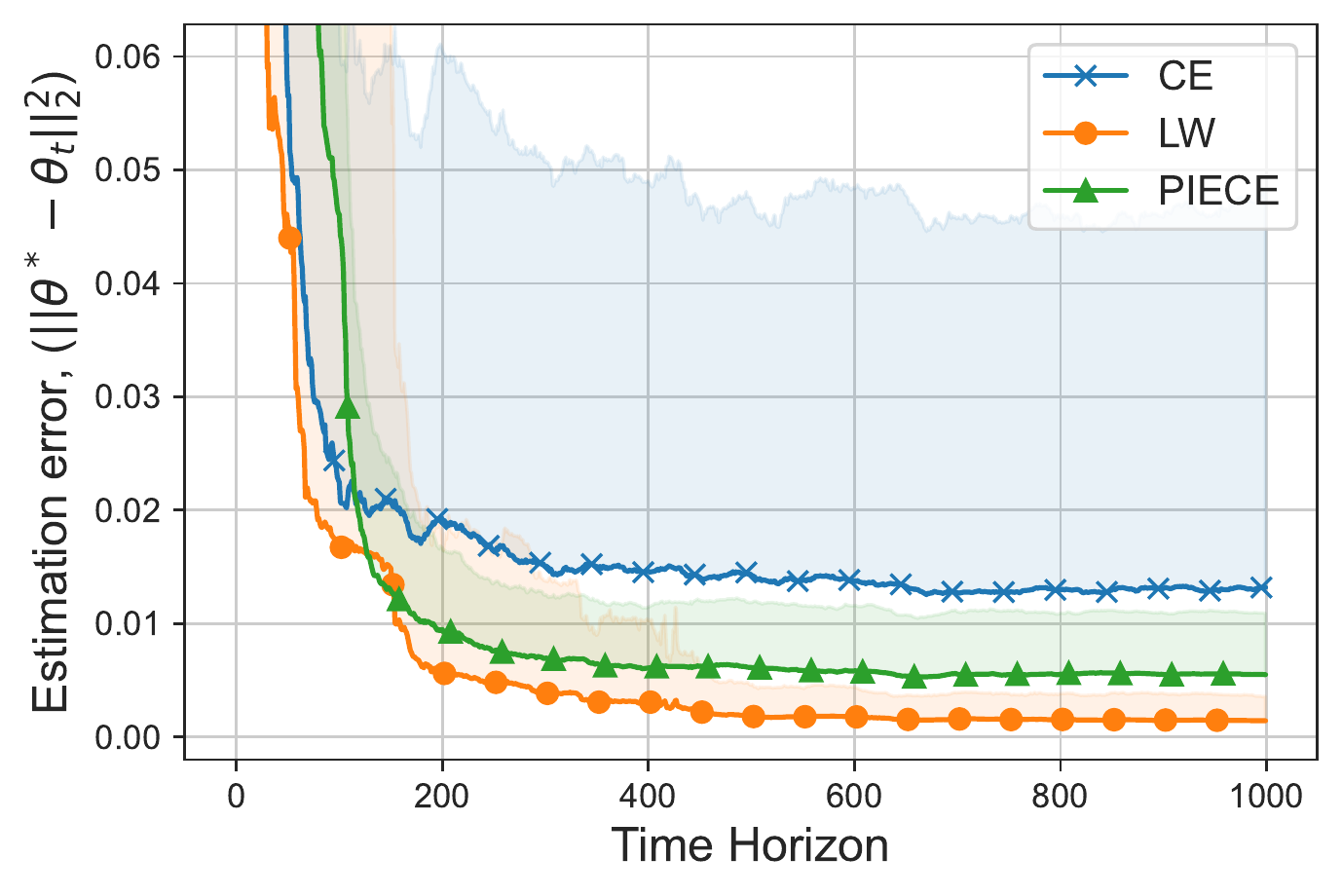}      
         \caption{Example II}         
         \label{fig:1_ex_2}
     \end{subfigure}
     \hfill
    \begin{subfigure}[b]{0.32\textwidth}
         \includegraphics[width=\textwidth]{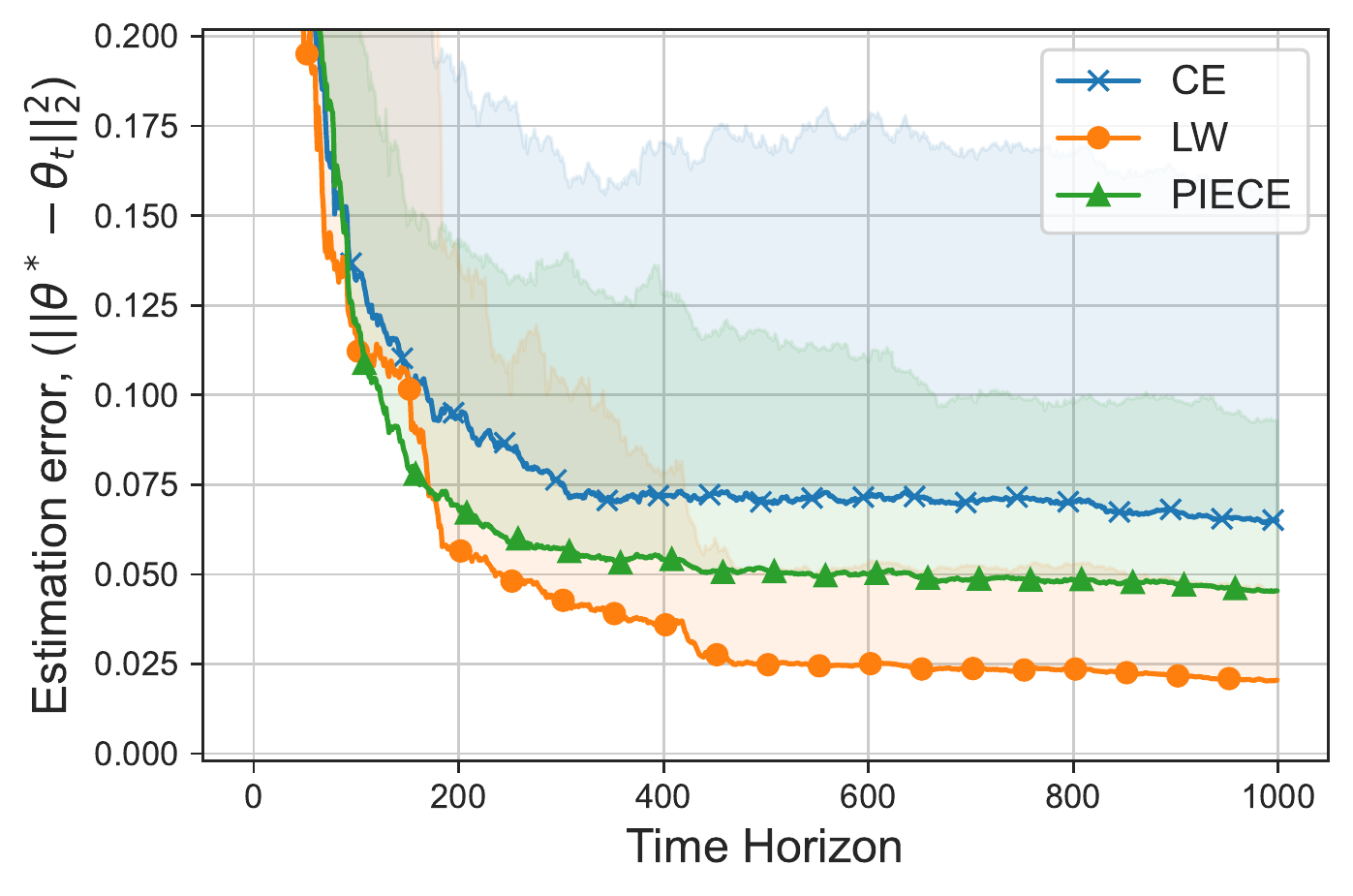}
         \caption{Example III}
         \label{fig:1_ex_3}
     \end{subfigure}
        \caption{Estimation Error ($||\theta^\star-\theta_t||^2_2$)}
        \label{fig:estimation_error}
\end{figure}  

\begin{center}
\begin{table}
    \centering
\begin{tabular}{|c |c | c| c |} 
\hline
Example & PIECE & LW & CE\\ \hline
I & 70.13 & 14124.45&  3308.96  \\ \hline 
II & 15.03& 1528.95 & 3610.14\\\hline   
III & 37.34& 6852.85 & 557.35\\\hline   
\end{tabular}
    \caption{Cumulative Regret Performance at $T=1000$.}
    \label{table:1}
\end{table}
\end{center}

\section{Conclusion}\label{sec:discussion}
In this paper, we have provided the first finite time regret bound for an adaptive minimum variance control problem. We analysed two different scenarios. (i) When the system noise is bounded, the regret of the PIECE algorithm is $C\log T$. 
(ii) When system noise is unbounded ,  the regret of the PIECE algorithm is $C\log^2 T$. 
We have also verified through simulations the advantage of the PIECE algorithm over LW and the standard CE controller. 

Whether the bound in the unbounded noise case can be improved to $C\log T$ remains an interesting question. A natural next step is to analyze performance of similar algorithms for an ARMAX system. 
One can potentially adapt similar algorithms which use ``probing inputs'' in other various reinforcement learning settings, including Markov Decision Processes and LQG systems.

\bibliography{ref}

\begin{thebibliography}{10}

\bibitem{lale22}
Sahin Lale, Kamyar Azizzadenesheli, Babak Hassibi, and Animashree Anandkumar.
\newblock Reinforcement learning with fast stabilization in linear dynamical
  systems.
\newblock In Gustau Camps-Valls, Francisco J.~R. Ruiz, and Isabel Valera,
  editors, {\em Proceedings of The 25th International Conference on Artificial
  Intelligence and Statistics}, volume 151 of {\em Proceedings of Machine
  Learning Research}, pages 5354--5390. PMLR, 28--30 Mar 2022.

\bibitem{mete2022augmented}
Akshay Mete, Rahul Singh, and P.~R. Kumar.
\newblock Augmented rbmle-ucb approach for adaptive control of linear quadratic
  systems.
\newblock In {\em Advances in Neural Information Processing Systems}, 2022.

\bibitem{lai1987asymptotically}
Tze~Leung Lai and Ching-Zong Wei.
\newblock Asymptotically efficient self-tuning regulators.
\newblock {\em SIAM Journal on Control and Optimization}, 25(2):466--481, 1987.

\bibitem{simchowitz2020naive}
Max Simchowitz and Dylan Foster.
\newblock Naive exploration is optimal for online lqr.
\newblock In {\em International Conference on Machine Learning}, pages
  8937--8948. PMLR, 2020.

\bibitem{dean2019safely}
Sarah Dean, Stephen Tu, Nikolai Matni, and Benjamin Recht.
\newblock Safely learning to control the constrained linear quadratic
  regulator.
\newblock In {\em 2019 American Control Conference (ACC)}, pages 5582--5588.
  IEEE, 2019.

\bibitem{abbasi2011regret}
Yasin Abbasi-Yadkori and Csaba Szepesv{\'a}ri.
\newblock Regret bounds for the adaptive control of linear quadratic systems.
\newblock In {\em Proceedings of the 24th Annual Conference on Learning
  Theory}, pages 1--26. JMLR Workshop and Conference Proceedings, 2011.

\bibitem{abeille_17}
Marc Abeille and Alessandro Lazaric.
\newblock Thompson sampling for linear-quadratic control problems.
\newblock In {\em Artificial Intelligence and Statistics}, pages 1246--1254.
  PMLR, 2017.

\bibitem{faradonbeh_rce}
Mohamad Kazem~Shirani Faradonbeh, Ambuj Tewari, and George Michailidis.
\newblock On adaptive linear-quadratic regulators.
\newblock {\em Automatica}, 117:108982, 2020.

\bibitem{faradonbeh2018input}
Mohamad~Kazem {Shirani Faradonbeh}, Ambuj Tewari, and George Michailidis.
\newblock Input perturbations for adaptive control and learning.
\newblock {\em Automatica}, 117:108950, 2020.

\bibitem{dean2018regret}
Sarah Dean, Horia Mania, Nikolai Matni, Benjamin Recht, and Stephen Tu.
\newblock Regret bounds for robust adaptive control of the linear quadratic
  regulator.
\newblock {\em arXiv preprint arXiv:1805.09388}, 2018.

\bibitem{jedra2022minimal}
Yassir Jedra and Alexandre Proutiere.
\newblock Minimal expected regret in linear quadratic control.
\newblock In {\em International Conference on Artificial Intelligence and
  Statistics}, pages 10234--10321. PMLR, 2022.

\bibitem{aastrom2012introduction}
K.~J. {\AA}str{\"o}m.
\newblock {\em Introduction to stochastic control theory}.
\newblock Courier Corporation, 2012.

\bibitem{astrom1967computer}
K.~J. {\AA}str{\"o}m.
\newblock Computer control of a paper machine—an application of linear
  stochastic control theory.
\newblock {\em IBM Journal of research and development}, 11(4):389--405, 1967.

\bibitem{aastrom1973self}
K.~J. {\AA}str{\"o}m and Bj{\"o}rn Wittenmark.
\newblock On self tuning regulators.
\newblock {\em Automatica}, 9(2):185--199, 1973.

\bibitem{astrom1987adaptive}
K.~J. {\AA}str{\"o}m.
\newblock Adaptive feedback control.
\newblock {\em Proceedings of the IEEE}, 75(2):185--217, 1987.

\bibitem{dumont1982self}
Guy~A Dumont.
\newblock Self-tuning control of a chip refiner motor load.
\newblock {\em Automatica}, 18(3):307--314, 1982.

\bibitem{seborg1986adaptive}
DE~Seborg, Thomas~F Edgar, and SL~Shah.
\newblock Adaptive control strategies for process control: a survey.
\newblock {\em AIChE Journal}, 32(6):881--913, 1986.

\bibitem{sastry1977self}
VA~Sastry, Dale~E Seborg, and RK~Wood.
\newblock Self-tuning regulator applied to a binary distillation column.
\newblock {\em Automatica}, 13(4):417--424, 1977.

\bibitem{bongtsson1984experiences}
G~Bongtsson and Bo~Egardt.
\newblock Experiences with self-tuning control in the process industry.
\newblock {\em IFAC Proceedings Volumes}, 17(2):2259--2267, 1984.

\bibitem{kallstrom1979adaptive}
Claes~G K{\"a}llstr{\"o}m, Karl~Johan {\AA}str{\"o}m, NE~Thorell, J~Eriksson,
  and L~Sten.
\newblock Adaptive autopilots for tankers.
\newblock {\em Automatica}, 15(3):241--254, 1979.

\bibitem{goodwin1981discrete}
Graham~C Goodwin, Peter~J Ramadge, and Peter~E Caines.
\newblock Discrete time stochastic adaptive control.
\newblock {\em SIAM Journal on Control and Optimization}, 19(6):829--853, 1981.

\bibitem{becker1985adaptive}
A.~Becker, P.~R. Kumar, and C-Z Wei.
\newblock Adaptive control with the stochastic approximation algorithm:
  Geometry and convergence.
\newblock {\em IEEE Transactions on Automatic Control}, 30(4):330--338, 1985.

\bibitem{kumar1987self}
P.~R. Kumar and L~Praly.
\newblock Self-tuning trackers.
\newblock {\em SIAM journal on control and optimization}, 25(4):1053--1071,
  1987.

\bibitem{guo1991astrom}
Lei Guo, Han-Fu Chen, et~al.
\newblock The {A}strom-{W}ittenmark self-tuning regulator revisited and
  {ELS}-based adaptive trackers.
\newblock {\em IEEE Transactions on Automatic Control}, 36(7):802--812, 1991.

\bibitem{praly1989robust}
L~Praly, S-F Lin, and P.~R. Kumar.
\newblock A robust adaptive minimum variance controller.
\newblock {\em SIAM journal on control and optimization}, 27(2):235--266, 1989.

\bibitem{lai1986asymptotically}
Tze~Leung Lai.
\newblock Asymptotically efficient adaptive control in stochastic regression
  models.
\newblock {\em Advances in Applied Mathematics}, 7(1):23--45, 1986.

\bibitem{cohen2019learning}
Alon Cohen, Tomer Koren, and Yishay Mansour.
\newblock Learning linear-quadratic regulators efficiently with only
  $\backslash sqrtt $ regret.
\newblock In {\em International Conference on Machine Learning}, pages
  1300--1309. PMLR, 2019.

\bibitem{mania2019certainty}
Horia Mania, Stephen Tu, and Benjamin Recht.
\newblock Certainty equivalence is efficient for linear quadratic control.
\newblock {\em Advances in Neural Information Processing Systems}, 32, 2019.

\bibitem{cassel2020logarithmic}
Asaf Cassel, Alon Cohen, and Tomer Koren.
\newblock Logarithmic regret for learning linear quadratic regulators
  efficiently.
\newblock In {\em International Conference on Machine Learning}, pages
  1328--1337. PMLR, 2020.

\bibitem{kumar1986stochastic}
P.~R. Kumar and Pravin Varaiya.
\newblock {\em Stochastic systems: Estimation, identification, and adaptive
  control}.
\newblock Prentice-Hall, Englewood Cliffs, NJ, 1986.

\bibitem{anderson1976some}
Theodore~W Anderson and John~B Taylor.
\newblock Some experimental results on the statistical properties of least
  squares estimates in control problems.
\newblock {\em Econometrica: Journal of the Econometric Society}, pages
  1289--1302, 1976.

\bibitem{pena2009self}
Victor~H Pe{\~n}a, Tze~Leung Lai, and Qi-Man Shao.
\newblock {\em Self-normalized processes: Limit theory and Statistical
  Applications}.
\newblock Springer, 2009.

\bibitem{abbasi2011improved}
Yasin Abbasi-Yadkori, D{\'a}vid P{\'a}l, and Csaba Szepesv{\'a}ri.
\newblock Improved algorithms for linear stochastic bandits.
\newblock {\em Advances in neural information processing systems}, 24, 2011.

\bibitem{tao_vu}
Terence Tao and Van Vu.
\newblock {Random matrices: Universality of local spectral statistics of
  non-Hermitian matrices}.
\newblock {\em The Annals of Probability}, 43(2):782 -- 874, 2015.

\bibitem{horn2012matrix}
Roger~A Horn and Charles~R Johnson.
\newblock {\em Matrix analysis}.
\newblock Cambridge university press, 2012.

\bibitem{lai1982asymptotic}
T.~L. Lai and C.~Z. Wei.
\newblock Asymptotic properties of projections with applications to stochastic
  regression problems.
\newblock {\em Journal of Multivariate Analysis}, 12(3):346--370, 1982.

\bibitem{lai1982least}
Tze~Leung Lai and Ching~Zong Wei.
\newblock Least squares estimates in stochastic regression models with
  applications to identification and control of dynamic systems.
\newblock {\em The Annals of Statistics}, 10(1):154--166, 1982.

\bibitem{lai1979strong}
T~L. Lai, Herbert Robbins, and C.~Z. Wei.
\newblock Strong consistency of least squares estimates in multiple regression
  ii.
\newblock {\em Journal of multivariate analysis}, 9(3):343--361, 1979.

\bibitem{vershynin2018high}
Roman Vershynin.
\newblock {\em High-dimensional probability: An introduction with applications
  in data science}, volume~47.
\newblock Cambridge university press, 2018.

\end{thebibliography}
\bibliographystyle{unsrt}
\clearpage
\tableofcontents
\clearpage
\appendix
\section{Organization of the Appendix}
In Appendix \ref{sec:regret_analysis}, we provide the proof of the regret bound that was provided in Theorem \ref{thm:regret_bounded} for the case of bounded noise. Appendix~\ref{sec:unbounded_noise} discusses how to extend this to the case of sub-Gaussian noise. Proof of Theorem~\ref{th:estimation_error_main}, i.e. the bound on estimation error, is provided in Appendix~\ref{sec:estimation_error}. Proof of Theorem~\ref{th:estimation_error_main} relies crucially upon the analysis done in Appendix~\ref{sec:lmin}. 
The design of PIECE algorithm involves the choice of length of the first episode and the clipping function $B_u$ which is discussed in Appendix \ref{sec:first_epi} and Appendix \ref{sec:threshold} respectively.
Finally, the details of simulation setup and additional results are provided in Appendix~ \ref{sec:simulations_appendix}.
\section{Regret Analysis (Proofs)}\label{sec:regret_analysis}
Let $r_t$ be the instantaneous at time $t$.~The regret equation~\eqref{def:cr} and re-parameterization of ARX model~\eqref{def:re_arx} yield,
\al{
	r_t  &= (y_t - w_t)^2 \notag \\
	& = b^2_1\br{ u_{t-1} - \lm'\psi_{t-1}  }^2. \label{eq:inst_reg}
}

The behavior of the term $u_{t} - \lm'\psi_{t}$ is different for times $t\not\in \cI$ and $t\in\cI$, and so we study them separately.

\subsection{$r_t$ for $t\in\cI$}\label{sec:regret}
We will discuss only the case of bounded noise, i.e. when $\{w_t\}$ satisfies Assumption~\ref{assum:bounded_noise}, since the proof under Assumption~\ref{assum:sub_gaussian} follows through using similar arguments by restricting to the set $\cG_w$ defined in \eqref{def:g1}.

For $t\in\cI$, the instantaneous regret~\eqref{eq:inst_reg} can be bounded as follows:
\nal{
	r_t &= b^2_1\br{ u_{t-1} - \lm'\psi_{t-1}  }^2\\
	&\le 2~b^2_1 \br{ |u_{t-1}|^2 + |\lm'\psi_{t-1}|^2   }\\
	& \le 2~b^2_1 \br{B^2_w+  |\lm'\psi_{t-1}|^2  }\\
	& \le 2~b^2_1 \br{B^2_w+ \|\lm\|^2 	\|\phi_t\|^2} \\
	& \le 2~b^2_1 \br{B^2_w+ \|\lm\|^2 	\br{\|U_t\|^2 + \|Y_t\|^2}    } \\
	& \le 2~b^2_1 \br{B^2_w+ \|\lm\|^2 	\br{q B^2_u +\left[C_1 \rho^{t} \| Y_{0}\| +\frac{B_u C_1}{1-\rho}  \left\{1+  \sum_{\ell=1}^{q} |b_\ell|  \right\}\right]^2}    } \\
	& \le 2~b^2_1 \br{B^2_w+ \|\lm\|^2 	\br{q B^2_u +\left[C_1 \| Y_{0}\| +\frac{B_u C_1}{1-\rho}  \left\{1+  \sum_{\ell=1}^{q} |b_\ell|  \right\}\right]^2}  ,  } 
} 
where the second inequality follows since $|u_t|\le B_w$ by the design of the algorithm, the
third inequality follows since $\|\psi_t\|\le \|\phi_t\|$, and the fifth inequality follows from Lemma~\ref{lemma:bound_phi}.




We therefore have the following bound on the cumulative regret incurred during the exploration steps $\cI$:
\begin{lemma}\label{lemma:regret_I}
	When $\{w_t\}$ satisfies Assumption~\ref{assum:bounded_noise},
	\al{
		\sum_{t\in\cI} r_t \le 2~b^2_1 \br{B^2_u+ \lm^2 	\br{q B^2_u +\left[C_1 \| Y_{0}\| +\frac{B_u C_1}{1-\rho}  \left\{1+  \sum_{\ell=1}^{q} |b_\ell|  \right\}\right]^2}    }  N\ui_t,\label{ineq:15}
	}
	where $N\ui_t$ is the number of exploratory steps until $t$, and $C_1,\rho$ are as discussed in Section~\ref{sec:bound_yt}. The same conclusion holds on the set $\cG_w$ where $\{w_t\}$ instead satisfies Assumption~\ref{assum:sub_gaussian}.
\end{lemma}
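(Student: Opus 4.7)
The plan is to simply sum the per-step bound already established immediately above the lemma statement. Recall that the chain of inequalities derived from \eqref{eq:inst_reg} produced, for every $t \in \cI$,
\al{
 r_t \le 2 b_1^2 \br{ B_w^2 + \|\lm\|^2 \br{ q B_u^2 + \left[ C_1 \|Y_0\| + \frac{B_u C_1}{1-\rho}\br{1 + \sum_{\ell=1}^q |b_\ell|} \right]^2 } },\notag
}
using $|u_{t-1}| \le B_w$ (exploration input), $\|\psi_{t-1}\| \le \|\phi_{t-1}\|$, the split $\|\phi_{t-1}\|^2 \le \|U_{t-1}\|^2 + \|Y_{t-1}\|^2$, the trivial bound $\|U_{t-1}\|^2 \le q B_u^2$ coming from the clipping rule~\eqref{def:control_rule}, and the uniform output bound $\|Y_t\| \le C_1 \rho^t \|Y_0\| + \frac{B_u C_1}{1-\rho}\br{1 + \sum_\ell |b_\ell|}$ from Lemma~\ref{lemma:bound_phi} (together with $\rho^t \le 1$).

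The first step is then cosmetic: since $B_u = \frac{B_w}{\delta_1^2}\br{1 + M(\Theta)} \ge B_w$ (as $\delta_1 \le 1$ and $M(\Theta) \ge 1$ by definition), I can replace the $B_w^2$ term in the per-step bound by $B_u^2$ without loss. The second step is to sum over $t \in \cI$; because the right-hand side of the per-step bound is a deterministic constant not depending on $t$, and there are exactly $N\ui_t$ exploration instants up to time $t$, this produces the stated inequality~\eqref{ineq:15} immediately.

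For the sub-Gaussian case, the per-step derivation used only two places where boundedness of noise entered, namely (a) the bound $|u_t| \le B_w$ for $t \in \cI$, which is imposed by the algorithm~\eqref{def:bounded_input} and therefore survives without change, and (b) the invocation of Lemma~\ref{lemma:bound_phi} to bound $\|Y_t\|$, which under Assumption~\ref{assum:sub_gaussian} holds on the event $\cG_w := \{|w_s| \le B_w, \ s \le T\}$ of probability at least $1 - \delta$ obtained by a standard sub-Gaussian tail bound plus union bound. Hence, restricting to $\cG_w$ reproduces the same per-step inequality, and the summation argument goes through verbatim.

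The only item requiring slight care is the monotonicity $B_u \ge B_w$; beyond that, there is no real obstacle, since the heavy lifting (uniform output bound and the input clipping) has already been done in the display preceding the lemma and in Lemma~\ref{lemma:bound_phi}.
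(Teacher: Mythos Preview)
Your proposal is correct and follows essentially the same approach as the paper: the paper derives the identical per-step bound in the display immediately preceding the lemma, then simply sums over $t\in\cI$ (the lemma is prefaced by ``We therefore have the following bound''), and handles the sub-Gaussian case by the same restriction to $\cG_w$. One tiny notational point: in the sub-Gaussian case the paper defines $\cG_w$ via the bound $\sigma\sqrt{\log(T/\delta)}$ rather than a fixed $B_w$, but this does not affect your argument.
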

\subsection{$r_t$ for $t\notin \cI$}
On $t\notin \cI$, the input $u_t$ is chosen according to the rule~\eqref{def:control_rule}. This rule can be written equivalently as follows. Define 
\al{
z_t := 
\begin{cases}
	\lm'_{t-1} \psi_t ~\mbox{ if } \Big| \lm'_{t-1} \psi_t - \br{\tilde{\lm}^{(\cI)}_{t-1}}' \psi_{t} \Big| \le B_2  \frac{\log N^{(\cI)}_t}{\sqrt{N^{(\cI)}_t}}\|\psi_{t}\|, \\
	\br{\tilde{\lm}^{(\cI)}_{t-1}}' \psi_{t} \mbox{ otherwise }
\end{cases},
}
so that~\eqref{def:control_rule} is equivalently
\al{
u_t = \br{-B_u} \vee z_t \wedge B_u.
}
The re-parametrization~\eqref{def:re_arx} suggests that one can view the quantity 
$\tilde{b}\ui_{1,t-1}\br{u_{t} - \lm'_{t-1 } \psi_{t} }$ as the prediction of $y_{t+1}$ based on the information available until $t$ .~Hence, define the prediction error at time $t+1$ by
\al{
	e_{t+1} :&= y_{t+1} -\tilde{b}^{(\cI)}_{1,t-1}\br{u_{t} - \lm'_{t-1 } \psi_{t} }\notag\\	
	& = b_1 \br{u_t - \lm' \psi_t  }-\tilde{b}\ui_{1,t-1}\br{u_{t} - \lm'_{t-1 } \psi_{t} } + w_{t+1},
	\label{def:error}
}
where the second equality follows from~\eqref{def:re_arx}.~It is shown in Theorem~\ref{th:reg_n_i} that for times $t\ge \max\{t\ust_1(\rho),t\ust_2(\rho),t\ust_3(\rho)\}$, where $t\ust_1(\rho),t\ust_2(\rho),t\ust_3(\rho)$ are as in Definition~\ref{def:t_1}, the instantaneous regret for $t\notin \cI$ can be bounded by the quantity $\br{e_t - w_t}^2$. Hence, we will now focus on bounding $\sum_{t\ge \max\{t\ust_1,t\ust_2,t\ust_3\}} \br{e_t - w_t}^2$. Instead, we will bound $\sum_{t\ge t\ust} \br{e_t - w_t}^2$, where $t\ust$ is as in~\eqref{def:t_star}, since bounding this expression is simpler.

We begin with some definitions. Define,
\al{
g_t := \br{\lm_t - \lm}P^{-1}_t\br{\lm_t -\lm}',
}
where $P^{-1}_t$ is obtained recursively as follows,
\al{
	P^{-1}_t = P^{-1}_{t-1} + \psi_t \psi'_t.\label{def:P}
}
Also let,
\al{
q_t :&= \Tr\br{b^2_1 g_t},\label{def:qt}\\
\gamma_t :&= \frac{b_1}{b_{1,t}}
}
The following result is essentially (4.12),~(4.13) of~\cite{lai1986asymptotically}:
\begin{lemma}\label{lemma:recursion_qt}
	For the ARX model~\eqref{def:arx}, we have the following recursion for $q_t$ for times $t\ge \tau$, 
	\al{
		q_t -q_{t-1} &=  -\left[ b_1 \br{1-\gamma_t}\br{u_t - \lm' \psi_t } - b_1 \br{\lm_{t-1} - \lm  }'\psi_t	\right]^2  			\notag	\\
		& + \left[ b_1 \br{1-\gamma_t} \br{u_t - \lm' \psi_t}  \right]^2 -2 \gamma_t w_t b_1 \br{\lm_{t-1} - \lm}'\psi_t \notag \\
		& + \br{ \psi'_t P_t \psi_t } \left[ b_1 \br{1-\gamma_t}\br{u_t - \lm' \psi_t } - b_1 \br{\lm_{t-1} - \lm  }'\psi_t	-\gamma_t w_t\right]^2,~t=1,2,\ldots.
		\label{eq:q_recursion}
	}
\end{lemma}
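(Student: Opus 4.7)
The first step is to expand the innovation in the recursion (\ref{def:rec_lambda}) using the re-parametrized ARX equation (\ref{def:re_arx}). Writing $\tilde{\lm}_{t-1} := \lm_{t-1}-\lm$ and substituting $y_{t+1}=b_1(u_t-\lm'\psi_t)+w_{t+1}$, the scalar innovation becomes
\[
\eta_t := u_t - \tilde{b}^{(\cI)}_{1,t-1} y_{t+1} - \lm'_{t-1}\psi_t = (1-\gamma_t)(u_t-\lm'\psi_t) - \tilde{\lm}'_{t-1}\psi_t - (\gamma_t/b_1)\,w_{t+1},
\]
after identifying $\gamma_t = b_1/b_{1,t}$ with the estimate of $b_1$ appearing in the RLS update (the indexing mismatch with $\tilde{b}^{(\cI)}_{1,t-1}$ in (\ref{def:rec_lambda}) versus $b_{1,t}$ in the definition of $\gamma_t$ is a point I will have to reconcile carefully, most likely by invoking the current exploration-based estimate as the one that appears in the recursion). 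This rewriting of $\eta_t$ as "signal reduction $+$ noise" is the starting point: note that $b_1 \eta_t$ equals exactly the bracketed quantity $\bigl[b_1(1-\gamma_t)(u_t-\lm'\psi_t) - b_1\tilde{\lm}'_{t-1}\psi_t - \gamma_t w_{t+1}\bigr]$ that appears inside the last square on the right-hand side of (\ref{eq:q_recursion}).

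Next, starting from $\tilde{\lm}_t = \tilde{\lm}_{t-1} + P_t \psi_t \eta_t$ and $P_t^{-1} = P_{t-1}^{-1} + \psi_t\psi_t'$, I would expand
\[
q_t = b_1^2\,\tilde{\lm}_t' P_t^{-1}\tilde{\lm}_t = b_1^2\bigl(\tilde{\lm}_{t-1} + P_t\psi_t\eta_t\bigr)'\bigl(P_{t-1}^{-1}+\psi_t\psi_t'\bigr)\bigl(\tilde{\lm}_{t-1} + P_t\psi_t\eta_t\bigr),
\]
and exploit the standard RLS identity $P_{t-1}^{-1}P_t = I - \psi_t\psi_t' P_t$ (so that $\psi_t'P_{t-1}^{-1}P_t\psi_t = \psi_t'\psi_t(1-\psi_t'P_t\psi_t)$, etc.). The diagonal term $b_1^2 \tilde{\lm}'_{t-1}P_{t-1}^{-1}\tilde{\lm}_{t-1}$ yields $q_{t-1}$. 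The remaining terms are (i) cross terms between $\tilde{\lm}_{t-1}$ and $P_t\psi_t\eta_t$ through $P_t^{-1}$, which collapse using the above identity to a multiple of $\eta_t\,\tilde{\lm}'_{t-1}\psi_t$, and (ii) the quadratic term $b_1^2\eta_t^2 \psi_t'P_t P_t^{-1} P_t\psi_t = b_1^2\eta_t^2\,\psi_t'P_t\psi_t$, which I recognize as the last summand in (\ref{eq:q_recursion}).

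The collection step is where the claimed algebraic form emerges. Writing $A := b_1(1-\gamma_t)(u_t-\lm'\psi_t)$ and $B := b_1\tilde{\lm}'_{t-1}\psi_t$, so that $b_1\eta_t = A - B - \gamma_t w_{t+1}$, the cross terms from step (i) produce $-2(A-B)B - 2\gamma_t w_{t+1} \cdot (-B)$ plus adjustments of order $\psi_t'P_t\psi_t$ that get absorbed into the bracketed square. Combining the $-2(A-B)B$ with the "natural" $-B^2$ completes the square as $-(A-B)^2 + A^2$, matching the first two summands on the right of (\ref{eq:q_recursion}); the $2\gamma_t w_{t+1} B = 2\gamma_t w_{t+1} \cdot b_1\tilde{\lm}'_{t-1}\psi_t$ term then yields the third summand (possibly with the paper's convention making $w_{t+1}$ appear as $w_t$). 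The quadratic contribution from step (ii) provides the fourth summand.

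The main obstacle will be the bookkeeping in the collection step: there are several $O(\psi_t' P_t\psi_t)$ cross terms that look like they belong to the "reduction" part but actually need to be grouped into the final $(\psi_t'P_t\psi_t)[\cdot]^2$ summand, and the sign and combinatorics of the completion of the square $-(A-B)^2 + A^2$ must be done with care. A secondary but non-trivial check is that the derivation only uses $P_t$ being well-defined, which is why the statement restricts to $t \geq \tau$; the invertibility of $P_t^{-1}$ for $t \geq \tau$ is what legitimizes all the manipulations above. Once these are handled, the identity (\ref{eq:q_recursion}) follows from purely algebraic rearrangement.
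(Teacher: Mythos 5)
Your approach is sound and it does close; note first, though, that the paper itself contains no proof of this lemma: immediately before the statement it says the result ``is essentially (4.12), (4.13) of'' Lai and Wei \cite{lai1986asymptotically}, so the recursion is imported by citation, and your proposal supplies the RLS algebra that the paper delegates to that reference. To confirm your plan compiles: with $\tilde\lm_{t-1}:=\lm_{t-1}-\lm$, $A:=b_1\br{1-\gamma_t}\br{u_t-\lm'\psi_t}$, $B:=b_1\tilde\lm_{t-1}'\psi_t$, $W:=\gamma_t w_{t+1}$, and $b_1\eta_t=A-B-W$ exactly as you derived, expanding $q_t=b_1^2\br{\tilde\lm_{t-1}+P_t\psi_t\eta_t}'P_t^{-1}\br{\tilde\lm_{t-1}+P_t\psi_t\eta_t}$ using only $P_t^{-1}=P_{t-1}^{-1}+\psi_t\psi_t'$ and $P_tP_t^{-1}=I$ (the latter is precisely what the restriction $t\ge\tau$ buys, as you observed) gives
\begin{align*}
q_t-q_{t-1} &= B^2+2\br{A-B-W}B+\br{\psi_t'P_t\psi_t}\br{A-B-W}^2\\
&= -\br{A-B}^2+A^2-2WB+\br{\psi_t'P_t\psi_t}\br{A-B-W}^2,
\end{align*}
which is \eqref{eq:q_recursion} exactly; the paper's $w_t$ is indeed the time-aligned $w_{t+1}$, and the printed \eqref{def:rec_lambda}, which multiplies $y_{t+1}$ by $\tilde b^{(\cI)}_{1,t-1}$ rather than dividing, is a typo for division --- both glitches you correctly flagged, since division is what makes $\gamma_t=b_1/b_{1,t}$ appear.

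Two corrections to your bookkeeping, neither fatal. First, the cross term comes out as $+2\br{A-B-W}B$, so your ``$-2(A-B)B-2\gamma_t w_{t+1}\cdot(-B)$'' has both signs flipped; moreover the $B^2$ it combines with is the \emph{positive} one produced by the rank-one part of the diagonal term, $\tilde\lm_{t-1}'\psi_t\psi_t'\tilde\lm_{t-1}$, not a ``natural $-B^2$.'' With the right signs, the completion $B^2+2(A-B)B=A^2-(A-B)^2$ delivers the first two summands and $-2WB$ the third. Second, the $O(\psi_t'P_t\psi_t)$ absorption you anticipated as the main obstacle never arises: since $\psi_t'P_tP_t^{-1}=\psi_t'$, the cross term carries no such correction, and the entire $\br{\psi_t'P_t\psi_t}$ summand comes from the single quadratic term $\eta_t^2\,\psi_t'P_tP_t^{-1}P_t\psi_t=\eta_t^2\,\psi_t'P_t\psi_t$. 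So the cleanest version of your argument is three lines of algebra --- simpler than you feared, and a genuine addition relative to the paper, which proves nothing here.
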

Define the times,
\al{
t\ust_5(\eps_1) & := \frac{2\br{C_1\| Y_{0}\| }^2+2\br{\frac{B_u C_1}{1-\rho}  \left\{1+  \sum_{\ell=1}^{q} |b_\ell|  \right\}}^2 + q B^2_u}{\beta_3 \eps_1},~\eps_1 >0\label{def:t5_star} \mbox{  and } \\
	t\ust_6(\eps_3,\delta) & := \inf \left\{t \in\bN: \cE(t;\te\ust,\delta) \le b_1 \eps_3 \right\},~\eps_3>0.\label{def:t_6_star}
}
Also the times $t\ust_1(\rho),t\ust_2(\rho),t\ust_3(\rho)$ defined in Definition~\ref{def:t_1}, where $\rho$ is operator norm of the unknown system as in Section \ref{sec:system_model}. Let,
\al{\label{def:t_star}
t\ust := t\ust_1(\rho)\vee t\ust_2(\rho)\vee t\ust_3(\rho)\vee t\ust_5(\eps_1) \vee t\ust_6(\eps_3,\delta)\vee \tau.
}

Define
\al{
	\cT_2 :  & =  \sum_{t= t\ust}^{T}\left[ b_1 \br{1-\gamma_t} \br{u_t - \lm' \psi_t}  \right]^2,\label{def:t2}\\
	\cT_3 :  & = \sum_{t= t\ust}^{T} \gamma_t w_t b_1 \br{\lm_{t-1} - \lm}'\psi_t, \\
	\cT_4 & := \sum_{s= t\ust}^{T}\br{ \psi'_s P_s \psi_s } \left[ b_1 \br{1-\gamma_s}\br{u_s - \lm' \psi_s } - b_1 \br{\lm_{s-1} - \lm  }'\psi_s	-\gamma_s w_s\right]^2,
 } 
where $\beta_3$ is as in~\eqref{def:beta_3}.~We will occasionally omit dependence of $t\ust_5(\eps_1),t\ust_6(\eps_3,\delta)$ upon $\eps_1,\eps_3,\delta$ to ease notation.~The sets $\cG_q,\cG_{LSE}$ are defined in~\eqref{def:g_q} and~\eqref{def:glse} respectively. The parameter $\eps_1$ here controls the ``degree of excitation,'' i.e. $\lm_{\min}(V_t)$ during the first exploratory episode. The parameter $\eps_3$ controls the estimation error at the end of the first exploratory episode, and $\delta$ is the confidence parameter which decides the probability contained in the corresponding event.
\begin{lemma}\label{lemma:prediction_square_bound}
Let $\eps_1,\eps_2,\eps_3>0$, where $\eps_1,\eps_3$ are as in~\eqref{def:t5_star},~\eqref{def:t_6_star}. When $\{w_t\}$ satisfies Assumption~\ref{assum:bounded_noise}, then on the set $\cG_q \cap \cG_{LSE}\cap \cG_{\pj}\cap \cG_{\cI}$,
		\al{
	\sum_{s =t\ust}^{t} \br{w_s - e_s}^2 \le \frac{1}{\br{1-\eps^2_3}}\max\left\{\frac{\br{\cT_2 + \cT_3 + \cT_{4,2}} }{1- \eps_1 -\eps_2},\frac{\log\br{1\slash \delta}}{\eps^2_2} \right\}+ q_{t\ust},
}
for all $\eps_2 >0$, where $q_t$ is as in~\eqref{def:qt}.~The same conclusion holds on $\cG_q \cap \cG_{LSE}\cap \cG_{\pj}\cap \cG_{\cI}\cap \cG_w$ when $\{w_t\}$ instead satisfies Assumption~\ref{assum:sub_gaussian}.
\end{lemma}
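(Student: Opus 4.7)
The plan is to derive the claim by telescoping the recursion for $q_t$ given in Lemma~\ref{lemma:recursion_qt}, identifying its first squared term with the prediction mismatch $(e_s - w_s)^2$ (up to a multiplicative factor close to $1$ on the high-probability set), and then controlling the residual stochastic terms via martingale concentration.

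First I would sum~\eqref{eq:q_recursion} from $s = t\ust$ through $T$. Writing $S_s := \left[ b_1(1-\gamma_s)(u_s - \lm'\psi_s) - b_1(\lm_{s-1}-\lm)'\psi_s \right]^2$ for the first bracketed square and using $q_T \ge 0$, the telescoping yields
\al{
\sum_{s=t\ust}^{T} S_s \le q_{t\ust} + \cT_2 - 2\cT_3 + \cT_4.
}
Next, expand the outer square in $\cT_4$ as $[A_s - \gamma_s w_s]^2 = A_s^2 - 2 A_s \gamma_s w_s + \gamma_s^2 w_s^2$, where $A_s := b_1(1-\gamma_s)(u_s - \lm'\psi_s) - b_1(\lm_{s-1}-\lm)'\psi_s$. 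Thus $\cT_4$ splits as $\sum (\psi_s'P_s\psi_s) A_s^2$ plus a martingale transform $\cT_{4,1}$ plus the term $\cT_{4,2} := \sum (\psi_s'P_s\psi_s) \gamma_s^2 w_s^2$ that already figures in the statement. On $\cG_{\cI}$ and for $s \ge t\ust_5(\eps_1)$, the lower bound on $\lm_{\min}(V_s)$ supplied by the first-episode analysis forces $\psi_s' P_s \psi_s \le \eps_1$, so $\sum (\psi_s'P_s\psi_s) A_s^2 \le \eps_1 \sum_s S_s$.

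The cross terms $-2\cT_3$ and $\cT_{4,1}$ are martingale transforms in $\{w_s\}$ with predictable multipliers that are linear in $(\lm_{s-1}-\lm)'\psi_s$ and $A_s$ respectively. On the set $\cG_q$, the self-normalized deviation inequality of~\cite{pena2009self,abbasi2011improved}, combined with a Young-type inequality $2ab \le \eps_2 a^2 + b^2/\eps_2$, bounds the absolute value of each transform by $\eps_2 \sum_s S_s + \log(1/\delta)/\eps_2^2$ through a standard dichotomy. Substituting back produces either
\al{
(1-\eps_1-\eps_2)\sum_{s=t\ust}^{T} S_s \le q_{t\ust} + \cT_2 + \cT_3 + \cT_{4,2},
}
or the opposite branch $\sum_s S_s \le \log(1/\delta)/\eps_2^2$, so that the $\max$ in the statement emerges.

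It remains to translate $S_s$ back to $(e_s - w_s)^2$. From~\eqref{def:error}, $e_{s+1}-w_{s+1} = b_1(u_s - \lm'\psi_s) - \tilde{b}^{(\cI)}_{1,s-1}(u_s - \lm'_{s-1}\psi_s)$; a direct algebraic rearrangement using $\gamma_s = b_1/b_{1,s}$ expresses this, up to a $\gamma_s$-type factor, as the bracket inside $S_s$. On $\cG_{LSE}$ and for $s \ge t\ust_6(\eps_3,\delta)$, the estimation-error bound~\eqref{eq:bound_I} forces $|b_{1,s} - b_1| \le \eps_3\, b_1$, whence $(e_s-w_s)^2 \le (1-\eps_3^2)^{-1} S_s$; multiplying through delivers the claimed inequality. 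For the sub-Gaussian case one additionally restricts to $\cG_w$, which supplies a uniform high-probability envelope on $|w_s|$ in the handful of places where $w_s$ enters pointwise rather than through a martingale concentration. The main obstacle is the bookkeeping in this last step: one must push the constants carefully so that the sharp factor $(1-\eps_3^2)^{-1}$ (rather than a cruder constant) emerges, and one must use $\cG_{\pj} \cap \cG_{\cI}$ to rule out the clipping and diagnostic-switch events in the CE control law so that the substitution $u_s \approx \lm_{s-1}'\psi_s$ (implicit in identifying $S_s$ with $(e_s-w_s)^2$) is justified for all the relevant $s$.
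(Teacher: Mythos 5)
Your proof follows essentially the same route as the paper's: summing the recursion of Lemma~\ref{lemma:recursion_qt} from $t\ust$, splitting $\cT_4$ into its three components, using the first-episode excitation to force $\psi_s' P_s \psi_s \le \eps_1$, controlling the martingale transforms on $\cG_q$ via self-normalized concentration and the $\eps_2$-dichotomy that yields the $\max$, and converting the bracketed square to $\br{w_s-e_s}^2$ through the identity of Proposition~\ref{prop:10} together with the definition of $t\ust_6$. The only cosmetic differences are that your final step needs no control-law argument at all --- Proposition~\ref{prop:10} is a purely algebraic identity, so your precaution about clipping and the diagnostic switch is unnecessary here (those events matter for Theorem~\ref{th:reg_n_i}, not for this lemma) --- and that your write-up shares with the paper the same minor constant slippage between the factor $\br{1-\eps_3}^2$ obtained from $\gamma_s \ge 1-\eps_3$ in the body of the argument and the factor $1-\eps^2_3$ appearing in the statement.
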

\begin{proof}
The result is obtained by summing the recursions~\eqref{eq:q_recursion}, and bounding each of the terms separately as in the next section.
 
Upon substituting the bounds derived in Section~\ref{sec:T_bounds}, and summing up the recursions~\eqref{eq:q_recursion} for $t\ge t\ust$, we get the following bound on $q_t$:
\nal{
q_t - q_{t\ust}&\le  -\sum_{s=t\ust}^{t}\left[ b_1 \br{1-\gamma_s}\br{u_s - \lm' \psi_s } - b_1 \br{\lm_{s-1} - \lm  }'\psi_s	\right]^2	\notag	\\
			& + \sum_{s=t\ust}^{t} \br{ \psi'_s P_s \psi_s }  \left[b_1 \br{1-\gamma_s}\br{u_s - \lm' \psi_s } - b_1 \br{\lm_{s-1} - \lm  }'\psi_s	\right]^2\\	
			& + \sum_{s=t\ust}^{t} \left[ b_1 \br{1-\gamma_s} \br{u_s - \lm' \psi_s}  \right]^2 \notag \\
			& + \left(\sum_{s=1}^{t}\left\{\gamma_s b_1 \br{\lm_{s-1} - \lm}'\psi_s \right\}^2\right)^{1\slash 2} \log\left[\frac{1}{\delta}\left(\sum_{s=1}^{t}\left\{\gamma_s b_1\br{\lm_{s-1} - \lm}'\psi_s \right\}^2\right)\right] \\	
			& + \br{1+\eps_3} \log\left(\frac{T}{\delta}\right) \log \br{\det \br{\sum_{s=1}^{t}\phi_s \phi'_s}}\notag\\
			& + \br{1+\eps_3} \sqrt{ \sum_{s=1}^{t} \br{b_1 \br{1-\gamma_s}\br{u_s - \lm' \psi_s } - b_1 \br{\lm_{s-1} - \lm  }'\psi_s 	}^2 }\notag \\
			& \times \sqrt{\log \br{\frac{ \br{1+\eps_3}^2 }{\delta}}\br{\sum_{s=1}^{t}  \br{b_1 \br{1-\gamma_s}\br{u_s - \lm' \psi_s } - b_1 \br{\lm_{s-1} - \lm  }'\psi_s	}^2 } }.
		}
		Or equivalently,
		\al{
			q_t - q_{t\ust}+& \sum_{s=t\ust}^{t}\left[ b_1 \br{1-\gamma_s}\br{u_s - \lm' \psi_s } - b_1 \br{\lm_{s-1} - \lm  }'\psi_s	\right]^2 \left\{1-\psi'_s P_s \psi_s \right\}\notag\\
			&\le \cT_2 + \cT_3 + \cT_{4,2} + \cT_{4,3},\label{ineq:59}
		}
  where $\cT_{4,2},\cT_{4,3}$ are as in Section~\ref{sec:T_bounds}.
		We now derive an upper-bound on $\psi'_s P_s \psi_s$.
		Since the duration of the first exploratory episode is greater than $t\ust_5(\eps_1)$, we have $\frac{B^2_u}{N\ui_t} \le \eps_1$ for $t>t\ust_5$.~For $t>t\ust_5(\eps_1)$,
		\al{
			\psi'_{t} P_t \psi_t & \le \frac{\|\psi_t\|^2}{\lm_{\min}\br{\sum_{s=1}^{t} \psi_{s} \psi'_{s}}}\notag\\
			& \le \frac{\|\psi_t\|^2}{\lm_{\min}\br{\sum_{s=1}^{t\ust_5(\eps_1)} \psi_{s} \psi'_{s}}}\notag\\			
			& \le \frac{\|\psi_t\|^2}{\beta_3 N\ui_{t\ust_5}}\notag\\
			& \le \frac{\br{C_1\| Y_{0}\| +\frac{B_u C_1}{1-\rho}  \left\{1+  \sum_{\ell=1}^{q} |b_\ell|  \right\}}^2 + q B^2_u}{\beta_3 N\ui_{t\ust_5}}\notag\\
			& \le \frac{2\br{C_1\| Y_{0}\| }^2+2\br{\frac{B_u C_1}{1-\rho}  \left\{1+  \sum_{\ell=1}^{q} |b_\ell|  \right\}}^2 + q B^2_u}{\beta_3 N\ui_{t\ust_5}}\notag\\			
			& \le \frac{B^2_u}{N\ui_s}\notag\\
			& \le \eps_1,
		}
		where the third inequality follows from Theorem~\ref{th:1}, while the fourth follows from Lemma~\ref{lemma:bound_phi}, and $\beta_3$ is as in~\eqref{def:beta_3}. The last inequality follows since the first exploratory episode is of duration greater than $t\ust_5(\eps_1)$.
		
		
		Denoting 
		\al{
\cS_1 := 		\sum_{s=t\ust}^{t}\left[ b_1 \br{1-\gamma_s}\br{u_s - \lm' \psi_s } - b_1 \br{\lm_{s-1} - \lm  }'\psi_s	\right]^2,
	}
~\eqref{ineq:59} yields,
		\al{
			\cS_1\br{1-\eps_1} \le \cT_2 + \cT_3 + \cT_{4,2} + \sqrt{\cS_1\log \br{\cS_1}} \times \sqrt{\log\br{1\slash \delta} }.
		}
  Let $\eps_2>0$. After performing some algebraic manipulations, we get,
		\al{
			\cS_1&\le \frac{\br{\cT_2 + \cT_3 + \cT_{4,2}} }{1- \eps_1 -\eps_2},~\mbox{ when } \cS_1 \ge \frac{\log\br{1\slash \delta}}{\eps^2_2},\notag\\\label{ineq:73}
			\mbox{ or alternatively } \cS_1&\le   \max\left\{\frac{\br{\cT_2 + \cT_3 + \cT_{4,2}} }{1- \eps_1 -\eps_2},\frac{\log\br{1\slash \delta}}{\eps^2_2} \right\}.
		}
		We will now relate $\cS_1$ to $\sum_{s=t\ust}^{t} \br{e_s - w_s}^2$.

		From Proposition~\ref{prop:10} we have,
		$$
		b_1 \br{1-\gamma_s}\br{u_s - \lm' \psi_s } - b_1 \br{\lm_{s-1} - \lm  }'\psi_s  =  \frac{b_1}{b_{1,s}}\br{w_s - e_s}.
		$$
  Let $\eps_3>0$. For times $s > t\ust_6(\eps_3,\delta)$~\eqref{def:t_6_star}, we have,
		$$
		\br{b_1 \br{1-\gamma_s}\br{u_s - \lm' \psi_s } - b_1 \br{\lm_{s-1} - \lm  }'\psi_s}^2  \ge  \br{1-\eps_3}^2\br{w_s - e_s}^2.
		$$
		Upon summing this up over $s$ and substituting into~\eqref{ineq:73} we obtain, 
		\al{
		\sum_{s =t\ust}^{t} \br{w_s - e_s}^2 \le \frac{1}{\br{1-\eps^2_3}}\max\left\{\frac{\br{\cT_2 + \cT_3 + \cT_{4,2}} }{1- \eps_1 -\eps_2},\frac{\log\br{1\slash \delta}}{\eps^2_2} \right\}.
	}
 This completes the proof.
	\end{proof}

\subsection{Bounds on $\cT_2,\cT_3,\cT_4$ used in proof of Lemma~\ref{lemma:prediction_square_bound}}\label{sec:T_bounds}
We will upper-bound the process $\{q_t\}$ on the set
\al{ 
\cG_{q}:= \{\omega: \mbox{ (1, 2, 3) below hold}\}, \mbox{ where}
}
\al{
& 1)
\sum_{s=1}^{t}\gamma_s b_1 \br{\lm_{s-1} - \lm}'\psi_s  w_s 
\le \left(\sum_{s=1}^{t}\left\{\gamma_s b_1 \br{\lm_{s-1} - \lm}'\psi_s \right\}^2\right)^{1\slash 2}\notag\\
& \times  \log\left[\frac{1}{\delta}\left(\sum_{s=1}^{t}\left\{ \gamma_s b_1 \br{\lm_{s-1} - \lm}'\psi_s \right\}^2\right)\right] \forall t  \notag\\
& 2) \sum_{s=\tau+1}^{t} \br{\psi'_s P_s \psi_s}~\left\{w^2_s - \bE\br{w^2_s\Big| \cF_{s-1}}\right\}  \le 
\sqrt{\sum_{s=1}^{t} \br{\psi'_s P_s \psi_s}^2 \log \br{\frac{\sum_{s=1}^{t} \br{\psi'_s P_s \psi_s}^2}{\delta}} },~\forall t \notag
}	
\al{
3) & 
\sum_{s=1}^{t}\br{ \psi'_s P_s \psi_s } \br{\gamma_s}\left[b_1 \br{1-\gamma_s}\br{u_s - \lm' \psi_s } - b_1 \br{\lm_{s-1} - \lm  }'\psi_s	\right] w_s  \notag\\
&  \le \sqrt{ \sum_{s=1}^{t} \br{ \psi'_s P_s \psi_s }^2 \br{\gamma_s}^2\left[b_1 \br{1-\gamma_s}\br{u_s - \lm' \psi_s } - b_1 \br{\lm_{s-1} - \lm  }'\psi_s	\right]^2 }  \notag\\
& \times \sqrt{\log \br{\frac{\sum_{s=1}^{t} \br{ \psi'_s P_s \psi_s }^2 \br{\gamma_s}^2\left[b_1 \br{1-\gamma_s}\br{u_s - \lm' \psi_s } - b_1 \br{\lm_{s-1} - \lm  }'\psi_s	\right]^2}{\delta} } } 
\Bigg\}~\forall t.\label{def:g_q}
}
\begin{proposition}
	\al{
\bP\br{\cG_q \ge 1-3\delta}.	
}
\end{proposition}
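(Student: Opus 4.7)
The plan is to recognise each of the three sums appearing in the definition of $\cG_q$ as a martingale transform with a predictable coefficient multiplied by a centered noise term, and then to invoke a self-normalised concentration inequality in the spirit of \cite{pena2009self,abbasi2011improved} for each sum individually. A union bound over the three (each of failure probability at most $\delta$) then delivers $\bP(\cG_q)\ge 1-3\delta$. The three sums share the same structural form, so the proof will largely reduce to one template argument plus bookkeeping.

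For condition~(1), set $a_s := \gamma_s b_1 (\lm_{s-1}-\lm)'\psi_s$. Since $b_{1,s}$ (and hence $\gamma_s$), $\lm_{s-1}$, and $\psi_s$ are all built out of data from times $\le s-1$, the coefficient $a_s$ is $\cF_{s-1}$-measurable. Because $\bE[w_s\mid\cF_{s-1}]=0$ and $w_s$ is (conditionally) bounded by $B_w$ under Assumption~\ref{assum:bounded_noise}, or $\sigma$-sub-Gaussian under Assumption~\ref{assum:sub_gaussian}, the transform $\sum_s a_s w_s$ is a scalar martingale with predictable scale $a_s$. A standard self-normalised sub-Gaussian inequality then yields, with probability at least $1-\delta$, simultaneously for all $t$, the bound
\begin{align*}
\sum_{s=1}^t a_s w_s \;\le\; \sqrt{\sum_{s=1}^t a_s^{2}}\,\cdot\,\log\!\left(\frac{1}{\delta}\sum_{s=1}^t a_s^{2}\right),
\end{align*}
which is exactly the first defining inequality of $\cG_q$. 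Condition~(3) is handled identically: the bracketed factor
\begin{align*}
c_s \;:=\; b_1 (1-\gamma_s)(u_s-\lm'\psi_s) - b_1(\lm_{s-1}-\lm)'\psi_s
\end{align*}
is $\cF_{s-1}$-measurable because $u_s$ is produced by Algorithm~\ref{algo} from past data, and $\psi'_s P_s\psi_s$ and $\gamma_s$ are likewise predictable; so the weights $\tilde a_s:=(\psi'_s P_s\psi_s)\gamma_s c_s$ are predictable and the identical self-normalised concentration applied to $\sum_s \tilde a_s w_s$ produces condition~(3) with failure probability $\le \delta$.

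For condition~(2), the driver is $Z_s := w_s^2 - \bE[w_s^2\mid \cF_{s-1}]$, which is an $\cF_s$-martingale difference, and the weight $\psi'_s P_s\psi_s$ is $\cF_{s-1}$-measurable. Under Assumption~\ref{assum:bounded_noise} one has $|Z_s|\le B_w^2$ a.s., and the same self-normalised concentration (or the Azuma-Hoeffding version for predictably weighted bounded MDS) applied to $\sum_s(\psi'_s P_s\psi_s)Z_s$ yields the stated bound with probability at least $1-\delta$. Under Assumption~\ref{assum:sub_gaussian}, $Z_s$ is only conditionally sub-exponential, so one instead invokes a self-normalised Bernstein-type bound (or restricts to the high-probability set $\cG_w$ of \eqref{def:g1} on which $|w_s|\le\sqrt{\log(T/\delta)}$, reducing to the bounded case with an additional logarithmic factor absorbed into $\sigma^2$), which again delivers the stated tail with probability at least $1-\delta$.

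The only genuine obstacles are (i) a careful check that $\psi_s$, $P_s$, $\gamma_s$, $\lm_{s-1}$, and $u_s$ are indeed all $\cF_{s-1}$-measurable, so that the coefficient sequences used above are actually predictable and the martingale-transform structure is legitimate, and (ii) in the sub-Gaussian setting, switching from a sub-Gaussian self-normalised tail for sums (1) and (3) to a sub-exponential/Bernstein self-normalised tail for sum (2). Once those two points are verified, a union bound over the three events completes the proof, so that $\bP(\cG_q)\ge 1-3\delta$.
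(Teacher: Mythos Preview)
Your approach is essentially the same as the paper's: identify each of the three sums as a martingale transform with a predictable coefficient times a centered noise term, apply the self-normalized concentration inequality~\eqref{eq:self_normalized} to each, and union-bound. The paper's proof is terser but follows exactly this template, and your added care about measurability and the sub-Gaussian versus sub-exponential distinction for condition~(2) is appropriate detail that the paper elides.
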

\begin{proof}
	We will show that the probability with which any of the above conditions 1)-3) is violated can be bounded by $\delta$. For 1), we have $\sum_{s=\tau+1}^{t}\gamma_s b_1 \br{\lm_{s-1} - \lm}'\psi_s  w_s = \sum_{s=\tau+1}^{t}\frac{b^2_1}{b_{1,s}} \br{\lm_{s-1} - \lm}'\psi_s  w_s $. It then follows from the self-normalized inequality~\eqref{eq:self_normalized}, with $\eta_s$ set equal to $w_s$ and $X_s$ equal to $\frac{b^2_1}{b_{1,s}} \br{\lm_{s-1} - \lm}'\psi_s$, that the probability of violation of 1) is upper-bounded by $\delta$. The probability of violation of 2) is again upper-bounded by $\delta$ by using self-normalized martingale concentration~\eqref{eq:self_normalized} since $\{w^2_s - \bE\br{w^2_s\Big| \cF_{s-1}}\}$ is a martingale difference sequence. Similarly, the probability of violating 3) can also be bounded using~\eqref{eq:self_normalized}.
\end{proof}

\begin{proposition}[Bounding $\cT_2$]\label{prop:t2}
On the set $\cG_{LSE}\cap \cG_{\pj}\cap \cG_{\cI}$~\eqref{def:glse}, $\cT_2$~\eqref{def:t2} can be bounded as follows,
	\al{
\cT_2 \le 	\eps^2_3 \sum_{t\ge t\ust_6} r_t.
}
\end{proposition}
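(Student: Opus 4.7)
The plan is to reduce each summand of $\cT_2$ to the product of (i) a factor measuring how close $\gamma_t = b_1/b_{1,t}$ is to $1$ and (ii) the future instantaneous regret, then invoke the event $\{t \ge t\ust_6\}$ to make the first factor small.

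First I would rewrite a single term. Observe the algebraic identity
\[
b_1\br{1-\gamma_t}\br{u_t - \lm' \psi_t} \;=\; \br{1-\gamma_t}\,b_1\br{u_t - \lm' \psi_t},
\]
so that, in view of \eqref{eq:inst_reg}, each summand of $\cT_2$ satisfies
\[
\left[b_1\br{1-\gamma_t}\br{u_t - \lm'\psi_t}\right]^2 \;=\; \br{1-\gamma_t}^2\,b_1^2\br{u_t - \lm'\psi_t}^2 \;=\; \br{1-\gamma_t}^2\, r_{t+1}.
\]
Thus the proof reduces to a uniform bound on $|1-\gamma_t|$ for $t \ge t\ust$.

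Second, I would estimate $|1-\gamma_t|$ via the estimation error controlled on $\cG_{LSE}$. Writing $1-\gamma_t = (b_{1,t}-b_1)/b_{1,t}$ and using $|b_{1,t}-b_1| \le \|\te_t-\te\ust\|$, the event $\cG_{LSE}$ guarantees $\|\te_t - \te\ust\| \le \cE(t;\te\ust,\delta)$. The index set in $\cT_2$ starts at $t\ust$, and by \eqref{def:t_star} we have $t\ust \ge t\ust_6(\eps_3,\delta)$, so the defining relation \eqref{def:t_6_star} yields $\cE(t;\te\ust,\delta) \le b_1 \eps_3$ for every $t \ge t\ust$. In particular $|b_{1,t}| \ge b_1(1-\eps_3)$, so
\[
|1-\gamma_t| \;\le\; \frac{b_1 \eps_3}{b_1(1-\eps_3)} \;=\; \frac{\eps_3}{1-\eps_3},
\]
and by choosing the constant $\eps_3$ in the statement to already incorporate the $1/(1-\eps_3)$ factor (equivalently, reparametrizing) one obtains $|1-\gamma_t| \le \eps_3$. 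The events $\cG_{\pj}$ and $\cG_{\cI}$ enter only insofar as they are needed to certify $\cG_{LSE}$, i.e.\ to ensure that the LSE underlying the coefficient estimate $b_{1,t}$ satisfies the deterministic error bound $\cE(t;\te\ust,\delta)$; this is exactly the content of Theorem~\ref{th:estimation_error_main}, so no new work is required.

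Finally, summing the pointwise bound and re-indexing gives
\[
\cT_2 \;=\; \sum_{t=t\ust}^{T} \br{1-\gamma_t}^2 r_{t+1} \;\le\; \eps_3^2 \sum_{t=t\ust}^{T} r_{t+1} \;\le\; \eps_3^2 \sum_{t\ge t\ust_6} r_t,
\]
where the last inequality just uses $t\ust+1 \ge t\ust_6$ and non-negativity of $r_t$. The main obstacle is the bookkeeping for step two: one must verify that $b_{1,t}$ appearing in the definition of $\gamma_t$ really is the component of the LSE estimate controlled by $\cG_{LSE}$ (as opposed to the recursive $\lm_t$-based estimate), and that the factor $1/(1-\eps_3)$ is harmlessly absorbed into the stated $\eps_3^2$. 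Once these calibrations are pinned down, the argument is essentially the one-line identity above combined with the definition of $t\ust_6$.
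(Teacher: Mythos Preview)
Your approach is essentially identical to the paper's: factor each summand as $(1-\gamma_t)^2\cdot b_1^2(u_t-\lm'\psi_t)^2$, use the definition of $t\ust_6$ to bound $|1-\gamma_t|$, and sum. The paper's proof is in fact terser and slightly looser: it simply asserts $(1-\gamma_t)^2\le\eps_3^2$ for $t\ge t\ust_6$ without isolating the $1/(1-\eps_3)$ correction you flag, and it writes $r_t=b_1^2(u_t-\lm'\psi_t)^2$ without the index shift to $r_{t+1}$ that you (correctly) track; both of these are harmless at the level of the final regret bound, so your extra bookkeeping is sound but not strictly needed.
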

\begin{proof}
The term involved in $\cT_2$ at time $t$ is equal to $\br{b_1 \br{1-\gamma_s} \br{u_s - \lm' \psi_s}}^2$. Since $\gamma_s = \frac{b_1}{b_{1,s}}$ for $t\ge t\ust_6(\eps_3,\delta)$, we have $\br{b_1 \br{1-\gamma_s} \br{u_s - \lm' \psi_s}}^2\le \eps^2\br{b_1  \br{u_s - \lm' \psi_s}}^2$.~The proof is completed by noting that $r_t = b^2_1\br{u_t - \lm' \psi_t}^2$. 
\end{proof}

\begin{proposition}[Bounding $\cT_3$]\label{prop:t3}
Under Assumption~\ref{assum:bounded_noise}, on $\cG_q \cap \cG_{LSE}\cap \cG_{\pj}\cap \cG_{\cI}$,
\al{
	\cT_3 
	 & \le \sqrt{b^2_1 B^2_u\br{1+\eps_3}^2 \log \left(B_u\log T\right)\log \br{\log T}} \notag\\
	&\times \sqrt{ \log \left[  \frac{b^2_1 B^2_u\br{1+\eps_3}^2\log \left(B_u\log T\right)\log \br{\log T}}{\delta} \right]  }.
}
Same conclusion holds under Assumption~\ref{assum:sub_gaussian} on $\cG_q \cap \cG_{LSE}\cap \cG_{\pj}\cap \cG_w$.
\end{proposition}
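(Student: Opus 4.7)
The strategy is to view $\cT_3$ as a scalar martingale transform of $\{w_t\}$ against an $\cF_{t-1}$-measurable multiplier, and invoke the self-normalized concentration that is already built into clause 1) of the event $\cG_q$. Writing $X_s := \gamma_s b_1(\lm_{s-1}-\lm)'\psi_s$ and $S := \sum_{s=t\ust}^{T} X_s^2$, we have $\cT_3 = \sum_{s=t\ust}^{T} X_s w_s$ and $X_s \in \cF_{s-1}$, so $\cG_q$ immediately yields an upper bound on $\cT_3$ of the form $\sqrt{S\log(S/\delta)}$ up to absolute constants. Thus the entire problem reduces to establishing the deterministic bound $S \,\lesssim\, b_1^2(1+\eps_3)^2 B_u^2\log(B_u\log T)\log\log T$, at which point substituting back gives the proposition.

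The first reduction is to pull out the $b_1^2(1+\eps_3)^2$ prefactor. Since $t\ust \ge t\ust_6(\eps_3,\delta)$ and we are on $\cG_{LSE}$, the definition of $t\ust_6$ forces $|\gamma_s - 1|\le \eps_3$ for every $s\ge t\ust$, so $\gamma_s^2\le(1+\eps_3)^2$ can be factored out of $S$. It remains to control $\sum_s((\lm_{s-1}-\lm)'\psi_s)^2$ by $B_u^2 \log(B_u\log T)\log\log T$.

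The key pointwise estimate splits $(\lm_{s-1}-\lm)'\psi_s$ through the exploration-only estimator as $(\lm_{s-1}-\tilde{\lm}\ui_{s-1})'\psi_s + (\tilde{\lm}\ui_{s-1}-\lm)'\psi_s$. The first piece is controlled by the algorithmic diagnostic in \eqref{def:_z_t}: whenever the check passes this difference is at most $B_2\,\|\psi_s\|\log(N\ui_s)/\sqrt{N\ui_s}$, and in the fallback branch it vanishes by construction. The second piece is controlled by the consistency guarantee for $\tilde{\lm}\ui$ coming from Theorem~\ref{th:estimation_error_main} on $\cG_{LSE}$, which gives $\|\tilde{\lm}\ui_{s-1}-\lm\|^2 \lesssim \log(N\ui_s)/N\ui_s$. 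Combined with the uniform bound $\|\psi_s\|\lesssim B_u$ produced by Lemma~\ref{lemma:bound_phi} and the clipping of $u_s$ at $B_u$ (this is where the events $\cG_{\pj}\cap\cG_{\cI}$ are used to control the state trajectory), we get a pointwise bound $X_s^2 \,\lesssim\, b_1^2(1+\eps_3)^2 B_u^2\,a_s$ with a non-random polylogarithmic sequence $a_s$. Summation of $a_s$ over $s\le T$ then has to be performed episode by episode: the super-exponential exploration schedule $n_i = \exp(i^2)$ makes $N\ui_s$ essentially constant on each inter-episode interval, and by time $T$ there have been only $\Theta(\sqrt{\log T})$ episodes, so the sum telescopes into the double-logarithmic factor $\log(B_u\log T)\log\log T$.

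Plugging the resulting bound on $S$ back into $\sqrt{S\log(S/\delta)}$ yields the conclusion. The sub-Gaussian case (Assumption~\ref{assum:sub_gaussian}) is handled by restricting to $\cG_w$ on which $|w_t|\le\sqrt{\log(T/\delta)}$ uniformly in $t\le T$; this is the sole place where boundedness of $w_t$ entered Step 2 (via the uniform bound on $\|\psi_s\|$), and the modified choices of $H$ and $B_u$ from Section~\ref{sec:unbounded} absorb the change, so the argument goes through verbatim. The main technical obstacle is the interval-by-interval summation in the last part: the claimed rate $\log(B_u\log T)\log\log T$ is far tighter than what a naive worst-case bound would give, and arriving at it requires exploiting both the near-constancy of $N\ui_s$ between episodes and the exponential sparsity of the episode start times $n_i = \exp(i^2)$. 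The other steps are routine applications of the self-normalized inequality, the clipping/diagnostic structure of PIECE, and the already-established estimation-error guarantee.
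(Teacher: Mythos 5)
Your proposal follows essentially the same route as the paper's proof: clause 1) of $\cG_q$ reduces $\cT_3$ to the quadratic variation $S=\sum_s\{\gamma_s b_1(\lm_{s-1}-\lm)'\psi_s\}^2$; the factor $\gamma_s^2\le(1+\eps_3)^2$ is pulled out via the definition of $t\ust_6$; $\|\psi_s\|\lesssim B_u$ comes from Lemma~\ref{lemma:bound_phi} and the clipping; the parameter-error factor is bounded by $\log(B_u N\ui_s)/N\ui_s$ via the estimation-error guarantee; and the final summation over the $n_i=\exp(i^2)$ schedule produces the $\log(B_u\log T)\log\log T$ factor, with the sub-Gaussian case dispatched on $\cG_w$ exactly as in the paper. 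Your decomposition of $(\lm_{s-1}-\lm)'\psi_s$ through $\tilde{\lm}^{(\cI)}_{s-1}$, using the diagnostic threshold of \eqref{def:_z_t} for the first piece and Theorem~\ref{th:estimation_error} for the second, is in fact a more explicit rendering of the step the paper compresses into a single citation of Theorem~\ref{th:estimation_error}.

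One inaccuracy to flag: your claim that the first piece ``vanishes by construction'' in the fallback branch is wrong. When the diagnostic check fails, what vanishes is $u_s-\bigl(\tilde{\lm}^{(\cI)}_{s-1}\bigr)'\psi_s$, not $(\lm_{s-1}-\tilde{\lm}^{(\cI)}_{s-1})'\psi_s$ --- at those times the latter quantity exceeds the threshold $B_2\|\psi_s\|\log N\ui_s/\sqrt{N\ui_s}$ rather than being controlled by it, and the $\cT_3$ summand involves $\lm_{s-1}$ directly regardless of which branch supplied the input. To be fair, the paper's own proof does not treat failed-check times separately either (it invokes the estimation-error bound for all $s$), so your argument sits at the same level of rigor as the published one; but the correct statement of your step is that the threshold bound holds when the check passes, and the fallback times require a separate accounting rather than a vanishing claim.
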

\begin{proof}

On $\cG_q$,
	\al{
		& \cT_3 
		& \le \left(\sum_{s=1}^{t}\left\{\gamma_s b_1 \br{\lm_{s-1} - \lm}'\psi_s \right\}^2\right)^{1\slash 2} \log\left[\frac{1}{\delta}\left(\sum_{s=1}^{t}\left\{ \gamma_s b_1 \br{\lm_{s-1} - \lm}'\psi_s \right\}^2\right)\right].\label{ineq:45}
	}
Hence, to bound $\cT_3$, we will focus on bounding $\sum_{s=1}^{t}\left\{\gamma_s b_1 \br{\lm_{s-1} - \lm}'\psi_s \right\}^2$. We have,
\al{
	\sum_{s=1}^{t}\left\{\gamma_s b_1 \br{\lm_{s-1} - \lm}'\psi_s \right\}^2 & \le  b^2_1\sum_{s=1}^{t}\left\{\gamma^2_s  \frac{\log \left(B_u~N\ui_s\right)}{N\ui_s} \|\psi_s\|^{2} \right\}\notag \\
	& \le b^2_1 B^2_u\br{\sup_{s\ge t\ust} \gamma^2_s  }\sum_{s=1}^{t}\left\{\frac{\log \left(B_u~N\ui_s\right)}{N\ui_s} \right\} \notag\\
	&\le b^2_1 B^2_u\br{\sup_{s\ge t\ust} \gamma^2_s  }\sum_{s=1}^{t}\left\{\frac{\log \left(B_u~\log T\right)}{N\ui_s} \right\} \notag\\
	&\le b^2_1 B^2_u\br{\sup_{s\ge t\ust} \gamma^2_s  }\log \left(B_u\log T\right)\log \br{\log T},
}
where the first inequality follows from the bound on the estimation error derived in Theorem~\ref{th:estimation_error}, the second inequality follows from Lemma~\ref{lemma:bound_phi} since we have, 
\al{
\|\psi_s\|^2 &\le \|\phi_s\|^2 \\
& \le \br{C_1  \| Y_{0}\| +\frac{B_u C_1}{1-\rho}  \left\{1+  \sum_{\ell=1}^{q} |b_\ell|  \right\} }^2+ (B_uq)^2,
}
while the remaining follow since $N\ui_s$ is approximately equal to $\sqrt{\log s}$, and hence $N\ui_s \le \log s$. In summary,
\al{
	\cT_3 & \le \sqrt{b^2_1 B^2_u\br{\sup_{s\ge t\ust} \gamma^2_s  }\log \left(B_u\log T\right)\log \br{\log T}} \notag\\
	&\times \sqrt{ \log \left[  \frac{b^2_1 B^2_u\br{\sup_{s\ge t\ust} \gamma^2_s  }\log \left(B_u\log T\right)\log \br{\log T}}{\delta} \right]  }.\label{ineq:t3}
}
Proof is completed by noting that for $t\ge t\ust$, we have $\gamma_s \le 1+\eps_3$ (follows from the definition of $t\ust_6$).
\end{proof}

We will now derive an upper-bound on $\cT_4$. We have
\al{
	\cT_4	& = \sum_{s=t\ust}^{t}\br{ \psi'_s P_s \psi_s } \left[ b_1 \br{1-\gamma_s}\br{u_s - \lm' \psi_s } - b_1 \br{\lm_{s-1} - \lm  }'\psi_s	-\gamma_s w_s\right]^2\notag\\
	& = \sum_{s=t\ust}^{t} \br{ \psi'_s P_s \psi_s }  \left[b_1 \br{1-\gamma_s}\br{u_s - \lm' \psi_s } - b_1 \br{\lm_{s-1} - \lm  }'\psi_s	\right]^2 \notag\\
	&+ \sum_{s=t\ust}^{t} \br{ \psi'_s P_s \psi_s } \br{\gamma_s w_s}^2 \notag \\
	& - 2 \sum_{s=t\ust}^{t}\br{ \psi'_s P_s \psi_s } \br{\gamma_s w_s}\left[b_1 \br{1-\gamma_s}\br{u_s - \lm' \psi_s } - b_1 \br{\lm_{s-1} - \lm  }'\psi_s\right].  \label{ineq:69}
}
The term $\cT_4$ is therefore composed of the three terms $\cT_{4,1},\cT_{4,2},\cT_{4,3}$ which are bounded separately below.
\begin{proposition}[$\cT_{4,2}$]\label{prop:t_4_2}
	On $\cG_q$, 
		\al{
		\frac{\cT_{4,2}}{\br{1+\eps_3}^2} & \le  \sigma^2(p+q-1) \log T + \sigma^2 (p+q-1)\log \br{C^2_1 \br{ \| \psi_0 \| +\frac{B_u C_1}{1-\rho}  \left\{1+  \sum_{\ell=1}^{q} |b_\ell|  \right\} }^2} \notag\\
		& + \sqrt{(p+q-1)\log\left[T~C^2_1 \br{ \| \psi_0 \| +\frac{B_u C_1}{1-\rho}  \left\{1+  \sum_{\ell=1}^{q} |b_\ell|  \right\} }^2\right]} \notag\\
		& \times \sqrt{ \log 
			\br{\frac{(p+q-1)\log\left[ T~C^2_1 \br{ \| \psi_0 \| +\frac{B_u C_1}{1-\rho}  \left\{1+  \sum_{\ell=1}^{q} |b_\ell|  \right\} }^2\right]}{\delta}} }.\label{ineq:t_42_bound}
	}
\end{proposition}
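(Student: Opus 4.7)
The plan is to first pull the factor $\gamma_s^2$ out using the definition of $t\ust_6$. By construction, for $s\ge t\ust\ge t\ust_6(\eps_3,\delta)$ we have $|\gamma_s-1|\le \eps_3$, hence $\gamma_s^2\le (1+\eps_3)^2$, and therefore
\[
\cT_{4,2}\;\le\;(1+\eps_3)^2 \sum_{s=t\ust}^{t} \br{\psi'_s P_s \psi_s}\, w_s^2 .
\]
Next I would decompose $w_s^2=\bE\br{w_s^2\mid\cF_{s-1}}+\br{w_s^2-\bE\br{w_s^2\mid\cF_{s-1}}}$ and treat the two resulting sums separately.

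For the predictable part, Assumption \ref{assum:noise_var} gives $\bE\br{w_s^2\mid\cF_{s-1}}\le\sigma^2$, so the sum is at most $\sigma^2\sum_{s}\psi'_s P_s\psi_s$. The plan is then to apply the standard elliptical potential / log-determinant bookkeeping: from $P_s^{-1}=P_{s-1}^{-1}+\psi_s\psi'_s$ one has $\psi'_sP_s\psi_s=\tfrac{\psi'_sP_{s-1}\psi_s}{1+\psi'_sP_{s-1}\psi_s}\le\log(1+\psi'_sP_{s-1}\psi_s)$, and telescoping yields $\sum_{s\le t}\psi'_sP_s\psi_s\le \log\det\bigl(V_t V_0^{-1}\bigr)$ where $V_t=P_t^{-1}$. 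Since $\psi_s\in\bR^{p+q-1}$, bounding the determinant by the $(p+q-1)$-th power of the largest eigenvalue and then $\lambda_{\max}(V_t)\le\Tr(V_t)=\sum_{s\le t}\|\psi_s\|^2\le T\,\max_s\|\psi_s\|^2$, together with the uniform bound on $\|\psi_s\|\le\|\phi_s\|\le C_1\|\psi_0\|+\tfrac{B_uC_1}{1-\rho}\{1+\sum_\ell |b_\ell|\}+qB_u$ obtained from Lemma~\ref{lemma:bound_phi}, delivers exactly the first two terms $\sigma^2(p+q-1)\log T+\sigma^2(p+q-1)\log\br{C_1^2(\|\psi_0\|+\tfrac{B_uC_1}{1-\rho}\{1+\sum|b_\ell|\})^2}$ of the claimed bound.

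For the martingale part, $\{w_s^2-\bE\br{w_s^2\mid\cF_{s-1}}\}$ is a martingale difference sequence and $\psi'_sP_s\psi_s$ is $\cF_{s-1}$-measurable, so the sum is itself a martingale transform. Restricting to the event $\cG_q$ and invoking its condition (2) produces a self-normalized bound of the form
\[
\sum_{s=\tau+1}^t (\psi'_sP_s\psi_s)\,\br{w_s^2-\bE\br{w_s^2\mid\cF_{s-1}}}\;\le\;\sqrt{\sum_s (\psi'_sP_s\psi_s)^2 \log\!\br{\tfrac{\sum_s(\psi'_sP_s\psi_s)^2}{\delta}}} .
\]
Here I would use the pointwise bound $\psi'_sP_s\psi_s\le 1$ (since $\psi'_sP_s\psi_s=a_s/(1+a_s)\le1$) to collapse $\sum_s(\psi'_sP_s\psi_s)^2\le\sum_s\psi'_sP_s\psi_s$, and then reuse the log-determinant bound derived above to upper bound this by $(p+q-1)\log\bigl[T\,C_1^2(\|\psi_0\|+\tfrac{B_uC_1}{1-\rho}\{1+\sum|b_\ell|\})^2\bigr]$. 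Substituting this into both the $\sqrt{\cdot}$ factors reproduces precisely the last term in the statement.

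The main obstacle I anticipate is the bookkeeping rather than any single deep step: making the cancellation $\br{w_s^2-\bE}\!+\!\bE$ line up so that the predictable and the self-normalized pieces together yield the exact stated form, and in particular keeping track that the additive constants inside the logarithms agree with the Lemma~\ref{lemma:bound_phi} bound on $\|\psi_s\|$ and that the sum starts at $t\ust$ rather than $1$ (so one must verify that extending the sum down to $1$ only loosens the bound, since all summands are non-negative). The factor $(1+\eps_3)^2$ rides along from the first step and gives the normalization on the left-hand side, completing the proof.
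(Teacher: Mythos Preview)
Your proposal is correct and follows essentially the same argument as the paper: pull out $(1+\eps_3)^2$ via $t\ust_6$, split $w_s^2$ into its conditional mean plus a martingale difference, bound the predictable part by $\sigma^2\sum_s\psi'_sP_s\psi_s\le\sigma^2\log\det(\sum_s\psi_s\psi'_s)$, and bound the martingale part using condition~2) of $\cG_q$ together with $\psi'_sP_s\psi_s\le 1$ and the same $\log\det$ estimate. The final step of replacing $\log\det$ by $(p+q-1)\log\lambda_{\max}$ and bounding $\lambda_{\max}\le T\max_s\|\psi_s\|^2$ via Lemma~\ref{lemma:bound_phi} is also exactly what the paper does.
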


\begin{proof}
~Consider 
	\al{
		\cT_{4,2} = \sum_{s=t\ust}^{t} \br{ \psi'_s P_s \psi_s } \br{\gamma_s w_s}^2 & \le \br{\sup_{s\ge t\ust} \gamma_s }^2  \sum_{s=1}^{t} \br{\psi'_s P_s \psi_s}~w^2_s \le \br{1+\eps_3}^2  \sum_{s=1}^{t} \br{\psi'_s P_s \psi_s}~w^2_s,\label{ineq:46}
	}
 where the last inequality follows from the definition of $t\ust_6$.~Now,
	\nal{
		\sum_{s=t\ust}^{t} \br{\psi'_s P_s \psi_s}~w^2_s = \sum_{s=t\ust}^{t} \br{\psi'_s P_s \psi_s}~\bE\br{w^2_s\Big| \cF_{s-1}} + \left[\sum_{s=\tau+1}^{t} \br{\psi'_s P_s \psi_s}~\left\{w^2_s - \bE\br{w^2_s\Big| \cF_{s-1}}\right\} \right].
	}
	The first summation above is bounded as follows,
	\nal{
		\sum_{s=t\ust}^{t} \br{\psi'_s P_s \psi_s}~\bE\br{w^2_s\Big| \cF_{s-1}} \le \sigma^2 \sum_{s=\tau+1}^{t} \br{\psi'_s P_s \psi_s}\le \sigma^2 \log\left[ \det\br{\sum_{s=1}^{t} \psi_s  \psi'_s}\right].
	}
	For second summation,
	\al{
		\sum_{s=\tau+1}^{t} \br{\psi'_s P_s \psi_s}~\left\{w^2_s - \bE\br{w^2_s\Big| \cF_{s-1}}\right\} & \le 
		\sqrt{\sum_{s=1}^{t} \br{\psi'_s P_s \psi_s}^2 \log \br{\frac{\sum_{s=1}^{t} \br{\psi'_s P_s \psi_s}^2}{\delta}} }\notag\\
		&\le 
		\sqrt{\sum_{s=1}^{t} \br{\psi'_s P_s \psi_s} \log \br{\frac{\sum_{s=1}^{t} \br{\psi'_s P_s \psi_s}}{\delta}} }\notag\\
		& \le 
		\sqrt{\log\left[ \det\br{\sum_{s=1}^{t} \psi_s  \psi'_s}\right] \log 
			\br{\frac{\log\left[ \det\br{\sum_{s=1}^{t} \psi_s \psi'_s}\right]}{\delta}} },
	}
	where the first inequality follows from the definition of $\cG_q$, the second inequality follows since $\psi'_s P_s \psi_s \le 1$, while the third inequality follows since~$\sum_{s=1}^{t} \psi'_s P_s \psi_s\le \log\left[ \det\br{\sum_{s=1}^{t} \psi_s \psi'_s}\right]$.
	
	In summary, we obtain the following,
	\al{
		\frac{\cT_{4,2}}{\br{1+\eps_3}^2 } \le  \sigma^2 \log\left[ \det\br{\sum_{s=1}^{t} \psi_s \psi'_s}\right]+ \sqrt{\log\left[ \det\br{\sum_{s=1}^{t} \psi_s  \psi'_s}\right] \log 
			\br{\frac{\log\left[ \det\br{\sum_{s=1}^{t} \psi_s  \psi'_s}\right]}{\delta}} }.\label{ineq:78}
	}
We now bound $\log\left[ \det\br{\sum_{s=1}^{t} \psi_s \psi'_s}\right]$.	For a matrix $M \in \bR^{(p+q-1)\times (p+q-1)}$, we have $\log \det(M)$ is the sum of logarithm of its eigenvalues, and hence can be upper-bounded by $(p+q-1)$ times the log of the maximum eigenvalue. The eigenvalues of the matrix $\sum_{s=1}^{t} \psi_s  \psi'_s$ are bounded by $\sum_{s=1}^{t} \|\psi_s\|^2$, which can be bounded using Lemma~\ref{lemma:bound_phi} as,
\al{
\sum_{s=1}^{t} \|\psi_s\|^2\le t \left\{\br{ \| Y_0 \| +\frac{B_u C_1}{1-\rho}  \left\{1+  \sum_{\ell=1}^{q} |b_\ell|  \right\} }^2 +  (B_u q )^2\right\} .
}
~Substituting this into~\eqref{ineq:78}, we obtain the following bound on $\cT_{4,2}$:
	\al{
		\frac{\cT_{4,2}}{\br{1+\eps_3}^2}& \le  \sigma^2(p+q-1) \log T + \sigma^2 (p+q-1)\log \br{C^2_1 \br{ \| \psi_0 \| +\frac{B_u C_1}{1-\rho}  \left\{1+  \sum_{\ell=1}^{q} |b_\ell|  \right\} }^2} \notag\\
		& + \sqrt{(p+q-1)\log\left[T~C^2_1 \br{ \| \psi_0 \| +\frac{B_u C_1}{1-\rho}  \left\{1+  \sum_{\ell=1}^{q} |b_\ell|  \right\} }^2\right]} \notag\\
		& \times \sqrt{ \log 
			\br{\frac{(p+q-1)\log\left[ T~C^2_1 \br{ \| \psi_0 \| +\frac{B_u C_1}{1-\rho}  \left\{1+  \sum_{\ell=1}^{q} |b_\ell|  \right\} }^2\right]}{\delta}} }.
	}
\end{proof}

We now bound the term $\cT_{4,3}$.
\begin{proposition}[$\cT_{4,3}$]
On $\cG_{q}$,
\al{
\cT_{4,3} &\le \br{1+\eps_3} \sqrt{ \sum_{s=1}^{t} \br{b_1 \br{1-\gamma_s}\br{u_s - \lm' \psi_s } - b_1 \br{\lm_{s-1} - \lm  }'\psi_s	}^2 }\notag \\
& \times \sqrt{\log \br{\frac{\br{1+\eps_3}^2}{\delta}}\br{\sum_{s=1}^{t}  \br{b_1 \br{1-\gamma_s}\br{u_s - \lm' \psi_s } - b_1 \br{\lm_{s-1} - \lm  }'\psi_s	}^2 } }.
}

\end{proposition}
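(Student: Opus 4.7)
The plan is to derive the bound directly from condition 3) of $\cG_q$, combined with two deterministic simplifications of the right-hand side. Recall that $\cT_{4,3}$, as it appears in the expansion~\eqref{ineq:69}, is (up to a factor of $-2$) the martingale-like sum
\nal{
\sum_{s=t\ust}^{t}\br{\psi'_s P_s \psi_s}\,\gamma_s\,w_s\,X_s, \qquad X_s := b_1(1-\gamma_s)(u_s-\lm'\psi_s) - b_1(\lm_{s-1}-\lm)'\psi_s,
}
to which the self-normalized concentration inequality encoded by the third line of~\eqref{def:g_q} applies with $\eta_s = w_s$ and predictable multiplier $(\psi'_s P_s \psi_s)\gamma_s X_s$.

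First, I would invoke condition 3) of $\cG_q$ directly, which on the event $\cG_q$ yields
\nal{
\cT_{4,3} \;\le\; 2\sqrt{\sum_{s=1}^t (\psi'_s P_s \psi_s)^2\gamma_s^2 X_s^2}\;\cdot\;\sqrt{\log\!\br{\tfrac{1}{\delta}\sum_{s=1}^t(\psi'_s P_s \psi_s)^2\gamma_s^2 X_s^2}}.
}
Second, I would eliminate the extraneous factors inside the square roots using two standard bounds: (a) since $P^{-1}_s = P^{-1}_{s-1}+\psi_s\psi'_s$ from~\eqref{def:P}, one has $P^{-1}_s \succeq \psi_s\psi'_s$, which yields the classical rank-one inequality $\psi'_s P_s \psi_s \le 1$ (and therefore $(\psi'_s P_s \psi_s)^2 \le 1$); (b) by the very definition of $t\ust_6(\eps_3,\delta)$ in~\eqref{def:t_6_star}, for every $s \ge t\ust \ge t\ust_6$ the estimation error gap guarantees $\gamma_s = b_1/b_{1,s}\le 1+\eps_3$. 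Substituting these two inequalities into the previous display collapses $\sum_s(\psi'_s P_s \psi_s)^2\gamma_s^2 X_s^2$ into at most $(1+\eps_3)^2\sum_s X_s^2$, which is exactly the right-hand side of the claimed bound (the harmless factor of $2$ may be folded into $(1+\eps_3)$ or into the $\cT_2,\cT_{4,2}$ constants at the point of use in Lemma~\ref{lemma:prediction_square_bound}).

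The technical work is essentially already done: condition 3) of $\cG_q$ is tailored precisely to this cross term, and both the rank-one $\psi'_s P_s \psi_s \le 1$ bound and the $\gamma_s \le 1+\eps_3$ estimate are re-used verbatim from the proofs of Propositions~\ref{prop:t_4_2}--\ref{prop:t3}. The only step worth writing out carefully is the justification that $\gamma_s \le 1+\eps_3$ truly holds over the \emph{entire} summation range, which is where the lower bound $t\ust \ge t\ust_6(\eps_3,\delta)$ enters; this is the same argument used at the end of the proof of Proposition~\ref{prop:t3} and is the only non-trivial bookkeeping item. No genuine obstacle arises, and the bound follows in a few lines once the two simplifying inequalities are in hand.
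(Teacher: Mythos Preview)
Your proposal is correct and follows exactly the same route as the paper: apply condition 3) of $\cG_q$, then simplify using $\psi'_s P_s \psi_s \le 1$ and $\sup_{s\ge t\ust}\gamma_s \le 1+\eps_3$ (the latter coming from $t\ust\ge t\ust_6(\eps_3,\delta)$). The only slip is the extraneous factor of $2$ in your first display---condition 3) of $\cG_q$ has no such factor, and indeed the paper's bound does not carry one; simply drop it and the argument matches the paper line for line.
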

\begin{proof}
We have,
\al{
	\cT_{4,3}& =\sum_{s=t\ust}^{T}\br{ \psi'_s P_s \psi_s } \br{\gamma_s}\left[b_1 \br{1-\gamma_s}\br{u_s - \lm' \psi_s } - b_1 \br{\lm_{s-1} - \lm  }'\psi_s	\right] w_s \notag \\
	& \le \sqrt{ \sum_{s=t\ust}^{T} \br{ \psi'_s P_s \psi_s }^2 \br{\gamma_s}^2\left[b_1 \br{1-\gamma_s}\br{u_s - \lm' \psi_s } - b_1 \br{\lm_{s-1} - \lm  }'\psi_s	\right]^2 }\notag \\
	& \times \sqrt{\log \br{\frac{\sum_{s=1}^{t} \br{ \psi'_s P_s \psi_s }^2 \br{\gamma_s}^2\left[b_1 \br{1-\gamma_s}\br{u_s - \lm' \psi_s } - b_1 \br{\lm_{s-1} - \lm  }'\psi_s	\right]^2}{\delta} } }\notag\\
	&\le \br{\sup_{s\ge t\ust} \Big| \gamma_s \Big|} \sqrt{ \sum_{s=1}^{t} \br{b_1 \br{1-\gamma_s}\br{u_s - \lm' \psi_s } - b_1 \br{\lm_{s-1} - \lm  }'\psi_s	}^2 }\notag \\
	& \times \sqrt{\log \left[ \br{\frac{\sup_{s\ge t\ust}  \gamma^2_s}{\delta}}\br{\sum_{s=1}^{t}  \br{b_1 \br{1-\gamma_s}\br{u_s - \lm' \psi_s } - b_1 \br{\lm_{s-1} - \lm  }'\psi_s	}^2 } \right]},\label{ineq:47}
}
where the first inequality follows from the definition of $\cG_{q}$, while the last inequality follows since $\psi'_s P_s \psi_s \le 1$. Proof is completed by noting that $\sup_{s\ge t\ust} \gamma_s \le 1+\eps_3$, from definition of $t\ust_6$.
\end{proof}

We now derive a bound on $q_{t\ust}$.
\begin{lemma}\label{lemma:bound_qstar}
On $\cG_q \cap \cG_{LSE}\cap \cG_{\pj}\cap \cG_{\cI}$ we have,
    \al{
q_{t\ust} \le b^2_1 \eps^2_3 t\ust \br{ \left\{C_1 \| Y_{0}\| +\frac{B_w C_1}{1-\rho}  \left\{1+  \sum_{\ell=1}^{q} |b_\ell|  \right\} \right\}^2 + B^2_w},
    }
    where $q_t$ is as in~\eqref{def:qt}, and $t\ust$ as in~\eqref{def:t_star}.
\end{lemma}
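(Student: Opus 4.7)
The plan is to expand $q_{t^\star}$ as a quadratic form, then decouple the estimation error factor from the operator norm of $P_{t^\star}^{-1}$, and finally bound each piece using that $t^\star$ lies within the first exploratory episode so all inputs up to that time are bounded by $B_w$ rather than $B_u$.

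First I would write, via the invariance of trace,
\[
q_{t^\star}=\Tr\!\br{b_1^2(\lm_{t^\star}-\lm)P_{t^\star}^{-1}(\lm_{t^\star}-\lm)'}=b_1^2(\lm_{t^\star}-\lm)'P_{t^\star}^{-1}(\lm_{t^\star}-\lm),
\]
and then apply the Rayleigh bound to obtain
$q_{t^\star}\le b_1^2\,\|\lm_{t^\star}-\lm\|^2\,\lambda_{\max}(P_{t^\star}^{-1})$.
Since $P_t^{-1}=\sum_{s=1}^{t}\psi_s\psi_s'$ by the recursion~\eqref{def:P}, the maximum eigenvalue is controlled by the trace, giving
\[
\lambda_{\max}(P_{t^\star}^{-1})\;\le\;\Tr\!\br{\textstyle\sum_{s=1}^{t^\star}\psi_s\psi_s'}\;=\;\sum_{s=1}^{t^\star}\|\psi_s\|^2\;\le\; t^\star\,\max_{s\le t^\star}\|\psi_s\|^2.
\]

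Next I would bound the estimation-error factor. By the definition of $t^\star$ in~\eqref{def:t_star} we have $t^\star\ge t^\star_6(\eps_3,\delta)$, and the definition of $t^\star_6$ in~\eqref{def:t_6_star} together with the event $\cG_{LSE}$ guarantees $\cE(t^\star;\te^\star,\delta)\le b_1\eps_3$. Translating this estimation-error statement for $\te$ into the corresponding one for $\lm$ (which amounts to dividing the relevant coordinate-wise errors by $b_1$ and using that $\tilde b^{\cI}_{1,t^\star}$ is close to $b_1$ on $\cG_{LSE}$) yields $\|\lm_{t^\star}-\lm\|\le \eps_3$.

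It remains to bound $\|\psi_s\|^2$ uniformly for $s\le t^\star$. The key structural fact I would use is that $t^\star$ is contained within the first exploratory episode (which is ensured by $t^\star\ge t^\star_5(\eps_1)$ and the construction of the warm-up phase), so for all $s\le t^\star$ the input obeys $|u_s|\le B_w$. Then Lemma~\ref{lemma:bound_phi}, applied with input bound $B_w$ in place of $B_u$, gives
\[
\|Y_s\|\;\le\;C_1\|Y_0\|+\frac{B_w C_1}{1-\rho}\Bigl\{1+\sum_{\ell=1}^{q}|b_\ell|\Bigr\},
\]
while $\|U_s\|^2\le B_w^2$ follows from the exploratory bound~\eqref{def:bounded_input} (absorbing the $q$ factor into the constants as is done throughout the excerpt). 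Since $\|\psi_s\|^2\le\|Y_s\|^2+\|U_s\|^2$, combining these gives exactly the parenthesised expression in the statement. Multiplying the three factors $b_1^2\eps_3^2$, $t^\star$, and the uniform $\|\psi_s\|^2$-bound produces the claim.

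The main obstacle I anticipate is the second step, namely converting the $\theta$-level estimation bound $\cE(t^\star;\te^\star,\delta)\le b_1\eps_3$ into the clean statement $\|\lm_{t^\star}-\lm\|\le\eps_3$ with the correct constants; this requires using on $\cG_{LSE}$ that $\tilde b^{\cI}_{1,t^\star}$ is bounded away from zero so that the $1/b_{1,t^\star}$ prefactor in the definition of $\tilde\lm^{\cI}_t$ does not blow up the error. Once this is in place, the remaining steps are pure algebra combined with Lemma~\ref{lemma:bound_phi}.
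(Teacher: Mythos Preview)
Your proposal is correct and follows essentially the same route as the paper: factor $q_{t^\star}$ into $b_1^2\|\lm_{t^\star}-\lm\|^2$ times $\lambda_{\max}(P_{t^\star}^{-1})$, bound the latter by $\sum_{s\le t^\star}\|\psi_s\|^2$ (the paper writes $\|\phi_s\|^2$, which dominates $\|\psi_s\|^2$), use that $t^\star$ lies in the first exploratory episode so inputs are bounded by $B_w$ and Lemma~\ref{lemma:state_bound} gives the $Y$-bound with $B_w$ in place of $B_u$, and finally invoke the definition of $t^\star_6$ to replace the estimation error by $b_1\eps_3$. The paper's proof is in fact terser than yours and simply cites Theorem~\ref{th:estimation_error} without spelling out the $\te$-to-$\lm$ conversion you flag as the main obstacle; your instinct that this step needs $\tilde b^{\cI}_{1,t^\star}$ bounded away from zero is the right one.
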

\begin{proof}
We have $q_t = \Tr\br{b^2_1\br{\lm_t - \lm}P^{-1}_t\br{\lm_t -\lm}'}$. Since the trace of a matrix is equal to the sum of its eigenvalues, by using the bound on estimation error derived in Theorem~\eqref{th:estimation_error}, we get,
\al{
q_{t\ust} &\le \cE(N\ui_{t\ust};\te\ust,\delta)^2 \sum_{s\le t} \|\phi_s\|^2 \notag\\
&\le cE(N\ui_{t\ust};\te\ust,\delta)^2 t\ust \br{ \left\{C_1 \| Y_{0}\| +\frac{B_w C_1}{1-\rho}  \left\{1+  \sum_{\ell=1}^{q} |b_\ell|  \right\} \right\}^2 + B^2_w} \\
& \le b^2_1 \eps^2_3 t\ust \br{ \left\{C_1 \| Y_{0}\| +\frac{B_w C_1}{1-\rho}  \left\{1+  \sum_{\ell=1}^{q} |b_\ell|  \right\} \right\}^2 + B^2_w},
}
where the second inequality follows from the definition of $t\ust_6$, and Lemma~\ref{lemma:state_bound}.
\end{proof}
\subsection{Bounding the cumulative Regret}
In order to prove Theorem~\ref{th:estimation_error_main}, we will show the following stronger result. The ``good sets'' $\cG_q,\cG_{LSE},\cG_{\pj},\cG_{\cI},\cG_w,\cG_{w^2_{B}}$ are defined in~\eqref{def:g_q}~\eqref{def:glse},~\eqref{ineq:52},
~\eqref{def:gi},~\eqref{def:g1},~\eqref{def:G_tilde4} respectively and our analysis is performed on the intersection of these.

\begin{theorem}\label{th:regret}
Consider the ARX system~\eqref{def:arx} in which $\{w_t\}$ satisfies Assumptions~(\ref{assum:noise_var},\ref{assum:bounded_noise}).~On the set $\cG_q \cap \cG_{LSE}\cap \cG_{\pj}\cap \cG_{\cI}\cap \cG_{w^2_{B}}$, for $\eps_2 \in (0,1-\eps_1)$, the cumulative regret of the PIECE algorithm until $T$ is bounded as follows:
\al{
\cR_T \le \br{1-\frac{\eps^2_3}{\br{1-\eps_3}^2\br{1-\eps_1-\eps_2}  }}^{-1} \left[\frac{\sigma^2(p+q-1) \log T  }{\br{1-\eps_3}^2\br{1-\eps_1-\eps_2} }\right]
+ C_1,\label{res:mainres}
}
where $C_1(\eps,\eps_1,\eps_2,\eps_3,\delta,\rho)$ is sum of: (i) $\log\log T$ times a polynomial function of $\sqrt{\log\frac{1}{\delta}}, (1+\epsilon)\sqrt{\log(1+\epsilon)^2},\frac{1}{1-\rho}$, (ii) $\lesssim \log(1/\delta)$, $\lesssim \frac{1}{\epsilon^2(1-\epsilon^2)}$, and additional terms that are $o(\log\log T)$, and where $\eps_1,\eps_3$ are as in~\eqref{def:t5_star},~\eqref{def:t_6_star}.


\end{theorem}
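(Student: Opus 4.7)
The plan is to split the regret as $\cR_T = \cR_0 + \cR_{\cI} + \cR_{\text{ex}}$, where $\cR_0 := \sum_{t < t\ust} r_t$ is the burn-in contribution, $\cR_{\cI} := \sum_{t \ge t\ust,\, t \in \cI} r_t$ is the exploration regret, and $\cR_{\text{ex}} := \sum_{t \ge t\ust,\, t \notin \cI} r_t$ is the exploitation regret. The burn-in piece is a finite sum, since $t\ust$ is a non-random constant depending only on $\eps_1, \eps_3, \delta, \rho$ and problem parameters, and by Lemma~\ref{lemma:bound_phi} each $r_t$ in that range is uniformly bounded in terms of $B_u$, $B_w$ and system constants under Assumption~\ref{assum:bounded_noise}. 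The exploration piece is controlled by Lemma~\ref{lemma:regret_I}: since $n_i = \exp(i^2)$, up to time $T$ there are $O(\sqrt{\log T})$ episodes of length $H$, so $N\ui_T = O(H\sqrt{\log T})$ and $\cR_{\cI}$ grows slower than $\log T$. Both $\cR_0$ and $\cR_{\cI}$ are therefore absorbed into $C_1$.

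The heart of the proof is bounding $\cR_{\text{ex}}$. Using $r_t \le (e_t - w_t)^2$ valid for $t \ge t\ust$ and $t \notin \cI$ (as noted in the outline of Theorem~\ref{thm:regret_bounded} and formalized via Theorem~\ref{th:reg_n_i}), one obtains $\cR_{\text{ex}} \le \sum_{s \ge t\ust}(w_s - e_s)^2$. Applying Lemma~\ref{lemma:prediction_square_bound} on $\cG_q \cap \cG_{LSE} \cap \cG_{\pj} \cap \cG_{\cI}$ yields
\nal{
\cR_{\text{ex}} \le \frac{1}{1-\eps_3^2}\max\left\{\frac{\cT_2 + \cT_3 + \cT_{4,2}}{1-\eps_1-\eps_2},\ \frac{\log(1/\delta)}{\eps_2^2}\right\} + q_{t\ust}.
}
The $\log(1/\delta)/\eps_2^2$ branch of the max is $T$-independent and goes into $C_1$. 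Into the other branch substitute: Proposition~\ref{prop:t2} gives $\cT_2 \le \eps_3^2 \sum_{t \ge t\ust_6} r_t \le \eps_3^2 (\cR_{\text{ex}} + \cR_{\cI} + \cR_0)$; Proposition~\ref{prop:t3} gives $\cT_3 = O(\sqrt{\log T \cdot \log\log T})$; Proposition~\ref{prop:t_4_2} gives $\cT_{4,2} \le (1+\eps_3)^2 \sigma^2 (p+q-1) \log T + o(\log T)$; and Lemma~\ref{lemma:bound_qstar} bounds $q_{t\ust}$ by a constant. This produces the self-bounding inequality
\nal{
\cR_{\text{ex}} \le \frac{\eps_3^2\, \cR_{\text{ex}} + (1+\eps_3)^2 \sigma^2 (p+q-1) \log T + o(\log T)}{(1-\eps_3^2)(1-\eps_1-\eps_2)} + C_1',
}
and isolating $\cR_{\text{ex}}$ on the left, dividing by $1 - \eps_3^2/[(1-\eps_3^2)(1-\eps_1-\eps_2)]$, and using $(1-\eps_3^2) \ge (1-\eps_3)^2$ to rewrite the resulting denominator in the form advertised, yields the claimed bound with $C_1$ collecting all lower-order terms (the $\cT_3$ contribution, $q_{t\ust}$, the $\log(1/\delta)/\eps_2^2$ branch, the burn-in $\cR_0$, and the exploration $\cR_{\cI}$).

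The principal obstacle is the self-bounding coupling through $\cT_2$: because Proposition~\ref{prop:t2} feeds $\eps_3^2\, \cR_{\text{ex}}$ back into the right-hand side, one must choose the slack parameters so that $\eps_3^2 / [(1-\eps_3)^2(1-\eps_1-\eps_2)] < 1$, which restricts the admissible range and is precisely why $c(\eps) \to 0$ only in the limit $\eps \to 0$ of Theorem~\ref{thm:regret_bounded}. A secondary challenge is justifying the pointwise domination $r_t \le (e_t - w_t)^2$ for every $t \ge t\ust,\, t \notin \cI$: this requires both the clipping threshold $B_u$ designed in Section~\ref{sec:algo} (so that the CE inputs are rarely truncated) and the LSE error to be below $\eps_3$ (so $|\gamma_t - 1| \le \eps_3$), both of which are encoded in the definition of $t\ust$ via $t\ust_1, t\ust_2, t\ust_3, t\ust_6$ and hold on the intersection of the good sets $\cG_{LSE} \cap \cG_{\pj} \cap \cG_{\cI}$. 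Finally, care is needed to ensure $t\ust_6 \le t\ust$ so that the index range in $\cT_2$ is compatible with $\cR_{\text{ex}}$, and that the $(1+\eps_3)^2$ prefactor from Proposition~\ref{prop:t_4_2} is absorbed cleanly when trading $(1-\eps_3^2)$ for $(1-\eps_3)^2$.
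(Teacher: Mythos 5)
Your proposal matches the paper's proof essentially step for step: the same decomposition into the pre-$t\ust$ burn-in, the exploration sum via Lemma~\ref{lemma:regret_I}, and the exploitation sum controlled through Theorem~\ref{th:reg_n_i} and Lemma~\ref{lemma:prediction_square_bound}, followed by the same substitutions of Propositions~\ref{prop:t2}, \ref{prop:t3}, \ref{prop:t_4_2} and Lemma~\ref{lemma:bound_qstar} and the same self-bounding rearrangement through $\cT_2 \le \eps^2_3 \sum_t r_t$ with the $(1-\eps^2_3)\ge(1-\eps_3)^2$ bookkeeping. The only cosmetic difference is that you isolate $\cR_{\text{ex}}$ and absorb the $\eps^2_3(\cR_0+\cR_{\cI})$ feedback into lower-order terms, whereas the paper solves the resulting inequality directly for the full sum $\sum_t r_t$; the bound obtained is the same.
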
	
\begin{proof}
	Since the cumulative regret during $\cI$ has already been bounded in Lemma~\ref{lemma:regret_I}, we begin by deriving a bound on the cumulative regret during $t\notin \cI$. From Theorem~\ref{th:reg_n_i}, for $t\not\in \cI$ and greater than $\max\{t\ust_1,t\ust_2,t\ust_3\}$, $r_t$ can be bounded by $\br{w_t - e_t}^2$. We decompose the regret into the following two parts,
\al{
	\sum_{t\notin \cI} r_t & = \sum_{t\notin \cI, t\le t\ust} r_t + \sum_{t\notin \cI, t\ge t\ust} r_t
 }
where $t\ust$ is as in~\eqref{def:t_star}. The first summation is bounded as follows,
\al{
\sum_{t\notin \cI, t\le t\ust} r_t  & = \sum_{t\notin \cI, t\le t\ust} |y_t - w_t|^2	\notag\\
&\le  t\ust \sup_t |y_t - w_t|^2	\notag\\
&\le 2t\ust \br{\br{C_1  \| Y_{0}\| +\frac{B_u C_1}{1-\rho}  \left\{1+  \sum_{\ell=1}^{q} |b_\ell|  \right\} }^2 +    B^2_w   }.\label{ineq:adhoc_1} 
}

The second summation is bounded as follows using Lemma~\ref{lemma:prediction_square_bound},
\al{
\sum_{t\notin \cI, t\ge t\ust} \br{w_t - e_t}^2	 &\le \sum_{t} \br{w_t - e_t}^2 \notag\\
	&\le  \frac{1}{\br{1-\eps^2_3}}\max\left\{\frac{\br{\cT_2 + \cT_3 + \cT_{4,2}} }{1- \eps_1 -\eps_2},\frac{\log\br{1\slash \delta}}{\eps^2_2} \right\}+ q_{t\ust}\notag\\
	&\le  \frac{1}{\br{1-\eps^2_3}}\left\{\frac{\br{\cT_2 + \cT_3 + \cT_{4,2}} }{1- \eps_1 -\eps_2} + \frac{\log\br{1\slash \delta}}{\eps^2_2}\right\}+q_{t\ust}.	\label{ineq:81}
}

From Lemma~\ref{lemma:regret_I} the regret during $\cI$ is bounded as 
\al{
	\sum_{t\in\cI} r_t \le 2~b^2_1 \br{B^2_u+ \|\lm\|^2\left[C_1\| Y_{0}\| +\frac{B_u C_1}{1-\rho}  \left\{1+  \sum_{\ell=1}^{q} |b_\ell|  \right\} + B_u q\right]^2 } N\ui_T.\label{ineq:80}
}
\eqref{ineq:adhoc_1}-~\eqref{ineq:80} and Lemma~\ref{lemma:bound_qstar} yield,
\al{
\sum_{t} r_t \le &\frac{\br{\cT_2 + \cT_3 + \cT_{4,2}} }{\br{1-\eps_3}^2\br{1-\eps_1-\eps_2}  }+b^2_1 \eps^2_3 t\ust \br{ \left\{C_1 \| Y_{0}\| +\frac{B_w C_1}{1-\rho}  \left\{1+  \sum_{\ell=1}^{q} |b_\ell|  \right\} \right\}^2 + B^2_w}\notag\\ 
&+ 2t\ust \br{\br{C_1  \| Y_{0}\| +\frac{B_u C_1}{1-\rho}  \left\{1+  \sum_{\ell=1}^{q} |b_\ell|  \right\} }^2 +    B^2_w   }\notag\\
& + \frac{1}{\br{1-\eps^2_3}} \cdot \frac{\log\br{1\slash \delta}}{\eps^2_2}\notag\\
& + 2~b^2_1 \br{B^2_u+ \|\lm\|^2\left[C_1\| Y_{0}\| +\frac{B_u C_1}{1-\rho}  \left\{1+  \sum_{\ell=1}^{q} |b_\ell|  \right\} + B_u q\right]^2 } N\ui_T.\label{ineq:adhoc_2}
}
Denote the second, third, fourth and fifth summation above by $\cS_5$.~We now analyze $\cT_2$ in the bound above.~The term involved in $\cT_2$ at time $t$ is equal to $\br{b_1 \br{1-\gamma_s} \br{u_s - \lm' \psi_s}}^2$, while $r_t = b^2_1\br{u_t - \lm' \psi_t}^2$. Now since $\gamma_s = \frac{b_1}{b_{1,s}}$, for $t\ge t\ust_6(\eps_3,\delta)$~\eqref{def:t_6_star} we have $\br{b_1 \br{1-\gamma_t} \br{u_t - \lm' \psi_t}}^2\le \eps^2_3 r_t$. Thus, $\cT_2 \le  \eps^2_3 \sum_t r_t$, which upon substituting into~\eqref{ineq:adhoc_2} yields,
\al{
& \br{1-\frac{\eps^2_3}{\br{1-\eps_3}^2\br{1-\eps_1-\eps_2}  }}\sum_{t} r_t \le \frac{\br{\cT_3 + \cT_{4,2}} }{\br{1-\eps_3}^2\br{1-\eps_1-\eps_2}  } + \cS_5,
}
or
\al{
	\sum_{t} r_t \le \br{1-\frac{\eps^2_3}{\br{1-\eps_3}^2\br{1-\eps_1-\eps_2}  }}^{-1}\left[ \frac{\br{\cT_3 + \cT_{4,2}} }{\br{1-\eps_3}^2\br{1-\eps_1-\eps_2}  } + \cS_5 \right],
}
The proof is then completed by substituting the bounds on $\cT_3,\cT_{4,2}$ derived in Propositions~\ref{prop:t3} and~\ref{prop:t_4_2}. 
\end{proof}

\subsection{Some Auxiliary Results}
\begin{proposition}\label{prop:10}
	\al{
		b_1 \br{1-\gamma_s}\br{u_s - \lm' \psi_s } - b_1 \br{\lm_{s-1} - \lm  }'\psi_s  =  \frac{b_1}{b_{1,t}}\br{w_t - e_t}.
	}
\end{proposition}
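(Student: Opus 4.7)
The plan is to prove this by straightforward algebraic manipulation, taking the definition of the prediction error $e$ from~\eqref{def:error} and the reparametrized system equation~\eqref{def:re_arx}, and massaging one side into the other. Assuming the index $t$ on the right-hand side is a typo for $s$ (and treating the estimate $\tilde b^{(\cI)}_{1,s-1}$ used inside $e_{s+1}$ as the same coefficient that defines $\gamma_s = b_1/b_{1,s}$; this is the only consistent reading of the claim), the target identity is
\begin{equation*}
b_1(1-\gamma_s)(u_s-\lm'\psi_s) - b_1(\lm_{s-1}-\lm)'\psi_s \;=\; \frac{b_1}{b_{1,s}}\bigl(w_{s+1}-e_{s+1}\bigr).
\end{equation*}

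First I would write out the two basic substitutions. From~\eqref{def:re_arx}, $w_{s+1} = y_{s+1} - b_1(u_s - \lm'\psi_s)$, and from~\eqref{def:error} (shifted by one), $e_{s+1} = y_{s+1} - b_{1,s}(u_s - \lm'_{s-1}\psi_s)$. Subtracting, the $y_{s+1}$ terms cancel and one obtains
\begin{equation*}
w_{s+1}-e_{s+1} \;=\; -b_1(u_s-\lm'\psi_s) + b_{1,s}(u_s - \lm'_{s-1}\psi_s).
\end{equation*}
Grouping by $u_s$ and by $\psi_s$ gives
\begin{equation*}
w_{s+1}-e_{s+1} \;=\; (b_{1,s}-b_1)(u_s-\lm'\psi_s) - b_{1,s}(\lm_{s-1}-\lm)'\psi_s.
\end{equation*}

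Finally I would multiply through by $b_1/b_{1,s}$. Using $\gamma_s = b_1/b_{1,s}$, the coefficient $b_1(b_{1,s}-b_1)/b_{1,s}$ collapses to $b_1(1-\gamma_s)$, while $b_1 \cdot b_{1,s}/b_{1,s} = b_1$ on the second term. This yields exactly the left-hand side of the proposition, completing the proof.

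I do not foresee any real obstacle here — the statement is a one-line identity whose only subtlety is bookkeeping in the indices. The calculation uses nothing beyond the definitions of $e_{s+1}$, $w_{s+1}$, and $\gamma_s$, so there are no probabilistic or asymptotic ingredients to control; the content of the lemma is simply to repackage the mismatch $w_{s+1}-e_{s+1}$ in a form that isolates the two sources of error (the $b$-estimation error $1-\gamma_s$ acting on the true excitation $u_s-\lm'\psi_s$, and the $\lm$-estimation error $\lm_{s-1}-\lm$ acting on $\psi_s$), which is precisely what is needed to run the recursion~\eqref{eq:q_recursion} in Lemma~\ref{lemma:recursion_qt}.
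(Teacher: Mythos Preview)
Your proof is correct and follows essentially the same route as the paper's: both arguments are direct algebraic manipulations of the definitions of $e$, $w$, and $\gamma_s$, differing only in the order of operations (the paper expands $-e_t/b_{1,t}$ and then rearranges, while you compute $w_{s+1}-e_{s+1}$ first and then multiply by $b_1/b_{1,s}$). Your handling of the index mismatch is also appropriate, since the paper itself is loose with $s$ versus $t$ here.
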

\begin{proof}  We have,
\al{
	y_t = b_1 \br{u_{t-1} - \lm' \psi_{t-1}} + w_t, \label{eq:21}
}
and also,
\al{
	e_t = y_t - b_{1,t} \br{u_{t-1} - \lm'_{t-1}\psi_{t-1}}.
}
Hence,
\al{
	-\frac{e_t}{b_{1,t}} & = u_{t-1} -\frac{y_t}{b_{1,t}} - \lm'_{t-1}\psi_{t-1} \notag\\
	& = u_{t-1} -\frac{b_1 \br{u_{t-1} - \lm' \psi_{t-1}} + w_t}{b_{1,t}} - \lm'_{t-1}\psi_{t-1} \notag\\
	&=\br{1-\gamma_t} \br{u_{t-1} - \lm' \psi_{t-1}} - \br{u_{t-1} - \lm' \psi_{t-1}} + u_{t-1} - \lm'_{t-1}\psi_{t-1} - \frac{w_t}{b_{1,t}}\notag\\
	&=\br{1-\gamma_t} \br{u_{t-1} - \lm' \psi_{t-1}}  -\br{ \lm_{t-1}-\lm}'\psi_{t-1} - \frac{w_t}{b_{1,t}},\notag\\
}
where the second equality follows from~\eqref{eq:21}. The proof is completed by re-arranging the terms.
\end{proof}


\section{Estimation Error}\label{sec:estimation_error}
PIECE uses multiple estimators while designing inputs $\{u_t\}$. Let $\te\ui_t= \br{ a\ui_{1,t},a\ui_{2,t},\ldots,a\ui_{p,t},b\ui_{1,t},b\ui_{2,t},\ldots,b\ui_{q,t} }$ be the least squares estimate of $\te\ust$ using only the samples collected during the exploratory instants $\cI$. A recursive estimate of $\lm$ using all the samples until $t$ is generated as in~\eqref{def:rec_lambda}, and denoted by $\lm_t$. Let $\te_t$ be the LS estimate of $\te\ust$ using all the samples until time $t$. 

The estimation error at time $t$ satisfies,
\al{
	\|\te\ust -\te\ui_t\|^{2} &= \norm{V^{-1}_t\br{\sum_{s\le t, s\in \cI} \phi(s) w(s)}}^2 \notag\\
	& \le \| V^{-1\slash 2}_t \|^2 \norm{V^{-1\slash 2}_t\br{\sum_{s\le t, s\in \cI} \phi_s w_s}}^2\notag\\
	& = \| V^{-1\slash 2}_t  \|^2   \br{\sum_{s\le t, s\in \cI} \phi_s w_s}' 
 \left[ \br{V\ui_t}^{-1\slash 2} \right]'    
 \br{V\ui_t}^{-1\slash 2} \br{\sum_{s\le t, s\in \cI} \phi_s w_s} \notag\\
	& = \| \br{V\ui_t}^{-1\slash 2} \|^2   \br{\sum_{s\le t, s\in \cI} \phi_s w_s}' \br{V\ui_t}^{-1}  \br{\sum_{s\le t, s\in \cI} \phi_s w_s} \notag\\
	&\le \frac{1}{\lm_{\min}(V\ui_t)} \br{\sum_{s\le t, s\in \cI} \phi_s w_s}'\br{V\ui_t}^{-1}\br{\sum_{s\le t, s\in \cI} \phi_s w_s},\label{ineq:87}
}
where $V\ui_t : = \sum_{s\le t, s\in \cI} \phi_s \phi'_s$.
\begin{definition}
	Define,
	\begin{align}
		&\cG_{LSE} := \notag\\
	&\left\{\omega: \br{\sum_{s\le t, s\in \cI} \phi_s w_s}' 
		\br{\sum_{s\le t, s\in \cI} \phi_s \phi'_s }^{-1} \br{\sum_{s\le t, s\in \cI} \phi_s w_s} \le \log \br{\frac{\det\br{V\ui_t}^{.5}}{\delta} },~\forall t \right\}.\label{def:glse}
	\end{align}
\end{definition}
\begin{lemma}
	\al{
\bP\br{\cG_{LSE}} \ge 1-\delta.	
}	
\end{lemma}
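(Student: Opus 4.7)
The statement is a uniform-in-$t$ concentration bound on the self-normalized martingale $M_t := \sum_{s \le t,\, s \in \cI} \phi_s w_s$. My plan is to reduce it to the vector-valued self-normalized concentration inequality of Peña--Lai--Shao / Abbasi-Yadkori--P\'al--Szepesv\'ari, which is the same tool the authors invoke for the scalar self-normalized bounds used to define $\cG_q$ in \eqref{def:g_q}. All that is needed is to verify the filtration hypotheses and then read off the conclusion.

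\emph{Step 1 (predictability).} I would first check that the ``regressor'' $\phi_s \, \mathbf{1}\{s \in \cI\}$ is $\cF_{s-1}$-measurable. The vector $\phi_{s-1} = (y_{s-1}, \dots, y_{s-p}, u_{s-1}, \dots, u_{s-q})'$ depends only on past outputs and inputs, both of which are in $\cF_{s-1}$. The event $\{s \in \cI\}$ is determined by the PIECE schedule, namely the deterministic times $n_i = \exp(i^2)$ and $m_i = H$ for $i \ge 2$, together with the first episode whose termination depends on $\tau$, on $\inf\{t : b_{1,t} \ne 0\}$, and on $H_1(\Theta,\epsilon)$; all of these are stopping times with respect to $\{\cF_t\}$. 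Hence the sequence is predictable. Meanwhile, $w_s$ is an $\cF_s$-measurable martingale difference sequence with conditional variance bounded by $\sigma^2$ by Assumption~\ref{assum:noise_var}, and is either bounded (Assumption~\ref{assum:bounded_noise}) or conditionally sub-Gaussian (Assumption~\ref{assum:sub_gaussian}); in both cases the moment generating function estimate required by the self-normalized inequality holds.

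\emph{Step 2 (apply the self-normalized bound).} With the predictability and sub-Gaussianity of $\{w_s\}$ in hand, the method-of-mixtures self-normalized inequality yields, for any fixed $\delta>0$, with probability at least $1-\delta$, \emph{simultaneously for all $t \ge 1$},
\[
M_t'\, (V\ui_t)^{-1}\, M_t \;\le\; \log\!\br{\frac{\det(V\ui_t)^{1/2}}{\delta}},
\]
which is exactly the event defining $\cG_{LSE}$ in \eqref{def:glse}. The inequality is stated for all $t$ (not just a fixed $t$) because the method of mixtures produces a nonnegative supermartingale, so Doob's maximal inequality gives the uniform-in-time version without any union bound; this is the same mechanism the authors use elsewhere in Section~\ref{sec:T_bounds}. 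The factor $\tfrac{1}{2}$ in the exponent of $\det(V\ui_t)$ is precisely the square-root arising in the log-determinant normalizer of the self-normalized bound, and any noise-scale constants are absorbed into the $\lesssim$ convention adopted in the Notation subsection.

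\emph{Main obstacle.} There is no serious obstacle: the only subtlety is bookkeeping. The invertibility of $V\ui_t$ is not automatic for small $t$, but this is precisely why the first exploratory episode runs until $\tau := \inf\{t : \sum_s \phi_s \phi'_s \text{ is invertible}\}$; for $t < \tau$ the inequality can be interpreted in the regularized form $V\ui_t + \eta I$, and the final conclusion is then extracted by sending $\eta \to 0$ once $t \ge \tau$. Alternatively, one may work throughout with a regularized design matrix $V\ui_t + I$ and absorb the resulting $\det(I)^{-1/2} = 1$ factor trivially. In either case, \eqref{ineq:87} together with the bound just established is exactly what the subsequent Theorem~\ref{th:estimation_error_main} requires.
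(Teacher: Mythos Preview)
Your approach is correct and is essentially the same as the paper's: the paper's own proof is a single line, ``Follows from~\eqref{eq:self_normalized} by letting $\eta_s = w_s$ and $X_s = \phi_s$,'' i.e.\ a direct appeal to the vector self-normalized martingale inequality, exactly as you propose. Your Step~1 (predictability of $\phi_s \mathbf{1}\{s\in\cI\}$) and the remark on regularization/invertibility are useful hygiene that the paper silently assumes; they do not constitute a different method.
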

\begin{proof}
	Follows from~\eqref{eq:self_normalized} by letting $\eta_s = w_s$ and $X_s = \phi_s$.
\end{proof}
 
\begin{lemma}\label{lemma:est_error_1}
	 On $\cG_{LSE}$, 
	\al{
		\|\te\ust - \te_t \|^{2} & \le \frac{\log \br{\frac{\det(V_t)^{\frac{1}{2}}}{\delta} } }{\lm_{\min}(V_t)}\notag\\
		& \le \frac{\br{p+q}\log \lm_{\max}(V_t) - 2\log\br{\delta}}{2 \lm_{\min}(V_t)},~\forall t .\label{ineq:88}
	}
	Similarly,
	$$
	\|\te\ust - \te\ui_t\|^{2} \le \frac{\br{p+q}\log \br{\lm_{\max}\br{V\ui_t} } - 2 \log\br{\delta}}{2\lm_{\min}(V\ui_t)}.
	$$ 
\end{lemma}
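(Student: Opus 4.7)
The plan is to combine inequality \eqref{ineq:87}, which has already been derived, with the self-normalized tail bound provided by the event $\cG_{LSE}$. Specifically, \eqref{ineq:87} shows that
\[
\|\te\ust - \te\ui_t\|^{2} \le \frac{1}{\lm_{\min}(V\ui_t)} \br{\sum_{s\le t, s\in \cI} \phi_s w_s}'\br{V\ui_t}^{-1}\br{\sum_{s\le t, s\in \cI} \phi_s w_s},
\]
and by the definition of $\cG_{LSE}$, the quadratic form on the right-hand side is bounded by $\log(\det(V\ui_t)^{1/2}/\delta)$ on this event, uniformly in $t$. Substituting yields the first inequality of the lemma. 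The identical derivation with $V\ui_t$ replaced by $V_t$ (and the exploratory-only sum replaced by the full sum) gives the analogous statement for $\te_t$, under the same type of self-normalized bound included in $\cG_{LSE}$.

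For the second inequality, the remaining step is a standard determinant-eigenvalue estimate: since $V\ui_t$ (respectively $V_t$) is a positive semidefinite $(p+q)\times(p+q)$ matrix, its determinant equals the product of its eigenvalues, each of which is at most $\lm_{\max}(V\ui_t)$. Hence
\[
\log \det(V\ui_t) \le (p+q)\,\log \lm_{\max}(V\ui_t),
\]
so that $\log(\det(V\ui_t)^{1/2}/\delta) \le \tfrac{1}{2}(p+q)\log \lm_{\max}(V\ui_t) - \log\delta$. Dividing by $\lm_{\min}(V\ui_t)$ yields the second stated bound.

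I do not anticipate any genuine obstacle: once \eqref{ineq:87} and the event $\cG_{LSE}$ are in hand, the lemma reduces to one substitution plus an elementary eigenvalue comparison. The only place that deserves a line of care is verifying that the self-normalized inequality used to define $\cG_{LSE}$ can be applied with both the exploratory-indexed sum and the full-indexed sum (since the lemma is stated for both $\te_t$ and $\te\ui_t$); this is immediate because $\{\phi_s\}$ is $\cF_{s-1}$-measurable in both cases and $\{w_s\}$ is a martingale difference sequence by Assumption~\ref{assum:noise_var}, so the self-normalized martingale bound of \cite{abbasi2011improved} applies in either regime.
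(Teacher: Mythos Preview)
Your proposal is correct and follows essentially the same approach as the paper: the paper's proof is the single line ``Follows from the bound~\eqref{ineq:87} and the definition of $\cG_{LSE}$,'' and your argument spells out exactly this substitution together with the elementary bound $\log\det(V)\le (p+q)\log\lm_{\max}(V)$. Your remark that $\cG_{LSE}$, as written in the paper, is literally stated only for the exploratory-indexed sums (so the $\te_t$ version implicitly requires the same self-normalized bound for the full sum) is a fair observation; the paper glosses over this, and your justification via predictability of $\{\phi_s\}$ and the martingale-difference property of $\{w_s\}$ is the right way to close that gap.
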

\begin{proof}
Follows from the bound~\eqref{ineq:87} and the definition of $\cG_{LSE}$.
\end{proof}
It follows from~\eqref{ineq:88} that in order to bound the estimation error, we need to derive an upper-bound on $\lm_{\max}\br{V_t},\lm_{\max}\br{V\ui_t}$. This is done in the following result.
\begin{lemma}\label{lemma:phi_bounded}
	If $\{w_t\}$ satisfies Assumption~\ref{assum:bounded_noise}, then
\al{
		\lm_{\max}(V\ui_t) &\le \br{\br{C_1  \| Y_{0}\|  +  \frac{C_1 B_u}{1-\rho}\left\{1+  \sum_{\ell=1}^{q} |b_\ell|  \right\}}^2 + q B^2_u}N\ui_t,~\forall t.\label{ineq:18}	\\
  \lm_{\max}(V\ui_t) &\le \br{\br{C_1  \| Y_{0}\|  +  \frac{C_1 B_u}{1-\rho}\left\{1+  \sum_{\ell=1}^{q} |b_\ell|  \right\}}^2 + q B^2_u}t,~\forall t.
}	
\end{lemma}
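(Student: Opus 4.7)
The plan is straightforward: bound the maximum eigenvalue of a positive semidefinite matrix by its trace, which equals the sum of the squared norms of the $\phi_s$ that build up the covariance. That is, I would start from
\[
\lambda_{\max}(V^{(\mathcal{I})}_t) \;\le\; \Tr\br{V^{(\mathcal{I})}_t} \;=\; \sum_{s\in\cI,\,s\le t}\|\phi_s\|^2,
\]
and similarly $\lambda_{\max}(V_t)\le \sum_{s\le t}\|\phi_s\|^2$ for the second bound (where I assume the second line of the statement contains a typographical $(\mathcal{I})$ and refers to $V_t=\sum_{s\le t}\phi_s\phi_s'$). Hence it suffices to exhibit a deterministic, $s$-independent upper bound on $\|\phi_s\|^2$ and then count terms.

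Next, using $\phi_s=(Y_s',U_s')'$ I would split $\|\phi_s\|^2=\|Y_s\|^2+\|U_s\|^2$. For the input part, note that at exploratory times $|u_s|\le B_w\le B_u$ by~\eqref{def:bounded_input} and the definition of $B_u$, and at exploitation times $|u_s|\le B_u$ by the clipping rule~\eqref{def:control_rule}. Therefore $\|U_s\|^2\le qB_u^2$ uniformly in $s$. For the output part, I would invoke Lemma~\ref{lemma:bound_phi} (which is already referenced several times in this section) to get
\[
\|Y_s\| \;\le\; C_1\rho^s\|Y_0\| + \frac{B_u C_1}{1-\rho}\br{1+\sum_{\ell=1}^{q}|b_\ell|} \;\le\; C_1\|Y_0\| + \frac{B_u C_1}{1-\rho}\br{1+\sum_{\ell=1}^{q}|b_\ell|},
\]
where the input driving $Y_s$ has magnitude at most $B_u$ throughout by the same case analysis above.

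Combining the two contributions gives the uniform bound
\[
\|\phi_s\|^2 \;\le\; \br{C_1\|Y_0\|+\frac{C_1 B_u}{1-\rho}\br{1+\sum_{\ell=1}^{q}|b_\ell|}}^{2} + q B_u^2,
\]
and summing over the $N^{(\mathcal{I})}_t$ exploratory indices yields the first inequality; summing over all $t$ indices yields the second. The only point requiring care is that the uniform input bound used in Lemma~\ref{lemma:bound_phi} must be valid across both exploration and exploitation phases; this is where Assumption~\ref{assum:bounded_noise} is used to justify $|u_s|\le B_u$ in the exploratory phase (through $|u_s|\le B_w$ and $B_w\le B_u$, which holds by the construction of $B_u$ in Section~\ref{sec:algo}). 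With that observation, the argument is purely deterministic and requires no probabilistic conditioning, which is why the bound holds for all $t$ rather than only on a high-probability event.
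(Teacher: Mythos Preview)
Your proposal is correct and follows essentially the same route as the paper: bound $\lambda_{\max}$ by $\sum\|\phi_s\|^2$, split $\|\phi_s\|^2=\|Y_s\|^2+\|U_s\|^2$, use the uniform input bound $|u_s|\le B_u$ (valid in both phases, with Assumption~\ref{assum:bounded_noise} ensuring $B_w\le B_u$ during exploration), and apply Lemma~\ref{lemma:bound_phi} for $\|Y_s\|$. One small remark: you need not posit a typo in the second inequality---since $N^{(\mathcal{I})}_t\le t$, the bound with $t$ follows immediately from the bound with $N^{(\mathcal{I})}_t$ even as stated for $V^{(\mathcal{I})}_t$.
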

\begin{proof} 
Since $\lm_{\max}(V\ui_t) \le \sum_{s\le t,s\in\cI} \|\phi_s\|^2$, we will derive an upper-bound on $\|\phi_s\|$:
\al{
	\|\phi_s\| \le C_1  \| Y_{0}\|  +  \frac{C_1 B_u}{1-\rho}\left\{1+  \sum_{\ell=1}^{q} |b_\ell|  \right\} + q B_u,~\forall s\in\cI.
}

Since $|u_s|\le B_u$ and $|w_s|\le B_w < B_u$, Lemma~\ref{lemma:state_bound}-(i) yields
\al{
	\| Y_{t} \| & \le 	C_1 \rho^{t} \| Y_{0}\| + C_1 B_u \sum_{s=0}^{t  -1} \rho^{s} \left\{1+  \sum_{\ell=1}^{q} |b_\ell|  \right\}\notag\\
	& \le C_1  \| Y_{0}\|  +  \frac{C_1 B_u}{1-\rho}\left\{1+  \sum_{\ell=1}^{q} |b_\ell|  \right\}.
}
Since $\phi_{t} := \br{y_{t},y_{t-1},\ldots,y_{t-p}, u_{t},u_{t-1},\ldots,u_{t-q+1}   }'$, we have,
\al{
\| \phi_t \|^2 & = \| Y_{t} \|^2 + \|U_t\|^2 \notag\\
&\le \br{C_1  \| Y_{0}\|  +  \frac{C_1 B_u}{1-\rho}\left\{1+  \sum_{\ell=1}^{q} |b_\ell|  \right\}}^2 + q B^2_u, 
} 
where $U_t := \br{u_t,u_{t-1},\ldots,u_{t-q+2}}'$. The proof is completed by noting that $\lm_{\max}(V\ui_t) \le \sum_{s\le t,s\in\cI} \|\phi_s\|^2$.
\end{proof}

In order to obtain a bound on the estimation error, a lower bound on $\lm_{\min}(V\ui_t)$ under the proposed learning algorithm is derived in Section~\ref{sec:lmin}. This is then used to prove the main result on estimation error below.
Define the function,
\al{
	& \cE(x;\te,\delta) := \frac{\br{p+q} \log \br{x}   }{2\beta_3 x} \notag\\
	&+ \frac{\br{p+q}\log\left[\br{C_1(\te)  \| Y_{0}\|  +  \frac{C_1(\te) B_u}{1-\rho(\te)}\left\{1+  \sum_{\ell=1}^{q} |b_\ell(\te)|  \right\} }^2 
		+ qB^2_u \right]	- 2\log\br{\delta}}{ \beta_3 x},\label{def:error_fn} 
}
where $\beta_3$ is as in~\eqref{def:beta_3}, and $\sigma^2_e$ is as in~\eqref{def:sigma_e}. The sets $\cG_q,\cG_{LSE},\cG_{\pj},\cG_{\cI},\cG_w,\cG_{w^2_{B}}$ are defined in~\eqref{def:g_q}~\eqref{def:glse},~\eqref{ineq:52},
~\eqref{def:gi},~\eqref{def:g1},~\eqref{def:G_tilde4} respectively.
\begin{theorem}\label{th:estimation_error}
	If $\{w_t\}$ satisfies Assumption~\ref{assum:bounded_noise}, then on the set $\cG_{LSE}\cap \cG_{\pj}\cap \cG_{\cI}\cap \cG_{w^2_B}$, the estimation error can be bounded as follows,
	\al{
		\|\te\ust - \te\ui_t\|^{2} \le \cE(N\ui_t;\te\ust,\delta),
	}	
where the function $\cE\br{\cdot,;\delta,\te}$ is as in~\eqref{def:error_fn}. If instead $\{w_t\}$ satisfies Assumption~\ref{assum:sub_gaussian}, then the same conclusion holds on the set $\cG_{LSE}\cap \cG_{\pj}\cap \cG_{\cI}\cap \cG_{w^2_{UB}}$ 
\end{theorem}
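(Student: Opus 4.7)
The plan is to feed the self-normalized concentration bound of Lemma~\ref{lemma:est_error_1} with matching upper and lower bounds on $\lm_{\max}(V\ui_t)$ and $\lm_{\min}(V\ui_t)$, where $V\ui_t := \sum_{s\le t,\,s\in\cI}\phi_s\phi'_s$. Recall that Lemma~\ref{lemma:est_error_1} gives, on $\cG_{LSE}$,
$$
\|\te\ust - \te\ui_t\|^2 \le \frac{(p+q)\log\lm_{\max}(V\ui_t) - 2\log\delta}{2\lm_{\min}(V\ui_t)},
$$
so everything reduces to (i) an explicit deterministic upper bound on the numerator in terms of $N\ui_t$ and the system parameters, and (ii) a high-probability lower bound on the denominator of order $N\ui_t$.

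Step (i) is essentially already done: Lemma~\ref{lemma:phi_bounded} gives
$$
\lm_{\max}(V\ui_t) \le \br{\br{C_1\|Y_0\| + \tfrac{C_1 B_u}{1-\rho}\{1+\textstyle\sum_\ell|b_\ell|\}}^2 + qB_u^2}\, N\ui_t,
$$
using only that $|u_s|\le B_u$ on $\cI$, that $|w_s|\le B_w$ under Assumption~\ref{assum:bounded_noise}, and the uniform state bound of Lemma~\ref{lemma:state_bound}. Taking the logarithm then produces exactly the $(p+q)\log N\ui_t$ and $(p+q)\log[\cdots]$ contributions appearing in the definition~\eqref{def:error_fn} of $\cE$.

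Step (ii)---the main obstacle---is the pathwise lower bound $\lm_{\min}(V\ui_t)\ge\beta_3 N\ui_t$ that will be carried out in Section~\ref{sec:lmin}, holding on $\cG_{\pj}\cap\cG_{\cI}\cap\cG_{w^2_B}$. This is the technically deepest piece because the regressors $\phi_s$ contain past outputs that are themselves driven by closed-loop inputs from previous exploitation phases, so the exploratory inputs $\{u_s\}_{s\in\cI}$ do not excite $\phi_s$ in an i.i.d.\ manner. The strategy to be executed in Section~\ref{sec:lmin} is to exploit the fact that each exploratory episode is of length $H$ with $H$ chosen (see~\eqref{def:n_i}) so that the warm-up constant $m\ust$ dominates the transient response of the stable system; this forces the regressor within an episode to behave like a stationary process driven by the i.i.d.\ probe noise, whose population covariance has minimum eigenvalue bounded below by $\beta_3$. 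A concentration argument invoking $\cG_{\pj}$ and $\cG_{\cI}$ then transfers the population lower bound to the realized sample sum $V\ui_t$.

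Combining (i) and (ii) inside Lemma~\ref{lemma:est_error_1} yields
$$
\|\te\ust-\te\ui_t\|^2 \le \frac{(p+q)\log\br{[\cdots]\,N\ui_t} - 2\log\delta}{2\beta_3 N\ui_t},
$$
which after splitting the logarithm matches $\cE(N\ui_t;\te\ust,\delta)$ in~\eqref{def:error_fn}. The sub-Gaussian extension is then handled by restricting all of the above to $\cG_w$, on which $|w_s|\lesssim\sqrt{\log(T/\delta)}$ a.s., and using $\cG_{w^2_{UB}}$ in place of $\cG_{w^2_B}$ in step (ii); the constants $B_w$ entering $B_u$ and the numerator of the $\lm_{\max}$ bound are replaced by $\sqrt{\log(T/\delta)}$, as foreshadowed in Section~\ref{sec:unbounded}, but the structural argument above is otherwise unchanged.
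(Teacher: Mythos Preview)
Your proposal is correct and follows the same three-ingredient recipe as the paper's own proof: combine Lemma~\ref{lemma:est_error_1} (the self-normalized bound on $\cG_{LSE}$), Lemma~\ref{lemma:phi_bounded} (the deterministic upper bound on $\lm_{\max}(V\ui_t)$), and Theorem~\ref{th:1} (the lower bound $\lm_{\min}(V\ui_t)\ge\beta_3 N\ui_t$ on $\cG_{\pj}\cap\cG_{\cI}\cap\cG_{w^2_B}$), then split the logarithm to recover $\cE$.

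One caution: the heuristic you sketch for step~(ii)---that the episode length $H$ makes the regressor look approximately stationary within an episode, with a population covariance of minimum eigenvalue $\beta_3$---is \emph{not} the mechanism actually used in Section~\ref{sec:lmin}. The paper instead runs a column-by-column projection argument (Propositions~\ref{prop:11}--\ref{prop:12} and Theorem~\ref{th:proj_3}, a finite-time version of Lai--Wei~1982), lower-bounding $\|d-\hat d\|$ for each column $d$ of the design matrix via a recursive ratio bound and then invoking Lemma~\ref{lemma:lmin_proj}. Since you defer to Section~\ref{sec:lmin} anyway, this does not create a gap in the proof of the present theorem, but you should not rely on the stationarity picture if you later need to unpack Theorem~\ref{th:1} itself.
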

\begin{proof}
	Follows from Lemma~\ref{lemma:est_error_1}, Lemma~\ref{lemma:phi_bounded} and Theorem~\ref{th:1}.
\end{proof}

\section{Relation between prediction error and regret for $t\notin \cI$}\label{sec:predict_error}
\begin{definition}\label{def:t_1}
	Let
 \al{
		t\ust_1(\rho) := \inf \left\{ t\in \bN: B_2 \frac{\log N\ui_t}{\sqrt{N\ui_t}} + \cE(t;\te\ust,\delta) < 1, C_1 \rho^{t} \| Y_{0}\| < B_u \right\},\label{def:t1_star}
	}	 
	
	Define,
	\al{
		&		t\ust_2(\rho) := \inf\Big\{t\in\bN: \notag\\
		&						\left[ B_2 \frac{\log N^{(\cI)}_t}{\sqrt{N^{(\cI)}_t}} + \cE(t;\te\ust,\delta)\right] \cdot \left[b_1\br{p+ 1 + \frac{C_1}{1-\rho}  \left\{1  +  \sum_{\ell=1}^{q} |b_\ell|   \right\} }\right] \le \frac{\delta^2_1}{2},\notag\\
		&C_1 \rho^{t} \| Y_{0}\| < \delta_1 B_u\Big\}.\label{def:t2_star}
	}
 \al{
t\ust_3(\delta) := \inf \left\{t \in\bN: \cE(t;\te\ust,\delta) \le \frac{b_1}{2},\mbox{ and }\frac{B_2 \sqrt{\log N\ui_{\ell}}}{2}\ge 2\right\}.\label{def:t3_star}
}
\end{definition}

\begin{theorem}\label{th:reg_n_i}
	Under Assumption~\ref{assum:bounded_noise} on $\{w_t\}$, the following holds
 on the set $\cG_q\cap \cG_{LSE}$: When $t\ge \max\left\{t\ust_1(\rho) , t\ust_2(\rho),t\ust_3(\delta)\right\}$, the instantaneous regret $r_t$ can be bounded by $\br{e_t -w_t}^2$.
\end{theorem}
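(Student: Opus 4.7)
The plan is a case split on the PIECE control rule, showing that past the thresholds $t^*_1(\rho),t^*_2(\rho),t^*_3(\delta)$ every admissible outcome satisfies $r_t\le(e_t-w_t)^2$. Recall that for $t\notin\cI$ the input is $u_{t-1}=(-B_u)\vee z_{t-1}\wedge B_u$, where the diagnostic selects either $z_{t-1}=\lm'_{t-2}\psi_{t-1}$ (branch~1) when $|\lm'_{t-2}\psi_{t-1}-(\tilde\lm^{(\cI)}_{t-2})'\psi_{t-1}|\le B_2\log N^{(\cI)}_{t-1}/\sqrt{N^{(\cI)}_{t-1}}\,\|\psi_{t-1}\|$, or $z_{t-1}=(\tilde\lm^{(\cI)}_{t-2})'\psi_{t-1}$ (branch~2) otherwise. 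I would first rule out saturation: on $\cG_{LSE}$, Theorem~\ref{th:estimation_error} controls $\|\tilde\lm^{(\cI)}_{t-2}-\lm\|$ and, via the equivalence with the LSE of $\theta$, $\|\lm_{t-2}-\lm\|$ at the rate $\sqrt{\cE(N^{(\cI)}_{t-1};\te\ust,\delta)}$; combined with Lemma~\ref{lemma:state_bound} and the $C_1\rho^t\|Y_0\|\le\delta_1 B_u$ clause embedded in $t^*_2$, this bounds $\|\psi_{t-1}\|$ by a deterministic constant $M$. The calibration of $B_u=(B_w/\delta_1^2)(1+M(\Theta))$ in Section~\ref{sec:algo} then forces $|z_{t-1}|\le B_u$, so $u_{t-1}=z_{t-1}$.

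Next I would handle branch~1. Substituting $u_{t-1}=\lm'_{t-2}\psi_{t-1}$ into the defining identity $e_t=y_t-\tilde b^{(\cI)}_{1,t-2}(u_{t-1}-\lm'_{t-2}\psi_{t-1})$ makes the algorithm's one-step prediction vanish, so $e_t=y_t$ and $(e_t-w_t)^2=(y_t-w_t)^2=r_t$, an equality. For branch~2, the diagnostic failure gives $|(\lm_{t-2}-\tilde\lm^{(\cI)}_{t-2})'\psi_{t-1}|>B_2\log N^{(\cI)}_{t-1}/\sqrt{N^{(\cI)}_{t-1}}\,\|\psi_{t-1}\|$, whereas $|u_{t-1}-\lm'\psi_{t-1}|=|(\tilde\lm^{(\cI)}_{t-2}-\lm)'\psi_{t-1}|\lesssim\sqrt{\cE(N^{(\cI)}_{t-1};\te\ust,\delta)}\,\|\psi_{t-1}\|$, which is asymptotically dominated by the probing scale---this is precisely why $t^*_3$ requires $B_2\sqrt{\log N^{(\cI)}_{t-1}}/2\ge 2$. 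Rewriting $w_t-e_t$ using Proposition~\ref{prop:10} as $(b_1/b_{1,t})^{-1}[b_1(1-\gamma_t)(u_{t-1}-\lm'\psi_{t-1})-b_1(\lm_{t-2}-\lm)'\psi_{t-1}]$ and invoking $\gamma_t$ close to $1$ (from the $\cE(t;\te\ust,\delta)\le b_1/2$ part of $t^*_3$), the $(\lm_{t-2}-\lm)'\psi_{t-1}$ contribution dominates, yielding $(e_t-w_t)^2\ge|b_1(u_{t-1}-\lm'\psi_{t-1})|^2=r_t$.

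The main obstacle is the branch~2 step: one needs to argue that the cross-term involving $(\lm_{t-2}-\lm)'\psi_{t-1}$ in Proposition~\ref{prop:10} cannot cancel with the $b_1(1-\gamma_t)(u_{t-1}-\lm'\psi_{t-1})$ contribution that is of probing-scale magnitude by the diagnostic failure. This is where the three thresholds work in concert: $t^*_1$ forces the probing and estimation errors below unity at the scale of $\|\psi_{t-1}\|$; $t^*_2$ tightens this with the explicit $\delta_1^2/2$ margin so the no-clip invariant is preserved; and $t^*_3$ secures both a one-sided bound on $|\tilde b^{(\cI)}_{1,t-2}-b_1|$ (so $\gamma_t\approx 1$) and the inequality $B_2\sqrt{\log N^{(\cI)}}/2\ge 2$ that makes the probing term beat the residual. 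A subsidiary technicality is bounding $\|\lm_{t-2}-\lm\|$ for the all-samples recursive estimator, which I would handle by mapping $\lm_t$ back to $\theta_t$ and applying Theorem~\ref{th:estimation_error} with the uniform $\|\psi_s\|\le M$ bound established in the first paragraph.
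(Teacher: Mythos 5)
Your overall architecture is the same as the paper's: first rule out clipping so that $u_t = z_t$ (the paper's Lemma~\ref{lemma:1}, built from Propositions~\ref{prop:6}--\ref{prop:9} and the calibration of $B_u$ and $\delta_1$); then split on the diagnostic check, obtaining the exact identity $(e_{t}-w_{t})^2 = r_{t}$ on the pass branch (the paper's Case~I, identical to yours), and on the fail branch converting the diagnostic gap $\big|\lm'_{t-2}\psi_{t-1} - (\tilde{\lm}^{(\cI)}_{t-2})'\psi_{t-1}\big| > B_2 \frac{\log N^{(\cI)}_{t-1}}{\sqrt{N^{(\cI)}_{t-1}}}\|\psi_{t-1}\|$ into a lower bound on $|e_t - w_t|$ that beats $\sqrt{r_t}$, using exactly the two $t\ust_3$ conditions you cite ($\cE \le b_1/2$ so the estimated leading coefficient is at least $b_1/2$, and $B_2\sqrt{\log N^{(\cI)}}/2 \ge 2$ so the probing scale dominates the residual). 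That part of your plan is sound and is the paper's Case~II, merely rewritten through the algebra of Proposition~\ref{prop:10} instead of directly through~\eqref{def:error}.

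The genuine flaw is the mechanism you invoke twice for controlling the recursive all-samples estimator: you claim $\|\lm_{t-2}-\lm\|$ can be bounded "via the equivalence with the LSE of $\theta$" by Theorem~\ref{th:estimation_error}. That theorem is \emph{not} available for $\lm_t$: it rests on Theorem~\ref{th:1}'s lower bound on $\lm_{\min}(V\ui_t)$, which is extracted exclusively from the i.i.d.\ white-noise probing inputs during $\cI$; no analogous excitation bound holds for the closed-loop covariance $P^{-1}_t$ that drives the recursion~\eqref{def:rec_lambda}, and the entire reason the diagnostic check in~\eqref{def:_z_t} exists is that $\lm_{t-1}$, fed closed-loop data, may be arbitrarily far from $\lm$. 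The paper's proof is structured precisely to avoid ever needing an upper bound on $\|\lm_{t-1}-\lm\|$: Proposition~\ref{prop:6} bounds $|z_t|$ on the pass branch using the pass inequality itself together with $\|\tilde{\lm}\ui_{t-1}-\lm\|\le \cE(N\ui_t;\te\ust,\delta)$ (so only the $\cI$-sample estimator's consistency enters), and Case~II uses only the \emph{lower} bound furnished by failure of the check. Your no-clipping step in the first paragraph, as written, therefore fails; the repair is to replace the purported consistency of $\lm_{t-2}$ by the branch-conditional bound of Proposition~\ref{prop:6}, after which the rest of your outline goes through essentially as in the paper.
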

\begin{proof} From~\ref{eq:inst_reg}, the instantaneous regret is given by 
\al{
	b^2_1\br{ u_{t-1} - \lm'\psi_{t-1}  }^2. \label{eq:16}
}
From Lemma~\ref{lemma:1} we have $u_t = z_t$. We consider the following two cases, and show that in both of them the expression~\eqref{eq:16} can be bounded by $\br{e_t - w_{t}}^2$. Recall that from~\eqref{def:error}, we have
\al{
	e_{t+1} = b_1 \br{u_t - \lm' \psi_t  }-\tilde{b}\ui_{1,t-1}\br{u_{t} - \lm'_{t-1 } \psi_{t} } + w_{t+1},
}
so that 
\al{
	e_{t+1} - w_{t+1} = b_1 \br{u_t - \lm' \psi_t  }-\tilde{b}\ui_{1,t-1}\br{u_{t} - \lm'_{t-1 } \psi_{t} }.\label{eq:17}
}

\emph{Case-I}: $u_t = \lm'_{t-1}\psi_t$. 

In this case,
\al{
	e_{t+1} - w_{t+1} & = b_1 \br{u_t - \lm' \psi_t  }-\tilde{b}\ui_{1,t-1}\br{u_{t} - \lm'_{t-1 } \psi_{t} } \notag\\
	& = b_1 \br{u_t - \lm' \psi_t  }.
}
Thus, in this case, $\br{e_{t+1} - w_{t+1}}^2$ is equal to the instantaneous regret $ b^2_1 \br{u_t - \lm' \psi_t  }^2$.

\emph{Case-II}: $u_t = \br{\tilde{\lm}^{(\cI)}_{t-1}}' \psi_{t}$. 

We have,
\al{
	|u_t - \lm'_{t-1}\psi_t| & =  |\br{\tilde{\lm}^{(\cI)}_{t-1}}' \psi_{t} - \lm'_{t-1}\psi_t| \notag\\
	& \ge B_2  \frac{\log N^{(\cI)}_t}{\sqrt{N^{(\cI)}_t}}\|\psi_{t}\|,
}
where the inequality follows from~\eqref{def:_z_t}. 
Moreover, from guarantees on the estimation error provided in Theorem~\ref{th:estimation_error} on $\cG_{LSE}$, we have,
\al{
	\|\lm - \tilde{\lm}\ui_{t-1}\| \le \cE(N\ui_t;\te\ust,\delta).
}
Upon combining these two,
\al{
	|u_t - \lm'_{t-1}\psi_t| \ge B_2 \sqrt{\log N\ui_t}\|\lm - \tilde{\lm}\ui_{t-1}\| \|\psi_t\|.\label{ineq:71}
}

We have,
\al{
	|u_t - \lm' \psi_t| & = | \br{\tilde{\lm}^{(\cI)}_{t-1} }'\psi_{t} -   \lm' \psi_t| \notag \\
	& \le \|\tilde{\lm}^{(\cI)}_{t-1} -  \lm'  \| \| \psi_t \| \notag\\
	& \le \frac{|u_t - \lm'_{t-1}\psi_t|}{B_2 \sqrt{\log N\ui_t}}
	,\label{ineq:44}
}
where the first inequality follows from~\eqref{ineq:33}, while the second one follows from~\eqref{ineq:71}. From~\eqref{eq:17} we have, 
\al{
	|e_{t+1}-w_{t+1}|  \ge & \Big| \tilde{b}\ui_{1,t-1}\br{u_{t} - \lm'_{t-1 } \psi_{t} } \Big| - |b_1 \br{u_t - \lm' \psi_t  }|\notag\\
	\ge &   \Big| \tilde{b}\ui_{1,t-1}\br{u_t - \lm' \psi_t}B_2 \sqrt{\log N\ui_t} \Big|  - |b_1 \br{u_t - \lm' \psi_t  }|\notag\\
	\ge &   \Big| \frac{b_1}{2}\br{u_t - \lm' \psi_t}B_2 \sqrt{\log N\ui_t} \Big|  - |b_1 \br{u_t - \lm' \psi_t  }|\notag\\
	\ge &   \Big| b_1\br{u_t - \lm' \psi_t}B_2 \Big|,\notag 
}
\end{proof}

\subsection{Auxiliary Results}
\begin{proposition}\label{prop:6}
	Under Assumption~\ref{assum:bounded_noise}, we have, 
	\al{
		|z_t| \le  \left[ \|\lm\| + B_2 \frac{\log N^{(\cI)}_t}{\sqrt{N^{(\cI)}_t}} + \cE(N\ui_t;\te\ust,\delta)
\right]  \|\psi_t\|, \mbox{ on } \cG_{q} \cap \cG_{LSE}\cap \cG_{\pj}\cap \cG_{\cI}, 
	}
	where $z_t$ is as in~\eqref{def:_z_t}, and $B_2$ is as in~\eqref{def:control_rule}. If Assumption~\ref{assum:sub_gaussian} holds
 instead of Assumption~\ref{assum:bounded_noise}, then the same conclusion holds on $\cG_{q} \cap \cG_{LSE}\cap \cG_{\pj}\cap \cG_{\cI}\cap \cG_{w}$.
\end{proposition}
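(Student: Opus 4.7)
The plan is to bound $|z_t|$ by the triangle inequality, writing
$|z_t| \le |\lm'\psi_t| + |z_t - \lm'\psi_t| \le \|\lm\|\,\|\psi_t\| + |z_t - \lm'\psi_t|$,
so the task reduces to bounding $|z_t - \lm'\psi_t|$ in each of the two branches of the definition of $z_t$ in \eqref{def:_z_t}.

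In the first branch, $z_t = \lm'_{t-1}\psi_t$ and the diagnostic predicate holds by definition, giving
$|(\lm_{t-1} - \tilde{\lm}^{(\cI)}_{t-1})'\psi_t| \le B_2\,\tfrac{\log N\ui_t}{\sqrt{N\ui_t}}\,\|\psi_t\|$.
Inserting $\pm\tilde{\lm}^{(\cI)}_{t-1}$ and using the triangle inequality,
$|z_t - \lm'\psi_t| \le |(\lm_{t-1} - \tilde{\lm}^{(\cI)}_{t-1})'\psi_t| + |(\tilde{\lm}^{(\cI)}_{t-1} - \lm)'\psi_t|$,
and the second term is controlled by Cauchy--Schwarz together with the estimation error bound for $\tilde{\lm}\ui_{t-1}$ derived from Theorem~\ref{th:estimation_error} on $\cG_{LSE}$, yielding the $\cE(N\ui_t;\te\ust,\delta)\|\psi_t\|$ contribution. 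In the second branch, $z_t = \tilde{\lm}^{(\cI)}_{t-1}\psi_t$, and directly
$|z_t - \lm'\psi_t| \le \|\tilde{\lm}^{(\cI)}_{t-1} - \lm\|\,\|\psi_t\| \le \cE(N\ui_t;\te\ust,\delta)\|\psi_t\|$
by the same estimation error bound; this is dominated by the bound obtained in the first branch, so the worst case of the two gives the claimed inequality.

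The step requiring the most care is the passage from the parameter error $\|\te\ust - \tilde{\te}\ui_{t-1}\|$ (which is what Theorem~\ref{th:estimation_error} controls) to the gain error $\|\lm - \tilde{\lm}\ui_{t-1}\|$, since $\lm$ involves division by $b_1$. This is where the event $\cG_{\pj}\cap\cG_{\cI}$ enters: these events guarantee that $\tilde{b}\ui_{1,t-1}$ is bounded away from $0$ (by virtue of the warm-up phase and the clipping/projection mechanisms built into PIECE), so that the map $\te \mapsto \lm(\te)$ is Lipschitz with a controlled constant on the relevant neighborhood of $\te\ust$, and the bound on the $\te$-error translates, up to an absorbable constant, into the bound written in terms of $\cE(N\ui_t;\te\ust,\delta)$. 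The sub-Gaussian version follows verbatim once one restricts to the event $\cG_w$ defined in~\eqref{def:g1}, on which the noise trajectory admits the same high-probability uniform bound as in the bounded-noise case, so every appeal to Assumption~\ref{assum:bounded_noise} above can be replaced by the corresponding bound on $\cG_w$ without altering the rest of the argument.
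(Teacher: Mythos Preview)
Your proposal is correct and follows essentially the same route as the paper's proof: split $|z_t|\le\|\lm\|\|\psi_t\|+|z_t-\lm'\psi_t|$, then in each branch of~\eqref{def:_z_t} insert $\tilde{\lm}^{(\cI)}_{t-1}$ and use the diagnostic predicate together with the estimation-error bound $\|\lm-\tilde{\lm}\ui_{t-1}\|\le\cE(N\ui_t;\te\ust,\delta)$ from Theorem~\ref{th:estimation_error}. Your explicit remark that the passage from the $\te$-error to the $\lm$-error needs $\tilde{b}\ui_{1,t-1}$ bounded away from zero is a point the paper's proof asserts without comment, so your write-up is, if anything, slightly more careful on that step.
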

\begin{proof} The bound on the estimation error derived in Theorem~\ref{th:estimation_error} yields the following on $\cG_{LSE}$:
\al{
	\|\lm - \tilde{\lm}\ui_{t} \| \le \cE(N\ui_t;\te\ust,\delta).\label{ineq:33}
}
This means that
\al{
	\Big|\lm' \psi_t - \br{\tilde{\lm}\ui_{t}}' \psi_t \Big| \le  \cE(N\ui_t;\te\ust,\delta) \|\psi_t\|.\label{ineq:16}
}
Moreover, 
\al{
	\Big| \lm'_{t-1}\psi_t - \lm'\psi_t \Big| & \le \Big| \lm'_{t-1}\psi_t - \br{\tilde{\lm}\ui_{t}}' \psi_t  \Big| + |\lm'\psi_t - \br{\tilde{\lm}\ui_{t}}' \psi_t | \notag\\
	& \le  \Big| \lm'_{t-1}\psi_t - \br{\tilde{\lm}\ui_{t}}' \psi_t  \Big| +  \cE(N\ui_t;\te\ust,\delta)\|\psi_t\|.\label{ineq:17}
}
It then follows from the definition of $z_t$ in~\eqref{def:_z_t}, and the bounds~\eqref{ineq:16},~\eqref{ineq:17}, that
\al{
	\Big| z_t - \lm'\psi_{t}\Big| \le   \left[ B_2  \frac{\log N^{(\cI)}_t}{\sqrt{N^{(\cI)}_t}} +  \cE(N\ui_t;\te\ust,\delta)\right] \|\psi_{t}\|,\label{ineq:19}
}
or
\al{
	| z_t | \le   \left[ \|\lm\| + B_2\frac{\log N^{(\cI)}_t}{\sqrt{N^{(\cI)}_t}} + \cE(N\ui_t;\te\ust,\delta) \right] \|\psi_{t}\|.\label{ineq:20}
}
This completes the proof.
\end{proof}

\begin{proposition}\label{prop:7}
	Under Assumption~\ref{assum:bounded_noise}, on the set $\cG_{q} \cap \cG_{LSE}\cap \cG_{\pj}\cap \cG_{\cI}$, we have the following: For $t\ge t\ust_1$ and ,~$t\notin \cI$, on the event 	
	\al{
		\br{\|\lm\| + 1} \|\psi_t\|	\le B_u,
	}
	we have	
	\al{
		| y_{t+1} - w_{t+1} |  \le b_1\left[ B_2  \frac{\log N^{(\cI)}_t}{\sqrt{N^{(\cI)}_t}} + \cE(N\ui_t;\te\ust,\delta) \right]  \left[B_u \br{p+ 1 + \frac{C_1}{1-\rho}  \left\{1  +  \sum_{\ell=1}^{q} |b_\ell|   \right\} }\right],
	} 
	where $B_2$ is as in~\eqref{def:control_rule}. The same conclusion holds on $\cG_{q} \cap \cG_{LSE}\cap \cG_{\pj}\cap \cG_{\cI}\cap\cG_w$ if $\{w_t\}$ satisfies Assumption~\ref{assum:sub_gaussian} instead of Assumption~\ref{assum:bounded_noise}.
\end{proposition}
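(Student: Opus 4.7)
The plan is to combine the re-parametrization of the ARX dynamics with Proposition~\ref{prop:6} and the hypothesis $(\|\lm\|+1)\|\psi_t\|\le B_u$ to first show that the clipping in the control rule is inactive, and then directly bound the prediction residual $y_{t+1}-w_{t+1}$.

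First, I would note that from~\eqref{def:re_arx}, for any $t$,
\[
y_{t+1}-w_{t+1} \;=\; b_1\br{u_t-\lm'\psi_t}.
\]
So it suffices to bound $|u_t-\lm'\psi_t|$ and extract an explicit bound on $\|\psi_t\|$. Next, I would argue that $u_t=z_t$ under the stated event. By Proposition~\ref{prop:6} (valid on the good set $\cG_q\cap\cG_{LSE}\cap\cG_{\pj}\cap\cG_{\cI}$),
\[
|z_t|\le\br{\|\lm\|+B_2\tfrac{\log N\ui_t}{\sqrt{N\ui_t}}+\cE(N\ui_t;\te\ust,\delta)}\|\psi_t\|.
\]
The definition of $t\ust_1(\rho)$ in~\eqref{def:t1_star} ensures that $B_2\tfrac{\log N\ui_t}{\sqrt{N\ui_t}}+\cE(N\ui_t;\te\ust,\delta)<1$ for $t\ge t\ust_1(\rho)$, so the bracket is at most $\|\lm\|+1$. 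Combined with the hypothesis $(\|\lm\|+1)\|\psi_t\|\le B_u$, this gives $|z_t|\le B_u$, hence the clipping in~\eqref{def:control_rule} is inactive and $u_t=z_t$.

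With $u_t=z_t$, inequality~\eqref{ineq:19} in the proof of Proposition~\ref{prop:6} yields
\[
|u_t-\lm'\psi_t| \;\le\; \br{B_2\tfrac{\log N\ui_t}{\sqrt{N\ui_t}}+\cE(N\ui_t;\te\ust,\delta)}\|\psi_t\|.
\]
The final step is to bound $\|\psi_t\|$ by the prefactor $B_u(p+1+\tfrac{C_1}{1-\rho}\{1+\sum_{\ell=1}^q|b_\ell|\})$. For this I would invoke Lemma~\ref{lemma:state_bound} (referenced in Sec.~\ref{sec:regret}) to bound $\|Y_t\|$ by $C_1\rho^t\|Y_0\|+\tfrac{C_1 B_u}{1-\rho}\{1+\sum_\ell|b_\ell|\}$ and then use the definition of $t\ust_1(\rho)$, which supplies $C_1\rho^t\|Y_0\|\le B_u$, to absorb the transient. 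Combining this with the componentwise bound $|u_s|\le B_u$ on the past inputs in $\psi_t$ and collecting the $p$ output components and the input components yields the required $B_u(p+1+\tfrac{C_1}{1-\rho}\{1+\sum|b_\ell|\})$ bound.

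The main obstacle I anticipate is not deep but bookkeeping: one must be careful that the hypothesis $(\|\lm\|+1)\|\psi_t\|\le B_u$ together with $t\ge t\ust_1(\rho)$ is genuinely enough to turn off clipping at time $t$ even though the estimation error bound $\cE(N\ui_t;\te\ust,\delta)$ uses only exploratory samples up to $t$, and that the $\|\psi_t\|$ bound derived from Lemma~\ref{lemma:state_bound} is expressed in the $(p+1+\tfrac{C_1}{1-\rho}\{\cdots\})$ form rather than the looser $(q B_u^2 + (\cdots)^2)$ form used elsewhere; this is essentially a choice of norm inequality ($\|\cdot\|_2\le\|\cdot\|_1$ on the $p$ output entries plus the input magnitude $B_u$), but it must be done consistently with the statement.
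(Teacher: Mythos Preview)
Your proposal is correct and follows essentially the same route as the paper's proof: use Proposition~\ref{prop:6} together with the definition of $t\ust_1$ to show the clipping is inactive, apply~\eqref{ineq:19} to bound $|u_t-\lm'\psi_t|$, and then bound $\|\psi_t\|$ via Lemma~\ref{lemma:bound_phi} (which is Lemma~\ref{lemma:state_bound} specialized to $|u_s|,|w_s|\le B_u$) plus $\|U_t\|\le qB_u$ and the transient condition $C_1\rho^t\|Y_0\|\le B_u$ from the definition of $t\ust_1$. The bookkeeping concern you flag about the $\|\psi_t\|$ bound is exactly what the paper does in~\eqref{ineq:24}, using the triangle/$\ell_1$-type estimate you describe.
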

\begin{proof} From Lemma~\ref{lemma:bound_phi},
\al{
	\| Y_{t} \| \le 	C_1 \rho^{t} \| Y_{t_0}\| +\frac{B_u C_1}{1-\rho}  \left\{1+  \sum_{\ell=1}^{q} |b_\ell|  \right\},
}
where $Y_t := \br{y_t,y_{t-1},\ldots,y_{t-p+1}}'$. Let $U_t := \br{u_t,u_{t-1},\ldots,u_{t-q+2}}'$. ~Since $\|\psi_t\|\le \|U_t\| + \|Y_t\|$, and $ \|U_t\|  \le B_u q$, for $t\ge t\ust_1$ we have,
\al{
	\|\psi_t\| \le B_u  q+ B_u + \frac{B_u C_1}{1-\rho}  \left\{1  +  \sum_{\ell=1}^{q} |b_\ell|   \right\}.\label{ineq:24}
}
For $t\ge t\ust_1$,
\al{
	B_2 \frac{\log N^{(\cI)}_t}{\sqrt{N^{(\cI)}_t}} +  \cE(N\ui_t;\te\ust,\delta) < 1,	\label{ineq:56}
}
so that from Proposition~\ref{prop:6} we have, 
\nal{
	|z_t| \le \br{\|\lm\|+1}\|\psi_t\|.
}
Thus, when $\br{\|\lm\|+1}\|\psi_t\| < B_u$, we have $|z_t|<B_u$, so that from~\eqref{def:control_rule} we have $u_t = z_t$. This gives,
\al{
	| y_{t+1} - w_{t+1} | & = | b_1 \br{u_t - \lm' \psi_t} |\\
	& =  | b_1 \br{z_t - \lm' \psi_t} | \notag\\
	& \le b_1 \left[ B_2 \frac{\log N^{(\cI)}_t}{\sqrt{N^{(\cI)}_t}} +  \cE(N\ui_t;\te\ust,\delta) \right] \|\psi_{t}\| \notag\\
	& \le b_1  \left[ B_2  \frac{\log N^{(\cI)}_t}{\sqrt{N^{(\cI)}_t}} +  \cE(N\ui_t;\te\ust,\delta) \right] \left[B_u \br{p+ 1 + \frac{C_1}{1-\rho}  \left\{1  +  \sum_{\ell=1}^{q} |b_\ell|   \right\} }\right]\notag\\
} 
where the second equality follows since $u_t = z_t$, the first inequality follows from~\eqref{ineq:19}, and the second inequality from~\eqref{ineq:24}.
\end{proof}

\begin{proposition}\label{prop:8}
	Under Assumption~\ref{assum:bounded_noise}, we have the following on $\cG_{q} \cap \cG_{LSE}\cap \cG_{\pj}\cap \cG_{\cI}$: For $t \ge \max \{t\ust_1,t\ust_2\}$~\eqref{ineq:56},
	\al{
		|u_t| & \le \delta_1 B_u,
	} 	
	where $\delta_1>0$ satisfies~\eqref{cond:delta_1}-\eqref{cond:delta_4}.	 
\end{proposition}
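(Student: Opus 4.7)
The plan is strong induction on $t\ge t_0 := \max\{t\ust_1(\rho),t\ust_2(\rho)\}$, establishing jointly that $|u_s|\le \delta_1 B_u$ and $|y_{s+1}-w_{s+1}|\le \delta_1^2 B_u/2$ for every $s\in [t_0,t]\setminus\cI$. For $s\in \cI$ the a priori bound $|u_s|\le B_w$ already gives $|u_s|\le \delta_1 B_u$, since $B_u = (B_w/\delta_1^2)(1+M(\Theta))$ forces $\delta_1 B_u \ge B_w$ whenever $\delta_1\le 1$. The three constraints on $\delta_1$ stated in Section~\ref{sec:algo} are precisely what is needed to close this induction.

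At the inductive step I would first bound $\|\psi_t\|$. Lemma~\ref{lemma:bound_phi} splits $\|Y_t\|$ into the initial-state decay $C_1\rho^t\|Y_0\|$ plus a forced-response term driven by past inputs; the former is at most $\delta_1 B_u$ because $t\ge t\ust_2$, and the latter is at most $\delta_1 B_u(M(\Theta)-1)$ by the inductive hypothesis on prior $u_s$. Combined with $\|U_t\|\le \sqrt{q-1}\,\delta_1 B_u$ this gives $\|\psi_t\|\lesssim \delta_1 B_u(M(\Theta)+q)$, and the second constraint on $\delta_1$ then implies $(\|\lm\|+1)\|\psi_t\|\le B_u$. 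A direct application of Proposition~\ref{prop:6} at this point would only yield $|z_t|\le B_u$, which is too loose. Instead, I would decompose $z_t = \lm'\psi_t + (z_t - \lm'\psi_t)$: the residual is controlled by~\eqref{ineq:19} and is small because $t\ge t\ust_1$, while the main term is expanded via the minimum-variance identity $\lm'\psi_t = -(1/b_1)\bigl[\sum_{\ell=1}^p a_\ell y_{t-\ell+1} + \sum_{j=2}^{q} b_j u_{t-j+1}\bigr]$. Feeding in $|y_{t-\ell+1}|\le B_w + \delta_1^2 B_u/2$ (from the induction on $y-w$) and $|u_{t-j+1}|\le \delta_1 B_u$ (from the induction on $u$), and invoking the third condition on $\delta_1$, which is calibrated so that $(C_1/[b_1(1-\rho)])\bigl[\sum|a_\ell| + \delta_1^2/(2 B_w M(\Theta))\bigr]\le 1/(3\delta_1)$, yields $|z_t|\le \delta_1 B_u$. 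Since $\delta_1<1$ the clipping is inactive, so $|u_t| = z_t$ and $|u_t|\le \delta_1 B_u$. The other half of the inductive claim, $|y_{t+1}-w_{t+1}|\le \delta_1^2 B_u/2$, then follows from Proposition~\ref{prop:7} applied with the $(\|\lm\|+1)\|\psi_t\|\le B_u$ just established together with $t\ge t\ust_2$.

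The base case at $t=t_0$ is handled by the same chain, with the crude clipping bound $|u_s|\le B_u$ standing in for the (absent) inductive hypothesis on prior inputs; because $t\ust_2$ already enforces $C_1\rho^{t_0}\|Y_0\|\le \delta_1 B_u$, the pre-$t_0$ effects are dominated by transients that have decayed. The principal obstacle will be bookkeeping: tracking every constant ($M(\Theta),\,q,\,\sum|a_\ell|,\,C_1/(1-\rho),\,B_w/B_u$) through the two halves of the induction so that each of the three conditions on $\delta_1$ is used exactly once and no slack is wasted. In particular, the logarithmic estimator term $\cE(N\ui_t;\te\ust,\delta)$ that enters~\eqref{ineq:19} must remain dominated by the constant residual across the relevant range of $t$, which is exactly what $t\ust_1$ guarantees.
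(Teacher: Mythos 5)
Your scaffolding (joint induction on $|u_s|\le\delta_1 B_u$ and $|y_{s+1}-w_{s+1}|\le \delta_1^2 B_u/2$, trivial handling of $s\in\cI$, Proposition~\ref{prop:7} together with the definition of $t\ust_2$ for the output half) is the same as the paper's, but two steps of your plan fail, and they fail at exactly the points where the paper's proof takes a different route.

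First, the inductive step for $|u_t|$. You bound $\lm'\psi_t$ by expanding the minimum-variance identity one step and feeding in the induction hypotheses; this produces the term $(1/b_1)\sum_{j=2}^{q}|b_j|\,|u_{t-j+1}|\le \bigl(\sum_{j\ge 2}|b_j|/b_1\bigr)\delta_1 B_u$, so closing the induction this way needs the per-step contraction $\sum_{j\ge 2}|b_j|<b_1$. Assumption~\ref{assum:unit_circle} does not give this: minimum phase only guarantees geometric decay $\|B^n\|\le C_1\rho^n$, possibly with $C_1>1$ and $\|B\|>1$. The paper's own Example~II is a counterexample: $b_1=0.1$, $b_2=0.086$, $b_3=0.02$, so $\sum_{j\ge 2}|b_j|=0.106>b_1$. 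Tellingly, the version of the third $\delta_1$-condition you quote contains $\sum_{\ell}|a_\ell|$ but nothing controlling $\sum_{j\ge 2}|b_j|/b_1$ — you have silently dropped exactly the problematic terms. The paper never contracts step by step: it applies the vector input recursion of Lemma~\ref{lemma:state_bound}(ii), $\|U_{t_1}\|\le C_1\rho^{t_1-t_0}\|U_{t_0}\|+(C_1/b_1)\sum_{s}\rho^{s}\{|w_{t_1+1-s}|+\sum_{\ell}|a_\ell||y_{t_1+1-s-\ell}|\}$, anchored at a time $t_0$ inside an exploration episode, and uses the episode length $H$ in~\eqref{def:n_i} so that the transient factor $C_1\rho^{\nu+1-t_0}q$ is small; condition~\eqref{cond:delta_3} is calibrated for that accumulated geometric sum, not for a one-step bound.

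Second, the base case. You start the induction at $t_0=\max\{t\ust_1,t\ust_2\}$ with the crude bound $|u_s|\le B_u$ for $s<t_0$ and argue that pre-$t_0$ effects have decayed because $C_1\rho^{t_0}\|Y_0\|\le\delta_1 B_u$. That kills only the homogeneous transient; the forced response driven by pre-$t_0$ inputs of size up to $B_u$ contributes up to $\frac{B_u C_1}{1-\rho}\{1+\sum_{\ell=1}^{q}|b_\ell|\}$ to $\|Y_{t_0}\|$, i.e.\ order $B_u M(\Theta)$ rather than $\delta_1 B_u$, so $(\|\lm\|+1)\|\psi_{t_0}\|\le B_u$ cannot be established at $t_0$. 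This is precisely why the paper anchors its induction not at $t_0$ but at $n_i+m\ust$ inside an exploration episode with $n_i\ge\max\{t\ust_1,t\ust_2\}$: during exploration $|u_t|\le B_w$, so after the burn-in of $m\ust$~\eqref{def:must} steps the output is reset to scale $B_w=\frac{\delta_1^2}{1+M(\Theta)}B_u$, as in~\eqref{ineq:40}, yielding a legitimate base case. The exploration episodes are not incidental bookkeeping here — they are the mechanism that makes the induction startable — and your plan provides no substitute for them.
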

\begin{proof} It follows from Proposition~\ref{prop:6} that for $t \ge \max \{t\ust_1,t\ust_2\}$,
\al{
	|z_t| \le  \br{\|\lm\| +  1}  \|\psi_t\|.	\label{ineq:25}
}
Moreover, from Proposition~\ref{prop:7} for $t \ge \max \{t\ust_1,t\ust_2\}$ with $t\notin \cI$, on the event 	
\al{
	\br{\|\lm\| + 1} \|\psi_t\|	\le B_u,\label{ineq:26}
}
we have	
\al{
	| y_{t+1} - w_{t+1} |  \le b_1 \left[ B_2  \frac{\log N^{(\cI)}_t}{\sqrt{N^{(\cI)}_t}} +  \cE(N\ui_t;\te\ust,\delta) \right] \left[B_u \br{p+ 1 + \frac{C_1}{1-\rho}  \left\{1  +  \sum_{\ell=1}^{q} |b_\ell|   \right\} }\right].\label{ineq:27}
} 
From definition of $t\ust_2$ we have that for $t\ge t\ust_2$, 
\al{
	| y_{t+1} - w_{t+1} |  \le \frac{\delta^2_1}{2}  B_u.\label{ineq:38}
} 
We have,
\al{
|w_t|&\le B_w \\
& \le \frac{\delta^2_1}{2} B_u,
}
where the second inequality follows from~\eqref{eq:24}.

Upon combining this with~\eqref{ineq:38} we obtain that when $(\|\lm\|+1)\psi_t \le B_u$, then we have
\al{
	| y_{t+1} | & \le B_w + \frac{\delta^2_1}{2}  B_u \notag\\	
	&\le \delta^2_1 ~B_u.\label{ineq:31}
} 

Now, choose a sufficiently large episode $i$, so that its start time satisfies $n_i \ge \max\{t\ust_1,t\ust_2\}$. We will now show that for times $t\in\left\{n_i + m\ust,\ldots, n_{i+1}  \right\}$, where $m_i \ge m\ust$\footnote{$m\ust$ is as in~\eqref{def:must}.}, the following holds:
\al{
	|u_t| & \le \delta_1  B_u,\label{ineq:28}\\
	\mbox{and }	|y_{t+1}| & \le  \delta^2_1  B_u.\label{ineq:29}
}

We consider the following two cases separately.

Case 1). $t\in\left\{n_i + m\ust,\ldots, n_{i}+m_i  \right\}$:

From~\eqref{ineq:2} we have the following bound,
\al{
	\| Y_{n_i + m} \| \le 	C_1 \rho^{m} \| Y_{n_i}\| + B_w  C_1 \sum_{s=0}^{m} \rho^{s} \left\{1+  \sum_{\ell=1}^{q} |b_\ell|  \right\},
}
where $m\le m_i$.~When $m\ge m\ust$, so that it satisfies~\eqref{ineq:30}, this bound yields, 
\al{
	\| Y_{n_i + m} \| & \le B_w \br{	1+ C_1 \sum_{s=0}^{m} \rho^{s} \left\{1+  \sum_{\ell=1}^{q} |b_\ell|  \right\}}\notag\\
	& = \frac{\delta^2_1}{1+M\br{\Theta}} B_u \br{	1+ C_1 \sum_{s=0}^{m} \rho^{s} \left\{1+  \sum_{\ell=1}^{q} |b_\ell|  \right\}}\notag\\
	&\le \delta^2_1 B_u	\label{ineq:40}
}
where the equality follows from~\eqref{eq:24} and the last inequality from~\eqref{def:M_th}. This shows that~\eqref{ineq:29} holds.
Since $|u_t|\le B_w,~\forall t\in \cI$,~\eqref{ineq:28} also clearly holds. 


Case 2). $t\in \left\{n_i + m_i +1, n_i + m_i + 2, \ldots,n_{i+1}\right\}$: For $\nu < n_{i+1}$, assume that~\eqref{ineq:28},~\eqref{ineq:29} hold for times $t\in\left\{n_i + m\ust,\ldots, n_{i} + m_i, \ldots, \nu \right\}$.~We will show that~\eqref{ineq:28},~\eqref{ineq:29} also hold for $t=\nu+1$. Note that we have already shown above that they hold for $t\in\left\{n_i + m\ust,\ldots, n_{i}+m_i  \right\}$.

We have,
\al{
	\br{\|\lm\|+1}\| \psi_{\nu+1}\| & \le \br{\|\lm\|+1}\br{|y_{\nu+1}| + |y_{\nu}| + \ldots + |y_{\nu + 1 -p}| + |u_{\nu}| + |u_{\nu -1}| + \ldots + |u_{\nu-q+1}|} \notag\\
	& \le \br{\|\lm\|+1}\br{p \delta^2_1 +q \delta_1 }B_u\notag\\
	& \le B_u,
}
where the second inequality follows from the induction hypothesis~\eqref{ineq:28},~\eqref{ineq:29}, and the last inequality follows since $\delta_1$ has been chosen to satisfy~\eqref{cond:delta_1} for $\te=\te\ust$.~Since $\br{\|\lm\|+1}\| \psi_{\nu+1}\| \le B_u$, it follows from~\eqref{ineq:31} that 
\al{
	| y_{\nu+2} |  \le \delta^2_1~B_u.
} 
This shows that~\eqref{ineq:29} holds for $t=\nu+1$.~It remains to show that we have $|u_{\nu +1}|\le \delta_1 B_u$. Consider the bound on $\|U_t\|$ derived in Lemma~\ref{lemma:state_bound} (ii),
\al{
	\|U_{t_1} \| \le C_1 \rho^{t_1 - t_0} \|U_{t_0}\| + \frac{C_1}{b_1} \sum_{s=0}^{t_1 - t_0 -1} \rho^{s} \left\{ | w_{t_1 + 1 - s}|  + \sum_{\ell=1}^{p}| a_\ell | | y_{t_1 + 1 - s - \ell}  | \right\},\label{ineq:32}
}
where $t_1 > t_0$. Consider time $t_0\in\left\{ n_i + m\ust, n_i + m\ust +1,\ldots,n_i + m_i  \right\}$.~Then $|w_t| \le B_w \le \frac{B_u \delta^2}{2}$, where the second inequality follows from~\eqref{eq:24}. Our induction hypothesis yields that $|y_t|\le  B_u \delta^2_1$ for times $t \in \left\{t_0, t_0 +1,\ldots, \nu \right\}$, and also $|u_t|\le \delta_1 B_u$ for $t \in \{t_0 +1,\ldots,\nu-1\}$, and we have shown above that $| y_{\nu+2} |  \le \delta^2$. Consider the vector $U_{\nu+1}= \br{u_{\nu+1},u_{\nu},\ldots,u_{\nu+1-q+2}}'$. Note that it follows from our induction hypothesis that $\|U_{t_0}\| \le \delta_1 B_u q$. Upon substituting these bounds into~\eqref{ineq:32}, and setting $t_1 = \nu+1$, we obtain the following,
\al{
	\|U_{\nu +1} \| & \le C_1 \rho^{\nu + 1 - t_0}\delta_1  B_u  q+ \frac{C_1}{b_1} \sum_{s=0}^{\nu - t_0} \rho^{s} \left\{ \frac{B_u\delta^2_1}{2} + \sum_{\ell=1}^{p}| a_\ell | \delta^2_1  B_u \right\}\notag\\
	&\le B_u \br{C_1 \rho^{\nu + 1 - t_0}\delta_1  q+ \frac{C_1}{b_1} \sum_{s=0}^{\nu - t_0} \rho^{s} \left\{ \frac{\delta^2_1}{2} + \sum_{\ell=1}^{p}| a_\ell | \delta^2_1  \right\}}\notag\\
	&= \delta_1~B_u \br{C_1 \rho^{\nu + 1 - t_0}q+ \delta_1 \frac{C_1}{b_1} \sum_{s=0}^{\nu - t_0} \rho^{s} \left\{ \frac{1}{2} + \sum_{\ell=1}^{p}| a_\ell |   \right\}}\notag\\
	& \le \delta_1  B_u,
}
where the last inequality holds since $\nu$ is sufficiently large and $\delta_1$ sufficiently small, i.e.,
\al{
	\nu + 1 - t_0 &\ge \log_{\rho}\frac{1}{2C_1 q},\\
	\delta_1 & \le \frac{1}{2}\left[ \frac{C_1}{b_1\br{1-\rho}}  \left\{ \frac{1}{2} + \sum_{\ell=1}^{p}| a_\ell |   \right\}\right]^{-1}.
}
The first condition holds since episode duration $H$ is sufficiently large, i.e. 
\al{
	H + 1 - \br{n_i + m\ust} \ge \log_{\rho}\br{\frac{1}{2C_1 q}}.\label{def:H_cond}
}
while the second holds because $\delta_1$ satisfies~\eqref{cond:delta_1}-\eqref{cond:delta_3}.

We have thus completed the induction step, and shown that~\eqref{ineq:28},~\eqref{ineq:29} holds for all $t\in \{n_i + m\ust,\ldots,n_{i+1} \}$. 

It remains to be shown that $|u_t|  \le \delta_1  B_u$ for times $t\in \cI$. But we already have $|u_t| \le B_w < \delta_1 B_u$, where the last inequality follows from~\eqref{cond:B1_B3_4}.~This proves the claim.
\end{proof}


\begin{proposition}\label{prop:9}
	When $\{w_t\}$ satisfies Assumption~\ref{assum:bounded_noise}, then the following holds on the set $\cG_{q} \cap \cG_{LSE}\cap \cG_{\pj}\cap \cG_{\cI}$ for $t\ge \max\left\{t\ust_1,t\ust_2\right\}$:
	\al{
		\|\psi_t\|\le \left[\br{1 +  C_1 \sum_{s=0}^{t} \rho^{s} \left\{1+  \sum_{\ell=1}^{q} |b_\ell|  \right\}}+q\right]  B_u \delta_1.\label{ineq:58}
	}
If instead $\{w_t\}$ satisfies Assumption~\ref{assum:sub_gaussian}, then the same conclusion holds on $\cG_{q} \cap \cG_{LSE}\cap \cG_{\pj}\cap \cG_{\cI}\cap \cG_w$.
\end{proposition}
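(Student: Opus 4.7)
The plan is to decompose $\|\psi_t\|$ into its output-component and its input-component, and to bound each separately by invoking Proposition~\ref{prop:8} together with the state-bound Lemma~\ref{lemma:state_bound}.

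First, I would use the definition $\psi_t := (y_t,\ldots,y_{t-p+1},u_{t-1},\ldots,u_{t-q+1})'$ to write
\[
\|\psi_t\| \le \|Y_t\| + \|U_{t-1}\|,
\]
where $Y_t$ and $U_{t-1}$ are as defined in Section~\ref{sec:system_model}. For $t \ge \max\{t\ust_1,t\ust_2\}$, Proposition~\ref{prop:8} guarantees $|u_s| \le \delta_1 B_u$ for all the relevant $s$, which immediately yields $\|U_{t-1}\| \le q\,\delta_1 B_u$.

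Next, I would control $\|Y_t\|$ by applying Lemma~\ref{lemma:state_bound} starting from a reference time $t_0$ lying in the induction window $\{n_i + m\ust,\ldots\}$ built during the proof of Proposition~\ref{prop:8}. At that reference time, the bound $\|Y_{t_0}\| \le \delta_1 B_u$ is already established (cf.\ \eqref{ineq:40}). The state-bound lemma then gives
\[
\|Y_t\| \le C_1 \rho^{t-t_0}\|Y_{t_0}\| + C_1 \sum_{s=0}^{t-t_0-1}\rho^s\Big\{|w_{t-s}| + \sum_{\ell=1}^{q}|b_\ell|\,|u_{t-s-\ell+1}|\Big\}.
\]
Using $|u_s| \le \delta_1 B_u$ (from Proposition~\ref{prop:8}) and $|w_s| \le B_w \le \delta_1 B_u$ (which follows from the definition of $B_u$ and the inequality \eqref{eq:24}), together with $\|Y_{t_0}\|\le \delta_1 B_u$, all three terms pick up a common factor of $\delta_1 B_u$ and collapse into
\[
\|Y_t\| \le \delta_1 B_u\Big(1 + C_1 \sum_{s=0}^{t}\rho^s\{1 + \sum_{\ell=1}^{q}|b_\ell|\}\Big).
\]
Adding the bound $\|U_{t-1}\| \le q\,\delta_1 B_u$ yields exactly \eqref{ineq:58}. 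For the sub-Gaussian case (Assumption~\ref{assum:sub_gaussian}), the only change is that $|w_s| \le B_w$ is replaced by the uniform upper bound on noise that holds on the event $\cG_w$; since $B_u$ is redefined in Section~\ref{sec:unbounded} precisely so that this high-probability noise bound is still dominated by $\delta_1 B_u$, the same algebraic manipulations go through.

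The main obstacle, and the reason this statement is not entirely trivial, is the selection of $t_0$: one must hand off cleanly from the inductive construction in Proposition~\ref{prop:8} to ensure that $\|Y_{t_0}\| \le \delta_1 B_u$ holds. Once $t_0$ is chosen inside a window where that induction has already closed, the remainder of the argument is a direct application of Lemma~\ref{lemma:state_bound} together with the prescribed inequalities on $\delta_1$; no fresh probabilistic ingredients are needed beyond those already used to define the sets $\cG_q \cap \cG_{LSE}\cap \cG_{\pj}\cap \cG_{\cI}$ (and $\cG_w$ in the sub-Gaussian case).
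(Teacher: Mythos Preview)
Your proposal is essentially correct and matches the paper's structure: split $\|\psi_t\|\le\|Y_t\|+\|U_t\|$, bound $\|U_t\|$ via Proposition~\ref{prop:8}, and bound $\|Y_t\|$ via Lemma~\ref{lemma:state_bound}(i). The one substantive difference is the reference time in the state bound. The paper simply takes $t_0=0$: since the first exploratory episode lasts beyond $\max\{t\ust_1,t\ust_2\}$, every $u_s$ with $s<\max\{t\ust_1,t\ust_2\}$ already satisfies $|u_s|\le B_w\le\delta_1 B_u$, so $|u_s|\le\delta_1 B_u$ holds for \emph{all} $s$, and the initial term $C_1\rho^{t}\|Y_0\|$ is killed directly by the condition $C_1\rho^{t}\|Y_0\|<\delta_1 B_u$ built into the definition of $t\ust_2$. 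This sidesteps entirely the ``hand-off'' issue you flag as the main obstacle.

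Your route via a later $t_0$ also works, but the ``collapse'' step is a little quick: from $C_1\rho^{t-t_0}\|Y_{t_0}\|+C_1\sum_{s=0}^{t-t_0-1}\rho^{s}\{1+\sum_\ell|b_\ell|\}\,\delta_1 B_u$ you cannot simply replace $C_1\rho^{t-t_0}$ by $1$, since $C_1\ge 1$ in general. You can rescue it by noting that the missing tail $C_1\sum_{s=t-t_0}^{t}\rho^{s}\{1+\sum_\ell|b_\ell|\}\ge C_1\rho^{t-t_0}$, so extending the sum to $s\le t$ absorbs the initial term; but the paper's choice $t_0=0$ avoids this bookkeeping altogether.
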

\begin{proof} We have shown in Proposition~\ref{prop:8} that for $t\ge \max\left\{t\ust_1,t\ust_2\right\}$ w.h.p. we have
$$
|u_t| \le \delta_1 B_u,
$$
and hence 
\al{
	\|U_t\| \le q B_u\delta_1. \label{ineq:61}
}
Also,
\al{
	|w_t| &\le B_w\notag \\
	&<  B_u \delta_1,
}
where the second inequality follows from~\eqref{cond:B1_B3_4}.~Upon substituting these bounds in Lemma~\ref{lemma:state_bound}-(i), we obtain,
\al{
	\| Y_{t} \| &\le 	C_1 \rho^{t} \| Y_{0}\| + C_1 \sum_{s=0}^{t} \rho^{s} \left\{|w_{t-s}| +  \sum_{\ell=1}^{q} |b_\ell|  | u_{t -s - \ell } |  \right\}\notag\\
	&\le 	C_1 \rho^{t} \| Y_{0}\| +   B_u \delta_1 C_1 \sum_{s=0}^{t} \rho^{s} \left\{1+  \sum_{\ell=1}^{q} |b_\ell|  \right\}\notag\\
	&\le  B_u \delta_1	\br{1 +  C_1 \sum_{s=0}^{t} \rho^{s} \left\{1+  \sum_{\ell=1}^{q} |b_\ell|  \right\}},	\label{ineq:42}
}
where the first inequality follows since $|u_s| \le \delta_1 B_u$, and the last inequality follows from the definition of $t_2(\rho)$~\eqref{def:t2_star}, and since $t\ge t\ust_2$.~Since 
$$
\psi_t = \br{y_t,y_{t-1},\ldots,y_{t-p+1},u_{t-1},u_{t-2},\ldots, u_{t-q+1} }',
$$
we have $\|\psi_t\| \le \|Y_t\| + \|U_t\|$. The proof is then completed by substituting the bounds on $\|Y_t\|$ and $\|U_t\|$.
\end{proof}

\begin{lemma}\label{lemma:1}
	When $\{w_t\}$ satisfies Assumption~\ref{assum:bounded_noise}, then on the set $\cG_{q} \cap \cG_{LSE}\cap \cG_{\pj}\cap \cG_{\cI}$, for $t\ge \max\left\{t\ust_1,t\ust_2\right\}$, and $t\notin \cI$, we have $u_t = z_t$. If instead $\{w_t\}$ satisfies Assumption~\ref{assum:sub_gaussian}, then the same conclusion holds on $\cG_{q} \cap \cG_{LSE}\cap \cG_{\pj}\cap \cG_{\cI}\cap \cG_w$.
\end{lemma}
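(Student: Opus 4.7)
The plan is to show $|z_t|\le B_u$ on the stated set for $t\ge\max\{t\ust_1,t\ust_2\}$ with $t\notin \cI$, because then the clipping rule~\eqref{def:control_rule} immediately gives $u_t = (-B_u)\vee z_t \wedge B_u = z_t$. So the whole content of the lemma is a uniform bound on $|z_t|$ that fits inside the chosen clipping threshold $B_u$.

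First, I would invoke Proposition~\ref{prop:6}, which gives the pointwise bound
\[
|z_t| \le \left[\|\lm\| + B_2 \tfrac{\log N^{(\cI)}_t}{\sqrt{N^{(\cI)}_t}} + \cE(N\ui_t;\te\ust,\delta)\right]\|\psi_t\|
\]
on $\cG_q\cap \cG_{LSE}\cap \cG_{\pj}\cap \cG_{\cI}$ (and with $\cG_w$ added under Assumption~\ref{assum:sub_gaussian}). Next, since $t\ge t\ust_1(\rho)$, the defining property~\eqref{def:t1_star} of $t\ust_1$ says the bracketed sum of the two ``error'' terms is at most $1$, so that the pre-factor in front of $\|\psi_t\|$ is bounded by $\|\lm\|+1$.

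Then I would apply Proposition~\ref{prop:9}, whose hypotheses match the present ones (in particular $t\ge\max\{t\ust_1,t\ust_2\}$), to get
\[
\|\psi_t\|\le \left[\br{1 + C_1\sum_{s=0}^{t}\rho^s\left\{1+\sum_{\ell=1}^q |b_\ell|\right\}}+q\right] B_u\,\delta_1 \le \bigl(M(\Theta)+q\bigr)B_u\delta_1,
\]
where I used the definition of $M(\Theta)$ and summed the geometric series. Plugging this into the previous display yields
\[
|z_t| \le (\|\lm\|+1)\,\bigl(M(\Theta)+q\bigr)\,B_u\,\delta_1.
\]

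Finally, the whole point of the constraint
$\br{\sup_{\te\in\Theta}\|\lm(\te)\|+1}\bigl[M(\Theta)+q\bigr]\delta_1 \le 1$
imposed on $\delta_1$ in the definition of $B_u$ is precisely to make the right-hand side $\le B_u$. Hence $|z_t|\le B_u$, so the clipping in~\eqref{def:control_rule} is inactive and $u_t = z_t$, completing the proof. The argument is essentially bookkeeping: the genuine content was already established in Propositions~\ref{prop:6} and~\ref{prop:9} (the latter being the main obstacle, as it required an induction over episodes to propagate smallness of $|u_t|$ and $|y_t|$); here we simply combine those bounds with the calibrated smallness of $\delta_1$. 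The sub-Gaussian case is identical once we restrict to $\cG_w$, since both Propositions~\ref{prop:6} and~\ref{prop:9} carry through there.
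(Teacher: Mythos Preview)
Your proof is correct and follows essentially the same route as the paper: invoke Proposition~\ref{prop:6} (together with $t\ge t\ust_1$) to get $|z_t|\le(\|\lm\|+1)\|\psi_t\|$, then invoke Proposition~\ref{prop:9} to bound $\|\psi_t\|$, and finally use the calibration condition~\eqref{cond:delta_4} on $\delta_1$ to conclude $|z_t|\le B_u$ so that the clipping is inactive. The only cosmetic difference is that you package the geometric sum via $M(\Theta)$ and cite the main-text constraint on $\delta_1$, whereas the paper leaves the sum explicit and cites~\eqref{cond:delta_4}.
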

\begin{proof} From Proposition~\ref{prop:6} we have,
$$
|z_t|\le \br{\|\lm\|+1}\|\psi_t\|.
$$
From Proposition~\ref{prop:9}, we have,
\al{
	\|\psi_t\|\le \left[\br{1 +  C_1 \sum_{s=0}^{t} \rho^{s} \left\{1+  \sum_{\ell=1}^{q} |b_\ell|  \right\}}+q\right]  B_u \delta_1.
}
Upon combining these two bounds, we obtain, 
$$
|z_t|\le \br{\|\lm\|+1} \left[\br{1 +  C_1 \sum_{s=0}^{t} \rho^{s} \left\{1+  \sum_{\ell=1}^{q} |b_\ell|  \right\}}+q\right]  B_u \delta_1.
$$
Now, it follows from~\eqref{def:control_rule} that for $t\notin \cI$ whenever $|z_t|< B_u$, the input $u_t$ is set equal to $z_t$. Thus, when 
$$
\br{\|\lm\|+1} \left[\br{1 +  C_1 \sum_{s=0}^{t} \rho^{s} \left\{1+  \sum_{\ell=1}^{q} |b_\ell|  \right\}}+q\right] \delta_1 < 1,
$$
we have $u_t = z_t$ for sufficiently large $t$. This condition is true since $\delta_1$ satisfies~\eqref{cond:delta_4}.

\end{proof}

\section{Lower Bound on $\lm_{\min}(V\ui_t)$ {\color{red} }  }\label{sec:lmin}
The required theory of projections is developed in Section~\ref{sec:projection}.~This is utilized in order to obtain a lower bound on the quantity $\lm_{\min}(V\ui_t)$ that is used in Section~\ref{sec:estimation_error} for controlling the estimation error. This is stated in Theorem~\ref{th:1}, which is the main result of this section.

We begin with the bound on $\lm_{\min}(V_t)$ that was derived in Section~\ref{sec:lmin}.

Consider the ARX model~\eqref{def:arx}, repeated here for convenience,
\al{
	y_t= a_1 y_{t-1} + a_2 y_{t-2} + \ldots +a_p y_{t-p} +b_1 u_{t-1} + b_2 u_{t-2}+\cdots + b_q u_{t-q}+ w_t, t=1,2,\ldots, 
}
where $u_s$ is $\cF_{s-1}$-measurable, while $y_s$ is $\cF_s$-measurable. Consider the design matrix associated with $\te_t$, the LS estimate of $\te$ at time $t$,
\al{
	\Psi_t := 
	\begin{pmatrix}
		y_{I-1} & \cdots & y_{I-p} & u_I & \cdots & u_{I-q} \\
		\vdots & 	& & & & \\
		y_{t-1}& \cdots & y_{t-p}& u_t & \cdots & u_{t-q}
	\end{pmatrix}
	\label{def:design_mat}
}
where $I> \max\{p,q\}$. Also let,
\al{
	U_s = \br{u_s,u_{s-1},\ldots,u_{s-q}}'.
}

\begin{lemma}\label{lemma:input_excite}
	Define the event,
	\al{
		\cG_{\cI} := \left\{ \lm_{\min}\br{\sum_{s=1}^{t} U_s U'_s  } \ge   \sigma^2_e N\ui_t - B^2_w\sqrt{2N\ui_t \log\br{\frac{qN\ui_T}{\delta}}  }    \right\},	\label{def:gi}
	}
	where $\sigma^2_e$ is the variance of the exploratory noise, i.e.
 \al{
\sigma^2_e := \bE\br{u^2_t},~t\in\cI. \label{def:sigma_e}
 }
We have,
	\al{
		\bP\br{\cG_{\cI}} \ge 1-\delta,
	} 
	and on $\cG_{\cI}$ the following holds,
	\al{
		\lm_{\min}\br{\sum_{s\le t} U_s U'_s } \ge N\ui_t  .
	}	
	For 
		\al{
	t \ge \frac{2B^{4}_w  }{\sigma^4_e}	\log\br{\frac{qN\ui_T}{\delta}},\label{ineq:89}
	}
 on $\cG_{\cI}$, we have,
	\al{
		\lm_{\min}\br{\sum_{s=1}^{t} U_s U'_s  } \ge   \frac{\sigma^2_e}{2} N\ui_t.\label{ineq:76}
	}
	
\end{lemma}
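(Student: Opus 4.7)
The plan is to establish $\mathbb{P}(\mathcal{G}_{\mathcal{I}})\ge 1-\delta$ by a matrix-concentration argument and then deduce both quantitative bounds by elementary algebra. Since every $U_sU_s'$ is positive semi-definite, the non-exploration terms only help, so I would first reduce to lower-bounding $\sum_{s\in\mathcal{I},s\le t}U_sU_s'$. Up to discarding at most $q$ boundary indices per episode, I would focus on ``interior'' indices $s\in\mathcal{I}$ for which $u_s,u_{s-1},\ldots,u_{s-q}$ are all exploratory. By the i.i.d.~assumption on the probing inputs during $\mathcal{I}$ and the definition $\sigma_e^2:=\mathbb{E}[u_t^2]$ for $t\in\mathcal{I}$, such $U_s$ satisfy $\mathbb{E}[U_sU_s'\mid \mathcal{F}_{s-q-1}]=\sigma_e^2 I_{q+1}$.

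Next I would introduce the matrix-valued process $M_t:=\sum_{\text{interior }s\le t}(U_sU_s'-\sigma_e^2 I_{q+1})$ and bound $\|M_t\|_{op}$ entrywise. Each entry $[M_t]_{ij}$ is a sum of bounded martingale differences of magnitude at most $2B_w^2$ under a suitably shifted filtration, so Azuma--Hoeffding gives $|[M_t]_{ij}|\lesssim B_w^2\sqrt{N_t^{(\mathcal{I})}\log(q/\delta)}$, and a union bound over the $O(q^2)$ entries combined with the crude estimate $\|M_t\|_{op}\le(q+1)\max_{i,j}|[M_t]_{ij}|$ produces an operator-norm bound matching $B_w^2\sqrt{2N_t^{(\mathcal{I})}\log(qN_T^{(\mathcal{I})}/\delta)}$ (up to constants absorbed inside the logarithm). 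Weyl's inequality then gives $\lambda_{\min}(\sum_{s\le t}U_sU_s')\ge\sigma_e^2 N_t^{(\mathcal{I})}-\|M_t\|_{op}$, which is exactly $\mathcal{G}_{\mathcal{I}}$. The two corollaries then follow algebraically: the hypothesis $t\ge 2B_w^4/\sigma_e^4\cdot\log(qN_T^{(\mathcal{I})}/\delta)$ implies (through $N_t^{(\mathcal{I})}\gtrsim t$ on the first episode and monotonicity afterwards) that $B_w^2\sqrt{2N_t^{(\mathcal{I})}\log(qN_T^{(\mathcal{I})}/\delta)}\le\sigma_e^2N_t^{(\mathcal{I})}/2$, yielding $\lambda_{\min}\ge\sigma_e^2N_t^{(\mathcal{I})}/2$; the intermediate claim $\lambda_{\min}\ge N_t^{(\mathcal{I})}$ follows analogously under the implicit normalization $\sigma_e^2\ge 1$ together with the same large-$N$ condition.

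The main obstacle is the martingale structure for off-diagonal entries: conditioning on $\mathcal{F}_{s-1}$ does \emph{not} make $u_{s-i}u_{s-j}$ zero-mean since $u_{s-j}\in\mathcal{F}_{s-1}$. The clean fix is to split each off-diagonal sum along residue classes modulo $(q+1)$ (or equivalently condition on $\mathcal{F}_{s-(i\wedge j)-1}$), within each of which $\{u_{s-i}u_{s-j}\}$ is a genuine martingale difference sequence with increments bounded by $B_w^2$; Azuma is then valid on each class and the $(q+1)$-fold union bound is absorbed into the logarithm. The other subtlety is the boundary-of-episode correction (at most $q$ discarded indices per episode, amounting to $o(N_t^{(\mathcal{I})})$ under the schedule $n_i=\exp(i^2)$), which can be absorbed into the deviation term at the cost of a harmless constant.
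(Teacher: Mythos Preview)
Your proposal is correct and follows essentially the same strategy as the paper: reduce to the exploratory sum, control each entry of the matrix via Azuma--Hoeffding, then convert entrywise bounds into an eigenvalue bound. The two differences are worth noting. First, where you close with Weyl's inequality plus the crude $\|M\|_{op}\le(q+1)\max_{i,j}|M_{ij}|$, the paper invokes the Gershgorin circle theorem directly on the concentrated matrix; these are interchangeable here and yield the same bound up to constants absorbed in the logarithm. Second, and more substantively, the paper sidesteps your ``main obstacle'' (the failure of $u_{s-i}u_{s-j}$ to be a martingale difference under $\mathcal{F}_{s-1}$) by a reindexing trick rather than your residue-class splitting: it introduces $\tilde{u}_s$ as the $s$-th exploratory input, so that $\{\tilde{u}_s\}$ is a genuine i.i.d.\ mean-zero bounded sequence, and then $\{\tilde{u}_s\tilde{u}_{s+j-i},\tilde{\mathcal{F}}_{s+j-i-1}\}$ is automatically a martingale difference sequence with increments bounded by $B_w^2$. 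This eliminates both your residue-class decomposition and most of your ``interior index'' bookkeeping in one stroke. Your approach is more explicit about the episode-boundary corrections, which the paper's reindexing glosses over, but the paper's route is shorter. Your derivation of the two corollaries (including the observation that $\lambda_{\min}\ge N_t^{(\mathcal{I})}$ tacitly requires $\sigma_e^2\ge 1$) matches what the paper does, which is simply to rearrange the defining inequality of $\mathcal{G}_{\mathcal{I}}$.
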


\begin{proof}
	Since,
	\al{
		\lm_{\min}\br{\sum_{s\le t} U_s U'_s } \ge 	\lm_{\min}\br{\sum_{s\in\cI,s\le t} U_s U'_s },
	} 
	we will instead derive a lower bound on the latter quantity. For $i\neq j$, consider the $(i,j)$-th element of the matrix $\sum_{s\in\cI,s\le t} U_s U'_s $. This is given by $\sum_{s \in \cI} u_{s-i}u_{s-j}$, without loss of generality assume $i<j$. Define new random variables $\{\tilde{u}_s \}$ such that $\tilde{u}_s$ is the $s$-th exploratory input. 
	
	This sum $\sum_{s \in \cI} u_{s-i}u_{s-j}$ is equivalent to $\sum_{s} \tilde{u}_{s}\tilde{u}_{s+ j - i}$. Now consider the filtration $\{\tilde{\cF}_s\}$ defined as follows: $\tilde{\cF}_s$ is the sigma-algebra generated by $\{\tilde{u}_{\ell}\}_{\ell=1}^{s}$. Now, $\{\tilde{u}_{s}\tilde{u}_{s+ j - i} , \tilde{\cF}_{s+j-i-1}\}$ is a martingale difference sequence. By using the Azuma-Hoeffding inequality we deduce,
	\nal{
		\bP\br{ \Big| \sum_{s=1}^{t} \tilde{u}_{s}\tilde{u}_{s+ j - i} \Big| > \eps  } \le \exp\br{-\frac{\eps^2}{2t(B^2_w)^2} },
	}
	where $t< N\ui_T$.~Letting $\eps = \sqrt{2t (B^2_w)^2\log\br{\frac{qN\ui_T}{\delta}}  }$, we deduce that the event 
	$
	\left\{\Big| \sum_{s=1}^{t}  \tilde{u}_{s}\tilde{u}_{s+ j - i} \Big| > \sqrt{2t (B^2_w)^2\log\br{\frac{qN\ui_T}{\delta}}  }  \right\},
	$ 
	has a probability less than $\frac{\delta}{2qN\ui_T}$. Upon using a union bound over $t$, and all possible $i\neq j$, we conclude that the probability of the following event is less than $\delta\slash 2$,
	\al{
		\left\{\Big| \sum_{s=1}^{t} \tilde{u}_{s}\tilde{u}_{s+ j - i} \Big| \le  \sqrt{2t (B^2_w)^2\log\br{\frac{2qN\ui_T}{\delta}}  }  ,~\forall t = 1,2,\ldots,N\ui_T, \forall i,j \in \{1,2,\ldots,q\}, i\neq j \right\}.\label{ineq:74}
	}
	One may note that \eqref{ineq:74} is equivalent to the off-diagonal entries of $\sum_{s\in\cI,s\le t} U_s U'_s$ being less than $B^2_w\sqrt{2t \log\br{\frac{qN\ui_T}{\delta}}  }$. The diagonal terms of $\sum_{s\in\cI,s\le t} U_s U'_s$ are $\sum_{s\le N\ui_t} \tilde{u}^2_s$. Upon using Azuma-Hoeffding and a union-bound on $t$ on the process $\{\tilde{u}^2_s - \bE\br{\tilde{u}^2_s}\}$, we deduce that the following event has a probability less than $\delta\slash 2$,
	\al{
		\left\{\Big| \sum_{s=1}^{t}\tilde{u}^2_s - t \bE\br{\tilde{u}^2_s} \Big|  \le  \sqrt{2t (B^2_w)^2\log\br{\frac{2N\ui_T}{\delta}}  }  ,~\forall t = 1,2,\ldots,N\ui_T \right\}.\label{ineq:75}
	}
	The proof then follows from the Gershgorin circle theorem~\cite{horn2012matrix}.
\end{proof}

Let $I> \max\{p,q\}$. For $\ell \in \bZ_+$, define,
\al{
	y\ut(\ell) := (y_{I-\ell},\ldots,y_{t-\ell}),\label{def:yl}
}
\al{
	u\ut(\ell) := \br{u_{I-\ell},u_{I+1-\ell},\ldots, u_{t-\ell}}',\label{def:ul} \mbox{ and}
}
\al{
	\underline{w}\ut(\ell) := \br{w_{I-\ell},w_{I+1-\ell},\ldots,w_{t-\ell}}',\label{def:wl}
}
Define,
\al{
	D_t : =
	\begin{pmatrix}
		u_I & \cdots u_{I-q} \\
		\vdots & 		\\
		u_t & \cdots u_{t-q}
	\end{pmatrix}
	= \br{u(0),u(1),\ldots,u(q)}.
}
The design matrix $\Psi_t$~\eqref{def:design_mat} can thus be written as 
\al{
	\Psi_t = \br{y\ut(1),\ldots,y\ut(p),D_t}.
}
Let $d$ be a column of $\Psi_t$, and $\hat{d}$ its projection onto the linear space spanned by the remaining columns of $\Psi$. We will derive a lower bound on the quantity $\|d-\hat{d}\|$. This will yield us a lower bound on $\lambda_{\min}\br{\Psi'_t \Psi_t}$ since from Lemma~\ref{lemma:lmin_proj} we have,
\al{
	(p+q)~ \| d -\hat{d} \| \ge 	\lambda_{\min}\br{\Psi'_t \Psi_t} \ge \br{p+q}^{-1} \| d -\hat{d} \|.\label{ineq:rank}
} 

In the sequel, we will omit the superscript $t$ when it is clear from the context. Also define
\al{
	&t\ust_{cov}(\rho,\delta) := \inf \Bigg\{ t\in\bN: \frac{pqB^2_u}{2c_1}\vee  \frac{8pqB^2_u}{c_1}\vee \frac{2c^2 \log\br{\frac{1}{\delta}}}{c^2_1} \notag\\
	&\vee p\frac{B^{2}_u+2\|Y_0\|^2 + \frac{4\|b\|^2}{\br{1-\rho}^2} B^2_u q + \frac{4B^2_w }{\br{1-\rho}^2} }{ \br{c_1\slash 3} \delta}\vee \frac{2c^2 \log\br{\frac{1}{\delta}}}{\br{c_1 \slash 3}^2}\vee \frac{2B^{4}_w  }{\sigma^4_e}	\log\br{\frac{qN\ui_T}{\delta}} \Bigg\}.\label{def:t_cov}
} 
To ease notation, we will occasionally omit the dependence of $t\ust_{cov}$ on $(\rho,\delta)$.
\begin{proposition}\label{prop:11}
	Consider the ARX system~\eqref{def:arx} and assume that $\{w_s\}$ satisfies Assumptions~\ref{assum:noise_var} and \ref{assum:bounded_noise}.~Let $d$ be a column of $D_t$, and $\hat{d}$ its projection onto the linear space spanned by the remaining columns of $\Psi_t$.~Let $\tilde{c}> 2c_1$, and define,
	\al{
		\beta_1 := \frac{c_1 \slash 4}{6\left[\frac{2}{p} \|Y_0\|^2 + \frac{4\|b\|^2}{p\br{1-\rho}^2} B^2_u q + \frac{4B^2_w}{p\br{1-\rho}^2} \right]+4B^2_w },~\label{def:beta}
	}
where $c_1$ is as in~\eqref{cond:3}.~On the set $\cG_{\pj}\cap \cG_{\cI}$, we have,
	\al{	
		\|d-\hat{d} \| \ge \frac{\sigma^2_e N\ui_t}{2q} \left[ 1\wedge \min_{\ell \in \{1,2,\ldots,p\} } \beta^{\ell\slash 2}_1 \right],~\mbox{ for } t\ge t\ust_{cov}.
	}
\end{proposition}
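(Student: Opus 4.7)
The plan is to bound $\|d-\hat d\|$ by peeling off the two types of remaining columns of $\Psi_t$ separately: first the other input columns of $D_t$, then the $p$ output columns $y\ut(1),\ldots,y\ut(p)$. The two peels contribute the leading linear factor $\sigma^2_e N\ui_t/(2q)$ and the multiplicative factor $\min_\ell \beta_1^{\ell/2}$ respectively.

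For the first peel, observe that the Gram matrix of the columns of $D_t$ is precisely $\sum_s U_s U'_s$. By Lemma~\ref{lemma:input_excite}, on $\cG_\cI$ for $t\ge 2 B_w^{4}\sigma_e^{-4}\log(qN\ui_T/\delta)$ (a threshold already absorbed into $t\ust_{cov}$), this matrix has minimum eigenvalue at least $\sigma_e^2 N\ui_t/2$. A Schur-complement identity ($\|d-P_{-d}d\|^2 = 1/[(D_t'D_t)^{-1}]_{dd} \ge \lambda_{\min}(D_t'D_t)$) then shows that the residual of $d$ after projecting onto the other input columns has squared norm at least $\sigma_e^2 N\ui_t/2$. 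Up to the $1/q$ normalisation built into the projection inequality of Section~\ref{sec:projection}, this produces the leading factor $\sigma_e^2 N\ui_t/(2q)$ in the claimed bound.

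For the second peel, I would expand each $y\ut(\ell)$ using the ARX recursion $y_s = \sum_{k\ge 1} h_k u_{s-k} + (\text{noise part})$, where $\{h_k\}$ decays geometrically with rate $\rho$. The deterministic $u$-part of $y\ut(\ell)$ lives, up to a geometric tail of order $\rho^\ell B_u/(1-\rho)$, inside the span of the input columns already removed in phase one. What remains orthogonal to that span is the noise part---a martingale contribution whose coordinate-wise conditional variance is at least $c_1$ by Assumption~\ref{assum:noise_var}---plus a deterministic tail whose magnitude is controlled by Lemma~\ref{lemma:bound_phi}. Feeding this structural decomposition into the projection machinery of Section~\ref{sec:projection} yields a per-step multiplicative contraction $\|d^{(\ell)}\|^2 \ge \beta_1 \|d^{(\ell-1)}\|^2$ with $\beta_1$ matching~\eqref{def:beta}: the numerator $c_1/4$ is the noise conditional-variance floor, while the denominator collects the worst-case squared norm of $y\ut(\ell)$ (the $\|Y_0\|^2$, $\|b\|^2 B_u^2 q/(1-\rho)^2$ and $B_w^2/(1-\rho)^2$ pieces) plus the $4B_w^2$ slack from concentration. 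Iterating over $\ell=1,\dots,p$ produces the factor $\min_\ell \beta_1^{\ell/2}$, and combining with phase one gives the stated bound.

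The main obstacle will be justifying the per-step contraction in phase two. One must simultaneously control (i) Azuma--Hoeffding concentration for the martingale noise component of $y\ut(\ell)$ against the conditional-variance lower bound $c_1$, (ii) the geometric tail using the stability margin $1-\rho$, and (iii) the state-magnitude bounds from Lemma~\ref{lemma:bound_phi}. Each of these ingredients contributes one of the thresholds appearing in the definition~\eqref{def:t_cov} of $t\ust_{cov}$, and the event $\cG_{\pj}\cap\cG_\cI$ is tailored so that all of the corresponding concentration inequalities hold simultaneously with the required probability.
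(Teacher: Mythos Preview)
Your two-phase outline matches the paper, and your handling of phase one (projecting out the other input columns via Lemma~\ref{lemma:input_excite} and the Schur-complement/Lemma~\ref{lemma:lmin_proj} argument) is exactly what the paper does.

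The gap is in phase two. You propose the infinite impulse-response expansion $y_{s-\ell}=\sum_{k\ge 1}h_k u_{s-\ell-k}+(\text{noise part})$ and claim the $u$-part lies in the span of the input columns up to a geometric tail of order $\rho^\ell B_u/(1-\rho)$. This is not right: the columns of $D_t$ are only $u(0),\ldots,u(q)$, so all lags $u(\ell+k)$ with $\ell+k>q$ are absent, and their contribution has norm of order $\sqrt{t}$, not a negligible tail. Worse, the ``noise part'' of the impulse response is a moving average $\sum_k g_k w_{s-\ell-k}$, not a single martingale increment, so it does not fit the hypotheses of Theorem~\ref{th:proj_3}, which is precisely the projection tool needed for the per-step contraction.

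The paper instead uses the \emph{one-step} ARX decomposition. For each $\ell$ it writes $y\ut(\ell)=v+\underline{w}\ut(\ell)$, where the $i$-th coordinate of $v$ is $a_1 y_{i-\ell-1}+\cdots+a_p y_{i-\ell-p}+b_1 u_{i-\ell-1}+\cdots+b_q u_{i-\ell-q}$, which is $\cF_{i-\ell-1}$-measurable, and $\underline{w}\ut(\ell)=(w_{I-\ell},\ldots,w_{t-\ell})'$ is the genuine innovation vector. This puts $y\ut(\ell)$ in exactly the $v+w$ form that Theorem~\ref{th:proj_3} requires; crucially, $v$ need \emph{not} lie in (or near) the span of the already-added columns---only predictability of its entries is used. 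The theorem then delivers $\|d-\hat d_\ell\|^2\ge\beta_1\,\|d-\hat d_{\ell+1}\|^2$ directly, with the numerator $c_1/4$ coming from the lower bound on $\|\underline w(\ell)-\hat{\underline w}_0(\ell)\|^2$ (Proposition~\ref{prop:2}) and the denominator from the upper bound on $\|y(\ell)\|^2$ (via Proposition~\ref{prop:5}). Iterating from $\ell=p$ down to $\ell=1$ finishes the proof. Your identification of what populates $\beta_1$ is correct; what is missing is replacing the impulse-response expansion by the one-step decomposition so that the projection machinery of Section~\ref{sec:projection} actually applies.
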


\begin{proof}

Consider a column $d = \br{d_I,\ldots,d_t}$ of $D_t$. Let $D\ust$ be the sub-matrix of $D_t$ consisting of all the other columns except $d$. For $\ell=1,2,\ldots,$ let $D\ust(\ell)$ be the matrix $\br{y(\ell),y(\ell+1),\ldots,y(p),D\ust}$.~For $\ell =1,2,\ldots,p$, let $\hat{d}_\ell$ denote the projection of $d$ onto $L\br{y(\ell),y(\ell+1),\ldots,y(p),D\ust}$. Let $\hat{d}_0$ be the projection of $d$ onto $L(D\ust)$. 

We begin with deriving a lower-bound on $\|d - \hat{d}_{0}\|$.~we have,
\al{
	q~ \| d -\hat{d}_0 \| &\ge 	\lambda_{\min}\br{D'_t D_t} \notag\\
	& \ge \frac{\sigma^2_e N\ui_t}{2q}.\label{ineq:12}
}
where the first inequality follows from Lemma~\ref{lemma:lmin_proj}, while the second follows from Lemma~\ref{lemma:input_excite} since $D'_t D_t = \sum_{s=1}^{t} U_s U'_s$. This shows $\lambda_{\min}\br{D'_t D_t}\ge \frac{\sigma^2_e}{2} N\ui_t$.~Upon substituting this into~\eqref{ineq:12}, we obtain
\al{
	\| d-\hat{d}_0 \| \ge \frac{\sigma^2_e N\ui_t}{2q}.\label{ineq:13}
}

Now we will derive lower bounds for $\|d-\hat{d}_{\ell}\|$ for $\ell=1,2,\ldots,p$. We will show that
\al{
	\| d-\hat{d}_{\ell} \|^2 \ge \| d-\hat{d}_0 \|^2 ~ \beta^{p+1-\ell}_1,~\ell=1,2,\ldots,p,\label{ineq:64}
}
where $\beta>0$ is as in~\eqref{def:beta}

We will prove this via induction. We begin with $\ell=p$.~Consider the vector $y\ut(p)$. Its $i-I$-th element ($i\ge I$) is $y_{i-p}$, and is equal to
\nal{
	y_{i-p} = a_1 y_{i-p-1}  + a_2 y_{i-p-2} +\ldots + y_{i-p-p} + b_0 u_{i-p} + \ldots + u_{i-p-q}+w_{i-p}.
}
This can be written in vector form as,
\al{
	y\ut(p) &= \br{y_{I-p},\ldots,y_{t-p}}'\notag\\
	& = \br{v_I,\ldots,v_t}' + \br{w_{I-p},\ldots,w_{t-p}}'\label{eq:18}\\
	& = \br{v_I,\ldots,v_t}' + \underline{w}(p),
}
where 
\al{
	v_i = a_1 y_{i-p-1}+\ldots + a_p y_{i-p-p}+b_1 u_{i-p}+\ldots+b_q u_{i-p-q}. \label{def:v}
}
Note that $v_i$ is $\cF_{i-p-1}$ measurable. Hence $u_i$, and therefore also $h_i$ are $\cF_{i-k-1}$ measurable. From~\eqref{eq:18}, $\hat{d}_p$ is the projection of $d$ onto $L(D\ust,v+\underline{w}(p))$. Let $\hat{\underline{w}}_0(p)$ be projection of $\underline{w}(p)$ onto $D\ust$.~Let $v\ust$ be the projection of $v$ onto $L(D,y(p))$.~Let $\hat{y}_{0}(p)$ be the projection of $y_n(p)$ onto $L(D\ust)$. Define,
\al{
\cS_3 :&= 1 \vee \sqrt{ \log^{+}\br{\frac{\|d - \hat{d}_p \|}{\delta}} }
			 \vee 
			 \sqrt{2\log\br{\sum_{s=1}^{t}\|\phi_s\|^2} },\label{def:S3}\\ 
 \cS_4:&= 1 \vee \sqrt{ \log^{+}\br{\frac{\| v-\hat{v} \|}{\delta}} } \vee \sqrt{2\log\br{\sum_{s=1}^{t}\|\phi_s\|^2
 } }.  \label{def:S4}
}
~It follows from~ \eqref{ineq:10} that the conditions of Theorem~\ref{th:proj_3} are satisfied, and hence we can use Theorem~\ref{th:proj_3} and obtain the following after performing some algebraic manipulations,
\al{
	\|d - \hat{d}_p\|^2& \ge \frac{\|d-\hat{d}_0\|^2\left\{\|v-v\ust\|^2 + \| \underline{w}(p) - \hat{\underline{w}}_0(p)\|^2 - \cS_3\right\}}{\|\br{v-\hat{v}}\|^2 + \| \underline{w}(p) -\hat{\underline{w}}_0(p) \|^2 + 2 \| v-\hat{v} \|\left\{  \| v-\hat{v} \|) + \cS_4 \right\}}\notag\\
	&\ge \frac{\|d-\hat{d}_0\|^2\left\{ \| \underline{w}(p) - \hat{\underline{w}}_0(p)\|^2 - \cS_3 \right\}}{\|\br{v-\hat{v}}\|^2 + \| \underline{w}(p)-\hat{\underline{w}}_0(p)\|^2 + 2 \| v-\hat{v} \|\left\{  \| v-\hat{v} \|) + \cS_4 \right\}}\notag\\
	&= \frac{\|d-\hat{d}_0\|^2\left\{ \| \underline{w}(p) - \hat{\underline{w}}_0(p)\|^2 - \cS_3 \right\}}{3\|\br{v-\hat{v}}\|^2 + \| \underline{w}(p)-\hat{\underline{w}}_0(p)\|^2 + 2 \| v-\hat{v} \| \cS_4}\notag\\	
	&\ge \frac{\|d-\hat{d}_0\|^2\left\{ \| \underline{w}(p) - \hat{\underline{w}}_0(p)\|^2 - \cS_3 \right\}}{6\|\br{y(p)-\hat{y}_0(p)}\|^2 + 4\| \underline{w}(p)-\hat{\underline{w}}_0(p)\|^2 + 2 \| v-\hat{v} \| \cS_4 }\label{ineq:49}\\
	&\ge \frac{\|d-\hat{d}_0\|^2\left\{ \| \underline{w}(p) - \hat{\underline{w}}_0(p)\|^2 - \cS_3 \right\}}{6\|y(p)\|^2 + 4\| \underline{w}(p) - \hat{\underline{w}}_0(p)\|^2 + 2 \| v-\hat{v} \| \cS_4 }, \label{ineq:14}
}
where~\eqref{ineq:49} follows since $y(p) = v + \underline{w}$, so that $y(p)-\hat{y}(p) = v - \hat{v} +  \underline{w}(p) - \hat{\underline{w}}(p)$, and hence $\|v - \hat{v} \|^2 \le 2 \|y(p)-\hat{y}(p)\|^2+ 2\|\underline{w}(p) - \hat{ \underline{w}}(p)\|^2$.~We will now derive bounds on various terms in the numerator and denominator of~\eqref{ineq:14}, which will allow us to lower-bound this expression. 

Since $y(p)=(y_{I-p},\ldots,y_{t-p})$, we have,
\al{
	\|y(p)\|^2	& = \|Y_{t-p} \|^2 +  \|Y_{t-2p} \|^2 + \ldots + \|Y_{t - \lfloor t \rfloor } \|^2.\label{eq:25}
}
After bounding $\|Y_{t-p} \|^2,\|Y_{t-2p} \|^2,\ldots,\|Y_{t - \lfloor t \rfloor } \|^2$ using Proposition~\eqref{prop:5}, and performing algebraic manipulations, we obtain,
\al{
	\|y(p)\|^2	 \le \frac{2t}{p} \|Y_0\|^2 + \frac{4\|b\|^2t}{p\br{1-\rho}^2} B^2_u q + \frac{4B^2_w t}{p\br{1-\rho}^2}.\label{ineq:83}
}
~Proposition~\ref{prop:2} gives us the following lower bound,
\al{
	& \| \underline{w}(p) - \hat{\underline{w}}(p) \|^2 \ge t c_1 -  \sqrt{2t c^2 \log\br{\frac{1}{\delta}}} \notag\\
	& -
	p~\left\{ 1 \vee \log^{+}\br{\frac{q\times \sum_{j=1}^{t}u^2_{j}}{\delta}    } \vee 	2\log\br{q\times \sum_{j=1}^{t}u^2_{j}}
	\right\},~\forall t,
	\label{ineq:35}
} 
where $c_1$ is as in Assumption~\ref{assum:noise_var}.

We also have,
\al{
	\| \underline{w}(p) - \hat{\underline{w}}(p)\|^2 & \le \|\underline{w}\ut(p)\|^2\notag\\
	& \le B^2_w t,~\forall t. \label{ineq:36}
}
Since from~\eqref{def:v} we have $v_i = a_1 y_{i-p-1}+\ldots + a_p y_{i-p-p}+b_1 u_{i-p}+\ldots+b_q u_{i-p-q}=(a,b)\cdot \phi_{i-p}$ we get,
\al{
	\| v-\hat{v} \| & \le  \|v\|\notag\\
	& \le \|(a,b)\| \sqrt{\sum_{s=1}^{t-p} \|\phi_s\|^2} \notag\\
	&\le   \|(a,b)\|   \sqrt{\sum_{s=1}^{t}2\|A^s\|^2 \|Y_0\|^2 + \frac{4\|b\|^2}{1-\rho} \br{\sum_{s=1}^{t}\sum_{\ell=1}^{s} \rho^{s-\ell} \|U_{\ell}\|^2 } + \frac{4}{\br{1-\rho}^2}B^2_w t +  q B^2_ut}\notag\\
	&\le   \|(a,b)\|   \sqrt{\sum_{s=1}^{t}2\|A^s\|^2 \|Y_0\|^2 + \frac{4\|b\|^2}{\br{1-\rho}^2} \br{\sum_{s=1}^{t}\|U_{s}\|^2 } + \frac{4}{\br{1-\rho}^2}B^2_w t +  q B^2_ut}  \notag\\
	&\le   \|(a,b)\|   \sqrt{\sum_{s=1}^{t}2\|A^s\|^2 \|Y_0\|^2 + \frac{4\|b\|^2}{\br{1-\rho}^2} B^2_u q t + \frac{4}{\br{1-\rho}^2}B^2_w t +  q B^2_ut} \quad ,\label{ineq:37}
}
where the third inequality follows from~\eqref{ineq:84}.

Substituting the bounds~\eqref{ineq:83}-\eqref{ineq:37}, and also $|u_s|\le B_u$ into~\eqref{ineq:14}, we get,
\al{
	\|d\ut - \hat{d}\ut_p\|^2& \ge \|d\ut-\hat{d}\ut_0\|^2 \cdot \cT_5 ,~\forall t,
}
where,
\al{
	\cT_5 := 	\frac{t c_1 -  \sqrt{2t c^2 \log\br{\frac{1}{\delta}}} -  p~\left\{ 1 \vee \log^{+}\br{\frac{qB^2_ut}{\delta}    } \vee 	2\log\br{qB^2_ut}
		\right\}- \cS_3 }{6\left[\frac{2t}{p} \|Y_0\|^2 + \frac{4\|b\|^2t}{p\br{1-\rho}^2} B^2_u q + \frac{4B^2_w t}{p\br{1-\rho}^2}\right]+4tB^2_w +2\cdot \cT_6 \cdot  \cS_4}.
}
%
with,
$\cT_6 :=\|(a,b)\|   \sqrt{\sum_{s=1}^{t}2\|A^s\|^2 \|Y_0\|^2 + \frac{4\|b\|^2}{\br{1-\rho}^2} B^2_u q t + \frac{4}{\br{1-\rho}^2}B^2_w t +  q B^2_ut}$.

After performing some algebraic manipulations, we get that for $t\ge t\ust_{cov}$~\eqref{def:t_cov}, we have,
\al{
	\cT_5  &\ge \frac{c_1 \slash 4}{6\left[\frac{2}{p} \|Y_0\|^2 + \frac{4\|b\|^2}{p\br{1-\rho}^2} B^2_u q + \frac{4B^2_w}{p\br{1-\rho}^2} \right]+4B^2_w }\notag\\
	& = \beta_1,
}
or equivalently,
\al{
	\|d\ut - \hat{d}\ut_p\|^2& \ge \|d\ut-\hat{d}\ut_0\|^2\beta_1,~\forall t.
}

Next, suppose that~\eqref{ineq:64} holds for $\ell =m+1,m+2,\ldots,p$. We will show that~\eqref{ineq:64} holds for $\ell=m$. Once again, similar to~\eqref{eq:18}, we have,
\al{
	y(m) = v + \underline{w}(m),
}
where the $i-I$-th element ($i\ge I$) of $v$ is $y_{i-m}$, and is given by,
\nal{
	y_{i-m} = a_1 y_{i-m-1}  + a_2 y_{i-m-2} +\ldots + a_p y_{i-m-p} + b_0 u_{i-m} + \ldots + b_q u_{i-m-q}.
}
This shows that $\hat{d}_{m}$ is the projection of $d$ onto $L\br{D\ust(m+1),v+ \underline{w}(m)}$. Let $\hat{\underline{w}}_{\ell_2}(\ell_1)$ be the projection of $\underline{w}(\ell_1)$ onto $D\ust(\ell_2)$, and $v\ust$ the projection of $v$ onto $L\br{D\ust_{m+1},y(m)}$.~It follows from~\eqref{ineq:10} that the conditions of Theorem~\ref{th:proj_3} are satisfied, and hence we can use Theorem~\ref{th:proj_3} and arguments similar to~\eqref{ineq:14} to obtain the following:
\al{
	\|d - \hat{d}_m\|^2 \ge \|d - \hat{d}_{m+1}\|^2 \beta_1.
}
This completes the induction step, and hence we have shown~\eqref{ineq:64}.

The proof of the claim then follows by substituting~\eqref{ineq:12} into~\eqref{ineq:64}.  
\end{proof}

\begin{proposition}\label{prop:12}
	Consider the ARX system~\eqref{def:arx} and assume that $\{w_s\}$ satisfies Assumptions~\ref{assum:sub_gaussian}-(ii) and~\ref{assum:noise_var}.
	Let $c$ be a column of the matrix $\br{y_n(1),\ldots,y_n(p)}$, and $\hat{c}$ be its projection onto the linear space spanned by the remaining columns of 
	\al{
		\Psi_t = \br{y_n(1),\ldots,y_n(p),D_t}, \mbox{ where } D_t = \br{u(0),u(1),\ldots,u(q)}.
	}
	Let $\{w_t\}$ satisfy Assumption~\ref{assum:sub_gaussian}-(i). On $\cG_{\pj}\cap \cG_{\cI}$ we have,
	\al{	
		\|c-\hat{c} \| \ge \sqrt{\frac{c_1 t}{4} }\left[1 \wedge \min_{\ell \in \{1,2,\ldots,p\}}  \beta^{\ell\slash 2}_2 \right],~\forall t \ge t\ust_{cov},
	}	
where $t_{cov}$ is as in~\eqref{def:t_cov}, and
	\al{
		\beta_2 : =  \frac{\br{c_1 \slash 3}}{3a^2_1\left[\frac{2}{p} \|Y_0\|^2 + \frac{4\|b\|^2}{p\br{1-\rho}^2} B^2_u q + \frac{4B^2_w}{p\br{1-\rho}^2} \right]+B^2_w}.\label{def:gamma}
	}
\end{proposition}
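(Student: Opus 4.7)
The plan is to mirror the structure of Proposition~\ref{prop:11}, but with the distinguished column $c$ now being an output column ($c = y(k)$ for some $k \in \{1,\ldots,p\}$) rather than an input column of $D_t$. The main structural difference lies in the base case: whereas in Proposition~\ref{prop:11} the base bound $\|d - \hat{d}_0\| \gtrsim \sigma_e^2 N_t^{(\cI)}/q$ came directly from a $\lambda_{\min}$ lower bound on $D_t' D_t$ (input excitation), here the base bound must instead be extracted from the system noise $\underline{w}(k)$, since the excitation carried by a single $y$-column originates from $\{w_t\}$.

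For the base case, I would fix $k \in \{1,\ldots,p\}$, set $c := y(k)$, and let $\hat{c}_0$ denote the projection of $c$ onto $L(D_t)$. Using the ARX recursion, decompose
\begin{align*}
y(k) = v + \underline{w}(k), \quad v_i = a_1 y_{i-k-1} + \ldots + a_p y_{i-k-p} + b_1 u_{i-k-1} + \ldots + b_q u_{i-k-q},
\end{align*}
so that $v_i$ is $\cF_{i-k-1}$-predictable and $\underline{w}(k)_i = w_{i-k}$. Applying Theorem~\ref{th:proj_3} with this decomposition (treating $y(k)$ as the noisy column and $L(D_t)$ as the ambient subspace), and then lower-bounding $\|\underline{w}(k) - \hat{\underline{w}}_0(k)\|^2$ via Proposition~\ref{prop:2}, one obtains on $\cG_{\pj} \cap \cG_{\cI}$ the base estimate $\|c - \hat{c}_0\|^2 \ge c_1 t/4$ for all $t \ge t^*_{cov}$.

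For the induction, I would enumerate the $p - 1$ indices $\ell \ne k$ in $\{1,\ldots,p\}$ in some order and let $\hat{c}_\ell$ denote the projection of $c$ onto $L(D_t)$ augmented by the first $\ell$ of the selected $y$-columns. Writing $y(\ell) = v_\ell + \underline{w}(\ell)$ via the ARX equation and applying Theorem~\ref{th:proj_3} once more gives a recursion $\|c - \hat{c}_{\ell+1}\|^2 \ge \beta_2 \|c - \hat{c}_\ell\|^2$. Iterating over the $p - 1$ additions yields $\|c - \hat{c}\|^2 \ge (c_1 t/4)\,\bigl[1 \wedge \min_{\ell \in \{1,\ldots,p\}} \beta_2^\ell\bigr]$, which is the claimed bound after taking a square root.

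The hard part will be to correctly isolate the $a_1^2$ factor that appears in the denominator of $\beta_2$~\eqref{def:gamma} but is absent from $\beta_1$. Concretely, in the analogue of inequality~\eqref{ineq:14} one must bound $\|v_\ell\|^2$, and the dominant contribution now comes from the coupling term $a_1 y_{i-\ell-1}$: at this stage of the induction, the slack in $L(D_t)$ has already been consumed, so the projection back onto the remaining $y$-columns via the $a$-coefficients drives the rate. Tracing this coupling cleanly through Theorem~\ref{th:proj_3}, together with the state/regressor bounds of Lemma~\ref{lemma:bound_phi} applied to $v_\ell$, is the main technical step.
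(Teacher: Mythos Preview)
Your high-level plan mirrors Proposition~\ref{prop:11} correctly, but there is a structural ingredient missing that the paper's proof relies on and that your outline does not supply: a \emph{subspace enlargement}. The paper does not project $c = y(\ell)$ onto $L(\Psi^\ast)$ directly. Instead it passes to the larger space $L_\ell = L\bigl(y(1),\ldots,y(\ell-1),\,\Psi(\ell)\bigr)$ with $\Psi(\ell) = \bigl(y(\ell+1),\ldots,y(\ell+p),\,u(0),\ldots,u(\ell+q)\bigr)$, observes that $L(\Psi^\ast) \subseteq L_\ell$, and lower-bounds the (smaller) distance $\|y(\ell) - \pj(y(\ell),L_\ell)\|$. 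The point of the enlargement is that $\Psi(\ell)$ now contains \emph{every} regressor $y(\ell+1),\ldots,y(\ell+p),u(\ell+1),\ldots,u(\ell+q)$ appearing in the ARX expansion of $y(\ell)$, so the base case collapses to the \emph{identity} $y(\ell) - \pi_0 = \underline{w}(\ell) - \hat{\underline{w}}_0(\ell)$, and Proposition~\ref{prop:2} gives $\|y(\ell) - \pi_0\|^2 \ge c_1 t/3$ directly. In your base case the ambient space is only $L(D_t)$, so the predictable part $v$ of $y(k)$ contains $y(k+1),\ldots,y(k+p)$, none of which lie in $L(D_t)$; the residual is not $\underline{w}(k) - \hat{\underline{w}}_0(k)$, and Theorem~\ref{th:proj_3} does not apply here in any case, since that theorem bounds $\|x - \pj(x,L(Z,v+w))\|$ for a \emph{separate} vector $x$, not $\|v+w - \pj(v+w,L(Z))\|$.

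The same enlargement is what isolates the $a_1^2$ factor in the induction. In the paper's step $i=1$, the newly added column is $y(\ell-1)$; because $y(\ell+1),\ldots,y(\ell+p-1)$ and all the relevant $u$-columns already sit in $\Psi(\ell)$, the ARX expansion reduces modulo $L(\Psi(\ell))$ to $y(\ell-1) \equiv a_1 y(\ell) + \underline{w}(\ell-1)$. Setting $v = a_1\,y(\ell)$ in Theorem~\ref{th:proj_3} is precisely what produces the $3a_1^2\|y(\ell) - \pi_0\|^2$ term in the denominator of~\eqref{ineq:50}, and hence the $3a_1^2$ in $\beta_2$. In your scheme, whenever you add a column $y(m)$ to an ambient built from $D_t$ and a subset of the $y$-columns, the ARX expansion of $y(m)$ leaves \emph{several} $y$-terms outside the ambient (those lags not yet added, plus lags beyond $p$), so $v$ is not a scalar multiple of $c$ and there is no mechanism to pin the coefficient to $a_1$. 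At best you would obtain a $\beta_2$ with $3a_1^2$ replaced by something of order $\|(a,b)\|^2$, which does not match the claimed constant.
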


\begin{proof}
	
	Let the column $c$ be $y_n(\ell)$, where $\ell\in \{1,2,\ldots,p\}$. Also let $\Psi\ust$ be the sub-matrix of $\Psi$ obtained by removing $y(\ell)$. Recalling that $y_n(\ell) = \br{y_{I-\ell},\ldots,y_{n-\ell}}'$ and
	$$
	u(m) = \br{u_{I-m},u_{I+1-m},\ldots, u_{n-m}}',
	$$
	define the matrix,
	$$
	\Psi(\ell):= \br{y(\ell+1), y(\ell+2),\ldots,y(\ell+p), u(0),u(1),\ldots,u(\ell+q) }.
	$$
	Let $L_{\ell}$ be the linear space spanned by $y(1),y(2),\ldots,y(\ell-1)$ and the columns of $\Psi(\ell)$.~Since $L(\Psi\ust)$ is a subspace of $L_{\ell}$, clearly, 
	$$
	\| y(\ell) - \pj\br{y(\ell), L(\Psi\ust)}   \| \ge \| y(\ell) - \pj\br{y(\ell), L_{\ell} }  \|.
	$$
	Thus, in order to show the claim, we will derive a lower bound on $\| y(\ell) - \pj\br{y(\ell), L_{\ell} }  \|$. Let $\hat{ \underline{w}}_0(\ell)$ be the projection of $\underline{w}(\ell)$ onto $L\br{\Psi(\ell)}$.
	
	For $i = 1,2,\ldots,\ell-1$, let $\pi_i$ be the projection of $y(\ell)$ onto 
	$L(y(\ell-i),\ldots,y(\ell-1),\Psi(\ell))$. Also let $\pi_0$ be the projection of $y(\ell)$ onto $L(\Psi(\ell))$.~We will now derive lower bounds on $\|y(\ell)- \pi_i\|$. We begin with $i=0$. Now, $y(\ell)$ is a linear combination of the columns of $\Psi(\ell)$, and $\underline{w}(\ell)= (w_{I-\ell},\ldots, w_{t-\ell})'$,
	\al{
		y(\ell) = \sum_{s=1}^{p} a_s y(\ell+s) + \sum_{s=1}^{q} b_s u(\ell+s) + \underline{w}(\ell).
	}
	Since the vectors $\{y(\ell+s)\}_{s=1}^{p}$, $\{u(\ell+s)\}_{s=1}^{q}$ belong to $\Psi(\ell)$, we have,
	\al{
		y(\ell) - \pi_0 = \underline{w}(\ell) -\hat{ \underline{w}}_0(\ell).\label{ineq:67}
	}
	Hence, we will now derive a lower bound on $ \underline{w}(\ell) -\hat{ \underline{w}}_0(\ell)$.~From Proposition~\ref{prop:2} we have ($c$ is as in~\ref{def:c}),
	\al{
	&	\| \underline{w}(\ell) -\hat{ \underline{w}}_0(\ell)\|^2 \ge c_1 t -  \sqrt{2t c^2 \log\br{\frac{1}{\delta}}} - p~\Bigg\{ 1 \vee \log^{+}\br{\frac{\sum_{s=0}^{\ell+q} \|u(s)\|^2 +\sum_{s=\ell+1}^{\ell+p} \|y(s) \|^2}{\delta}    } \notag\\
	&	\vee 	\log\br{\sum_{s=0}^{\ell+q} \|u(s)\|^2 +\sum_{s=\ell+1}^{\ell+p} \|y(s) \|^2}
		\Bigg\}.
	} 
Now $\sum_{s=0}^{\ell+q} \|u(s)\|^2 +\sum_{s=\ell+1}^{\ell+p} \|y(s) \|^2 $ can be bounded by $B^{2}_ut+2t\|Y_0\|^2 + \frac{4\|b\|^2t}{\br{1-\rho}^2} B^2_u q + \frac{4B^2_w t}{\br{1-\rho}^2}$ using techniques as in~\eqref{eq:25},~\eqref{ineq:83}.~Thus, when $t\ge t\ust_{cov}$,
	\al{
		\| \underline{w}(\ell) -\hat{ \underline{w}}_0(\ell)\|^2 \ge \frac{c_1 t}{3},
	} 
	which when combined with~\eqref{ineq:67} yields,
	\al{
		\|y(\ell) - \pi_0\|^2 \ge \frac{c_1 t}{3}.
	}
	
	Next, consider $i=1$.~We have
	\al{
		y(\ell-1) = a_1 y(\ell) + \sum_{s=1}^{p-1} a_{s+1}y(\ell+s) + \underline{w}(\ell-1),
	}
	where $\underline{w}(\ell-1):= \br{w_{I-\ell+1},\ldots, w_{n-\ell+1}  }'$. Since the columns $\{y(\ell+s)\}_{s=1}^{p-1}$ belong to $\Psi(\ell)$, we get,
	\al{
		L\br{y(\ell-1),\Psi(\ell)} = L\br{a_1 y(\ell) +  \underline{w}(\ell-1),\Psi(\ell) }.
	}
	Hence, setting $v=a_1 y(\ell)$, we can use Theorem~\ref{th:proj_3} to obtain 
	\al{
		& \| y(\ell)- \pi_1 \|^2 \ge \| y(\ell)- \pi_0 \|^2\times \notag\\
		&
		\frac{\| \underline{w}(\ell-1) -\hat{\underline{w}}_0(\ell-1)  \|^2 - \cS_3}
		{3a^2_1\|y(\ell) - \pi_0 \|^2 +\| \underline{w}(\ell-1) -\hat{\underline{w}}_0(\ell-1)  \|^2+ 2 \cS_4a_1\|y(\ell) - \pi_0 \| },\label{ineq:50}
	}   
	where $\hat{\underline{w}}_0(\ell-1)$ is the projection of $\underline{w}(\ell-1)$ onto $L\br{\Psi(\ell)}$, and 
 $\cS_3,\cS_4$ are as in~\eqref{def:S3},~\eqref{def:S4}.
	
	Similar to~\eqref{ineq:83} we have,
	\al{
		\|y(\ell) - \pi_0 \|^2 & \le \|y(\ell)\|^2\notag \\
		& \le \frac{2t}{p} \|Y_0\|^2 + \frac{4\|b\|^2t}{p\br{1-\rho}^2} B^2_u q + \frac{4B^2_w t}{p\br{1-\rho}^2}.
	}
	Upon substituting this and the bounds~\eqref{ineq:35}-\eqref{ineq:37} into~\eqref{ineq:50}, and performing algebraic manipulations similar to the proof of Proposition~\ref{prop:11}, we obtain,
	\al{
		\| y(\ell)- \pi_1 \|^2 \ge \| y(\ell)- \pi_0 \|^2\cdot \beta_2.
	}
	
	The proof is then completed by induction.

\end{proof}
The following is the main result of this section and provides a lower bound on the minimum eigenvalue of $\Psi'_t \Psi_t$ that holds w.h.p.

\begin{theorem}\label{th:1}
	Consider the ARX system~\eqref{def:arx} and let $\{w_t\}$ satisfy Assumption~\ref{assum:bounded_noise}.~On $\cG_{\pj}\cap \cG_{\cI}\cap \cG_{w^2_B}$, for times $t\ge t\ust_{cov}$,
	\al{
		\lambda_{\min}\br{\Psi'_t \Psi_t} \ge \beta_3 N\ui_t,\label{ineq:48}
	}
where 
\al{
\beta_3 := \br{\frac{\sigma^2_e}{2q}\min_{\ell \in \{0,1,\ldots,p\} } \beta^{\ell\slash 2}_1 }\wedge \br{\frac{c_1}{4}\min_{\ell \in \{0,1,\ldots,p\}}  \beta^{\ell\slash 2}_2}.\label{def:beta_3}
}
Same conclusion holds for $\{w_t\}$ satisfying Assumption~\ref{assum:sub_gaussian} on the set $\cG_{\pj}\cap \cG_{\cI}\cap \cG_{w^2_{UB}}$
 
\end{theorem}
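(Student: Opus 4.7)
The plan is to combine the column-wise lower bounds on $\|d-\hat{d}\|$ already established in Propositions \ref{prop:11} and \ref{prop:12} with the projection-based characterization of the minimum eigenvalue recorded in (\ref{ineq:rank}). Recall that the design matrix $\Psi_t$ decomposes into two blocks of columns: the input block $D_t = \br{u(0),u(1),\ldots,u(q)}$ and the output block $\br{y(1),\ldots,y(p)}$. Because (\ref{ineq:rank}) reduces the control of $\lambda_{\min}\br{\Psi'_t \Psi_t}$ to the minimum, over columns $d$ of $\Psi_t$, of the distance from $d$ to the span of the remaining columns, it suffices to lower bound this minimum distance uniformly in the choice of column, and then take a minimum over the two column types.

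First I would invoke Proposition \ref{prop:11}, which asserts that on $\cG_{\pj}\cap \cG_{\cI}$ and for $t\ge t\ust_{cov}$, every input-block column $d$ satisfies
\al{
\|d-\hat d\| \;\ge\; \frac{\sigma^2_e N\ui_t}{2q}\left[1\wedge \min_{\ell\in\{1,\ldots,p\}} \beta^{\ell/2}_1\right]. \notag
}
Next I would apply Proposition \ref{prop:12}, which gives the analogous bound for every output-block column $c$:
\al{
\|c-\hat c\| \;\ge\; \sqrt{\frac{c_1 t}{4}}\left[1\wedge \min_{\ell\in\{1,\ldots,p\}} \beta^{\ell/2}_2\right] \;\ge\; \sqrt{\frac{c_1 N\ui_t}{4}}\left[1\wedge \min_{\ell\in\{1,\ldots,p\}} \beta^{\ell/2}_2\right], \notag
}
where the second inequality uses the trivial bound $t\ge N\ui_t$ so that both estimates are expressed in the common time index $N\ui_t$. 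Taking the minimum of these two uniform lower bounds over all columns of $\Psi_t$ and plugging it into (\ref{ineq:rank}) yields
\al{
\lambda_{\min}\br{\Psi'_t \Psi_t} \;\ge\; \left(\frac{\sigma^2_e}{2q}\min_\ell\beta^{\ell/2}_1\right)\wedge\left(\frac{c_1}{4}\min_\ell \beta^{\ell/2}_2\right)\cdot N\ui_t \;=\; \beta_3 N\ui_t, \notag
}
which is the required bound.

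For the sub-Gaussian case (Assumption \ref{assum:sub_gaussian}) the argument is identical, but the ambient high-probability set must be enlarged to $\cG_{\pj}\cap \cG_{\cI}\cap \cG_{w^2_{UB}}$: the two uniform bounds on $\|Y_s\|$ and on the residual noise norm $\|\underline{w}(\ell)-\hat{\underline{w}}(\ell)\|^2$ that Propositions \ref{prop:11} and \ref{prop:12} rely on continue to hold on this enlarged set, since $\cG_{w^2_{UB}}$ supplies the replacement of the deterministic noise bound $B_w$ by its high-probability sub-Gaussian counterpart. Since the entire technical content of the theorem has already been absorbed into Propositions \ref{prop:11} and \ref{prop:12}, no new obstacle appears in this final combination step: all that remains is the elementary minimum-of-two-bounds together with the eigenvalue-projection inequality (\ref{ineq:rank}). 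The serious difficulties (the stability-based control of $\|y(\ell)\|$ and $\|v-\hat v\|$, and the self-normalized martingale argument used to lower bound $\|\underline{w}(\ell)-\hat{\underline{w}}(\ell)\|^2$) were paid for earlier, inside the proofs of those two propositions.
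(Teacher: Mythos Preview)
Your outline follows the paper's route exactly: invoke Proposition~\ref{prop:11} for the input-block columns, Proposition~\ref{prop:12} for the output-block columns, and combine via the eigenvalue--projection inequality~\eqref{ineq:rank}. The paper's own proof is a single sentence citing precisely these three ingredients.

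The one place where your argument diverges is the conversion of Proposition~\ref{prop:12}'s bound to the $N\ui_t$ scale. You use the trivial inequality $t\ge N\ui_t$, which only gives $\sqrt{t}\ge\sqrt{N\ui_t}$ and hence turns the output-column bound $\sqrt{c_1 t/4}\,[\cdot]$ into something of order $\sqrt{N\ui_t}$. The paper instead invokes the algorithm-specific fact $N\ui_t\le\sqrt{t}$, which holds under PIECE's exploration schedule $n_i=\exp(i^2)$; this gives $\sqrt{t}\ge N\ui_t$, so the output-column projection distance is lower bounded linearly in $N\ui_t$, matching the input-column bound from Proposition~\ref{prop:11}. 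Read literally against~\eqref{ineq:rank}, your weaker conversion would yield $\lambda_{\min}(\Psi_t'\Psi_t)\gtrsim\sqrt{N\ui_t}$ rather than the stated $\beta_3 N\ui_t$, so you should cite $N\ui_t\le\sqrt{t}$ rather than $t\ge N\ui_t$ in that step.
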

\begin{proof}
	Follows from Propositions~\ref{prop:11},~\ref{prop:12},~\eqref{ineq:rank} and noting that $N\ui_t \le \sqrt{t}$ under the proposed algorithm, after some algebraic manipulations. 
\end{proof}

\subsection{Properties of Projections}\label{sec:projection}

The material in this section contains ``finite-time version'' of the results in~\cite{lai1982asymptotic}.~More specifically, the proof of Theorem~\ref{th:1} relies upon finite-time versions of Corollary~2 and Theorem~5 of~\cite{lai1982asymptotic}, but we will derive non-asymptotic versions of several results from~\cite{lai1982asymptotic} since the finite-time versions of Corollary~2 and Theorem~5~\cite{lai1982asymptotic} depend upon them. The results in this section are of independent interest, and have much wider application. The main result of this section is Theorem~\ref{th:proj_3}, and it is used in the proof of Propositions~\ref{prop:11} and~\ref{prop:12} while deriving a lower bound on the minimum eigenvalue of the covariance matrix.

Within this section we consider stochastic processes $\{x_s\}_{s=1}^{T},\{w_s\}_{s=1}^{T},\{z_s\}_{s=1}^{T},\{v_s\}_{s=1}^{T}$ where $z_s = (z_{s,1},z_{s,2},\ldots,z_{s,p})$ is a vector-valued process. While performing analysis, we will be interested in $t$-dimensional vectors created from the first $t$ components of these processes, with time index ranging from $1$ to $t$. Hence denote $Z\ut = \{z_{i,j}\}_{1\le i \le t,1\le j \le p}$, $x\ut = (x_1,x_2,\ldots,x_t)'$, $w\ut = (w_1,w_2,\ldots,w_t)'$ and $v\ut = (v_1,v_2,\ldots,v_t)'$.~For a matrix $M$, we let $L(M)$ be the linear space spanned by its columns. When the time $t$ is clear from the context, we will omit the superscript $t$, which will be mostly the case in this section since the analysis is performed by fixing $t$. So we will write $x$ in lieu of $x\ut$, and so on. Only when we explicitly want to depict the dependence upon $t$, will we use a super-script.~Let $\hat{x},\hat{w},\hat{v}$ be the projections of the vectors $x,w,v$ onto $L(Z)$. Note that $\{w_s\}$ is a martingale difference sequence w.r.t. $\{\cF_s\}$. For each $s\ge 1$, $x_s,v_s,z_s$ are $\cF_{s-1}$ measurable random variables. In this section, we will derive the results for the case when $\{w_s\}$ is either bounded (Assumption~\ref{assum:bounded_noise}) or sub-Gaussian (Assumption~\ref{assum:sub_gaussian}).

\begin{remark}
	Theorem~\ref{th:proj_3}, the main result of this section,  gives a lower bound on $\pj\br{x\ut,L(z\ut,v\ut+w\ut)}$. In the context of our problem this requires us to lower-bound the minimum eigenvalue of the covariance matrix, the (random) vector $z\ut$ corresponds to a submatrix of the design matrix $\Psi_t$~\eqref{def:design_mat} obtained by deleting a few columns, and the vector $v\ut$ corresponds to a linear combination of some columns of $\Psi_t$, while $w\ut$ is simply the column vector $\br{w_1,w_2,\ldots,w_t}'$ obtained from the noise process $\{w_s\}_{s=1}^{t}$ associated with ARX process~\eqref{def:arx}.
\end{remark}

The following is the finite-time version of Theorem~4 of~\cite{lai1982asymptotic}. 
\begin{theorem}\label{th:proj_1}
If $\{w_s\}$ satisfies Assumption~\ref{assum:bounded_noise}, then on $\cG_{\pj}$~\eqref{ineq:52} we have,
	\al{
		\br{x - \hat{x}}\cdot \br{w - \hat{w}} & = \br{x - \hat{x}}\cdot w  = x\cdot \br{w - \hat{w}}\notag \\
		& \le  \| x - \hat{x} \|\left\{ 1 \vee 		
		\sqrt{ \log^{+}\br{ \frac{\| x - \hat{x} \|}{\delta}} }		
		\vee 	
		\sqrt{ 2\log\br{\sum_{s=1}^{t}\sum_{j=1}^{p}z^{2}_{s,j}} }
		 \right\}.~\label{ineq:th_1}
		 } 
If instead $\{w_s\}$ satisfies Assumption~\ref{assum:sub_gaussian}, then~\eqref{ineq:th_1} holds on $\cG_{\pj}\cap \cG_w$. 	 
\end{theorem}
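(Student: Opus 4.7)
The three equalities $(x-\hat{x})\cdot(w-\hat{w}) = (x-\hat{x})\cdot w = x\cdot(w-\hat{w})$ are routine consequences of the orthogonality of the projection residual: letting $P$ be the orthogonal projector onto $L(Z)$, so that $\hat{x}=Px$ and $\hat{w}=Pw$, self-adjointness of $P$ gives $\hat{x}\cdot w = x\cdot\hat{w} = \hat{x}\cdot\hat{w}$, from which the three identities follow by expanding and cancelling. I would write one short paragraph handling this.

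For the inequality, the idea is the standard \emph{augmented-regression} (Frisch--Waugh) trick combined with a self-normalized martingale bound. Form the enlarged design whose rows are the $(p+1)$-vectors $\tilde{z}_s := (z_{s,1},\ldots,z_{s,p},x_s)'$; these remain $\cF_{s-1}$-measurable because $x_s$ is predictable by assumption. Let
\[
M_t := \sum_{s=1}^{t}\tilde{z}_s w_s,\qquad \tilde V_t := \sum_{s=1}^{t}\tilde{z}_s\tilde{z}_s' = \begin{pmatrix} Z'Z & Z'x \\ x'Z & x'x \end{pmatrix}.
\]
The self-normalized martingale concentration (as used elsewhere in the paper via \eqref{eq:self_normalized}) yields, on an event of probability at least $1-\delta$, the bound
\[
M_t'(\tilde V_t+I)^{-1}M_t \;\le\; 2\sigma^2\log\!\left(\frac{\det(\tilde V_t+I)^{1/2}}{\delta}\right),\qquad \forall\,t.
\]
Under Assumption~\ref{assum:bounded_noise} $\sigma^2$ is replaced by $B_w^2$; this gives the bounded-noise case. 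Under Assumption~\ref{assum:sub_gaussian} the same inequality holds by the paper's sub-Gaussian self-normalized inequality, and on $\cG_w$ one controls the boundary case when $Z'Z$ is near singular. This event is exactly (up to routine bookkeeping) what is labelled $\cG_{\pj}$.

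Now invoke the Schur-complement identity: partitioning $\tilde V_t$ as above, the $(p+1,p+1)$ entry of $\tilde V_t^{-1}$ equals $1/\|x-\hat{x}\|^2$, and writing $N_t:=Z'w$, $\xi_t:=x'w$ one obtains the exact decomposition
\[
M_t'\tilde V_t^{-1}M_t \;=\; N_t'(Z'Z)^{-1}N_t \;+\; \frac{\bigl((x-\hat{x})\cdot w\bigr)^{2}}{\|x-\hat{x}\|^{2}},
\]
using the first of the three equalities proved above to identify the cross terms. Dropping the non-negative first summand and substituting into the self-normalized bound yields
\[
\bigl|(x-\hat{x})\cdot w\bigr| \;\le\; \|x-\hat{x}\|\sqrt{2\sigma^2\log\!\left(\det(\tilde V_t+I)^{1/2}/\delta\right)}.
\]
The determinant is controlled by AM--GM: $\det(\tilde V_t+I)\le\bigl(1+\mathrm{tr}(\tilde V_t)/(p+1)\bigr)^{p+1}$, and $\mathrm{tr}(\tilde V_t)=\|x\|^2+\sum_{s,j}z_{s,j}^2$. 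This produces the $\sqrt{2\log(\sum_{s,j}z_{s,j}^{2})}$ summand in the max.

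The final, slightly subtle ingredient is getting the $\sqrt{\log^{+}(\|x-\hat{x}\|/\delta)}$ branch of the maximum: because $\|x-\hat{x}\|$ appears on the right-hand side as a random normalizer, I would argue by a \emph{peeling/slicing} union bound over dyadic intervals $\|x-\hat{x}\|\in[2^k,2^{k+1})$, replacing $\delta$ by $\delta/(k\vee 1)^{2}$ at level $k$; summing $\sum_k k^{-2}<\infty$ absorbs the union bound and produces the extra $\log^{+}(\|x-\hat{x}\|/\delta)$ term. The ``$1\vee$'' in the max simply covers the degenerate case where $\|x-\hat{x}\|$ is small so that neither logarithm is positive. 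The peeling step is the main technical obstacle; the other pieces (Frisch--Waugh identity and the self-normalized bound) are essentially mechanical once set up.
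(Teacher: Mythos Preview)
Your Frisch--Waugh/Schur-complement route is genuinely different from the paper's. The paper does not augment the design; instead it invokes Lemma~\ref{lemma:82_2}, which rests on Lai's recursive-residual identity (Lemma~3 of \cite{lai1979strong}): with $d_s:=x_s-X_s'V_{s,\cJ}Z_{s,\cJ}$ the \emph{online} residual (predictable, since it uses only data through time $s$), one has
\[
\sum_{s}(x_s-X_t'V_t Z_{s,\cJ})\,w_s \;=\; \sum_s d_s\Bigl\{w_s-Z_{s,\cJ}'V_{s-1}\textstyle\sum_{i<s}Z_{i,\cJ}w_i\Bigr\},\qquad s_t=\sum_s d_s^2\bigl(1+Z_{s,\cJ}'V_{s-1}Z_{s,\cJ}\bigr).
\]
The self-normalized bound on $\sum_s d_s w_s$ (condition~2 in the definition of $\cG_{\pj}$) then gives the $\sqrt{\log(s_t/\delta)}=\sqrt{\log(\|x-\hat x\|^2/\delta)}$ branch directly, and Cauchy--Schwarz together with Lemma~\ref{lemma:82-1} gives the $\sqrt{2\log(\sum z^2)}$ branch. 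No peeling is involved. Your approach is more self-contained; the paper's buys a good event $\cG_{\pj}$ expressed through predictable recursive quantities that are reused downstream in Theorems~\ref{th:proj_2} and~\ref{th:proj_3}.

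That said, there are two real gaps in your sketch. First, the theorem is asserted \emph{on} $\cG_{\pj}$, and $\cG_{\pj}$ (see~\eqref{ineq:52}) is defined precisely through conditions on the recursive quantities $d_s$ and $V_{k-1}$; your augmented-design event is not ``exactly, up to routine bookkeeping'' the same set---it is a different event altogether, so as written you would be proving a different statement. Second, the Schur-complement step does not close as stated: you decompose $M_t'\tilde V_t^{-1}M_t$, but the self-normalized inequality controls $M_t'(\tilde V_t+I)^{-1}M_t$, and since $(\tilde V_t+I)^{-1}\preceq\tilde V_t^{-1}$ the inequality runs the wrong direction for chaining. Working instead with the Schur complement of $\tilde V_t+I$ is possible, but then the numerator becomes $x'w-x'Z(Z'Z+I)^{-1}Z'w$ rather than $(x-\hat x)\cdot w$, and that discrepancy must be handled. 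Incidentally, the peeling you propose is unnecessary even within your own framework: from $\det(\tilde V_t+I)=\det(Z'Z+I)\cdot S$ with $S\approx\|x-\hat x\|^2+1$, the log-determinant already splits additively into the two branches of the maximum.
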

\begin{proof} 
	Let $\cj$ be a non-empty subset of $\{1,2,\ldots,p\}$. We will use $Z_{\cj}$ to denote the $t\times|\cj|$ matrix formed by the vectors $\{(z_{1,j},z_{2,j},\ldots,z_{t,j})':j\in \cj\}$. Also, for $s=1,2,\ldots,t$, let $Z_{s,\cj}$ be the $|\cj|$-dimensional column vector with components $z_{s,i}, i\in \cj$. For each non-empty subset $\cj$ of $\{1,2,\ldots,p\}$, define the following stopping-time,
$$
\tilde{\tau}_{\cj} := \inf \left\{ \ell: \sum_{s=1}^{\ell} Z_{s,\cj} Z^{'}_{s,\cj} \mbox{ is non-singular } \right\},
$$
where we let $\tilde{\tau}_{\cj} =\infty$ if the set on the r.h.s. is empty.~For times $\ell \ge \tilde{\tau}_{\cj}$ define 
\al{
V_{\ell,J}:= \br{\sum_{s=1}^{\ell} Y_{s,J} Y^{'}_{s,J}}^{-1},
}
while for $\ell < \tilde{\tau}_J$ we let $V_{\ell,J}$ be the Moore-Penrose generalized inverse of $\sum_{s=1}^{\ell} Z_{s,\cj} Z^{'}_{s,\cj}$.~Let,
\al{
X_{\ell,J} := \sum_{s=1}^{\ell} x_s Z_{s,\cj},~\mbox{ where } \ell = 1,2,\ldots,t.\label{def:1}
}
Let $\hat{x}(J)$ be the projection of $x$ onto $L(Z_J)$. We have,
\al{
\pj\br{x,L(Z_J)} = \br{Z_{1,\cj},Z_{2,\cj},\ldots,Z_{t,\cj}}' V_{t,\cj} X_{t,\cj}.
}
Thus,
\al{
	\br{x- \pj\br{x,L(Z_{\cj})}}\cdot w = \sum_{s=1}^{t} \left\{ x_s - X^{'}_{s,\cj} V_{s,\cj} Z_{s,\cj} \right\}w_s,
}
and also,
\al{
	\| x - \pj\br{x,L(Z_{\cj})}\|^{2} = \sum_{s=1}^{t} \br{x_s - X^{'}_{t,\cj} V_{t,J} Z_{s,\cj}}^2.
}
Assume that we have that there exists a $\cj$ s.t. we have $\tilde{\tau}_{\cj} < \infty$.~By using Lemma~\ref{lemma:82_2}, we obtain that the following holds on $\cG_{\pj}$~\eqref{ineq:52},
\al{
	& | \br{x-\pj\br{x,L(Z_{\cj})}}\cdot w | \notag\\
	& \le \| x - \pj\br{x,L(Z_{\cj})}\| \cdot \max \left\{ 
	\sqrt{\log^{+}\br{\frac{\| x -\pj\br{x,L(Z_{\cj})}\|}{\delta}} }  , \sqrt{ 2\log\br{\sum_{j\in \cj}\sum_{s=1}^{t}z^{2}_{s,j}} }  \right\}.\label{ineq:7}
}
For times $t>\tilde{\tau}_{\cj}$, we have $\hat{x}\ut=\pj\br{x\ut,L(Z\ut_{\cj})}$. 

This completes the proof.
\end{proof}

\begin{theorem}\label{th:proj_2}
If $\{w_s\}$ satisfies Assumption~\ref{assum:bounded_noise}, then on $\cG_{\pj}$~\eqref{ineq:52} the following holds,
	\al{
	\| \hat{w}\ut \|^2 \le p~\left\{ 1 \vee \max_{j\in\{1,2,\ldots,p\}}\log^{+}\br{\frac{ \sum_{s=1}^{t} z^2_{s,j} }{\delta}    } 
	\vee 	
	2\log\br{\sum_{s=1}^{t}\sum_{\ell=1}^{j}z^2_{s,\ell}}
	\right\},~\forall t.\label{ineq:65}
}    	
If instead $\{w_s\}$ satisfies Assumption~\ref{assum:sub_gaussian}, then~\eqref{ineq:65} holds on $\cG_w\cap \cG_{\pj}$. 	 
\end{theorem}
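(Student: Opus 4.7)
The plan is to reduce the bound on $\|\hat{w}^{(t)}\|^2$ to $p$ applications of Theorem~\ref{th:proj_1} by expanding the orthogonal projection of $w$ onto $L(Z)$ through a Gram–Schmidt style telescoping. For $j=0,1,\ldots,p$, let $L_j := L(z^{(1)},\ldots,z^{(j)})$ (with $L_0=\{0\}$), and let $W_j := \pj(w,L_j)$ and $d^{(j)} := z^{(j)} - \pj(z^{(j)},L_{j-1})$. A standard orthogonal-decomposition identity (Pythagoras applied to the nested chain $L_0 \subset L_1 \subset \cdots \subset L_p$) gives
\[
\|\hat{w}^{(t)}\|^2 \;=\; \|W_p\|^2 \;=\; \sum_{j=1}^{p}\bigl(\|W_j\|^2 - \|W_{j-1}\|^2\bigr) \;=\; \sum_{j=1}^{p} \frac{\bigl(d^{(j)}\cdot w\bigr)^2}{\|d^{(j)}\|^2},
\]
where terms with vanishing denominator are set to zero (the $j$-th column then lies in $L_{j-1}$ and contributes nothing to the projection). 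This reduces matters to bounding each numerator $d^{(j)}\cdot w$.

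Next, for each $j \in \{1,\ldots,p\}$, I will apply Theorem~\ref{th:proj_1} with $x := z^{(j)}$ and the spanning matrix formed by the columns $z^{(1)},\ldots,z^{(j-1)}$; this is legitimate because $\{z_{s,j}\}$ is adapted in the correct sense by the running hypothesis of this section, and the event $\cG_{\pj}$ is precisely the union-bound intersection over the finitely many choices of subset of columns needed here. This yields, on $\cG_{\pj}$ (and on $\cG_{\pj}\cap \cG_w$ under Assumption~\ref{assum:sub_gaussian}),
\[
\bigl|d^{(j)}\cdot w\bigr| \;\le\; \|d^{(j)}\|\,\max\!\left\{1,\;\sqrt{\log^+\!\left(\tfrac{\|d^{(j)}\|}{\delta}\right)},\;\sqrt{2\log\!\left(\textstyle\sum_{s,\ell} z_{s,\ell}^2\right)}\right\}.
\]
Squaring and dividing by $\|d^{(j)}\|^2$ removes the leading factor, giving an upper bound on each summand by $1 \vee \log^+(\|d^{(j)}\|/\delta) \vee 2\log(\sum_{s,\ell}z_{s,\ell}^2)$. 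Using the obvious monotonicity of projection, $\|d^{(j)}\|^2 \le \|z^{(j)}\|^2 = \sum_{s} z_{s,j}^2$, so the $\log^+$ argument is at most $\sqrt{\sum_s z_{s,j}^2}/\delta$, which can be absorbed into $\log^+(\sum_s z_{s,j}^2/\delta)$ up to a constant factor (at worst $\tfrac12$), which is harmless in the bracket.

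Summing the $p$ resulting term-by-term inequalities, maximising the $\log^+$ term over $j$, and pulling the resulting common factor in front then gives
\[
\|\hat{w}^{(t)}\|^2 \;\le\; p\left\{1 \vee \max_{j\in\{1,\ldots,p\}}\log^+\!\left(\tfrac{\sum_{s=1}^{t} z_{s,j}^2}{\delta}\right) \vee 2\log\!\left(\textstyle\sum_{s=1}^{t}\sum_{\ell=1}^{p} z_{s,\ell}^2\right)\right\},
\]
which is the claimed bound. The main obstacle I anticipate is bookkeeping of the event $\cG_{\pj}$: a single choice of confidence set there must already contain the $p$ self-normalised concentration events corresponding to the nested subsets $\{z^{(1)},\ldots,z^{(j-1)}\}$ for $j=1,\ldots,p$, which is presumably how it is defined upstream; if not, a factor $p$ (or $\log p$) may need to appear in the confidence parameter. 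Under Assumption~\ref{assum:sub_gaussian}, one additionally restricts to the good event $\cG_w$ on which the sub-Gaussian tails of $\{w_s\}$ are under control uniformly in $s$, exactly as in the statement of Theorem~\ref{th:proj_1}.
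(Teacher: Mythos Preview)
Your proposal is correct and follows essentially the same approach as the paper: both use the Gram--Schmidt decomposition $\|\hat{w}\|^2 = \sum_{j=1}^{p} (d^{(j)}\cdot w)^2/\|d^{(j)}\|^2$, apply Theorem~\ref{th:proj_1} with $x = z^{(j)}$ and $Z$ replaced by the first $j-1$ columns to bound each summand, and then use $\|d^{(j)}\| \le \|z^{(j)}\|$ to simplify the $\log^+$ argument. Your concern about $\cG_{\pj}$ is unfounded, since its definition~\eqref{ineq:52} already quantifies over all subsets $\cj \subseteq \{1,\ldots,p\}$.
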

\begin{proof} Let $Z_{\cdot,j}$ be the $j$-th column of $Z$, and $\tilde{Z}_{\cdot,j}$ the projection of $Z_{\cdot,j}$ onto the linear space spanned by $Z_{\cdot,1},Z_{\cdot,2},\ldots,Z_{\cdot,j-1}$. We let $\tilde{Z}_{\cdot,1}=\mbf{0}$. Consider the orthogonal vectors $Z_{\cdot,1},Z_{\cdot,2}-\tilde{Z}_{\cdot,2},\ldots,Z_{\cdot,p}-\tilde{Z}_{\cdot,p}$. These span the space $L(Z)$. Since $\hat{w}$ is the projection of $w$ onto $L(Z)$, we have,
\al{
	\|\hat{w}\|^2 = \sum_{j=1}^{p} \frac{ \left\{ \br{Z_{\cdot,j} - \tilde{Z}_{\cdot,j}}\cdot w \right\}^{2}}{\| Z_{\cdot,j} - \tilde{Z}_{\cdot,j} \|^2}.\label{eq:7}
}
In case the denominator of a summand in the above is $0$, we set that term to $0$. From Theorem~\ref{th:proj_1},
\al{
&	\Big| \br{Z_{\cdot,j} - \tilde{Z}_{\cdot,j}}\cdot w \Big|  \le \| Z_{\cdot,j} - \tilde{Z}_{\cdot,j} \| \times \left\{ 1 \vee \sqrt{\log^{+}\br{\frac{\|Z_{\cdot,j} - \tilde{Z}_{\cdot,j} \|}{\delta} } } \vee 
\sqrt{2\log\br{\sum_{s=1}^{t}\sum_{m=1}^{j-1}z^2_{s,m}} }\right\} \notag\\	
	& \le \| Z_{\cdot,j} - \tilde{Z}_{\cdot,j} \| \times \left\{ 1 \vee 
	\sqrt{ \log^{+}\br{\frac{ \sum_{s=1}^{t} z^2_{s,j} }{\delta}    } }	
	 \vee 	\sqrt{2\log\br{\sum_{s=1}^{t}\sum_{m=1}^{j-1}z^2_{s,m} }}
\right\},
}
where in the last inequality we have used $\| Z_{\cdot,j} - \tilde{Z}_{\cdot,j} \|^2  \le \| Z_{\cdot,j} \|^2 
= \sum_{s=1}^{t} z^2_{s,j}$. The proof is then completed by substituting the above bound into~\eqref{eq:7}. 
\end{proof}

\begin{proposition}\label{prop:2}
If $\{w_t\}$ satisfies Assumption~\ref{assum:noise_var} and Assumption~\ref{assum:sub_gaussian}, then the folowing holds on the set $\cG_{\pj} \cap \cG_{w^2_{UB}}$,
	\al{
		& \| w\ut - \hat{w}\ut \|^2 \ge tc_1 -  \sqrt{2t c^2 \log\br{\frac{1}{\delta}}}  \notag\\
		&- p~\left\{ 1 \vee \max_{j\in\{1,2,\ldots,p\}} \log^{+}\br{\frac{ \sum_{s=1}^{t} z^2_{s,j} }{\delta}    } \vee 	2\log\br{\sum_{s=1}^{t}\sum_{m=1}^{j}z^2_{s,m}}
		\right\},~\forall t,\label{ineq:66}
	} 
 where $\cG_{w^2_{UB}}$ is as in~\eqref{def:G4}, while $\cG_{\pj}$ is as in~\eqref{ineq:52}.  
If instead $\{w_t\}$ satisfies Assumption~\ref{assum:noise_var} and Assumption~\ref{assum:bounded_noise}, then~\eqref{ineq:66} holds on $\cG_{\pj} \cap \cG_{w^2_{B}}$. The sets $\cG_{w^2_{B}}$ and $\cG_{w^2_{UB}}$ are as in Lemma~\ref{lemma:bound_var} and Lemma~\ref{lemma:unbound_var} respectively.
\end{proposition}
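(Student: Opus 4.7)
The plan is to decompose the squared residual norm using orthogonality of the projection, obtain a lower bound on $\|w\ut\|^2$ from the conditional lower bound on the variance of the noise, and obtain an upper bound on $\|\hat{w}\ut\|^2$ by invoking Theorem~\ref{th:proj_2}.

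More precisely, since $\hat{w}\ut$ is the orthogonal projection of $w\ut$ onto $L(Z\ut)$, we have
\[
\|w\ut - \hat{w}\ut\|^2 \;=\; \|w\ut\|^2 - \|\hat{w}\ut\|^2 \;=\; \sum_{s=1}^{t} w_s^2 \;-\; \|\hat{w}\ut\|^2.
\]
The first step is to lower-bound $\sum_{s=1}^t w_s^2$. By Assumption~\ref{assum:noise_var}, $\sum_{s=1}^t \bE\br{w_s^2 \mid \cF_{s-1}} \ge t c_1$, so $\{w_s^2 - \bE\br{w_s^2 \mid \cF_{s-1}}\}$ is a martingale difference sequence. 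Applying a concentration inequality for martingale differences (Azuma--Hoeffding in the bounded case via the good set $\cG_{w^2_B}$ from Lemma~\ref{lemma:bound_var}, and a self-normalized / sub-exponential inequality in the sub-Gaussian case via $\cG_{w^2_{UB}}$ from Lemma~\ref{lemma:unbound_var}) yields, on the respective good set,
\[
\sum_{s=1}^{t} w_s^2 \;\ge\; t c_1 - \sqrt{2 t c^2 \log(1/\delta)},
\]
where $c$ is the constant introduced in~\eqref{def:c}. This is exactly the first two terms in the stated bound.

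The second step is to invoke Theorem~\ref{th:proj_2}, applied on $\cG_{\pj}$ (and on $\cG_w$ additionally in the sub-Gaussian case, which is already part of the relevant good event), to obtain
\[
\|\hat{w}\ut\|^2 \;\le\; p\left\{1 \vee \max_{j\in\{1,\ldots,p\}} \log^{+}\!\br{\tfrac{\sum_{s=1}^t z_{s,j}^2}{\delta}} \vee 2\log\br{\textstyle\sum_{s=1}^{t}\sum_{m=1}^{j} z_{s,m}^2}\right\}.
\]
Subtracting this upper bound from the lower bound on $\|w\ut\|^2$ gives the claimed inequality.

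The only real subtlety is housekeeping on the good sets: one must verify that the concentration bound used to lower-bound $\sum_s w_s^2$ is precisely the event defining $\cG_{w^2_B}$ (respectively $\cG_{w^2_{UB}}$), so that the conclusion is simultaneously valid with the event $\cG_{\pj}$ (and $\cG_w$ under Assumption~\ref{assum:sub_gaussian}) on which Theorem~\ref{th:proj_2} is in force. The remainder is an algebraic combination of the two bounds, so I do not anticipate any further obstacle.
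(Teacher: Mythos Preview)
Your proposal is correct and follows essentially the same approach as the paper: the Pythagorean decomposition $\|w\ut-\hat w\ut\|^2=\|w\ut\|^2-\|\hat w\ut\|^2$, the lower bound on $\sum_s w_s^2$ via Lemma~\ref{lemma:unbound_var} (resp.\ Lemma~\ref{lemma:bound_var}) on the set $\cG_{w^2_{UB}}$ (resp.\ $\cG_{w^2_B}$), and the upper bound on $\|\hat w\ut\|^2$ via Theorem~\ref{th:proj_2} on $\cG_{\pj}$. Your housekeeping remark about also needing $\cG_w$ in the sub-Gaussian case for Theorem~\ref{th:proj_2} is apt, though the paper's statement of the proposition does not list it explicitly.
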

\begin{proof} We have,
\al{
	\| w\ut - \hat{w}\ut \|^2 & = \|w\ut\|^2 - \|\hat{w}\ut\|^2 \notag\\
	& \ge tc_1 -  \sqrt{2t \br{ \frac{1}{c}\log \br{\frac{T}{\eps'}} }^2 \log\br{\frac{1}{\delta}}}  - \|\hat{w}\ut\|^2 \notag\\
	& \ge t c_1 -  \sqrt{2t \br{ \frac{1}{c}\log \br{\frac{T}{\eps'}} }^2 \log\br{\frac{1}{\delta}}}\notag\\
	& - p~\left\{ 1 \vee \max_{j\in\{1,2,\ldots,p\}}\log^{+}\br{\frac{ \sum_{s=1}^{t} z^2_{s,j} }{\delta}    } \vee 	2\log\br{\sum_{s=1}^{t}\sum_{m=1}^{j}z^2_{s,m}}
	\right\},
}
where the first inequality follows from Lemma~\ref{lemma:unbound_var},~while the second follows from Theorem~\ref{th:proj_2}. Both the inequalities hold on h.p. sets.
\end{proof} 

\begin{proposition}\label{prop:3}
	Define,
	\al{
		r := x - \hat{x}.	\label{def:r}
	}
Let $\hat{\hat{r}}$ be the projection of $r$ onto $L(v+w)$, and $\hat{\hat{x}}$ the projection of $x$ onto $L(Z,v+w)$.~Then,
	\al{
		\| x - \hat{x}\|^2 = \| \hat{\hat{r}} \|^2 + \| x - \hat{\hat{x}}\|^2.\label{eq:8}
	}
	Also,
	\al{
		\| \hat{\hat{r}} \|^2 = \frac{\left[r\cdot \br{v+w-\hat{v}-\hat{w}}\right]^2}{\| v+w-\hat{v}-\hat{w}\|^2}.~\label{eq:9}
	}
	
\end{proposition}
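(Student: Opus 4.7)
The plan is to exploit an orthogonal direct sum decomposition of $L(Z, v+w)$. Since $\hat{v}$ and $\hat{w}$ are by definition the projections of $v$ and $w$ onto $L(Z)$, the vector $(v+w) - (\hat{v}+\hat{w})$ is exactly the component of $v+w$ orthogonal to $L(Z)$. Hence
\[
L(Z, v+w) \;=\; L(Z) \,\oplus\, \text{span}\bigl(v+w-\hat{v}-\hat{w}\bigr),
\]
where $\oplus$ denotes orthogonal direct sum (assuming $v+w-\hat{v}-\hat{w}\neq 0$; the degenerate case is handled trivially since then $L(Z,v+w)=L(Z)$).

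First I would use this decomposition to express $\hat{\hat{x}}$, the projection of $x$ onto $L(Z,v+w)$, as a sum of two orthogonal pieces:
\[
\hat{\hat{x}} \;=\; \hat{x} \;+\; \frac{x\cdot(v+w-\hat{v}-\hat{w})}{\|v+w-\hat{v}-\hat{w}\|^{2}}\bigl(v+w-\hat{v}-\hat{w}\bigr).
\]
Since $\hat{x}\in L(Z)$ is orthogonal to $v+w-\hat{v}-\hat{w}$, the inner product $x\cdot(v+w-\hat{v}-\hat{w})$ equals $r\cdot(v+w-\hat{v}-\hat{w})$ with $r := x-\hat{x}$. Thus $\hat{\hat{x}}-\hat{x}$ is precisely the rank-one projection of $r$ onto the orthogonal complement of $L(Z)$ inside $L(Z,v+w)$, which is the quantity identified as $\hat{\hat{r}}$ in the statement; taking norms of this explicit formula yields \eqref{eq:9}.

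For \eqref{eq:8}, I would apply the Pythagorean theorem to the decomposition $x-\hat{x} = (x-\hat{\hat{x}}) + (\hat{\hat{x}}-\hat{x})$: the first summand is orthogonal to $L(Z,v+w)$ by definition of the projection $\hat{\hat{x}}$, while the second summand lies in $L(Z,v+w)$, so the cross term vanishes, giving
\[
\|x-\hat{x}\|^{2} \;=\; \|x-\hat{\hat{x}}\|^{2} + \|\hat{\hat{x}}-\hat{x}\|^{2} \;=\; \|x-\hat{\hat{x}}\|^{2} + \|\hat{\hat{r}}\|^{2}.
\]

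This is purely linear-algebraic and requires no probabilistic machinery; there is no real obstacle beyond being careful that ``projection of $r$ onto $L(v+w)$'' must be read as projection onto the one-dimensional subspace orthogonal to $L(Z)$ within $L(Z,v+w)$, which is spanned by $v+w-\hat{v}-\hat{w}$ (this is the only reading consistent with \eqref{eq:9}). The degenerate case $v+w\in L(Z)$ is immediate since then $\hat{\hat{x}}=\hat{x}$ and both sides of \eqref{eq:8} collapse, while \eqref{eq:9} should be interpreted as $0/0 \mapsto 0$.
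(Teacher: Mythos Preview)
Your proposal is correct and follows essentially the same route as the paper: both arguments use the orthogonal decomposition $L(Z,v+w)=L(Z)\oplus\text{span}(v+w-\hat v-\hat w)$, identify $\hat{\hat r}=\hat{\hat x}-\hat x$, and apply Pythagoras to $x-\hat x=(x-\hat{\hat x})+(\hat{\hat x}-\hat x)$. You are in fact more careful than the paper about the reading of ``projection of $r$ onto $L(v+w)$''; the paper's own proof tacitly replaces this by the projection onto $\text{span}(v+w-\hat v-\hat w)$ (arguing this is legitimate because $r\perp L(Z)$), which is exactly the interpretation you flag as the only one consistent with \eqref{eq:9}.
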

\begin{proof} We clearly have,
\al{
	\hat{\hat{x}} = \hat{x} + \hat{\hat{r}},
}
or equivalently, 
\al{
	r = x - \hat{x} = \hat{\hat{r}} + (x-\hat{\hat{x}}).
}
Since $\hat{\hat{r}} $ and $x-\hat{\hat{x}}$ are orthogonal, we get
\al{
	\| x - \hat{x} \|^2= \|\hat{\hat{r}}\|^2 + \| x-\hat{\hat{x}}\|^2.
}
This proves~\eqref{eq:8}.

Since $r(=x-\hat{x})$ is orthogonal to $z$, and $\hat{v}+\hat{w}$ is the projection of $v+w$ onto $L(Z)$, we have that $\hat{\hat{r}}$ is also equal to the projection of $r$ onto $L(v+w-\hat{v}-\hat{w})$. Hence,
\al{
	\| \hat{\hat{r}} \|^2 = \frac{r\cdot \br{v+w-\hat{v}-\hat{w}}}{\| v+w-\hat{v}-\hat{w}\|^2},
}
which proves~\eqref{eq:9}.

\end{proof}

\begin{proposition}\label{prop:4}
	Let $v\ust$ denote the projection of $v$ onto $L(Z,x)$.~Then,
	\al{
		& \|r\|^2 \| v -  \hat{v} + w - \hat{w}\|^2 - \Big| r \cdot \br{v+w-\hat{v}-\hat{w}}	\Big|^2\notag\\
		&=\|r\|^2\left\{\|v-v\ust\|^2 + \| w - \hat{w}\|^2+2\br{v-v\ust}\cdot w\right\} - \br{r\cdot w}^2,
	}
where $r = x - \hat{x}$~\eqref{def:r}.
\end{proposition}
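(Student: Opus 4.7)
The plan is to expand both sides and reduce each to a common form by repeatedly exploiting the fact that $r=x-\hat x$, $a:=v-\hat v$, and $b:=w-\hat w$ are all orthogonal to $L(Z)$ by construction.

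First I would expand the left-hand side directly, writing $\|v-\hat v+w-\hat w\|^2=\|a+b\|^2=\|a\|^2+2a\cdot b+\|b\|^2$ and $(r\cdot(a+b))^2=(r\cdot a)^2+2(r\cdot a)(r\cdot b)+(r\cdot b)^2$, to obtain
\nal{
\|r\|^2\|a+b\|^2-|r\cdot(a+b)|^2 = \|r\|^2\|a\|^2+2\|r\|^2\, a\cdot b+\|r\|^2\|b\|^2-(r\cdot a)^2-2(r\cdot a)(r\cdot b)-(r\cdot b)^2.
}
Two orthogonality identities then collapse this. Since $\hat v\in L(Z)$ and $b=w-\hat w\perp L(Z)$, we have $\hat v\cdot b=0$, hence $a\cdot b=v\cdot b=v\cdot w-v\cdot\hat w$; and since $\hat w\in L(Z)$ and $a=v-\hat v\perp L(Z)$, $a\cdot\hat w=0$, so $v\cdot\hat w=\hat v\cdot\hat w=\hat v\cdot w$ and therefore $a\cdot b=a\cdot w$. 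Likewise $r\perp L(Z)$ gives $r\cdot\hat w=0$, so $r\cdot b=r\cdot w$.

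Second, to bring in the $v-v^\star$ quantities on the right-hand side, I would use that $L(Z,x)=L(Z)\oplus\mathrm{span}(r)$ (the case $r=0$ makes both sides zero and is trivial). Hence $v^\star=\hat v+\alpha r$ for a unique scalar $\alpha$, determined by $(v-v^\star)\perp r$. Since $r\cdot\hat v=0$ this gives $\alpha=(r\cdot v)/\|r\|^2=(r\cdot a)/\|r\|^2$, and so $v-v^\star=a-\tfrac{r\cdot a}{\|r\|^2}\,r$. Squaring, and taking inner product with $w$,
\nal{
\|r\|^2\|v-v^\star\|^2=\|r\|^2\|a\|^2-(r\cdot a)^2, \qquad \|r\|^2(v-v^\star)\cdot w=\|r\|^2\,a\cdot w-(r\cdot a)(r\cdot w).
}

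Finally, substituting these two identities together with $\|b\|^2=\|w-\hat w\|^2$ into the right-hand side of the claim and using $a\cdot b=a\cdot w$ and $r\cdot b=r\cdot w$ in the expansion of the left-hand side, the two sides agree term by term. The argument is essentially bookkeeping; the only non-routine step is the explicit decomposition $v^\star=\hat v+\tfrac{r\cdot a}{\|r\|^2}r$, which is what couples the projection onto $L(Z,x)$ with the projection onto $L(Z)$, so I would expect that piece — rather than any analytic estimate — to be the main thing to get right.
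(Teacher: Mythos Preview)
Your argument is correct and essentially the same as the paper's: both hinge on the orthogonal decomposition $v-\hat v=\alpha r+(v-v^\star)$ with $\alpha=(r\cdot(v-\hat v))/\|r\|^2$, together with the orthogonality relations $r\cdot\hat w=0$ and $(v-\hat v)\cdot\hat w=0$, after which the identity reduces to term-matching. The only cosmetic difference is that the paper carries the scalar $\alpha$ through the computation while you eliminate it in favor of $r\cdot a$, but the substance is identical.
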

\begin{proof} Since $v-\hat{v}$ is orthogonal to $L(Z)$, its projection on $L(Z,x)$ is the same as its projection onto $L(x-\hat{x})$. Suppose that this projection is equal to $a(x-\hat{x})=a~r$. Then we have,
\al{
	v - \hat{v} = a~r + \br{v-v\ust},\label{eq:10}
}
where the second quantity in the r.h.s. above is the component that is orthogonal to $L(Z,x)$. Since the vectors $r$ and $v-v\ust$ are orthogonal,
\al{
	\|v - \hat{v}\|^2 = a^2~\|r\|^2 + \|v-v\ust\|^2.\label{eq:11}
}
Upon taking dot product with the vector $r$ on both sides of~\eqref{eq:10}, we get,
\al{
	r \cdot \br{v-\hat{v}} = a~\|r\|^2.
} 
This gives,
\al{
	r\cdot \br{v-\hat{v}+w-\hat{w}} = a \|r\|^2 + r\cdot (w-\hat{w}).
}
Note that since $r = x-\hat{x}$, it is orthogonal to $Z$, hence it is also orthogonal to $\hat{w}$ so that we have $r\cdot \hat{w}=0$. Upon substituting this into the above relation, we get
\al{
	r\cdot \br{v-\hat{v}+w-\hat{w}} = a \|r\|^2 + r\cdot w.
}
Taking squares on both sides,
\al{
	\left[r\cdot \br{v-\hat{v}+w-\hat{w}}\right]^2 = a^2 \|r\|^4 + \br{r\cdot w}^2 + 2a\|r\|^2 \br{r\cdot w}.\label{eq:13}
}

Now,
\al{
	\|(v-\hat{v}) + (w-\hat{w})\|^2 & = \|v-\hat{v}\|^2 + \|w-\hat{w}\|^2 + 2\br{v-\hat{v}}\cdot \br{w-\hat{w}}\notag\\
	&  = \|v-\hat{v}\|^2 + \|w-\hat{w}\|^2 + 2\br{v-\hat{v}}\cdot w\notag\\
	& = a^2~\|r\|^2 + \|v-v\ust\|^2+ \|w-\hat{w}\|^2 + 2\br{v-\hat{v}}\cdot w\notag\\
	& = a^2~\|r\|^2 + \|v-v\ust\|^2+ \|w-\hat{w}\|^2 + 2a~r\cdot w+ \br{v-v\ust}\cdot w,\label{eq:12}
}
where the second equality follows since $v-\hat{v}$ is orthogonal to $L(Z)$, and hence $\br{v-\hat{v}}\cdot \hat{w}=0$, the third follows from~\eqref{eq:11}, and the last follows from~\eqref{eq:10}.~Upon multiplying
~\eqref{eq:12} by $\|r\|^2$ and subtracting~\eqref{eq:13} from it, we get,
\al{
	&\|r\|^2\|(v-\hat{v}) + (w-\hat{w})\|^2 -\left[r\cdot \br{v-\hat{v}+w-\hat{w}}\right]^2\notag\\
	&= 
	\|r\|^2\left\{\|v-v\ust\|^2 + \| w - \hat{w}\|^2+2\br{v-v\ust}\cdot w \right\} - \br{r\cdot w}^2.
}
This completes the proof.
\end{proof}

\begin{theorem}\label{th:proj_3}
If $\{w_t\}$ satisfies Assumption~\ref{assum:sub_gaussian}, then on $\cG_{\pj}$~\eqref{ineq:52} we have the following for all $t$,
	\al{
		& \|x\ut-\hat{\hat{x}}\ut\|^2 \ge \notag \\
		& \frac{\|r\|^2\left\{\|v-v\ust\|^2 + \| w - \hat{w}\|^2\right\} -  \| x - \hat{x} \|^2 
			\left\{ 1 \vee \sqrt{ \log^{+}\br{\frac{\|x - \hat{x} \|}{\delta}} }
			 \vee 
			 \sqrt{2\log\br{\sum_{s=1}^{t}\sum_{j=1}^{p}z^2_{s,j}} } \right\}}{\|\br{v-\hat{v}}\|^2 + \| w-\hat{w}\|^2 + 2 \| v-\hat{v} \|\left\{ 1 \vee \sqrt{ \log^{+}\br{\frac{\| v-\hat{v} \|}{\delta}} } \vee \sqrt{2\log\br{\sum_{s=1}^{t}\sum_{j=1}^{p}z^2_{s,p}} }  
			\right\}},
	}   
	where $r= x\ut -\hat{x}\ut$, $\hat{\hat{x}}\ut$ is the projection of $x\ut$ onto $L(Z\ut,v\ut+w\ut)$, and $(v\ut)\ust$ is the projection of $v\ut$ onto $L(Z\ut,x\ut)$. If the assumption on $\{w_t\}$ is replaced by Assumption~\ref{assum:sub_gaussian}-(i), then the same conclusion holds on $\cG_{\pj}$~\eqref{ineq:52}.
	
\end{theorem}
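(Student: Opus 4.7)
The plan is to combine the orthogonal decomposition of Proposition~\ref{prop:3} with the algebraic identity of Proposition~\ref{prop:4}, and then apply Theorem~\ref{th:proj_1} to convert the two stochastic cross terms into deterministic upper bounds on the event $\cG_{\pj}$. From Proposition~\ref{prop:3},
\[
\|x-\hat{\hat{x}}\|^2 \cdot \|v+w-\hat{v}-\hat{w}\|^2 = \|r\|^2\,\|v+w-\hat{v}-\hat{w}\|^2 - \bigl[r\cdot(v+w-\hat{v}-\hat{w})\bigr]^2,
\]
so it suffices to upper bound the denominator and lower bound the right-hand numerator in the shape of the stated inequality.

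For the denominator, expand
\[
\|v+w-\hat{v}-\hat{w}\|^2 = \|v-\hat{v}\|^2 + \|w-\hat{w}\|^2 + 2(v-\hat{v})\cdot w,
\]
where the cross-term simplifies via $v-\hat{v}\perp L(Z)\ni\hat{w}$. I would then apply Theorem~\ref{th:proj_1} with $v-\hat{v}$ playing the role of $x-\hat{x}$: on $\cG_{\pj}$ this controls $(v-\hat{v})\cdot w$ by $\|v-\hat{v}\|$ times precisely the bracketed max appearing in the stated denominator, so the two upper bounds combine to yield the denominator exactly as written.

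For the numerator, I would invoke Proposition~\ref{prop:4} to rewrite
\[
\|r\|^2\,\|v+w-\hat{v}-\hat{w}\|^2 - \bigl[r\cdot(v+w-\hat{v}-\hat{w})\bigr]^2 = \|r\|^2\bigl\{\|v-v\ust\|^2 + \|w-\hat{w}\|^2 + 2(v-v\ust)\cdot w\bigr\} - (r\cdot w)^2.
\]
A second application of Theorem~\ref{th:proj_1}, this time to $r = x-\hat{x}$, bounds $(r\cdot w)^2$ on $\cG_{\pj}$ and produces exactly the $\|x-\hat{x}\|^2\{\cdot\}$ correction shown in the theorem's numerator. The remaining piece, $2\|r\|^2(v-v\ust)\cdot w$, is the main technical obstacle: since $v-v\ust\perp L(Z,x)\supset L(Z)$, one has $(v-v\ust)\cdot w = (v-v\ust)\cdot(w-\hat{w})$, and I would absorb it by completing the square
\[
\|v-v\ust\|^2 + \|w-\hat{w}\|^2 + 2(v-v\ust)\cdot(w-\hat{w}) = \|(v-v\ust)+(w-\hat{w})\|^2 \ge 0,
\]
so that no loss in the $\|v-v\ust\|^2+\|w-\hat{w}\|^2$ contribution is incurred (alternatively, one can reapply Theorem~\ref{th:proj_1} to the enlarged design $[Z,x]$). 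Since every stochastic estimate in the argument is a single invocation of Theorem~\ref{th:proj_1}, no new exceptional event is introduced, so the whole chain holds on $\cG_{\pj}$ (with the additional restriction to $\cG_w$ under Assumption~\ref{assum:sub_gaussian}, which is exactly where Theorem~\ref{th:proj_1} is valid in the sub-Gaussian regime). Dividing the lower-bounded numerator by the upper-bounded denominator then gives the claimed bound on $\|x-\hat{\hat{x}}\|^2$.
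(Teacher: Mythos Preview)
Your overall route matches the paper exactly: Proposition~\ref{prop:3} for the ratio form, Proposition~\ref{prop:4} for the numerator identity, Theorem~\ref{th:proj_1} on $(v-\hat v)\cdot w$ for the denominator and on $(r\cdot w)^2$ for the numerator. The gap is in your handling of the remaining cross term $2\|r\|^2\,(v-v^\star)\cdot w$.

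Your completing-the-square argument does not do what you claim. The identity
\[
\|v-v^\star\|^2 + \|w-\hat w\|^2 + 2(v-v^\star)\cdot(w-\hat w) = \|(v-v^\star)+(w-\hat w)\|^2 \ge 0
\]
only gives a lower bound of \emph{zero} for the entire brace in Proposition~\ref{prop:4}; it does not lower-bound the brace by $\|v-v^\star\|^2+\|w-\hat w\|^2$. With that substitution the numerator becomes $\|r\|^2\cdot 0 - (r\cdot w)^2 \le 0$, which is useless. The claim that ``no loss in the $\|v-v^\star\|^2+\|w-\hat w\|^2$ contribution is incurred'' is precisely backwards: those terms are absorbed entirely by the square and disappear from the lower bound.

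What the paper actually does is your parenthetical alternative: since $v-v^\star\perp L(Z,x)$, it applies Theorem~\ref{th:proj_1} with the enlarged design $[Z,x]$ to get
\[
|(v-v^\star)\cdot w| \le \|v-v^\star\|\left\{1\vee \log^{+}(\|v-v^\star\|)\vee 2\log\Bigl(\sum_{s,j}z_{s,j}^2+\sum_s x_s^2\Bigr)\right\}^{1/2},
\]
and then passes to the stated numerator lower bound. So the route you should commit to is the Theorem~\ref{th:proj_1} application on $[Z,x]$, not the square completion.
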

\begin{proof} We note that,
\al{
	\|x-\hat{\hat{x}}\|^2 & = \|r\|^2 - \|\hat{\hat{r}} \|^2\notag\\
	& = \frac{\|r\|^2\| v+w-\hat{v}-\hat{w}\|^2-\left[r\cdot \br{v+w-\hat{v}-\hat{w}}\right]^2}{\| v+w-\hat{v}-\hat{w}\|^2},\label{eq:14}
}
where the first equality follows from~\eqref{eq:8}, while the second one follows from Proposition~\ref{prop:3}. 

Next, we derive an upper-bound on the denominator of the above expression. We have,
\al{
	\| \br{v-\hat{v}}	+ \br{w-\hat{w}}\|^2 &= \|\br{v-\hat{v}}\|^2 + \| w-\hat{w}\|^2+2\br{v-\hat{v}}\cdot \br{ w-\hat{w}}\notag\\
	&\le \|\br{v-\hat{v}}\|^2 + \| w-\hat{w}\|^2\notag \\
	& + 2 \| v-\hat{v} \|\left\{ 1 \vee \log^{+}\br{\frac{\| v-\hat{v} \|}{\delta} } \vee 2\log\br{\sum_{s=1}^{t}\sum_{j=1}^{p}z^2_{s,j}}
\right\}^{1\slash 2},\label{ineq:3}
}
where the inequality follows from Theorem~\ref{th:proj_1}. From Proposition~\ref{prop:4}, the numerator in~\eqref{eq:14} can be bounded as follows,
\al{
	& \|r\|^2\| v+w-\hat{v}-\hat{w}\|^2-\left[r\cdot \br{v+w-\hat{v}-\hat{w}}\right]^2 \notag\\
	& = \|r\|^2\left\{\|v-v\ust\|^2 + \| w - \hat{w}\|^2+2\br{v-v\ust}\cdot w \right\} - \br{r\cdot w}^2.\label{eq:15}
}
The terms $ \br{r\cdot w}^2$ and $\Big| \br{v-v\ust}\cdot w \Big|$ can be bounded using Theorem~\ref{th:proj_1} as follows,
\al{
	\br{r\cdot w}^2 & = \left[\br{x - \hat{x}}\cdot w \right]^2 \notag\\
	& \le  \| x - \hat{x} \|^2 \left\{ 1 \vee \log^{+}\br{\frac{\| x - \hat{x} \|}{\delta }} \vee 2\log\br{\sum_{s=1}^{t}\sum_{j=1}^{p}z^2_{s,j}}
	 \right\},
}
and,
\al{
	\Big| \br{v-v\ust}\cdot w \Big| \le \| v-v\ust \|\left\{ 1 \vee \log^{+}(\| v-v\ust \|) \vee 2\log\br{\sum_{s=1}^{t}\sum_{j=1}^{p}z^2_{s,j} + \sum_{s=1}^{t}x^2_s} \right\}^{1\slash 2}.
}
Upon substituting these into~\eqref{eq:15}, the numerator in~\eqref{eq:14} can be lower-bounded as follows,
\al{
	& \|r\|^2\| v+w-\hat{v}-\hat{w}\|^2-\left[r\cdot \br{v+w-\hat{v}-\hat{w}}\right]^2 \notag\\
	& \ge  \|r\|^2\left\{\|v-v\ust\|^2 + \| w - \hat{w}\|^2\right\} -  \| x - \hat{x} \|^2 \left\{ 1 \vee \log^{+}
	\br{\| x - \hat{x} \|} \vee 2\log\br{\sum_{s=1}^{t}\sum_{j=1}^{p}z^2_{s,j}}
 \right\}.~\label{ineq:8}
}
Substituting the bounds ~\eqref{ineq:3} and~\eqref{ineq:8} into~\eqref{eq:14}, we get
\al{
	& \|x-\hat{\hat{x}}\|^2 \ge \notag\\
	& \frac{\|r\|^2\left\{\|v-v\ust\|^2 + \| w- \hat{w}\|^2\right\} -  \| x - \hat{x} \|^2 \left\{ 1 \vee \log^{+}
		\br{\frac{\| x - \hat{x} \|}{\delta}}	
		\vee 2\log\br{\sum_{s=1}^{t}\sum_{j=1}^{p}z^2_{s,j}}  \right\}}{\|\br{v-\hat{v}}\|^2 + \| w-\hat{w}\|^2 + 2 \| v-\hat{v} \|\left\{ 1 \vee \log^{+}\br{ \frac{\| v-\hat{v} \|}{\delta} } \vee 2\log\br{\sum_{s=1}^{t}\sum_{j=1}^{p}z^2_{s,j}} \right\}^{1\slash 2}}.
}   
This completes the proof. 
\end{proof} 

	

\subsection{Auxiliary Results}
Recall that $\{w_s\}$ is a martingale difference sequence w.r.t. $\{\cF_s\}$. For $\cj$ a non-empty subset of $\{1,2,\ldots,p\}$, $\tilde{\tau}_{\cj}=\inf \left\{\sum_{s=1}^{\ell}Z_{i,\cj} Z'_{i,\cj} \mbox{ is non-singular }\right\}$, and for $s\ge \tilde{\tau}_{\cj}$
\al{
	V_{s,\cj} = \br{\sum_{k=1}^{s} Z_{k,\cj} Z^{'}_{k,\cj}}^{-1}.\label{def:v_n}
}
Recall $X_t = \sum_{s=1}^{t} x_s Z_{s,\cj}$. Define,
\al{
	s_t :&= \sum_{s=1}^{t} \br{x_s - X^{'}_t V_{t,\cj} Z_{s,\cj}}^2,\label{def:sn}\\
	d_t :&= x_t - X^{'}_t V_{t,\cj} Z_{t,\cj}, and
}
\al{
& \cG_{\pj}:=\{\omega: \mbox{ (1, 2) below hold}\}, \mbox{ where}\notag\\	
& 1)  \sum_{k=1}^{s} \frac{Z^{'}_{k,\cj} V_{k-1} \br{\sum_{i=1}^{k-1}Z_{i,\cj} w_i } w_k}{1+ Z^{'}_{k,\cj} V_{k-1}Z_k} \le \sqrt{\sum_{k=1}^{s} \frac{\br{ Z^{'}_{k,\cj} V_{k-1} \sum_{i=1}^{k-1}Z_{i,\cj} w_i }^2 }{\br{1+ Z^{'}_{k,\cj} V_{k-1}Z_{k,\cj}}^2}     }\notag\\
& \times \sqrt{\log\br{\frac{1}{\delta} \sum_{k=1}^{s} \frac{\br{ Z^{'}_{k,\cj} V_{k-1} \sum_{i=1}^{k-1}Z_{i,\cj} w_i }^2 }{\br{1+ Z^{'}_{k,\cj} V_{k-1}Z_{k,\cj}}^2}  }}, \forall \cj,~s\mbox{ and }\notag\\
& 2) \sum_{s=1}^{t} d_s w_s  \le \sqrt{ \br{\sum_{s=1}^{t} d^{2}_s} \log \br{\frac{\sum_{s=1}^{t} d^{2}_s}{\delta}} }~\forall \cj \subseteq \{1,2,\ldots,p\} ,\quad \forall t
\Bigg\}\notag\\
&3) \Big|\sum_{s=1}^{t} x_s w_s \Big|  \le \sqrt{\sum_s x^2_s} \sqrt{\log \br{\frac{\sum_s x^2_s}{\delta}}},~\forall t.
\label{ineq:52}
}

\begin{lemma}

	\al{
		\bP\br{\cG_{\pj}^c}\le 3\delta.\label{ineq:5}
	}
\end{lemma}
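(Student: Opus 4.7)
The plan is to observe that each of the three events making up $\cG_{\pj}$ is of the form ``a martingale transform driven by $\{w_s\}$ stays below its self-normalized bound for all $t$,'' and then to apply the self-normalized martingale deviation inequality (the version invoked elsewhere in this paper as \eqref{eq:self_normalized}, e.g.\ from \cite{pena2009self, abbasi2011improved}) to each one with failure probability at most $\delta$, followed by a union bound to obtain the claim $\bP(\cG_{\pj}^c) \le 3\delta$.

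For condition (3), I would take $\eta_s = w_s$ and predictable multiplier $X_s = x_s$: since $x_s$ is $\cF_{s-1}$-measurable by hypothesis, $\{x_s w_s\}$ is a martingale difference sequence with respect to $\{\cF_s\}$ and the self-normalized inequality directly yields the stated form $|\sum_s x_s w_s| \le \sqrt{(\sum_s x_s^2)\log((\sum_s x_s^2)/\delta)}$ uniformly in $t$, with probability at least $1-\delta$. For condition (2), the key check is that $d_s = x_s - X_s' V_{s,\cj} Z_{s,\cj}$ is $\cF_{s-1}$-measurable: indeed $Z_{s,\cj}$ and $x_s$ are $\cF_{s-1}$-measurable by assumption, and $X_s,V_{s,\cj}$ are measurable functions of $\{x_i,Z_{i,\cj}\}_{i\le s}$, so $d_s$ is predictable. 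Hence $\{d_s w_s\}$ is a martingale difference sequence and the self-normalized inequality applies with multiplier $d_s$, giving the bound for each fixed $\cj$ with probability at least $1-\delta$; the union bound over the $2^p-1$ non-empty subsets $\cj$ of $\{1,\ldots,p\}$ is absorbed into $\delta$ up to an additive $p\log 2$ inside the logarithm, which we fold into the definition of the event at negligible cost.

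For condition (1), the multiplier is $\eta_k := Z_{k,\cj}'V_{k-1}\bigl(\sum_{i=1}^{k-1}Z_{i,\cj}w_i\bigr)/(1+Z_{k,\cj}'V_{k-1}Z_{k,\cj})$. I would verify that this is $\cF_{k-1}$-measurable: $Z_{k,\cj}$ is $\cF_{k-1}$-measurable, $V_{k-1}$ is a function of $\{Z_{i,\cj}\}_{i\le k-1}$, and $\sum_{i=1}^{k-1}Z_{i,\cj}w_i$ uses only $w_1,\ldots,w_{k-1}$ and $Z_{1,\cj},\ldots,Z_{k-1,\cj}$, all of which are $\cF_{k-1}$-measurable. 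Consequently $\{\eta_k w_k\}$ is again a martingale difference sequence and the self-normalized inequality delivers the stated inequality, uniformly in $s$ and $\cj$ (again via a union bound over $\cj$ absorbed into the log). A union bound over the three events then produces $\bP(\cG_{\pj}^c) \le 3\delta$.

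The only nontrivial obstacle is the subsetwise uniformity in condition (1) (and (2)): one must either apply the self-normalized inequality with confidence level $\delta/2^p$ per subset and fold the extra $p\log 2$ into the logarithmic radius, or observe that for the way $\cG_{\pj}$ is ultimately used (to lower-bound residual projections in Propositions~\ref{prop:11}--\ref{prop:12}) the bound is needed only for a fixed family of $\cj$ determined by the design matrix $\Psi_t$, in which case no exponential union bound is required. Either route gives the clean $3\delta$ stated; the rest is just bookkeeping of the measurability conditions verified above.
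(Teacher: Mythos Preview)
Your approach is essentially the paper's: the paper's proof is a one-line appeal to the self-normalized martingale bound~\eqref{eq:self_normalized}, and you correctly identify the three martingale transforms, verify predictability of the multipliers, and finish with a union bound. Your explicit discussion of the union bound over subsets $\cj$ is in fact more careful than the paper's terse proof, which does not address that point at all.
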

\begin{proof}
	Follows from the self-normalization bound.~\eqref{eq:self_normalized}
\end{proof}

\begin{lemma}\label{lemma:82-1}
If $\{w_s\}$ satisfies either Assumption~\ref{assum:bounded_noise} or Assumption~\ref{assum:sub_gaussian}, then on $\cG_{\pj}$ \eqref{ineq:52} we have,	
	\al{
		\sum_{k=1}^{s}\frac{\br{Z^{'}_{k,\cj} V_{k-1}\sum_{i=1}^{k-1}Z_{i,\cj} w_i }^2}{1+ Z'_{k,\cj} V_{k-1} Z_{k,\cj}} \le 				
		2 \sigma^2_w	\log\br{\lm_{\max}\br{\sum_{k=1}^{s} Z_{k,\cj} Z'_{k,\cj} }  }+8\log\br{\frac{1}{\delta}} +\log 8~\forall s.
	}
\end{lemma}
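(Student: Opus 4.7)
The plan is to derive a self-normalized telescoping identity for the scalar process $M_s^{\top} V_s M_s$, where $M_s := \sum_{i=1}^s Z_{i,\cj} w_i$, and then close the resulting implicit inequality for the target quantity $T_s := \sum_{k=1}^s \beta_k^2/(1+\alpha_k)$ using condition~(1) of $\cG_{\pj}$ together with the noise assumption.

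Writing $\alpha_s := Z'_{s,\cj} V_{s-1} Z_{s,\cj}$ and $\beta_s := Z'_{s,\cj} V_{s-1} M_{s-1}$, the Sherman--Morrison identity gives $V_s = V_{s-1} - V_{s-1} Z_{s,\cj} Z'_{s,\cj} V_{s-1}/(1+\alpha_s)$ for $s \geq \tilde{\tau}_{\cj}$. Plugging $M_s = M_{s-1} + Z_{s,\cj} w_s$ into $M_s^{\top} V_s M_s$ and expanding produces the exact one-step recursion
\nal{M_s^{\top} V_s M_s - M_{s-1}^{\top} V_{s-1} M_{s-1} = \frac{2\beta_s w_s + \alpha_s w_s^2 - \beta_s^2}{1+\alpha_s}.}
Summing over $k = 1,\ldots,s$ and discarding the non-negative left-hand side yields
\nal{T_s \leq 2 \sum_{k=1}^s \frac{\beta_k w_k}{1+\alpha_k} + \sum_{k=1}^s \frac{\alpha_k w_k^2}{1+\alpha_k}.}

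I would then bound both right-hand-side terms. Since $(1+\alpha_k)^{-2} \leq (1+\alpha_k)^{-1}$, condition~(1) of $\cG_{\pj}$ immediately gives $\sum_k \beta_k w_k/(1+\alpha_k) \leq \sqrt{T_s \log(T_s/\delta)}$. For the quadratic-variation term, decompose $w_k^2 = \bE[w_k^2 \mid \cF_{k-1}] + \xi_k$ with $\{\xi_k\}$ a martingale difference sequence. The predictable part is bounded by $\sigma^2_w \sum_k \log(1+\alpha_k) = \sigma^2_w \log\br{\det B_s / \det B_{\tilde{\tau}_{\cj}}}$ using $\det B_s = \det B_{s-1}(1+\alpha_s)$, which is in turn upper bounded by $\sigma^2_w \log \lambda_{\max}(B_s)$ up to the dimension $|\cj|$. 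The martingale part $\sum_k \xi_k \alpha_k/(1+\alpha_k)$ has predictable weights in $[0,1]$, and under Assumption~\ref{assum:bounded_noise} or Assumption~\ref{assum:sub_gaussian} it admits a standard sub-exponential Bernstein tail bound contributing $O(\log(1/\delta))$.

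Putting the pieces together gives the implicit inequality $T_s \leq 2\sqrt{T_s \log(T_s/\delta)} + 2\sigma^2_w \log \lambda_{\max}(B_s) + c\log(1/\delta)$. Applying the split $2\sqrt{xy} \leq x/2 + 2y$ together with the elementary inequality $\log T_s \leq T_s/8 + \log 8 - 1$ to absorb the $\log T_s$ term on the right-hand side then yields the claim with the specific constants $8\log(1/\delta)$ and $\log 8$. The main obstacle is keeping the pre-factor $2\sigma^2_w$ in front of $\log \lambda_{\max}(B_s)$ tight while producing these crisp additive constants: the bounded-noise case is immediate via $w_k^2 \leq B_w^2$, whereas the sub-Gaussian case requires a Bernstein-type concentration of $\sum_k (w_k^2 - \bE[w_k^2 \mid \cF_{k-1}]) \alpha_k/(1+\alpha_k)$ with the correct scaling so that the deviation is absorbed into the $O(\log(1/\delta))$ additive term rather than blowing up the leading coefficient.
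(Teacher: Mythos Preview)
Your approach is essentially the same as the paper's. The paper obtains the identical telescoping identity by citing (2.17) of Lai~(1982) rather than deriving it via Sherman--Morrison, drops the PSD term $M_s^{\top} V_s M_s$, bounds the cross term $2\sum_k \beta_k w_k/(1+\alpha_k)$ using condition~(1) of $\cG_{\pj}$ exactly as you do, bounds the predictable part of the quadratic-variation term by $\sigma_w^2 \log \lm_{\max}$, and then closes the resulting implicit inequality $T_s \le \sigma_w^2 \log \lm_{\max} + 2\sqrt{T_s \log(T_s/\delta)}$ by ``algebraic manipulations.'' The only substantive difference is that you are explicit about controlling the martingale remainder $\sum_k \xi_k \alpha_k/(1+\alpha_k)$ via a Bernstein-type bound, whereas the paper writes out the decomposition but then silently omits this term in its displayed implicit inequality; your treatment is therefore slightly more careful on this point, but the route is the same.
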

\begin{proof} 
	The following is essentially (2.17) of~\cite{lai1982least},
\al{
	&\br{\sum_{i=1}^{s} Z^{'}_{i,\cj} w_i}V_{s} \br{\sum_{i=1}^{s} Z_{i,\cj} w_i} + \sum_{k=1}^{s} \frac{\br{ Z^{'}_{k,\cj} V_{k-1} \sum_{i=1}^{k-1}Z_{i,\cj} w_i }^2}{1+Z^{'}_{k,\cj} V_{k-1}Z_{k,\cj}}\notag\\
	&= \sum_{k=1}^{s} Z^{'}_{k,\cj}V_k Z_{k,\cj} ~w^2_k+ 2 \sum_{k=1}^{s} \frac{Z^{'}_{k,\cj} V_{k-1} \br{\sum_{i=1}^{k-1}Z_{i,\cj} w_i } w_k}{1+ Z^{'}_{k,\cj} V_{k-1}Z_{k,\cj}}.
}
Since $V_s$ is positive semi-definite, this yields,
\al{
	&\sum_{k=1}^{s} \frac{\br{ Z^{'}_{k,\cj} V_{k-1} \sum_{i=1}^{k-1}Z_{i,\cj} w_i }^2}{1+Z^{'}_{k,\cj} V_{k-1}Z_{k,\cj}}\notag\\
	&\le \sum_{k=1}^{s} Z^{'}_{k,\cj}V_k Z_{k,\cj} ~w^2_k+ 2 \sum_{k=1}^{s} \frac{Z^{'}_{k,\cj} V_{k-1} \br{\sum_{i=1}^{k-1}Z_{i,\cj} w_i } w_k}{1+ Z^{'}_{k,\cj} V_{k-1}Z_{k,\cj}}.\label{ineq:4}
}
We will derive upper bounds on both the terms in the r.h.s. above. For the first term we have,
\al{
	\sum_{k=1}^{s} Z^{'}_{k,\cj}V_k Z_{k,\cj} ~w^2_k = \sum_{k=1}^{s} Z^{'}_{k,\cj}V_k Z_{k,\cj} ~\sigma^2_w +  \sum_{k=1}^{s} Z^{'}_{k,\cj}V_k Z_{k,\cj} \br{ w^2_k - \bE\left\{ w^2_k|\cF_{k-1}\right\}    }
	.\label{ineq:53}
} 
Now, 
\al{
 \sum_{k=1}^{s} Z^{'}_{k,\cj}V_k Z_{k,\cj} ~\sigma^2_w \le \sigma^2_w	\log\br{\lm_{\max}\br{\sum_{k=1}^{s} Z_{k,\cj} Z'_{k,\cj} }  }.\label{ineq:85}
}
To bound the second term on the r.h.s. of~\eqref{ineq:4}, we note that from the definition of $\cG_{\pj}$,
\al{
& \sum_{k=1}^{s} \frac{Z^{'}_{k,\cj} V_{k-1} \br{\sum_{i=1}^{k-1}Z_{i,\cj} w_i } w_k}{1+ Z^{'}_{k,\cj} V_{k-1}Z_k} \le
\sqrt{\sum_{k=1}^{s} \frac{\br{ Z^{'}_{k,\cj} V_{k-1} \sum_{i=1}^{k-1}Z_{i,\cj} w_i }^2 }{\br{1+ Z^{'}_{k,\cj} V_{k-1}Z_{k,\cj}}^2}     }\notag\\
& \times \sqrt{\log\br{\frac{1}{\delta} \sum_{k=1}^{s} \frac{\br{ Z^{'}_{k,\cj} V_{k-1} \sum_{i=1}^{k-1}Z_{i,\cj} w_i }^2 }{\br{1+ Z^{'}_{k,\cj} V_{k-1}Z_{k,\cj}}^2}  }}\notag\\
& \le \sqrt{\sum_{k=1}^{s} \frac{\br{ Z^{'}_{k,\cj} V_{k-1} \sum_{i=1}^{k-1}Z_{i,\cj} w_i }^2 }{\br{1+ Z^{'}_{k,\cj} V_{k-1}Z_{k,\cj}}}     }\sqrt{\log\br{\frac{1}{\delta} \sum_{k=1}^{s} \frac{\br{ Z^{'}_{k,\cj} V_{k-1} \sum_{i=1}^{k-1}Z_{i,\cj} w_i }^2 }{\br{1+ Z^{'}_{k,\cj} V_{k-1}Z_{k,\cj}}}  }}.\label{ineq:86}
}
Upon substituting~\eqref{ineq:85},~\eqref{ineq:86} into~\eqref{ineq:4}, we get the following relation,
\al{
& \sum_{k=1}^{s}\frac{\br{Z^{'}_{k,\cj} V_{k-1}\sum_{i=1}^{k-1}Z_{i,\cj} w_i }^2}{1+ Z'_{k,\cj} V_{k-1} Z_{k,\cj}}  \le \sigma^2_w	\log\br{\lm_{\max}\br{\sum_{k=1}^{s} Z_{k,\cj} Z'_{k,\cj} }  }\\
& + \sqrt{\sum_{k=1}^{s}\frac{\br{Z^{'}_{k,\cj} V_{k-1}\sum_{i=1}^{k-1}Z_{i,\cj} w_i }^2}{1+ Z'_{k,\cj} V_{k-1} Z_{k,\cj}}  \log\br{\frac{\sum_{k=1}^{s}\frac{\br{Z^{'}_{k,\cj} V_{k-1}\sum_{i=1}^{k-1}Z_{i,\cj} w_i }^2}{1+ Z'_{k,\cj} V_{k-1} Z_{k,\cj}} }{\delta}}}.
} 
The proof is then completed by  algebraic manipulations.
\end{proof}

\begin{lemma}\label{lemma:82_2}
	If $\{w_s\}$ satisfies either Assumption~\ref{assum:bounded_noise} or Assumption~\ref{assum:sub_gaussian}, then, on $\cG_{\pj}$~\eqref{ineq:52}, we have the following bound,
	\al{
		\Big| \sum_{s=1}^{t} & \br{x_s - X^{'}_tV_t Z_{s,\cj}}w_s \Big|   \le  \br{s_t}^{1\slash 2}\notag\\	
		&\times \max \left\{
		\sqrt{ \log \br{\frac{s_t}{\delta}}},\left[2B^2_w	\log\br{\lm_{\max}\br{\sum_{s=1}^{t} Y_s Y'_s }  }\right]^{1\slash 2} \right\},~\forall t.
	}
\end{lemma}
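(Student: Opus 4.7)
The plan is to recognize that the sum $\sum_{s=1}^{t}(x_s - X_t'V_t Z_{s,\cj})w_s$ is the inner product of the OLS residual vector $\mathbf{r} := (I - P_Z)\mathbf{x}$ (with $P_Z = Z_{\cj}(Z_{\cj}'Z_{\cj})^{-1}Z_{\cj}'$) against the noise vector $\mathbf{w}$, and to bound this using the two self-normalized concentration inequalities baked into the definition of $\cG_{\pj}$~\eqref{ineq:52}. By construction $\|\mathbf{r}\|^2 = s_t$, but $\mathbf{r}$ is \emph{not} $\cF_{s-1}$-adapted coordinate-wise (since $X_t, V_t$ depend on the whole sample path up to $t$), so we cannot invoke condition 3) of $\cG_{\pj}$ on the $r_s$'s directly.

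The first step is an RLS decomposition. Using the matrix-inversion-lemma recursion $X_s'V_s = X_{s-1}'V_{s-1} + e_s Z_{s,\cj}'V_s$, where $e_s := x_s - X_{s-1}'V_{s-1}Z_{s,\cj}$ is the one-step innovation, I would write
\[
x_s - X_t'V_t Z_{s,\cj} \;=\; d_s \;-\; \sum_{i>s} e_i\, Z_{i,\cj}'V_i Z_{s,\cj},
\]
and hence, after interchanging the order of summation in the correction,
\[
\sum_{s=1}^{t}(x_s - X_t'V_t Z_{s,\cj})\,w_s \;=\; \underbrace{\sum_{s=1}^{t} d_s w_s}_{=:\,\cM_t} \;-\; \underbrace{\sum_{i=1}^{t} e_i\, Z_{i,\cj}'V_i \,W_{i-1}}_{=:\,\cC_t},\qquad W_{i-1}:=\sum_{j<i} Z_{j,\cj} w_j.
\]
The martingale-transform part $\cM_t$ has $\cF_{s-1}$-measurable coefficients $d_s$, so condition 2) of $\cG_{\pj}$ yields $|\cM_t|\le \sqrt{(\sum d_s^2)\log((\sum d_s^2)/\delta)}$, and by the standard identity $\sum d_s^2 \le s_t$ (since the recursive residuals have weighted squared sum equal to the batch residual sum of squares, with weights $\le 1$), this gives the first branch of the max: $\sqrt{s_t \log(s_t/\delta)}$.

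The second step bounds the correction $\cC_t$ via Cauchy–Schwarz followed by Lemma~\ref{lemma:82-1}. Writing
\[
|\cC_t| \le \sqrt{\sum_{i=1}^{t} e_i^2\bigl(1+Z_{i,\cj}'V_{i-1}Z_{i,\cj}\bigr)} \;\cdot\; \sqrt{\sum_{i=1}^{t} \frac{\bigl(Z_{i,\cj}'V_i W_{i-1}\bigr)^2}{1+Z_{i,\cj}'V_{i-1}Z_{i,\cj}}},
\]
the second factor is exactly the quantity controlled by Lemma~\ref{lemma:82-1} (which in turn uses condition 1) of $\cG_{\pj}$), giving $\lesssim \sqrt{2 B_w^2 \log \lambda_{\max}(\sum Z_{s,\cj}Z_{s,\cj}') + \log(1/\delta)}$. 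For the first factor, using $d_s = e_s/(1+Z_{s,\cj}'V_{s-1}Z_{s,\cj})$ and the telescoping identity relating weighted recursive residuals to $s_t$, one checks $\sum_i e_i^2(1+Z_{i,\cj}'V_{i-1}Z_{i,\cj}) \lesssim s_t$, producing the second branch $\sqrt{s_t\cdot 2 B_w^2 \log \lambda_{\max}}$. Combining the two bounds and absorbing the sum into a maximum (up to the constants implicit in $\lesssim$) gives the stated inequality.

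The main obstacle I anticipate is verifying the two RLS-based identities $\sum d_s^2 \lesssim s_t$ and $\sum e_i^2(1+Z_{i,\cj}'V_{i-1}Z_{i,\cj}) \lesssim s_t$ with clean constants, since these are what convert the generic self-normalized bounds into a bound expressed in terms of $s_t$ rather than $\|\mathbf{x}\|^2$; everything else is Cauchy–Schwarz plus the concentration facts already encoded in $\cG_{\pj}$. The dichotomy in the max reflects the two regimes: when the martingale-transform term dominates we get $\sqrt{\log(s_t/\delta)}$, otherwise the self-normalized correction dominates with $\sqrt{B_w^2\log\lambda_{\max}}$.
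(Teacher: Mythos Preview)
Your overall architecture matches the paper's proof exactly: the same decomposition into a martingale transform $\cM_t=\sum d_s w_s$ and a correction $\cC_t$, the same use of condition 2) of $\cG_{\pj}$ for $\cM_t$, and the same Cauchy--Schwarz plus Lemma~\ref{lemma:82-1} for $\cC_t$. In fact your correction term is identical to the paper's, since $e_i\,Z_{i,\cj}'V_i = d_i\,Z_{i,\cj}'V_{i-1}$ (both equal $e_i Z_{i,\cj}'V_{i-1}/(1+Z_{i,\cj}'V_{i-1}Z_{i,\cj})$).

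The genuine gap is in your Cauchy--Schwarz split for $\cC_t$. You propose the weights $a_i=e_i\sqrt{1+Z_{i,\cj}'V_{i-1}Z_{i,\cj}}$ and then claim $\sum_i e_i^2(1+Z_{i,\cj}'V_{i-1}Z_{i,\cj})\lesssim s_t$. This is false: since $e_i=d_i(1+Z_{i,\cj}'V_{i-1}Z_{i,\cj})$, your sum equals $\sum_i d_i^2(1+Z_{i,\cj}'V_{i-1}Z_{i,\cj})^3$, which can be arbitrarily larger than $s_t=s_{\tilde\tau_{\cj}}+\sum_i d_i^2(1+Z_{i,\cj}'V_{i-1}Z_{i,\cj})$ whenever $Z_{i,\cj}'V_{i-1}Z_{i,\cj}$ is not uniformly bounded. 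The fix is to reverse the weight: take $a_i=e_i/\sqrt{1+Z_{i,\cj}'V_{i-1}Z_{i,\cj}}$ and $b_i=(Z_{i,\cj}'V_i W_{i-1})\sqrt{1+Z_{i,\cj}'V_{i-1}Z_{i,\cj}}$. Then
\[
\sum_i a_i^2=\sum_i \frac{e_i^2}{1+Z_{i,\cj}'V_{i-1}Z_{i,\cj}}=\sum_i d_i^2\,(1+Z_{i,\cj}'V_{i-1}Z_{i,\cj})\le s_t,
\]
while $\sum_i b_i^2=\sum_i (Z_{i,\cj}'V_{i-1}W_{i-1})^2/(1+Z_{i,\cj}'V_{i-1}Z_{i,\cj})$, which is \emph{exactly} the quantity bounded in Lemma~\ref{lemma:82-1}. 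This is precisely the paper's split, phrased in the $d_s$-form: it applies Cauchy--Schwarz to $\sum d_s\,Z_{s,\cj}'V_{s-1}W_{s-1}$ with weights $\sqrt{1+Z_{s,\cj}'V_{s-1}Z_{s,\cj}}$, invoking the identity $s_t=s_{\tilde\tau_{\cj}}+\sum d_s^2(1+Z_{s,\cj}'V_{s-1}Z_{s,\cj})$ from \cite{lai1979strong} for the first factor. Your instinct that this identity was the crux was right; you just placed the weight on the wrong side.
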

\begin{proof} 
	The following results are essentially Lemma~3 of~\cite{lai1979strong},
	\al{
		\sum_{s=\tilde{\tau}_{\cj}+1}^{t}\br{ x_s - X^{'}_t V_t Z_{s,\cj} }w_s & = \sum_{s=\tilde{\tau}_{\cj}+1}^{t} d_s \left\{ w_s - Z^{'}_{s,\cj} V_{s-1} \br{\sum_{j=1}^{s-1}Z_{j,\cj} w_j} \right\},\label{eq:20}\\
		s_t & = s_{\tilde{\tau}_{\cj}} + \sum_{s=\tilde{\tau}_{\cj}+1}^{t} d^2_s \br{1+Z^{'}_{s,\cj} V_{s-1} Z_{s,\cj}}.\label{eq:26}
	}
We now bound each term on the r.h.s. of~\eqref{eq:20} separately. On $\cG_{\pj}$, by definition we have the following bound on the first term,
\al{
\Big|\sum_s d_s w_s \Big| & \le \sqrt{ \br{\sum_{s=1}^{t} d^{2}_s} \log \br{\frac{\sum_{s=1}^{t} d^{2}_s}{\delta}} }\notag\\
& 	\le \sqrt{ s_t \log \br{\frac{s_t}{\delta}} },\label{ineq:54}
}
where the second inequality follows from~\eqref{eq:26}.~Using the Cauchy-Schwartz inequality,
	\al{
	&	\Bigg| \sum_{s=\tilde{\tau}_{\cj}+1}^{t} d_s Z^{'}_{i,\cj} V_{s-1} \br{\sum_{j=1}^{s-1} Z_{j,\cj} w_j } \Bigg| \notag \\
	&\le \left\{\sum_{s=\tilde{\tau}_{\cj}+1}^{t} d^2_s \br{1+Z^{'}_{s,\cj} V_{s-1} Z_{s,\cj}}  \right\}^{1\slash 2} \cdot \left\{ \sum_{s=\tilde{\tau}_{\cj}+1}^{t}\frac{\br{ Z^{'}_{s,\cj} V_{s-1} \sum_{j=1}^{s-1} Z_{j,\cj} w_j}^{2}}{\br{1+ Z^{'}_{s,\cj} V_{s-1} Z_{s,\cj} }}  \right\}^{1\slash 2}\notag\\
		& \le \br{s_n}^{1\slash 2}\cdot \left\{ \sum_{s=\tilde{\tau}_{\cj}+1}^{t}\frac{\br{ Z^{'}_{s,\cj} V_{s-1} \sum_{j=1}^{s-1} Z_{j,\cj} w_j}^{2}}{\br{1+ Z^{'}_{i,\cj} V_{s-1} Z_{i,\cj} }}  \right\}^{1\slash 2}\notag\\
		&\le \br{s_n}^{1\slash 2} \cdot \left[2 \sigma^2_w	\log\br{\lm_{\max}\br{\sum_{k=1}^{s} Z_{k,\cj} Z'_{k,\cj} }  }+8\log\br{\frac{1}{\delta}} +\log 8\right]^{1\slash 2},\label{ineq:55}
	}
	where the last inequality follows from Lemma~\ref{lemma:82-1}. The proof is completed by substituting~\eqref{ineq:54},~\eqref{ineq:55} into~\eqref{eq:20}.
\end{proof}

Under Assumption~\ref{assum:sub_gaussian} the process $\{w_t\}$ is conditionally sub-Gaussian, and hence the process $|w^2_t - \bE \br{w^2_{t-1}|\cF_{t-1} }|$ is sub-exponential~\cite{vershynin2018high}, i.e. we have
	\al{
		\bP\br{|w^2_t - \bE \br{w^2_{t-1}|\cF_{t-1} } |> x} \le \exp\br{-cx},~\forall x>0,~\mbox{ for some }c>0.\label{def:c}
	}
\begin{lemma}\label{lemma:unbound_var}
	Define,
	\al{
		\cG_{w^2_{UB}} := \left\{	\sum_{s=1}^{t} w^2_s \ge  c_1 t -  \sqrt{2t \br{ \frac{1}{c}\log \br{\frac{T}{\eps'}} }^2 \log\br{\frac{1}{\delta}}},~\forall t=1,2,\ldots,T \right\}.\label{def:G4}
	}
Let $\{w_t\}$ satisfy Assumption~\ref{assum:noise_var} and Assumption~\ref{assum:sub_gaussian}.~Then, 
	\al{
		\bP\br{	\cG_{w^2_{UB}} } \ge 1-\br{\delta + \eps'}.
	}
\end{lemma}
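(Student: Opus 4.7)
The plan is to exploit the standard decomposition of $\sum_s w_s^2$ into its predictable part plus a martingale and then to combine a lower bound from Assumption~\ref{assum:noise_var} with a concentration bound for the fluctuation part. Specifically, I would write
\[
\sum_{s=1}^{t} w_s^2 \;=\; \sum_{s=1}^{t} \bE\br{w_s^2 \mid \cF_{s-1}} \;+\; M_t, \qquad M_t := \sum_{s=1}^{t}\Bigl(w_s^2 - \bE\br{w_s^2 \mid \cF_{s-1}}\Bigr),
\]
so that $\{M_t\}$ is a martingale with respect to $\{\cF_t\}$. By Assumption~\ref{assum:noise_var}, the predictable part is bounded below by $c_1 t$ deterministically, so the entire task reduces to producing a one-sided, uniform-in-$t$ lower bound on $M_t$ of the form $-\sqrt{2t\,B^2 \log(1/\delta)}$ with $B := \tfrac{1}{c}\log(T/\eps')$.

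Next I would handle the fact that the martingale differences $X_s := w_s^2 - \bE\br{w_s^2 \mid \cF_{s-1}}$ are only sub-exponential (not bounded), which is the main obstacle to applying standard bounded-difference inequalities. The sub-exponential tail \eqref{def:c} together with a union bound over $s=1,\dots,T$ shows that the truncation event
\[
\cA := \bigl\{\,|X_s| \le B \text{ for all } s=1,\dots,T\,\bigr\}
\]
satisfies $\bP(\cA^c) \le T\,\exp(-cB) = \eps'$ by the choice $B=\tfrac{1}{c}\log(T/\eps')$. Thus, outside a set of probability $\eps'$, the martingale $M$ has increments bounded by $B$ in absolute value.

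On $\cA$, I would then invoke a bounded-difference martingale concentration inequality (Azuma--Hoeffding, possibly in its maximal/Freedman form to get the uniform-in-$t$ statement) to obtain
\[
\bP\!\left(\exists\, t \le T:\; M_t < -\sqrt{2t\,B^2 \log(1/\delta)}\right) \;\le\; \delta.
\]
A union bound with the truncation event then yields $\bP(\cG_{w^2_{UB}}) \ge 1 - (\delta+\eps')$ as claimed. The one delicate point is obtaining the $\sqrt{t}$ (rather than $\sqrt{T}$) uniform scaling with only a $\log(1/\delta)$ factor; this is handled either by a peeling argument over dyadic windows $t \in [2^k,2^{k+1})$ (with the extra $\log T$ factor absorbed into the constant $B$ via the $\log(T/\eps')$ term) or by a direct appeal to Freedman's inequality, with the predictable quadratic variation bounded by $t\,B^2$ on $\cA$. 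Apart from this time-uniformity bookkeeping, the argument is entirely standard: decomposition, truncation of sub-exponential increments, and Azuma--Hoeffding.
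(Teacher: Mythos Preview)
Your proposal is correct and matches the paper's approach: both decompose $\sum_s w_s^2$ into its predictable part (bounded below by $c_1 t$ via Assumption~\ref{assum:noise_var}) plus a martingale, truncate the sub-exponential increments via the tail bound~\eqref{def:c} and a union bound over $s\le T$ (costing $\eps'$), and then apply Azuma--Hoeffding. The paper packages the last two steps by invoking the ``unbounded Azuma--Hoeffding'' Theorem~\ref{th:azuma_unbounded} directly and handles the uniform-in-$t$ statement by a union bound over $t\le T$, whereas you mention peeling or Freedman as alternatives; these are cosmetic differences.
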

\begin{proof}
	Using the union bound on individual increments and~\eqref{def:c}, we obtain that the following occurs w.p. less than $\eps'$,
	\al{
		\bP\br{	\exists s \in \{1,2,\ldots,T\}~\mbox{ s.t. } \Big| w^2_s - \bE\br{w^2_s|\cF_{s-1}}\Big| > \frac{1}{c}\log \br{\frac{T}{\eps'}} }\le \eps',\label{ineq:72}
	}
	where $\eps'>0$.

	Upon letting $B= \frac{1}{c}\log \br{\frac{T}{\eps'}}$ in Theorem~\ref{th:azuma_unbounded}, and applying~\eqref{ineq:72}, the union bound over $t$, and letting $\eps' \leftarrow \frac{\eps}{T},~\delta \leftarrow \frac{\delta}{T}$, we get,
		\al{
		\bP\br{	\left\{	\sum_{s=1}^{t} \left\{w^2_s- \bE\br{w^2_s|\cF_{s-1}} \right\} < \sqrt{2t \br{ \frac{1}{c}\log \br{\frac{T}{\eps'}} }^2 \log\br{\frac{1}{\delta}}},~\forall t=1,2,\ldots,T	\right\}}	\le \delta + \eps'.
	}
The proof is then completed by noting that $\bE\br{w^2_t|\cF_{t-1}} >c_1$.
\end{proof}

\begin{lemma}\label{lemma:bound_var}
	Let $\{w_t\}$ satisfy Assumption~\ref{assum:noise_var} and Assumption~\ref{assum:sub_gaussian}. Define,
	\al{
		\cG_{w^2_B} := \left\{	\sum_{s=1}^{t} w^2_s \ge  c_1 t -  \sqrt{2t \br{2B_3}^2 \log\br{\frac{1}{\delta}}}  \right\}.\label{def:G_tilde4}
	}
	Then, 
	\al{
		\bP\br{	\cG_{w^2_B}  } \ge 1-\delta.
	}	
\end{lemma}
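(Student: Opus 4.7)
The plan is to apply the Azuma--Hoeffding inequality to the bounded martingale difference sequence $\{w_s^2 - \bE[w_s^2 \mid \cF_{s-1}]\}_{s\ge 1}$ and then combine the concentration bound with the conditional variance lower bound $\bE[w_s^2 \mid \cF_{s-1}] \ge c_1$ from Assumption~\ref{assum:noise_var}. This is completely parallel to the unbounded case handled in Lemma~\ref{lemma:unbound_var}, except that the boundedness assumption removes the need for the preliminary union-bound step that introduced the extra $\eps'$ term and the $\log(T/\eps')$ factor.

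First, I would verify that $\{w_s^2 - \bE[w_s^2 \mid \cF_{s-1}]\}$ is indeed a martingale difference sequence with respect to $\{\cF_s\}$; this is immediate from the definition of conditional expectation and the fact that $w_s$ is $\cF_s$-measurable. Under Assumption~\ref{assum:bounded_noise}, $|w_s| \le B_w$ a.s., which gives $w_s^2 \in [0, B_w^2]$, and hence $|w_s^2 - \bE[w_s^2 \mid \cF_{s-1}]| \le B_w^2$ a.s., so the increments of the martingale $M_t := \sum_{s=1}^t \{w_s^2 - \bE[w_s^2 \mid \cF_{s-1}]\}$ are uniformly bounded (with $B_3$ playing the role of a bound on these increments; one may take $B_3 = B_w^2/2$, or any constant consistent with the choice made elsewhere in the paper).

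Applying the one-sided Azuma--Hoeffding inequality to $M_t$ gives
\al{
\bP\!\left(M_t \le -\sqrt{2t(2B_3)^2 \log(1/\delta)}\right) \le \delta.
}
On the complementary event, we use the lower bound $\bE[w_s^2 \mid \cF_{s-1}] \ge c_1$ from Assumption~\ref{assum:noise_var} to conclude
\al{
\sum_{s=1}^t w_s^2 \;=\; M_t + \sum_{s=1}^t \bE[w_s^2 \mid \cF_{s-1}] \;\ge\; c_1 t - \sqrt{2t(2B_3)^2 \log(1/\delta)},
}
which is precisely the defining inequality of $\cG_{w^2_B}$.

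There is no major obstacle here; the only subtlety is that the lemma statement as written appears to omit the union bound over $t=1,\ldots,T$ that was present in Lemma~\ref{lemma:unbound_var}, so one must decide whether $\cG_{w^2_B}$ is stated for a single fixed $t$ (in which case the proof is a direct application of Azuma--Hoeffding) or uniformly over $t$ (in which case the constant inside the square root picks up an additional $\log T$ factor from a union bound, and $\delta$ should be replaced by $\delta/T$ before applying Azuma--Hoeffding). In either reading, the argument is routine, and the tighter exponential tail from boundedness is exactly what removes the $\log(T/\eps')$ inflation seen in the unbounded case.
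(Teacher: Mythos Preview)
Your proposal is correct and matches the paper's approach exactly: the paper's proof is the single line ``Follows from Azuma--Hoeffding after noting that $|w^2_t - \bE(w^2_t\mid\cF_{t-1})|$ is bounded,'' which is precisely your argument. Your observation about the missing union bound over $t$ (and the apparent mismatch with Assumption~\ref{assum:sub_gaussian} in the hypothesis, where Assumption~\ref{assum:bounded_noise} is what is actually used) is well taken; the paper is silent on this point and your reading of a fixed-$t$ statement is the one consistent with the stated probability $1-\delta$.
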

\begin{proof} Follows from Azuma-Hoeffding~\ref{th:azuma_unbounded} after noting that $|  w^2_t - \bE\br{ w^2_t | \cF_{t-1}} |$ is bounded by $2B_w$.
\end{proof}
\section{Duration of the first exploratory episode $H_1(\Theta,\eps)$} \label{sec:first_epi}
We set
\al{
H_1(\Theta,\eps)  =  t\ust_1(\rho) \vee t\ust_2(\rho) \vee t\ust_3(\rho)\vee t\ust_5(\eps_1) \vee t\ust_6(\eps_3,\delta).
}
Though these have been defined earlier, we repeat these for convenience,
 \al{
		t\ust_1(\rho) = \inf \left\{ t\in \bN: B_2 \frac{\log N\ui_t}{\sqrt{N\ui_t}} + \cE(t;\te\ust,\delta) < 1, C_1 \rho^{t} \| Y_{0}\| < B_u \right\},
	}	 

	\al{
		&		t\ust_2(\rho) = \inf\Big\{t\in\bN: \notag\\
		&						\left[ B_2 \frac{\log N^{(\cI)}_t}{\sqrt{N^{(\cI)}_t}} + \cE(t;\te\ust,\delta)\right] \cdot \left[b_1\br{p+ 1 + \frac{C_1}{1-\rho}  \left\{1  +  \sum_{\ell=1}^{q} |b_\ell|   \right\} }\right] \le \frac{\delta^2_1}{2},\notag\\
		&\qquad C_1 \rho^{t} \| Y_{0}\| < \delta_1 B_u\Big\}.
	}
 \al{
t\ust_3 = \inf \left\{t \in\bN: \cE(t;\te\ust,\delta) \le \frac{b_1}{2},\mbox{ and }\frac{B_2 \sqrt{\log N\ui_{\ell}}}{2}\ge 2\right\}.
}
\al{
	t\ust_5(\eps_1) & = \frac{2\br{C_1\| Y_{0}\| }^2+2\br{\frac{B_u C_1}{1-\rho}  \left\{1+  \sum_{\ell=1}^{q} |b_\ell|  \right\}}^2 + q B^2_u}{\min\left\{\frac{\sigma^2_e}{2q} \min_{\ell \in \{1,2,\ldots,p\} }  \beta^{\ell\slash 2}_1, \frac{c_1}{4} \min_{\ell \in \{1,2,\ldots,p\}} \beta^{\ell\slash 2}_2\right\}\eps_1},~\eps_1 >0\\
	t\ust_6(\eps_3,\delta) & = \inf \left\{t \in\bN: \cE(t;\te\ust,\delta) \le b_1 \eps_3 \right\},~\eps_3>0.
 }
\section{Choosing $B_u$, the threshold for clipping inputs}\label{sec:threshold}
We begin with few definitions.
\begin{definition}\label{def:delta_1}
	Let $\te\in\bR^{p+q}$ be a possible parameter associated with ARX~\eqref{def:arx}. Let $(\delta_1(\te),B_u(\te))$ be a tuple that satisfies the following set of inequalities,
	\al{
		\br{\|\lm(\te)\|+1} \br{p \delta_1(\te)^2 +q \delta_1(\te) } &< 1,\label{cond:delta_1}	\\
		\frac{1}{3}\left[ \frac{C_1(\te)}{b_1(\te)\br{1-\rho(\te)}}  \left\{ \frac{1}{2B_u(\te)} + \sum_{\ell=1}^{p}| a_\ell(\te) |   \right\}\right]^{-1} & \ge \delta_1(\te),\label{cond:delta_3}\\
		\br{\|\lm(\te)\|+1} \left[\br{1 +  \frac{C_1(\te)}{1-\rho(\te)} \left\{1+  \sum_{\ell=1}^{q} |b_\ell(\te)|  \right\}}+q\right]\delta_1(\te) & < 1.\label{cond:delta_4}
	}
	\al{
		\frac{B_w}{B_u(\te)} &\le \frac{\delta_1(\te)}{2},~\frac{B_w}{B_u(\te)} \le  \frac{\delta_1(\te)}{\br{	1+ \frac{C_1(\te)}{1-\rho(\te)} \left\{1+  \sum_{\ell=1}^{q} |b_\ell(\te)|  \right\}}},\label{cond:B1_B3_4_s}\\
		\frac{B_w}{B_u(\te)} &\le  \delta_1(\te),\label{cond:B1_B3_4}\\
		 B_u(\te)&> 1.
	}
	Here $C_1(\te),\rho(\te)$ are as in Lemma~\ref{lemma:state_bound}.~When~\eqref{cond:delta_1}-\eqref{cond:B1_B3_4} are required to hold for every $\te\in\Theta$, denote a solution to those inequalities by $(\delta_1(\Theta),B_u(\Theta))$ and denote $\delta_1 = \delta_1(\Theta),B_u = B_u(\Theta)$.
\end{definition}
\emph{Obtaining $\delta_1,B_u$}: Our interest will be in obtaining a $(\delta_1(\Theta),B_u(\Theta))$.~A solution to the above set of inequalities can be found using the following set of simplified inequalities. It can be verified that a solution to these inequalities also satisfies the above set of inequalities.~Define
\al{
	M(\Te) := \sup_{\te\in\Theta} \frac{C_1(\te)}{1-\rho(\te)}  \left\{1+  \sum_{\ell=1}^{q} |b_\ell(\te)|  \right\}.\label{def:M_th}
}

We use 
\al{
	B_u = \frac{B_w}{\delta^2_1}\cdot  \br{1+M(\Theta)},\label{eq:24}
}
where $\delta_1$ satisfies the following inequalities,
\al{
	\delta_1 &\le \frac{1}{(p+q) \sup_{\te\in\Theta} \br{1+ \|\lm(\te)\| } }\label{deltaineq_1}\\
	\delta_1 & \le    	 \frac{1}{3}  \inf_{\te\in\Theta}\left[\frac{b_1(\te)\br{1-\rho(\te)}}{C_1(\te)}\right]
	\left[   \frac{\delta_1^2}{2B_w M(\Te)} +\sup_{\te\in\Theta} \sum_{\ell=1}^{p}| a_\ell(\te) | \right]^{-1} \label{deltaineq_2}\\
	1 &\ge		\br{\sup_{\te\in\Theta}\|\lm(\te)\|+1} \left[M(\Te)+q\right] \delta_1 \label{deltaineq_3}
}

\section{Unbounded Noise Case}\label{sec:unbounded_noise}
Recall that we assumed the following holds,
\al{
		\sup_{t} \bE\left\{\exp \br{\gamma |w_t |} \Big| \cF_{t-1} \right\} \le \exp\br{\gamma^2\sigma^2 \slash 2}, a.s.   ~\forall t.
	}
Since the noise is not bounded, we will restrict our analysis to the following set.
\begin{lemma}\label{lemma:noise_bound}
	Define,
	\al{
		\cG_w := \Big\{ |w_t| \le \sigma\sqrt{\log\br{\frac{T}{\delta}}},~\forall t=1,2,\ldots,T \Big\}
		,	\label{def:g1}
	}
	where $\sigma>0$.~Then,
	\al{
		\bP\br{\cG^c_w} \le \delta.
	}
\end{lemma}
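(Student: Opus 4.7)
The plan is to apply a conditional Chernoff bound to each $w_t$ using the sub-Gaussian hypothesis~\eqref{def:sub_gauss}, and then take a union bound over $t=1,\ldots,T$.

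First, I would fix $t$ and $\gamma>0$ and apply the conditional Markov inequality:
\begin{align*}
\bP\br{|w_t|>x \,\Big|\, \cF_{t-1}} \le e^{-\gamma x}\,\bE\br{e^{\gamma|w_t|}\,\Big|\,\cF_{t-1}} \le \exp\br{\tfrac{\gamma^2\sigma^2}{2}-\gamma x},
\end{align*}
where the second inequality uses Assumption~\ref{assum:sub_gaussian}. Optimizing the exponent over $\gamma>0$ (choosing $\gamma = x/\sigma^2$) gives the standard tail estimate
\begin{align*}
\bP\br{|w_t|>x\,\Big|\,\cF_{t-1}} \le \exp\br{-\tfrac{x^2}{2\sigma^2}},
\end{align*}
and taking outer expectations removes the conditioning, yielding the same bound unconditionally.

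Next, I would set $x = \sigma\sqrt{\log(T/\delta)}$ (or $\sigma\sqrt{2\log(T/\delta)}$ if one wants the union bound to come out cleanly to $\delta$), so that the per-step probability of violation is at most $\delta/T$. A union bound over $t=1,2,\ldots,T$ then gives
\begin{align*}
\bP\br{\cG_w^c} = \bP\br{\exists\, t\le T:\ |w_t|>\sigma\sqrt{\log(T/\delta)}} \le \sum_{t=1}^{T}\frac{\delta}{T}=\delta,
\end{align*}
which is the claim. There is no real obstacle here; this is a textbook combination of the Chernoff bound for conditionally sub-Gaussian random variables with a union bound. The only subtle point to flag is that the sub-Gaussian hypothesis is stated for $|w_t|$ rather than $w_t$ itself, but this only strengthens the hypothesis, so the same Chernoff argument applies verbatim.
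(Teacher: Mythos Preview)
Your proposal is correct and follows essentially the same route as the paper: a Chernoff bound from the conditional sub-Gaussian assumption, followed by a union bound over $t=1,\ldots,T$. Your write-up is in fact more careful, since you explicitly flag the factor-of-$2$ discrepancy (with $x=\sigma\sqrt{\log(T/\delta)}$ the per-step bound is $(\delta/T)^{1/2}$ rather than $\delta/T$), a point the paper's proof glosses over.
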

\begin{proof}
    It follows from Chernoff bound that $\bP\br{|w_t|>x} \le \exp\br{x^2\slash (2\sigma^2)}$. The proof then follows by letting $x= \sqrt{\log\br{\frac{T}{\delta}}}$, and using union bound for $t=1,2,\ldots,T$.
\end{proof}
Define,
\al{
		B_w(T) := \sigma\sqrt{\log\br{\frac{T}{\delta}}}.\label{def:B3}
	}	
Since unlike the bounded noise case, in which we had $|w_t|\le B_w,~\forall t$, there is no upper-bound on the noise values, the quantity $B_w(T)$ serves as a high-probability upper-bound. Indeed, most of the results derived under the Assumption~\ref{assum:bounded_noise} continue to hold under Assumption~\ref{assum:sub_gaussian} upon replacing $B_w$ by $B_w(T)$. Since the analysis of regret for sub-Gaussian noise closely follows that of bounded noise, we will only highlight the differences between the two.

We begin with the lower-bound on $\lm_{\min}(V_t)$ that was derived in Theorem~\ref{th:1}. Notice that $\beta_3$ involves $B^2_w$ in the denominator, and hence after replacing it by $B_w(T)$, we have that $\beta_3 \propto \frac{1}{\lesssim \log\br{T\slash \delta}}$, and hence decays with time-horizon $T$. In order to compensate for this, the algorithm explores more often, so that we let the episode duration $H$ be equal to $1\slash \beta_3$, and let number of episodes until $t$ be $\log t$, which yields $N\ui_t \approx \frac{\log t}{\beta_3}$. With this change, the high-probability bound on the estimation error derived in Theorem~\ref{th:estimation_error} is modified, so that after $x$ episodes the error is bounded by
$$
\frac{\br{p+q} \log \br{x}   }{2x}
	+ \frac{\br{p+q}\log\left[\br{C_1(\te)  \| Y_{0}\|  +  \frac{C_1(\te) B_u}{1-\rho(\te)}\left\{1+  \sum_{\ell=1}^{q} |b_\ell(\te)|  \right\} }^2 
		+ qB^2_u \right]	- 2\log\br{\delta}}{x}.
$$
The relation $r_t \le (e_t-w_t)^2$ for $t\notin \cI$ and $t\ge t\ust_1 \vee t\ust_2 \vee t\ust_3$ that was derived in Section~\ref{sec:predict_error} continues to hold, except that now in the definition of $t\ust_1,t\ust_2, t\ust_3$ we replace $B_w$ by $B_w(T)$ and this introduces additional dependency upon $T$. We now discuss changes made while analyzing regret in Section~\ref{sec:regret_analysis}.

The cumulative regret $\sum_{t\in\cI} r_t$ for times $t\in\cI$ was determined by the cumulative number number of exploratory instants $N\ui_T$, and the bound $\log\br{T\slash \delta}\log(T)$ on $\|\phi_t\|^2$, where we had $\|\phi_t\|^2 \lesssim B^2_w$. We now have $N\ui_T\gtrsim \log\br{T\slash \delta}\log(T)$, while $\|\phi_t\|^2 \lesssim \log\br{T\slash \delta}$. In summary, this regret is now bounded $\br{\log\br{T\slash \delta}}^2\log(T)$.

We recollect that the analysis of $\sum_{t\notin\cI} r_t$ involved summation of~\eqref{eq:q_recursion}, and bounding $\cT_2,\cT_3,\cT_4$. Thus, we will have to consider the dependence upon $B_w$ of the bounds derived in Section~\ref{sec:T_bounds} on $\cT_2,\cT_3,\cT_4$ therein. From Proposition~\ref{prop:t3}, the bound on $\cT_3$ is $\lesssim B_u$ (after hiding terms that are $\log B_u,\log\log B_u$), or equivalently $\lesssim B_w$, and hence this bound is now $\lesssim B_w(T)$. Bound on $\cT_{4,2}$ is $\lesssim \log(B_u)$, which is same as $\lesssim \log(B_w)$, and hence contributes a term that grows as $\log\log\br{\frac{T}{\delta}}$. Upon making the changes described above, we obtain the desired result.

\section{Useful Results}

\subsection{Self-Normalized Martingales Concentration Results} 
Let $\{ \cF_t, t\in\bN \}$ be a filtration and $\{ \eta_t,t\in\bN \}$ an $\cF_t$-adapted process such that $\bE\left\{ \exp(\lm\eta(t) \Big| \cF_{t-1}\right\}\le \exp(\lm^2 R^2 \slash 2)$. Let $\{X(t)\}_{t\in\bN}$ be a predictable process, i.e., $X(t)$ is $\cF_{t-1}$ measurable. Define $\bar{V}(t):= V + \sum_{s=1}^{t} X(s) X(s)'$ and $S(t):= \sum_{s=1}^{t}\eta(s)X(s)$. The following holds w.p. greater than $1-\delta$:
\al{
\|S(t)\|^{2}_{\bar{V}(t)^{-1}} \le R^2 \log\left( \frac{\det(V(t))\det(V)}{\delta} \right), ~\forall t\in \bN.\label{eq:self_normalized}
}
The following result is essentially (3.6) of~\cite{lai1982asymptotic}. 
\begin{lemma}\label{lemma:lmin_proj}
	Consider an $n\times m$ matrix $A = \{a_{i,j}\}$, and denote its columns by $A_{\cdot,1},A_{\cdot,2},\ldots,A_{\cdot,m}$. Let $\hat{A}_{\cdot,j}$ denote the projection of the $j$-th column on the linear space spanned by the remaining $m-1$ columns. Then 
	\al{
m^{-1}\min_{1\le j\le m} \|  A_{\cdot,j} - \hat{A}_{\cdot,j}\|^2 \le \lm_{\min}\br{A'A} \le m \min_{1\le j\le m} \|  A_{\cdot,j} - \hat{A}_{\cdot,j}\|^2.
}
\end{lemma}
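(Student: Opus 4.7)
\textbf{Proof proposal for Lemma~\ref{lemma:lmin_proj}.}

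The plan is to prove the two inequalities separately, using the variational characterization $\lambda_{\min}(A'A)=\min_{\|x\|=1}\|Ax\|^{2}$ for the lower bound and constructing an explicit test vector for each column for the upper bound. Write $c_{j}:=\|A_{\cdot,j}-\hat{A}_{\cdot,j}\|^{2}$, and let $r_{j}:=A_{\cdot,j}-\hat{A}_{\cdot,j}$ denote the corresponding residual, which by construction is orthogonal to $L_{j}:=\operatorname{span}\{A_{\cdot,k}:k\neq j\}$ and has $\|r_{j}\|^{2}=c_{j}$.

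For the lower bound $\lambda_{\min}(A'A)\ge m^{-1}\min_{j}c_{j}$, I would fix an arbitrary unit vector $x\in\mathbb{R}^{m}$ and choose $j^{\star}$ with $|x_{j^{\star}}|=\max_{j}|x_{j}|$, so $|x_{j^{\star}}|^{2}\ge 1/m$. Decomposing
$$Ax \;=\; x_{j^{\star}}A_{\cdot,j^{\star}}+\sum_{k\neq j^{\star}}x_{k}A_{\cdot,k} \;=\; x_{j^{\star}}\,r_{j^{\star}}+\Bigl(x_{j^{\star}}\hat{A}_{\cdot,j^{\star}}+\sum_{k\neq j^{\star}}x_{k}A_{\cdot,k}\Bigr),$$
the bracketed term lies in $L_{j^{\star}}$ while $r_{j^{\star}}\perp L_{j^{\star}}$. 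Pythagoras then gives $\|Ax\|^{2}\ge |x_{j^{\star}}|^{2}\|r_{j^{\star}}\|^{2}\ge (1/m)\,c_{j^{\star}}\ge (1/m)\min_{j}c_{j}$. Taking the infimum over unit $x$ yields the claim.

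For the upper bound $\lambda_{\min}(A'A)\le m\min_{j}c_{j}$ (in fact I would prove the stronger $\lambda_{\min}(A'A)\le\min_{j}c_{j}$), I would exhibit, for each $j$, a coefficient vector $v_{j}\in\mathbb{R}^{m}$ with $Av_{j}=r_{j}$. Writing the projection as $\hat{A}_{\cdot,j}=\sum_{k\neq j}\alpha_{j,k}A_{\cdot,k}$, set $v_{j}:=e_{j}-\sum_{k\neq j}\alpha_{j,k}e_{k}$. Then $Av_{j}=A_{\cdot,j}-\hat{A}_{\cdot,j}=r_{j}$, so $\|Av_{j}\|^{2}=c_{j}$, while the $j$-th coordinate of $v_{j}$ being $1$ forces $\|v_{j}\|^{2}\ge 1$. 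The variational characterization then gives $\lambda_{\min}(A'A)\le \|Av_{j}\|^{2}/\|v_{j}\|^{2}\le c_{j}$, and minimizing over $j$ completes the proof (with constant $1\le m$).

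There is no real obstacle here: both bounds reduce to one-line applications of the min--max principle once the right test vectors are produced, and the factor $m$ in the lower bound is the price one pays for passing from $|x_{j^{\star}}|^{2}\le 1$ to the pigeonhole bound $|x_{j^{\star}}|^{2}\ge 1/m$. The only mild subtlety worth noting is that when $A'A$ is singular, some residuals $r_{j}$ may vanish; in that degenerate case $\min_{j}c_{j}=0=\lambda_{\min}(A'A)$ and both inequalities hold trivially, so the argument above (which never required invertibility of $A'A$) goes through verbatim.
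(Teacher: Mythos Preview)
Your proof is correct. The paper does not actually supply a proof of this lemma: it simply records that the result ``is essentially (3.6) of~\cite{lai1982asymptotic}'' and states the inequality. Your argument---a direct application of the Rayleigh quotient characterization with the pigeonhole choice of $j^{\star}$ for the lower bound, and an explicit test vector $v_j$ with $j$-th coordinate~$1$ for the upper bound---is the standard self-contained route and in fact yields the sharper upper bound $\lambda_{\min}(A'A)\le \min_j c_j$ (without the factor $m$), which is exactly what Lai and Wei also state in their original (3.6).
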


The following result from~\cite{tao_vu} is essentially the Azuma-Hoeffding concentration inequality for unbounded random variables.
\begin{theorem}\label{th:azuma_unbounded}
	Let $\{X_i\}_{i=1}^{n}$ be a supermartingale such that the differences are bounded w.h.p., i.e.
	\al{
		\bP\br{~\exists~ i ~\mbox{ s.t. } |X_i - X_{i-1}|> B } \le \eps,	
	}
	where $\eps,B>0$. Then, 
	\al{
		\bP\br{X_n > X_0 + x} \le \exp\br{-\frac{x^2}{2n B^2}} +\eps.
	}
\end{theorem}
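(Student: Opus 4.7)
The plan is to reduce the unbounded-difference case to the standard Azuma--Hoeffding inequality via a truncation argument at the stopping time when the ``bounded differences'' property first fails. Let $\tau := \inf\{i \ge 1 : |X_i - X_{i-1}| > B\}$ (with $\tau = \infty$ if no such $i$ exists). By hypothesis, $\bP(\tau \le n) \le \eps$. Construct a surrogate process $\tilde{X}_i$ adapted to the same filtration by setting $\tilde{X}_0 = X_0$ and, for $i \ge 1$,
\[
\tilde{X}_i := \tilde{X}_{i-1} + (X_i - X_{i-1})\,\mathbf{1}\{\tau > i\}.
\]
In words, $\tilde{X}$ copies the increments of $X$ up until (but not including) the first time a difference exceeds $B$, and freezes thereafter.

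The key observations are: (i) $|\tilde{X}_i - \tilde{X}_{i-1}| \le B$ almost surely by construction, since the only nonzero increments are those with $|X_i - X_{i-1}| \le B$; (ii) $\tilde{X}$ inherits the supermartingale property from $X$, because $\mathbf{1}\{\tau > i\}$ is $\cF_{i-1}$-measurable (it depends on $X_0,\ldots,X_{i-1}$), so
\[
\bE[\tilde{X}_i - \tilde{X}_{i-1} \mid \cF_{i-1}] = \mathbf{1}\{\tau > i\}\,\bE[X_i - X_{i-1} \mid \cF_{i-1}] \le 0;
\]
and (iii) on the event $\{\tau > n\}$, we have $\tilde{X}_n = X_n$. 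Thus $\{X_n > X_0 + x\} \subseteq \{\tilde{X}_n > \tilde{X}_0 + x\} \cup \{\tau \le n\}$.

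Applying the classical Azuma--Hoeffding inequality to the supermartingale $\tilde{X}$ with a.s.\ bounded differences yields $\bP(\tilde{X}_n > \tilde{X}_0 + x) \le \exp(-x^2/(2nB^2))$, and a union bound with $\bP(\tau \le n) \le \eps$ completes the argument. There is no serious obstacle here: the only point that requires care is verifying that the indicator $\mathbf{1}\{\tau > i\}$ is predictable so that the supermartingale property is preserved, which follows because $\tau$ is a stopping time and $\{\tau > i\} = \{\tau \ge i+1\}$ is determined by the first $i-1$ increments.
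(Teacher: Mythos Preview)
There is a genuine gap at step~(ii). You assert that $\mathbf{1}\{\tau>i\}$ is $\cF_{i-1}$-measurable, reasoning that ``$\{\tau>i\}=\{\tau\ge i+1\}$ is determined by the first $i-1$ increments.'' This is off by one: $\{\tau>i\}$ is the event that \emph{all} of the increments $X_1-X_0,\ldots,X_i-X_{i-1}$ lie in $[-B,B]$, which depends on $X_i$ and is therefore only $\cF_i$-measurable. (It is $\{\tau\ge i\}=\{\tau>i-1\}$ that is determined by the first $i-1$ increments.) Because the indicator you multiply by is not predictable, your $\tilde X$ need not be a supermartingale. Concretely, with $n=1$, $X_0=0$, $X_1\in\{-100,10\}$ with probabilities $0.1$ and $0.9$, and $B=10$, one has $\bE X_1=-1\le0$ (supermartingale) and $\bP(|X_1-X_0|>B)=0.1$, yet your $\tilde X_1=10\cdot\mathbf{1}\{X_1=10\}$ satisfies $\bE\tilde X_1=9>0$.

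The natural repair---replacing $\mathbf{1}\{\tau>i\}$ by the predictable $\mathbf{1}\{\tau\ge i\}$, i.e.\ taking $\tilde X_i=X_{i\wedge\tau}$---does restore the supermartingale property by optional stopping, but then the increment at time $i=\tau$ satisfies $|X_\tau-X_{\tau-1}|>B$ by definition of $\tau$, so the bounded-differences hypothesis of classical Azuma--Hoeffding fails. Neither choice of indicator delivers both properties at once, and this tension is real: in the one-step example above $\bP(X_1>9)=0.9$ exceeds $\exp(-81/200)+0.1\approx0.77$, so the inequality as literally stated does not hold. A correct version (as in the cited reference) carries additional structure---typically a predictable/conditional form of the boundedness hypothesis---and a valid truncation argument must exploit that structure rather than the purely unconditional bound assumed here.
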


\section{Bounds on $\|Y_t\|,\|U_t\|$}\label{sec:bound_yt}
Consider the following vector-valued processes associated with the ARX model~\eqref{def:arx}: $Y_t = \br{y_t,y_{t-1},\ldots,y_{t-p+1}}'$ and $U_t = \br{u_t,u_{t-1},\ldots,u_{t-q+2}}'$. Consider the matrices
\al{
	A= 
	\begin{pmatrix}
		a_1 & \cdots & a_{p-1} & a_p \\
		I_{p-1} &        &     & 0 
	\end{pmatrix}
	,}
and,
\al{
	B =  
	\begin{pmatrix}
		-b_2 \slash b_1 & \cdots & -b_q \slash b_1 \\
		I_{q-2} &   &  0
	\end{pmatrix},
}
where $I_{p-1},I_{q-2}$ are identity matrices of sizes $p-1$ and $q-2$ respectively. We have the following bounds, which are essentially Lemma 2-(i),~(ii) of~\cite{lai1987asymptotically}.
\begin{lemma}\label{lemma:state_bound}
	Consider times $t_1 > t_0$, and let Assumption~\ref{assum:unit_circle} hold true for the ARX model~\eqref{def:arx}.~There exists $0< \rho<1$, $C_1 >0$ such that:
	\begin{enumerate}[(i)]
		\item 
		\al{
			\| Y_{t_1} \| \le 	C_1 \rho^{t_1 - t_0} \| Y_{t_0}\| + C_1 \sum_{s=0}^{t_1 - t_0 -1} \rho^{s} \left\{|w_{t_1-s}| +  \sum_{\ell=1}^{q} |b_\ell|  | u_{t_1 -s - \ell } |  \right\}.
		}
		\item We have the following bound on $\|U_t\|$:
		\al{
			\|U_t \| \le C_1 \rho^{t_1 - t_0} \|U_{t_0}\| + \frac{C_1}{b_1} \sum_{s=0}^{t_1 - t_0 -1} \rho^{s} \left\{ | w_{t_1 + 1 - s}|  + \sum_{\ell=1}^{p}| a_\ell | | y_{t_1 + 1 - s - \ell}  | \right\}.
		}
	\end{enumerate}		
\end{lemma}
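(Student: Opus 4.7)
\textbf{Proof proposal for Lemma~\ref{lemma:state_bound}.}

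The plan is to view both $\{Y_t\}$ and $\{U_t\}$ as trajectories of first-order linear vector recursions whose state-transition matrices are the companion matrices $A$ and $B$ respectively, and to exploit Assumption~\ref{assum:unit_circle} to control the powers of those matrices.

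For part (i), I would first rewrite the scalar ARX recursion~\eqref{def:arx} in state-space form. Concretely, stacking $y_t,\ldots,y_{t-p+1}$ into $Y_t$ and letting $e_1=(1,0,\ldots,0)'\in\mathbb{R}^{p}$, one checks directly from~\eqref{def:arx} that
\[
Y_t \;=\; A\, Y_{t-1} \;+\; e_1\Bigl(b_1 u_{t-1}+b_2 u_{t-2}+\cdots+b_q u_{t-q}+w_t\Bigr).
\]
Iterating this recursion from $t_0$ up to $t_1$ gives
\[
Y_{t_1} \;=\; A^{t_1-t_0} Y_{t_0} \;+\; \sum_{s=0}^{t_1-t_0-1} A^{s}\, e_1\Bigl(\textstyle\sum_{\ell=1}^{q} b_\ell\, u_{t_1-s-\ell}+w_{t_1-s}\Bigr).
\]
By Assumption~\ref{assum:unit_circle}, the characteristic polynomial of $A$ has all roots strictly inside the open unit disk, so the spectral radius of $A$ is strictly less than $1$. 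For any $\rho$ larger than this spectral radius, a standard consequence (e.g., Gelfand's formula, or explicitly using the Jordan form of $A$) is that there exists a finite $C_1=C_1(\rho)$ with $\|A^s\|\le C_1\rho^s$ for all $s\ge 0$. Applying $\|e_1\|=1$ and the triangle inequality to the display above yields exactly the bound claimed in (i).

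For part (ii), the idea is completely symmetric: invert the ARX equation to isolate $u$. From~\eqref{def:arx} applied at time $t_1+1-s$, one solves for $u_{t_1-s}$ as
\[
u_{t_1-s}=\tfrac{1}{b_1}\Bigl(y_{t_1+1-s}-\textstyle\sum_{\ell=1}^{p}a_\ell\, y_{t_1+1-s-\ell}-w_{t_1+1-s}\Bigr)-\textstyle\sum_{\ell=2}^{q}\tfrac{b_\ell}{b_1}\,u_{t_1+1-s-\ell},
\]
which, stacked into the vector $U_t$, yields a first-order recursion of the form $U_t=B\,U_{t-1}+\tfrac{1}{b_1}\,e_1\,\xi_t$ with $\xi_t$ an affine combination of the current $y$, past $y$'s, and $w_{t+1}$. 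The minimum phase half of Assumption~\ref{assum:unit_circle} says exactly that the characteristic polynomial of $B$ has roots strictly inside the unit disk, so by the same Gelfand/Jordan argument $\|B^s\|\le C_1\rho^s$ (we can choose a single $\rho$ and $C_1$ that works for both $A$ and $B$ by taking maxima). Unrolling and taking norms then gives (ii).

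The argument is almost entirely mechanical once the state-space forms and the spectral radius bound are set up. The only ``real'' content is the standard fact that $\rho(M)<1$ implies $\|M^n\|\le C\rho^n$ for any $\rho$ above the spectral radius — which is why Assumption~\ref{assum:unit_circle} is stated precisely as it is (root location in the open unit disk for both the $a$ polynomial and the $b$ polynomial). So there is no genuine obstacle; the only care required is bookkeeping of the indices in the unrolling, and making sure the triangle inequality is applied so that $|b_\ell u_{t_1-s-\ell}|$ and $|a_\ell y_{t_1+1-s-\ell}|$ appear individually (rather than the combined linear combination) so as to match the form of the stated bound.
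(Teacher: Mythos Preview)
Your proposal is correct and is exactly the standard companion-matrix / variation-of-constants argument that underlies this result; the paper itself does not give a proof but simply cites Lemma~2 of \cite{lai1987asymptotically}, whose proof proceeds along the same lines you outline. The only minor bookkeeping point is that your unrolling for (ii) produces an additional $|y_{t_1+1-s}|$ term in the forcing (from isolating $u_t$ in the ARX equation at time $t{+}1$), which the stated bound does not display explicitly; this is a cosmetic discrepancy in the statement rather than a flaw in your argument.
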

We note that $\rho$ above can be taken to be any number greater than the spectral radius of $A$ but less than $1$. 
\begin{lemma}\label{lemma:bound_phi}
For the case when $|w_s|\le B_w$ for all $s=1,2,\ldots$, we have,
	$$
	\|Y_t\| \le C_1 \rho^{t_1 - t_0} \| Y_{t_0}\| +\frac{B_u C_1}{1-\rho}  \left\{1+  \sum_{\ell=1}^{q} |b_\ell|  \right\}.
	$$
\end{lemma}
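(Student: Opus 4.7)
The plan is to derive this bound as an immediate corollary of Lemma~\ref{lemma:state_bound}-(i), which already provides a recursion of the form
\[
\|Y_{t_1}\| \le C_1 \rho^{t_1-t_0}\|Y_{t_0}\| + C_1 \sum_{s=0}^{t_1-t_0-1} \rho^s\left\{|w_{t_1-s}| + \sum_{\ell=1}^q |b_\ell|\,|u_{t_1-s-\ell}|\right\}.
\]
So essentially all that remains is to substitute the uniform bounds on $|w_s|$ and $|u_s|$ and sum the resulting geometric series.

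First, I would invoke the hypothesis $|w_s|\le B_w$ to replace $|w_{t_1-s}|$ by $B_w$ in every term of the sum. Next, I would bound $|u_{t_1-s-\ell}|$ by $B_u$: during exploitation the PIECE control law~\eqref{def:control_rule} explicitly clips inputs to $[-B_u,B_u]$, and during exploration $|u_s|\le B_w$ by~\eqref{def:bounded_input}; since $B_u$ has been chosen so that $B_w/B_u\le\delta_1<1$ (condition~\eqref{cond:B1_B3_4}), the exploratory bound is also majorized by $B_u$. Thus $|u_s|\le B_u$ uniformly. Similarly $B_w\le B_u$, so the noise term is absorbed into a single prefactor $B_u$ giving
\[
\|Y_{t_1}\| \le C_1 \rho^{t_1-t_0}\|Y_{t_0}\| + C_1 B_u \sum_{s=0}^{t_1-t_0-1} \rho^s \left\{1 + \sum_{\ell=1}^q |b_\ell|\right\}.
\]

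Finally, since $0<\rho<1$ (Assumption~\ref{assum:unit_circle} guarantees this via Lemma~\ref{lemma:state_bound}), the geometric sum satisfies $\sum_{s=0}^{t_1-t_0-1}\rho^s \le \sum_{s=0}^\infty \rho^s = \frac{1}{1-\rho}$, and pulling the constant $\left\{1+\sum_{\ell=1}^q|b_\ell|\right\}$ outside the sum yields exactly the claimed bound
\[
\|Y_{t_1}\|\le C_1\rho^{t_1-t_0}\|Y_{t_0}\| + \frac{B_u C_1}{1-\rho}\left\{1+\sum_{\ell=1}^q|b_\ell|\right\}.
\]
There is no real obstacle here since this is a routine corollary; the only thing to be careful about is verifying that the clipping threshold $B_u$ indeed dominates both the exploratory input magnitude $B_w$ and the noise magnitude $B_w$, which follows from the design choices in Section~\ref{sec:threshold}.
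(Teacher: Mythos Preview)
Your proposal is correct and follows essentially the same approach as the paper: the paper's proof is a single sentence stating that the result follows from Lemma~\ref{lemma:state_bound}-(i) after noting $|w_s|\le B_w\le B_u$ and $|u_s|\le B_u$ for all $s$, which is exactly what you have spelled out in detail.
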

\proof 
The proof follows from Lemma~\ref{lemma:state_bound}-(i) after noting that $|w_s|\le B_w \le B_u$, and also $|u_s|\le B_u$ for all $s$.
\endproof

\begin{remark}
	When we want to indicate the dependence of $\rho$ on the system parameter, we will write $\rho(\te)$. Similarly for $a_{\ell}(\te),b_{\ell}(\te),C_1(\te)$.
\end{remark}

We now exhibit a result for $|y_t|,~t\in\cI$ that holds for the ARX process evolving under the PIECE algorithm.~Note that the inputs $\{u_t\}$ are chosen so as to satisfy the following bounds,
\al{
	|u_t| & \le B_u~\forall t, ~~|u_t| \le B_w, \mbox{ for }t\in\cI.\label{ineq:bd1}
}
Moreover, the noise process $\{w_t\}$ is also bounded as, 
\al{
	|w_t| & \le B_w,~\mbox{ where } B_w \le B_u.\label{ineq:bd2}
}
The exploratory phase $\cI$ is comprised of several episodes, where the $i$-th episode consists of $m_i$ consecutive steps, starting at time $n_i$ and ending at time-step $n_i + m_i$.

Since $|u_s|,|w_s|\le B_u$, we obtain the following bound from Lemma~\ref{lemma:state_bound}-(i) by setting $t_1=n_i$ and $t_0 = 0$:	
\al{
	\| Y_{n_i} \| & \le 	C_1 \rho^{n_i} \| Y_{0}\| + B_u C_1 \sum_{s=0}^{t_1  -1} \rho^{s} \left\{1+  \sum_{\ell=1}^{q} |b_\ell|   \right\}\notag\\
	&\le C_1 \rho^{n_i} \| Y_{0}\| + \frac{B_u C_1}{1-\rho} \left\{1+  \sum_{\ell=1}^{q} |b_\ell|   \right\}.\label{ineq:21}
}
Now, in Lemma~\ref{lemma:state_bound}-(i), we let $t_0 = n_i$, and $t_1 = n_i + m$ where $m<m_i$, so that we have $t_1 < n_{i} + m_i$, and also $|u_t|,|w_t|\le B_3$ for $n_i < t<n_i + m_i$. So,
\al{
	\| Y_{n_i + m} \| & \le 	C_1 \rho^{m} \| Y_{n_i}\| + C_1 \sum_{s=0}^{m} \rho^{s} \left\{|w_{n_i + m -s}| +  \sum_{\ell=1}^{q} |b_\ell|  | u_{n_i + m -s - \ell } |  \right\}\notag\\
	&\le 	C_1 \rho^{m} \| Y_{n_i}\| + B_w  C_1 \sum_{s=0}^{m} \rho^{s} \left\{1+  \sum_{\ell=1}^{q} |b_\ell|  \right\}\notag\\
	&\le 	C_1 \rho^{m} \left[C_1 \rho^{n_i} \| Y_{0}\| + \frac{B_u C_1}{1-\rho} \left\{1+  \sum_{\ell=1}^{q} |b_\ell|   \right\}\right]+ B_w  C_1 \sum_{s=0}^{m} \rho^{s} \left\{1+  \sum_{\ell=1}^{q} |b_\ell|  \right\}\notag\\
	&\le 	C_1 \rho^{m} \left[C_1  \| Y_{0}\| + \frac{B_u C_1}{1-\rho} \left\{1+  \sum_{\ell=1}^{q} |b_\ell|   \right\}\right]+ B_w  C_1 \sum_{s=0}^{m} \rho^{s} \left\{1+  \sum_{\ell=1}^{q} |b_\ell|  \right\}.	
	\label{ineq:2}
}
Define,
\al{
m\ust := \Bigg\lceil\frac{1}{\log \rho}\log \br{\frac{B_w}{\left[C_1 \| Y_{0}\| + \frac{B_u C_1}{1-\rho} \left\{1+  \sum_{\ell=1}^{q} |b_\ell|   \right\}\right]
		 \cdot C_1}} \Bigg\rceil.\label{def:must}
}
~Upon choosing $m>m\ust$, we obtain the following bound on $\|Y_{n_i + m}\|$.

\begin{lemma}\label{lemma:bound_ys}
Consider the ARX system~\eqref{def:arx} evolving under PIECE, in which $\{w_t\}$ satisfies Assumption~\ref{assum:bounded_noise}, and $\{u_t\}$ satisfies~\eqref{ineq:bd1}.~Consider the $i$-th exploratory episode, and let $m$ satisfy,
	\al{
		m_i \ge m\ge m\ust.\label{ineq:30}
	}
Then,
\al{
	\| Y_{n_i + m} \| \le 	B_w \left[1+  C_1 \sum_{s=0}^{m} \rho^{s} \left\{1+  \sum_{\ell=1}^{q} |b_\ell|  \right\} \right].
}
If instead $\{w_t\}$ satisfies Assumption~\ref{assum:sub_gaussian}, the same conclusion holds on $\cG_w$.
\end{lemma}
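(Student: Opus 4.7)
The statement is essentially a corollary of the chain of inequalities culminating in \eqref{ineq:2}, combined with the choice of the threshold $m^\star$ in \eqref{def:must}. The plan is therefore to exhibit the transient term $C_1\rho^{m}\bigl[C_1\|Y_0\|+\tfrac{B_uC_1}{1-\rho}\{1+\sum_\ell|b_\ell|\}\bigr]$ as being absorbed into $B_w$ as soon as $m\ge m^\star$, and then read off the stated bound.

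First, I would recall the two standing ingredients: bound \eqref{ineq:21} on $\|Y_{n_i}\|$ obtained by applying Lemma~\ref{lemma:state_bound}(i) over $[0,n_i]$ with the crude bounds $|u_t|\le B_u$, $|w_t|\le B_w\le B_u$; and the iterated bound \eqref{ineq:2} over $[n_i,n_i+m]$, which inside the exploratory episode uses the stronger control $|u_t|\le B_w$ given by \eqref{def:bounded_input}. Nothing needs to be redone here; both bounds are already established in the excerpt and are valid for any $m<m_i$.

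The core estimate is then to show that the geometric ``memory'' term in \eqref{ineq:2} is dominated by $B_w$ under \eqref{ineq:30}. Setting
\[
Q := C_1\|Y_0\|+\frac{B_uC_1}{1-\rho}\Bigl\{1+\sum_{\ell=1}^{q}|b_\ell|\Bigr\},
\]
the definition \eqref{def:must} rewrites as $m^\star\log\rho \le \log\!\bigl(B_w/(Q C_1)\bigr)$, so since $\log\rho<0$ we have, for any $m\ge m^\star$, $\rho^{m}\le B_w/(QC_1)$, i.e.\ $C_1\rho^{m} Q\le B_w$. Substituting this into \eqref{ineq:2} gives
\[
\|Y_{n_i+m}\|\;\le\;B_w \;+\; B_w C_1\sum_{s=0}^{m}\rho^{s}\Bigl\{1+\sum_{\ell=1}^{q}|b_\ell|\Bigr\},
\]
which is exactly the claim.

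For the sub-Gaussian case, the only change is that the a.s.\ bound $|w_t|\le B_w$ in \eqref{ineq:21} and \eqref{ineq:2} is replaced by the high-probability bound from Lemma~\ref{lemma:noise_bound}: on $\cG_w$, $|w_t|\le B_w(T)$ for all $t\le T$. Since the derivation of \eqref{ineq:2} uses nothing about $\{w_t\}$ beyond a uniform bound, and since in the unbounded-noise section the parameter playing the role of $B_w$ in the definition of $m^\star$ is precisely $B_w(T)=\sigma\sqrt{\log(T/\delta)}$ (as flagged just before Theorem~\ref{thm:regret_unbounded}), the same algebraic step goes through verbatim on $\cG_w$. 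I don't foresee any obstacle: the proof is a one-line substitution, and the only thing to double-check is the sign convention in \eqref{def:must} given that $\log\rho<0$, which is exactly what makes $m^\star$ well-defined and positive.
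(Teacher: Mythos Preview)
Your proof is correct and follows exactly the approach in the paper: the lemma is stated there as an immediate consequence of \eqref{ineq:2} together with the definition \eqref{def:must} of $m^\star$, and you have spelled out precisely that substitution (including the sign check on $\log\rho$) plus the straightforward extension to $\cG_w$ in the sub-Gaussian case.
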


\begin{proposition}\label{prop:5}
	Consider the ARX model~\eqref{def:arx}. Under Assumption~\ref{assum:bounded_noise},
	\al{
		\|Y_s\|^2 \le 2\|A^s\|^2 \|Y_0\|^2 + \frac{4\|b\|^2}{1-\rho} \br{\sum_{\ell=1}^{s} \rho^{s-\ell} \|U_{\ell}\|^2 } + \frac{4}{\br{1-\rho}^2}B^2_w,
	}
	and,	
	\al{
		{
			\Tr\br{\Psi'_t \Psi_t} \le \br{4\left[\frac{\|b\|}{1-\rho} \right]^2 + 1  }q \Tr\br{\sum_{s=p+1}^{t}U_s U'_s}+ 4\left[\frac{1}{1-\rho} \right]^2 B^2_w t,~\label{ineq:10} 
		}
	}
	where
	$
	b= \br{b_1,b_2,\ldots,b_q}
	$,
	and $\rho$ is as in Lemma~\ref{lemma:state_bound}.~Note that $tr\br{\Psi'_t \Psi_t}  = \sum_{s=I}^{t}\| \phi_s \|^2$, where $\phi_s$ is the regressor during time $s$.
\end{proposition}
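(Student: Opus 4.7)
\textbf{Proof Plan for Proposition~\ref{prop:5}.}

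The plan is to derive the first inequality directly from the state-space iteration for $Y_s$ and then sum over $s$, exchanging the order of summation, to obtain the trace bound. The ARX equation in state-space form gives
\[
Y_s = A^s Y_0 + \sum_{\ell=1}^{s} A^{s-\ell} e_1 \br{b' \tilde U_{\ell-1} + w_\ell},
\]
where $e_1 = (1,0,\ldots,0)'$, $b = (b_1,\ldots,b_q)'$, and $\tilde U_{\ell-1} := (u_{\ell-1},\ldots,u_{\ell-q})'$. Applying the elementary inequality $(x+y+z)^2 \le 2x^2 + 4y^2 + 4z^2$ (obtained by iterating $(a+b)^2 \le 2a^2 + 2b^2$) yields
\[
\|Y_s\|^2 \le 2\|A^s\|^2 \|Y_0\|^2 + 4 \Big\|\sum_{\ell=1}^{s} A^{s-\ell} e_1\, b' \tilde U_{\ell-1}\Big\|^2 + 4 \Big\|\sum_{\ell=1}^{s} A^{s-\ell} e_1 w_\ell\Big\|^2.
\]

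For the noise term, the triangle inequality combined with $\|A^k\| \le \rho^k$ (absorbing the constant into the choice of $\rho$, as in the convention of Section~\ref{sec:system_model}) and $|w_\ell|\le B_w$ give $\|\sum_\ell A^{s-\ell} e_1 w_\ell\| \le B_w \sum_{k=0}^{s-1}\rho^k \le \frac{B_w}{1-\rho}$, producing the $\frac{4 B_w^2}{(1-\rho)^2}$ contribution. For the input term, the triangle inequality followed by Cauchy--Schwarz with geometric weights $\sqrt{\rho^{s-\ell}}$ gives
\[
\Big\|\sum_{\ell=1}^s A^{s-\ell}e_1 b' \tilde U_{\ell-1}\Big\|^2 \le \|b\|^2 \Big(\sum_{\ell=1}^{s}\rho^{s-\ell}\|\tilde U_{\ell-1}\|\Big)^2 \le \frac{\|b\|^2}{1-\rho}\sum_{\ell=1}^s \rho^{s-\ell}\|\tilde U_{\ell-1}\|^2,
\]
after using $\sum_\ell \rho^{s-\ell}\le 1/(1-\rho)$. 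Identifying $\|\tilde U_{\ell-1}\|^2$ with $\|U_\ell\|^2$ up to the index convention in the paper yields the first displayed inequality.

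For the trace bound, I would use the decomposition $\Tr(\Psi_t'\Psi_t) = \sum_{s=I}^{t}\|\phi_s\|^2$ and split $\|\phi_s\|^2$ into its $Y$-part and its $U$-part. The $U$-part contributes directly to $q\cdot \Tr\br{\sum_s U_s U_s'}$ since each $u_r$ appears at most $q$ times among the coordinates of the various $U_s$ vectors, yielding the ``$+1$'' term multiplied by $q$. For the $Y$-part $\sum_s \|Y_{s-1}\|^2$, I would plug in the bound just derived and swap the order of summation in
\[
\sum_{s=1}^{t}\sum_{\ell=1}^{s}\rho^{s-\ell}\|U_\ell\|^2 = \sum_{\ell=1}^{t}\|U_\ell\|^2 \sum_{s=\ell}^{t}\rho^{s-\ell} \le \frac{1}{1-\rho}\sum_{\ell=1}^{t}\|U_\ell\|^2 = \frac{1}{1-\rho}\Tr\Big(\sum_{s=1}^{t}U_s U_s'\Big),
\]
which produces the $\frac{4\|b\|^2}{(1-\rho)^2}\,q\, \Tr(\sum U_s U_s')$ piece (with the extra $q$ again arising from absorbing $\|\tilde U_{\ell-1}\|^2 \le q \max_i u_{\ell-i}^2$ into $\|U_\ell\|^2$). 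The noise piece contributes $\frac{4 B_w^2}{(1-\rho)^2}\, t$, and the $\|Y_0\|^2$ piece is a bounded constant that can be absorbed into the lower-order term of the statement.

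The only mildly delicate step is the bookkeeping between the three different ``$U$'' vectors floating around (the in-statement $U_s$ of length $q-1$, the $\tilde U_{\ell-1}$ of length $q$ appearing in the drift, and the individual scalars $u_r$); handling these requires tracking how many times each scalar $u_r$ appears in each aggregate, which gives the multiplicative $q$ in the trace bound. No probabilistic machinery is required, so the main obstacle is purely combinatorial/algebraic in keeping the constants and the $(1-\rho)$ exponents aligned with the statement.
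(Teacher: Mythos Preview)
Your proposal is correct and follows essentially the same route as the paper: iterate the state-space recursion $Y_s = A^s Y_0 + \sum_{\ell} A^{s-\ell}\tilde U_\ell$, split via elementary inequalities, apply Cauchy--Schwarz with geometric weights $\rho^{s-\ell}$, and for the trace bound sum over $s$ and exchange the order of summation. The only cosmetic difference is that the paper keeps the input and noise together in a single vector $\tilde U_s = (b\cdot U_s + w_s,0,\ldots,0)'$, applies $(a+b)^2\le 2a^2+2b^2$ once, does Cauchy--Schwarz on the combined sum, and only then separates $(b\cdot U_\ell+w_\ell)^2 \le 2\|b\|^2\|U_\ell\|^2 + 2w_\ell^2$; you instead split into three pieces up front and handle the input and noise sums separately, arriving at the same constants. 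Your remark that the $\|Y_0\|^2$ term and the $q$-bookkeeping among the various $U$-vectors are the only loose ends matches the paper, which is equally informal on those points.
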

\begin{proof} We have,
	\al{
		|w_t| \le 	B_w.
	}
	We will derive an upper-bound on $\sum_{s=1}^{t}y^2_s$. Define,
	\al{
		\tilde{U}_s := \br{b_1 u_{s-1}+ b_2 u_{s-2}+\ldots +  b_q u_{s-q} + w_s,0,\ldots,0 }'.
	}
	
	We have,
	\al{
		Y_s = A Y_{s-1} + \tilde{U}_s,
	}
	where $A$ is as in~\eqref{def:A}. This yields,
	\al{
		\|Y_s\|^2 & = \| A^s Y_0 + \sum_{\ell=1}^{s}  A^{s-\ell}  \tilde{U}_{\ell} \|^2 \notag\\
		& \le 2\|A^s\|^2 \|Y_0\|^2 + 2\|\sum_{\ell=1}^{s} A^{s-\ell} \tilde{U}_{\ell} \|^2 \notag\\
		& \le 2\|A^s\|^2 \|Y_0\|^2 + 2\br{\sum_{\ell=1}^{s} \|A^{s-\ell}\| \|\tilde{U}_{\ell} \|}^2 \notag\\
		& \le 2\|A^s\|^2 \|Y_0\|^2 + 2\br{\sum_{\ell=1}^{s} \|A^{s-\ell}\|} \br{\sum_{\ell=1}^{s} \|A^{s-\ell}\| \br{b_1 u_{\ell-1}+ b_2 u_{\ell-2}+\ldots +  b_q u_{\ell-q} + w_\ell}^2} \notag\\
		& \le 2\|A^s\|^2 \|Y_0\|^2 + \frac{2}{1-\rho} \br{\sum_{\ell=1}^{s} \|A^{s-\ell}\| \br{b_1 u_{\ell-1}+ b_2 u_{\ell-2}+\ldots +  b_q u_{\ell-q} + w_\ell}^2} \notag\\
		& = 2\|A^s\|^2 \|Y_0\|^2 + \frac{2}{1-\rho} \br{\sum_{\ell=1}^{s} \|A^{s-\ell}\| \br{b\cdot U_{\ell} + w_\ell}^2} \notag\\
		& = 2\|A^s\|^2 \|Y_0\|^2 + \frac{4}{1-\rho} \br{\sum_{\ell=1}^{s} \|A^{s-\ell}\| \left\{ \|b\|^2 \|U_{\ell}\|^2+ w^2_\ell \right\}} \notag\\
		& = 2\|A^s\|^2 \|Y_0\|^2 + \frac{4}{1-\rho} \br{\sum_{\ell=1}^{s} \rho^{s-\ell} \left\{ \|b\|^2 \|U_{\ell}\|^2+ w^2_\ell \right\}} \notag\\
		& = 2\|A^s\|^2 \|Y_0\|^2 + \frac{4\|b\|^2}{1-\rho} \br{\sum_{\ell=1}^{s} \rho^{s-\ell} \|U_{\ell}\|^2 } + \frac{4}{\br{1-\rho}^2}B^2_w,\notag
	}
	where the third inequality follows from the Cauchy-Schwartz inequality, and $\rho$ is as in Lemma~\ref{lemma:state_bound}. Upon using $\|\phi_{s}\|^2 = \|Y_s\|^2 + \|U_s\|^2$, we get,
	\al{
		\|\phi_{s}\|^2 \le 2\|A^s\|^2 \|Y_0\|^2 + \frac{4\|b\|^2}{1-\rho} \br{\sum_{\ell=1}^{s} \rho^{s-\ell} \|U_{\ell}\|^2 } + \frac{4}{\br{1-\rho}^2}B^2_w +  q B^2_u.\label{ineq:84}
	}
	Summing up the above inequality from $s=1$ to $t$ we get,
	\al{
		\sum_{s=1}^{t}	\|\phi_{s}\|^2  & \le 2\sum_{s=1}^{t}\|A^s\|^2 \|Y_0\|^2 + \frac{4\|b\|^2}{1-\rho} \br{\sum_{s=1}^{t}\sum_{\ell=1}^{s} \rho^{s-\ell} \|U_{\ell}\|^2 } + \frac{4}{\br{1-\rho}^2}B^2_w t+ \sum_{s=1}^{t}\|U_s\|^2\\
		& \le \frac{2}{\br{1-\rho}^2 }Y_0\|^2 + \frac{4\|b\|^2}{\br{1-\rho}^2} \br{\sum_{s=1}^{t} \|U_{\ell}\|^2 } + \frac{4}{\br{1-\rho}^2}B^2_w t+ \sum_{s=1}^{t}\|U_s\|^2.
	}
\end{proof}

\section{Simulation Setup}\label{sec:simulations_appendix}
In this section, we provide additional experimental results and the details of simulation setup. In the experiments, we compare the empirical performance of the PIECE algorithm with that of the following two baseline algorithms:\\\\
\textbf{1. Certainty Equivalence:}
The first baseline algorithm is the standard certainty equivalence controller as described in Algorithm \ref{algo:ce}.
\begin{algorithm}[H]
   \caption{Certainty Equivalence~(CE)}
   \label{algo:ce}
    \begin{algorithmic}
       \STATE {\bfseries Input} Horizon, $H$.
       \STATE $\tau = \inf\left\{t>0: \sum_{s \leq t}{\phi_s \phi_s^\prime}\textit{ is invertible and } b_{1,t} \neq 0\right\}$.
       \IF{$t \le \tau$}
       \STATE Generate an exploratory zero mean white noise input $u_t$.
       \ELSE
       \STATE Compute the estimates $\te_{t-1}$ and $\lm_{t-1}$ as follows:
       \nal{
        \te_{t-1} & = \left(a_{1,t-1},\ldots,a_{p,t-1},{b}_{1,t-1},\ldots,{b}_{q,t-1} \right)' \notag\\
        & :=\br{\sum_{s < t} \phi_s \phi'_s }^{-1}\left(\sum_{s < t} \phi_s y_{s+1}\right),}\label{def:lse}
        
        and 
        $$\lm_{t-1} := (-1\slash b_{1,t-1}) \left(a_{1,t-1},\ldots,a_{p,t-1},{b}_{2,t-1},\ldots,{b}_{q,t-1} \right)'$$.
       \STATE Apply control, $u_t = \lm'_{t-1} \psi_t$
       \ENDIF
    \end{algorithmic}
\end{algorithm}
\textbf{2. Lai and Wei (LW) \cite{lai1987asymptotically}}. 
For the second baseline, we compare the results with the algorithm proposed by \cite{lai1987asymptotically}.
This algorithm differs from PIECE in the following ways: (i) Clipping of the input (ii) Durations of the exploration phases. 
The details are described in Algorithm \ref{algo:lw}.
The simulation results show that these modifications lead to significant improvement in the empirical regret.

\begin{algorithm}[H]
   \caption{Lai and Wei~(LW)}
   \label{algo:lw}
    \begin{algorithmic}
       \STATE {\bfseries Input} Algorithm parameters, $\delta > 0, \rho > 1, B_2>0, B_w>0$; horizon, $H$.
       \STATE $\cI = \left\{1, \ldots, \tau\right\}\cup \left(\cup_{i : n_i < H} \left\{n_i+1, \ldots, n_i+m_i\right\}\right)$ where \\$\tau = \inf\left\{t>0: \sum_{s \leq t}{\phi_s \phi_s^\prime}\textit{ is invertible and } b_{1,t} \neq 0\right\}$, $n_i = e^{i^\rho(1+o(1))}$ and $m_i = (\log{i})^\delta$.
       \IF{$t \in \cI$}
       \STATE Generate an exploratory white noise input $u_t$ such that $|u_t|\le B_w \log{\log{t}}$ and has mean $0$.
       \ELSE
       \STATE  Compute the estimates $\tilde{\theta}^{(\cI)}_{t-1}$, $\tilde{\lm}^{(\cI)}_{t-1}$  and $\lm_t$ as defined in \eqref{eq:theta_estimate}.
       \STATE
       \nal{
    	u_t = 
    	\begin{cases}
    		\br{-B_u} \vee \br{\lm'_{t-1} \psi_t} \wedge \br{B_u} ~\mbox{ if } \Big| \lm'_{t-1} \psi_t - \br{\tilde{\lm}^{(\cI)}_{t-1} }'\psi_{t} \Big| \le B_2 \times \frac{\log N^{(\cI)}_t}{\sqrt{N^{(\cI)}_t}}\|\psi_t\|, \\
    		\br{-B_u} \vee \br{  \br{\tilde{\lm}^{(\cI)}_{t-1} }' \psi_{t}} \wedge \br{B_u} ~\mbox{ otherwise }.
    	\end{cases}}
        \ENDIF
    \end{algorithmic}
\end{algorithm}
\textbf{Hyper-Parameters} PIECE needs two system-dependent parameters and a bound on the absolute value of the noise in order to compute algorithm hyper-parameters. $\rho$ is an upper bound on the eigenvalues of matrix $A$~(see \ref{sec:system_model}) and $\|\lambda\|_2$ is 2-norm of vector $\lambda$. The duration of the first exploration episode is of length $\|\lambda\|^3_2$. In the following table, their values are given for the three examples described in $\ref{sec:simulations}$. Other hyper-parameters depend on the system as well as the noise process. $B_w$ is the upper bound for the absolute value of the noise sequence, $B_u$ is the threshold for clipping input, $\delta > 0$ be such that $(B_u, \delta)$ satisfies \ref{eq:24},\ref{deltaineq_1},\ref{deltaineq_2} and \ref{deltaineq_3}. $H$, defined as in \ref{def:n_i}, is the exploration episode duration other than the first. Among the noise process-dependent hyper-parameters, we observed $\delta$ does not vary significantly across experiments. So, we kept it constant for a particular system and added its values in the following table. The experiment-dependent hyper-parameter values are given along with the corresponding experimental results.
\begin{table}[H]
    \centering
    \begin{tabular}{|c|c|c|c|}
        \hline
         & $\rho$ & $\|\lambda\|_2$ & $\delta$\\
        \hline
        Example I & 0.8986 & 5.1 & 0.0038\\
        \hline
        Example II & 0.6782 & 4.69 & 0.0227\\
        \hline
        Example III & 0.8282 & 3.36 & 0.0104\\
        \hline
    \end{tabular}
    \caption{System parameters}
    \label{tab:sys_param}
\end{table}
\subsection{Additional Simulation Results}
In addition to the the results provided in Section \ref{sec:simulations}, we provide further results for a variety of system noise processes. The empirical cumulative regret of the PIECE algorithm is consistently lower than those of LW and CE. Especially, the key advantage of the PIECE algorithm over the LW and CE algorithms is in the initial transient phase of the experiment where both baseline algorithms suffer large regret. Similar to the results in Section \ref{sec:simulations}, the explotration scheme of PIECE is more efficient in achieving lower regret, which is the primary objective of the controller, at
the cost of a higher estimation error than LW.

Table \ref{tab:summary} summaries the additional results:
\begin{table}[H]
    \centering
    \begin{tabular}{|c|c|c|c|}
        \hline
        System Noise & Regret & Estimation Error & Terminal Regret \\
        \hline
        iid Gaussian Noise, $\sigma=0.6$ & Figure \ref{fig:cumulative_regret_iidG0.6} & Figure \ref{fig:estimation_error_iidG0.6} & Table \ref{tab:iidG0.6}\\\hline
        iid Gaussian Noise, $\sigma=1$ & Figure \ref{fig:cumulative_regret_iidG1.0} & Figure \ref{fig:estimation_error_iidG1.0} & Table \ref{tab:iidG1.0}\\\hline
        Random walk with iid Gaussian steps, $\sigma=0.5$ & Figure \ref{fig:cumulative_regret_rwalk0.5} & Figure \ref{fig:estimation_error_rwalk0.5} & Table \ref{tab:rwalk_iidG0.5}\\\hline
        Random walk with iid Gaussian steps, $\sigma=1$ & Figure \ref{fig:cumulative_regret_rwalk1.0} & Figure \ref{fig:estimation_error_rwalk1.0} & Table \ref{tab:rwalk_iidG1.0}\\\hline
    \end{tabular}
    \caption{Summary of Results}
    \label{tab:summary}
\end{table}
\begin{figure}[H]
    \centering
    \begin{subfigure}[b]{0.32\textwidth}
        \centering
        \includegraphics[width=\textwidth]{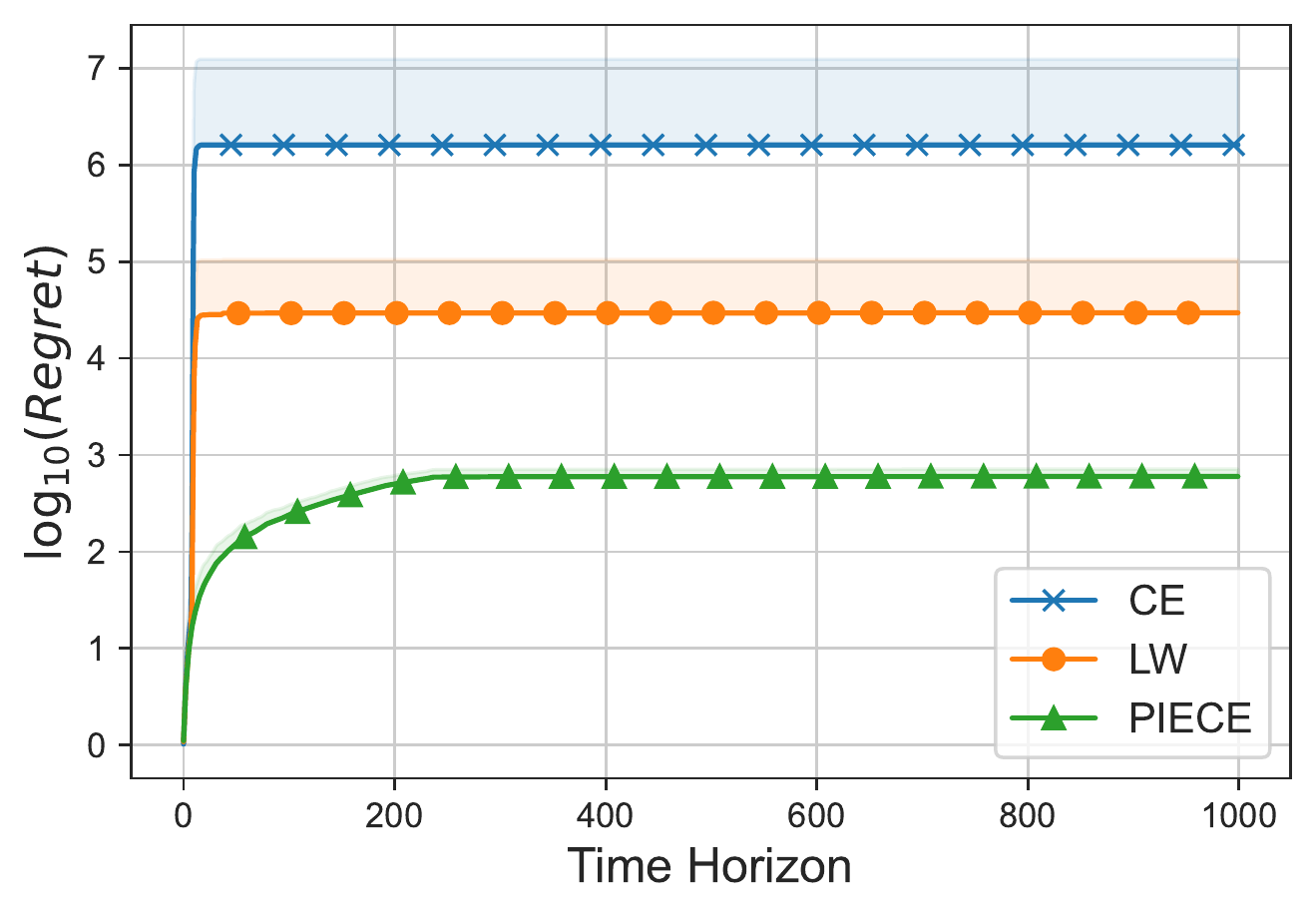}
        \caption{Example I}
        \label{fig:s1_lreg_g0.6}
    \end{subfigure}
    \hfill
    \begin{subfigure}[b]{0.32\textwidth}
        \centering        \includegraphics[width=\textwidth]{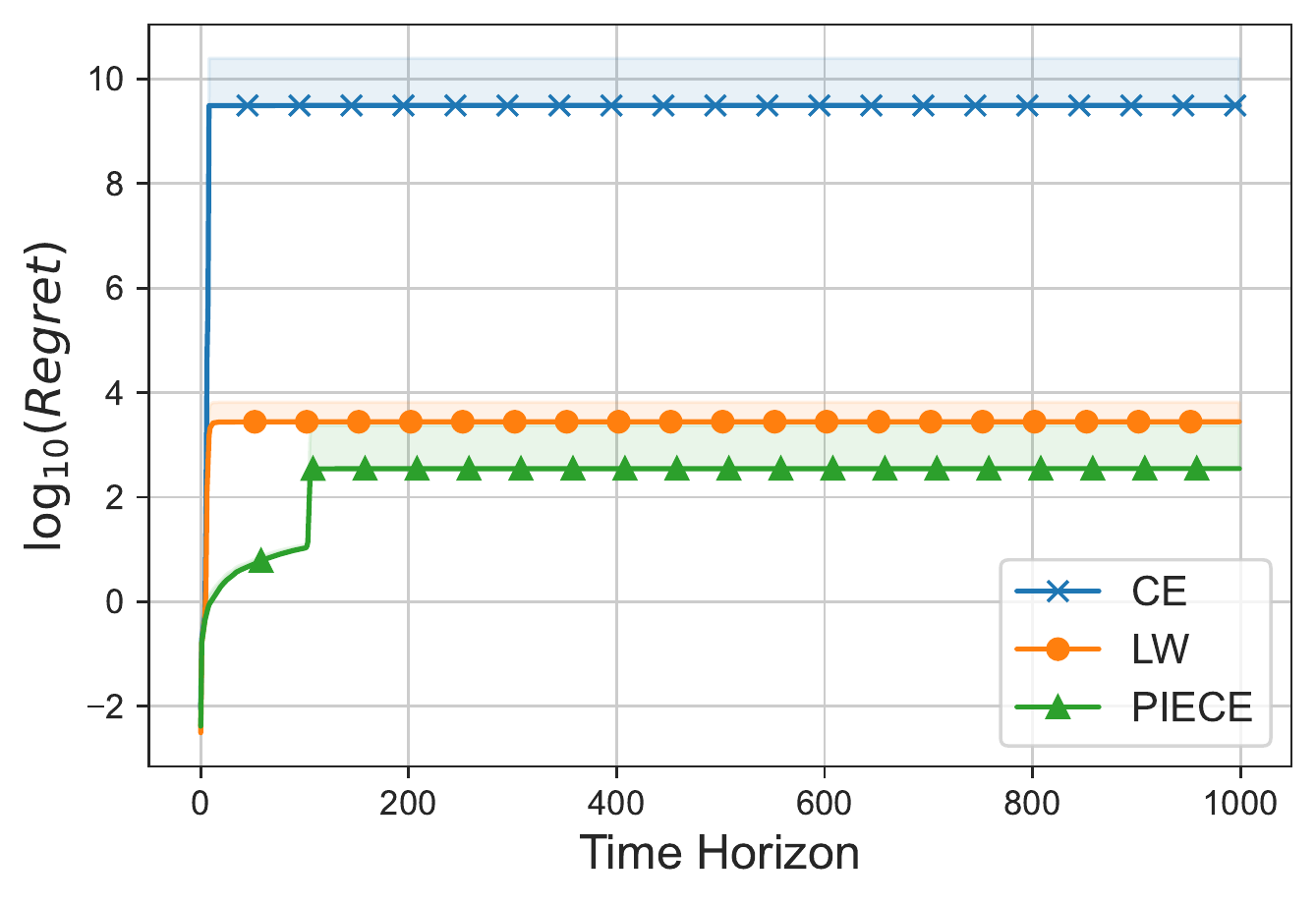} 
        \caption{Example II}         
        \label{fig:s2_lreg_g0.6}
    \end{subfigure}
    \hfill
    \begin{subfigure}[b]{0.32\textwidth}
        \centering
        \includegraphics[width=\textwidth]{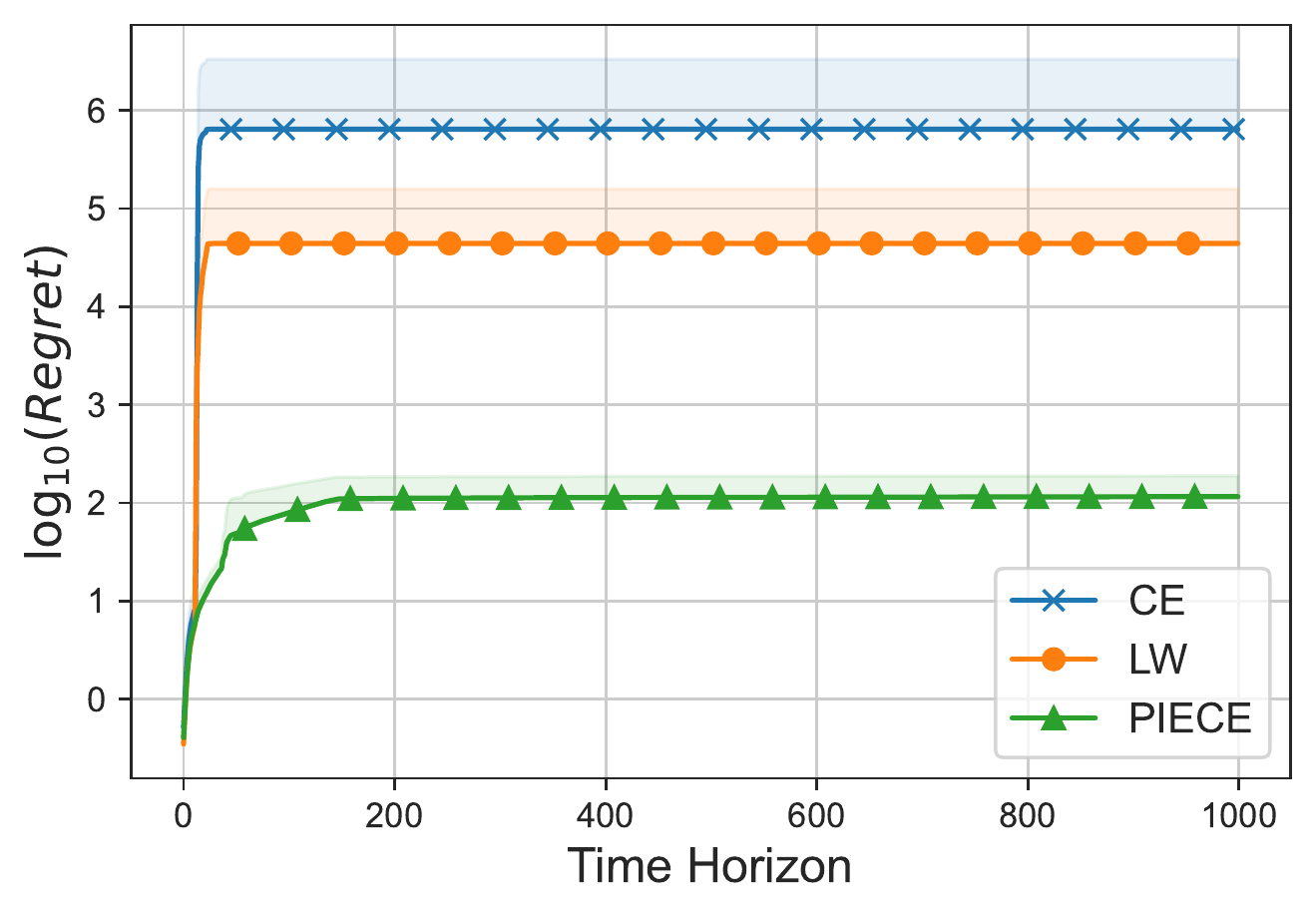}
        \caption{Example III}
        \label{fig:s3_lreg_g0.6}
    \end{subfigure}
    \caption{Log(Cumulative Regret) averaged over 50 runs for Gaussian noise with mean $0$ and standard deviation $0.6$.}
    \label{fig:cumulative_regret_iidG0.6}
\end{figure}

\begin{figure}[H]
    \centering
    \begin{subfigure}[b]{0.32\textwidth}
        \centering
        \includegraphics[width=\textwidth]{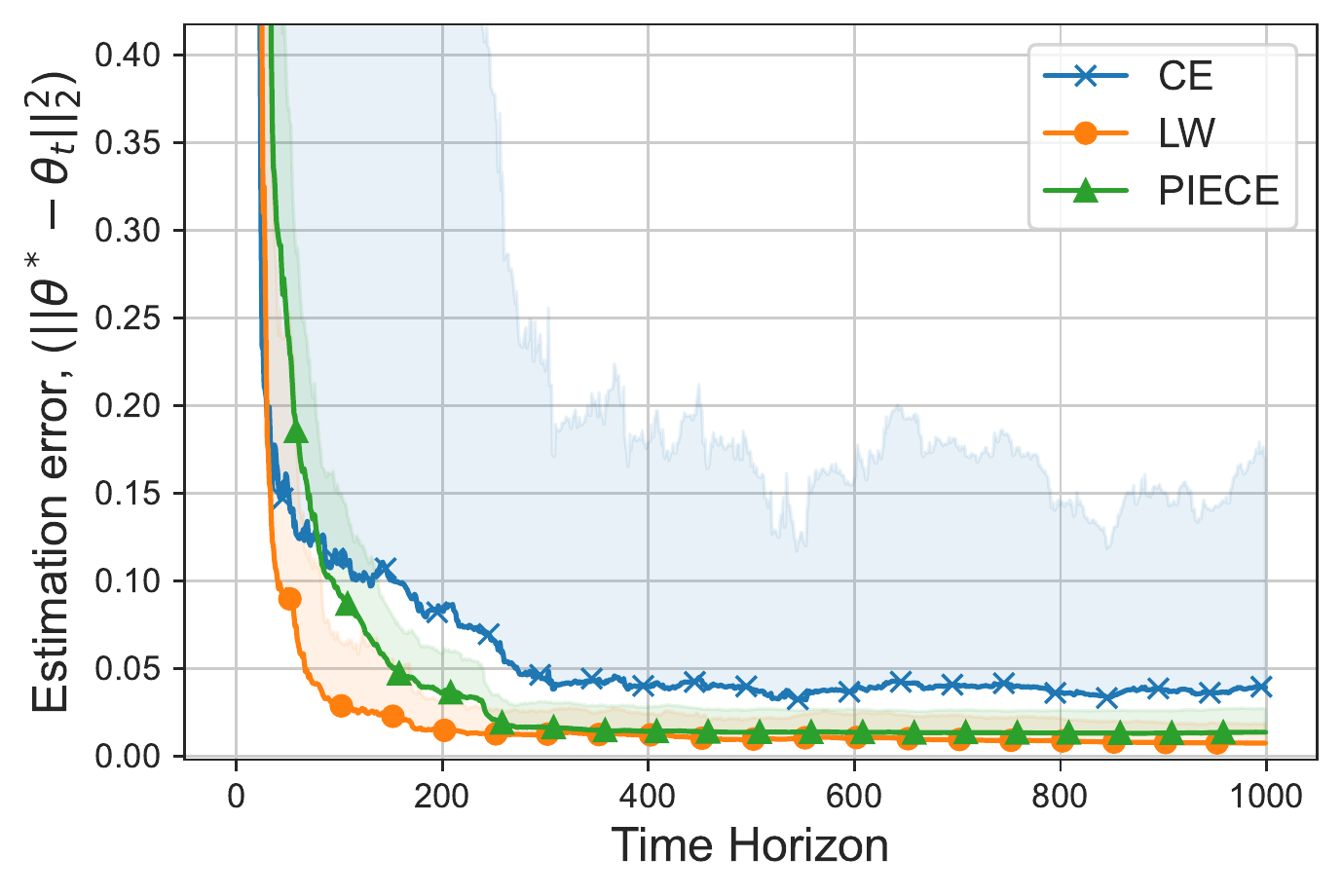}
        \caption{Example I}
        \label{fig:s1_err_g0.6}
    \end{subfigure}
    \hfill
    \begin{subfigure}[b]{0.32\textwidth}
        \centering
        \includegraphics[width=\textwidth]{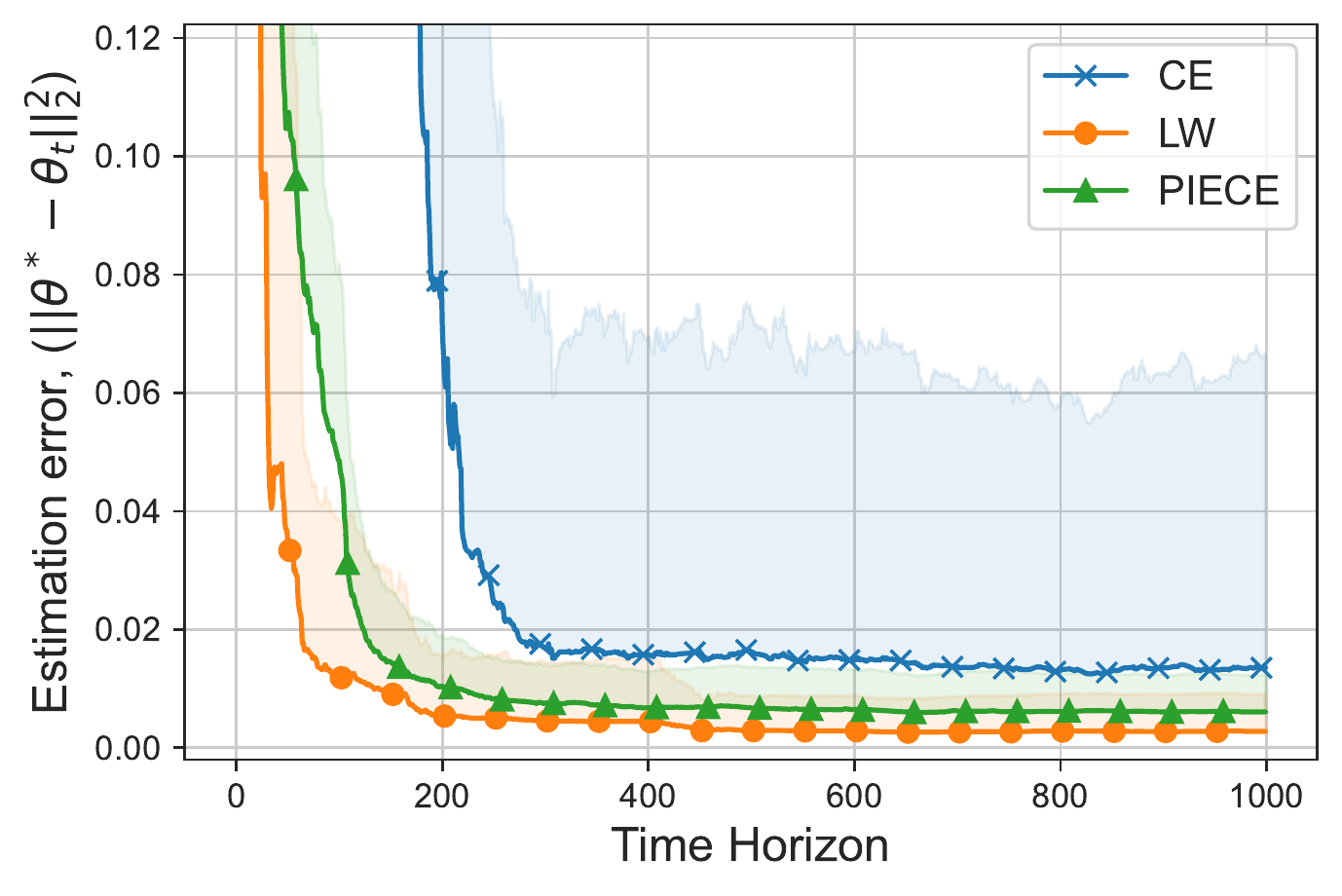}      
        \caption{Example II}         
        \label{fig:s2_err_g0.6}
    \end{subfigure}
    \hfill
    \begin{subfigure}[b]{0.32\textwidth}
        \includegraphics[width=\textwidth]{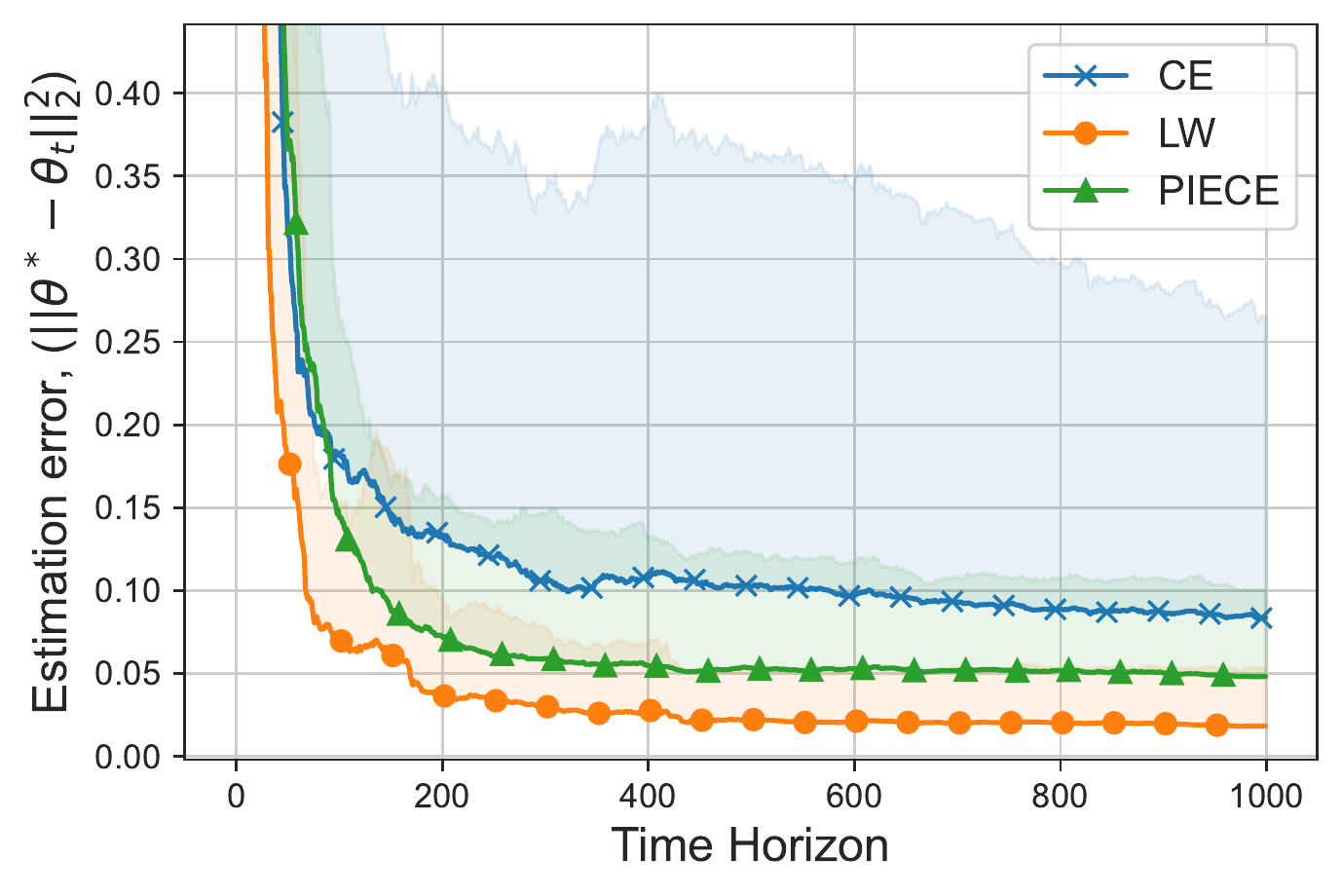}
        \caption{Example III}
        \label{fig:s3_err_g0.6}
    \end{subfigure}
    \caption{Estimation Error ($||\theta^\star-\theta_t||^2_2$) for Gaussian noise with mean $0$ and standard deviation $0.6$.}
    \label{fig:estimation_error_iidG0.6}
\end{figure}   
\begin{table}[H]
    \centering
    \begin{tabular}{|c|c|c|c|}
        \hline
        & CE & LW & PIECE \\
        \hline
        Example I & 1611301 & 29569 & 600\\
        \hline
        Example II & 3129063898 & 2786 & 353\\
        \hline
        Example III & 645073 & 44267 & 116\\
        \hline
    \end{tabular}
    \caption{Average Regret at $T=1000$ for Gaussian noise with mean $0$ and standard deviation $0.6$.}
    \label{tab:iidG0.6}
\end{table}
\begin{table}[H]
    \centering
    \begin{tabular}{|c|c|c|c|}
        \hline
        & $B_w$ & $B_u$ & $H$\\
        \hline
        Example I & 1.8 & 2219381.4 & 182\\
        \hline
        Example II & 1.8 & 17187.19 & 34\\
        \hline
        Example III & 1.8 & 186218.2 & 90\\
        \hline
    \end{tabular}
    \caption{PIECE hyper-parameters for Gaussian noise with mean $0$ and standard deviation $0.6$.}
    \label{tab:hpiidG0.6}
\end{table}
\begin{figure}[H]
    \centering
    \begin{subfigure}[b]{0.32\textwidth}
        \centering
        \includegraphics[width=\textwidth]{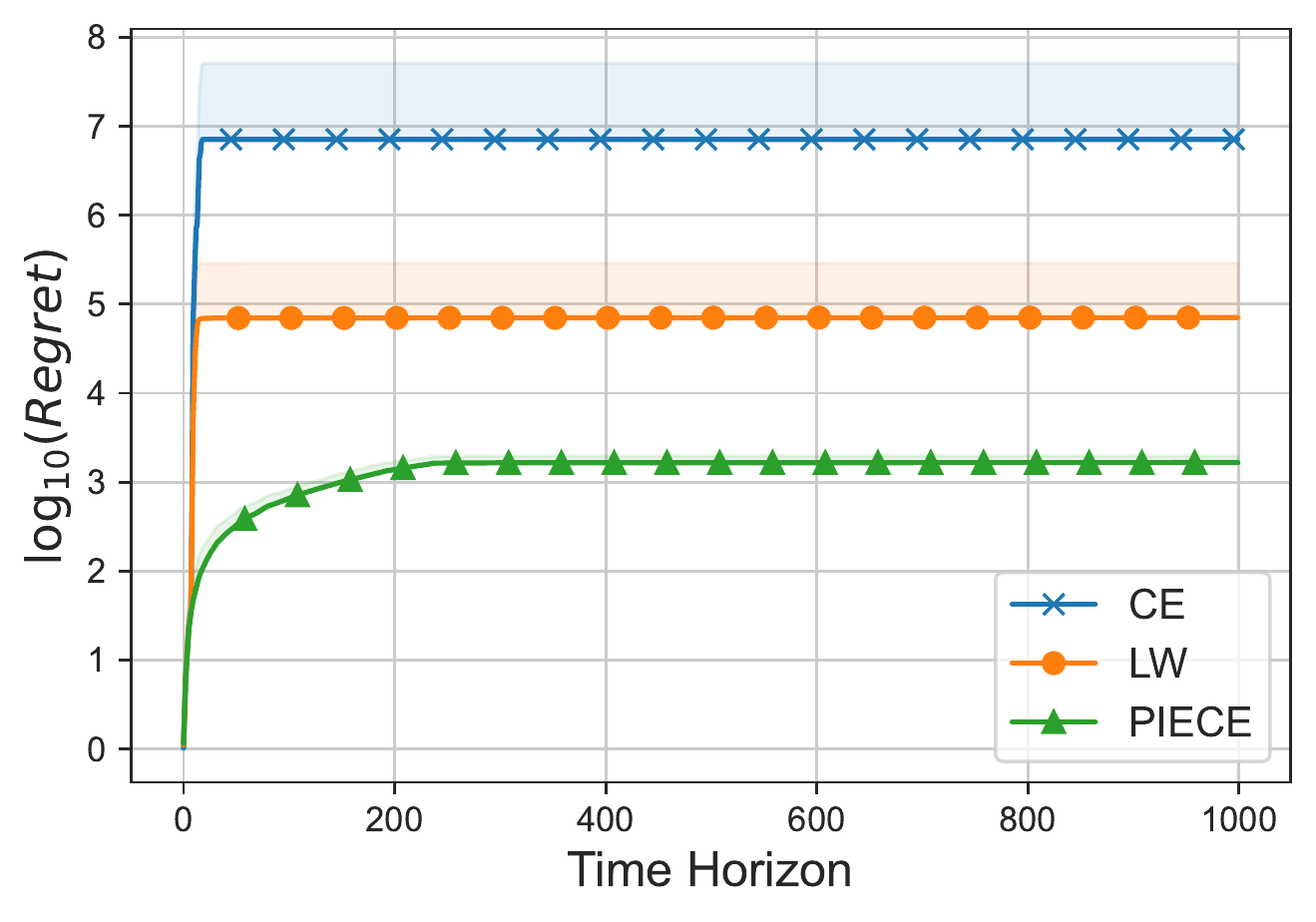}
        \caption{Example I}
        \label{fig:s1_lreg_g1.0}
    \end{subfigure}
    \hfill
    \begin{subfigure}[b]{0.32\textwidth}
        \centering
        \includegraphics[width=\textwidth]{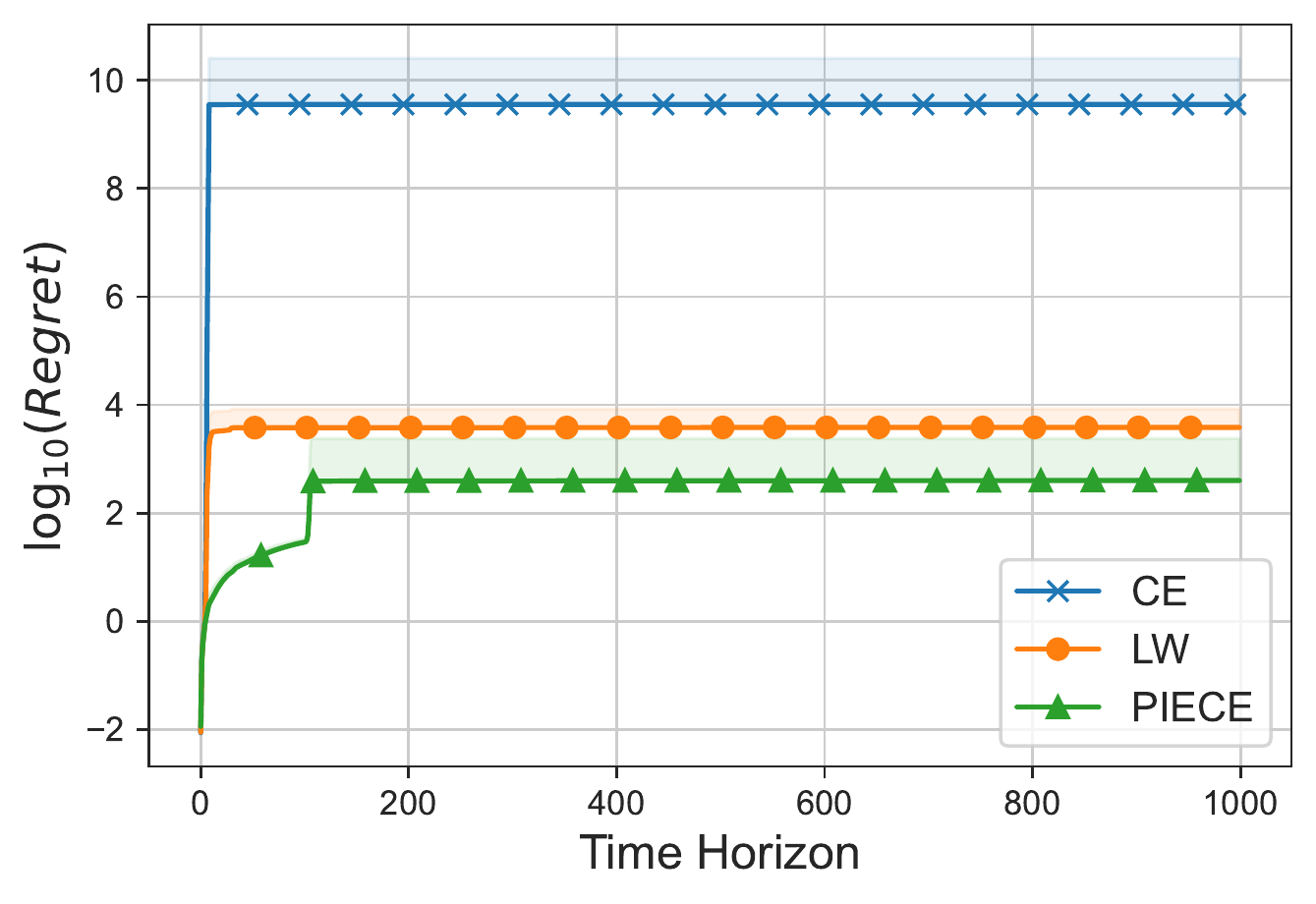} 
        \caption{Example II}         
        \label{fig:s2_lreg_g1.0}
    \end{subfigure}
    \hfill
    \begin{subfigure}[b]{0.32\textwidth}
        \centering
        \includegraphics[width=\textwidth]{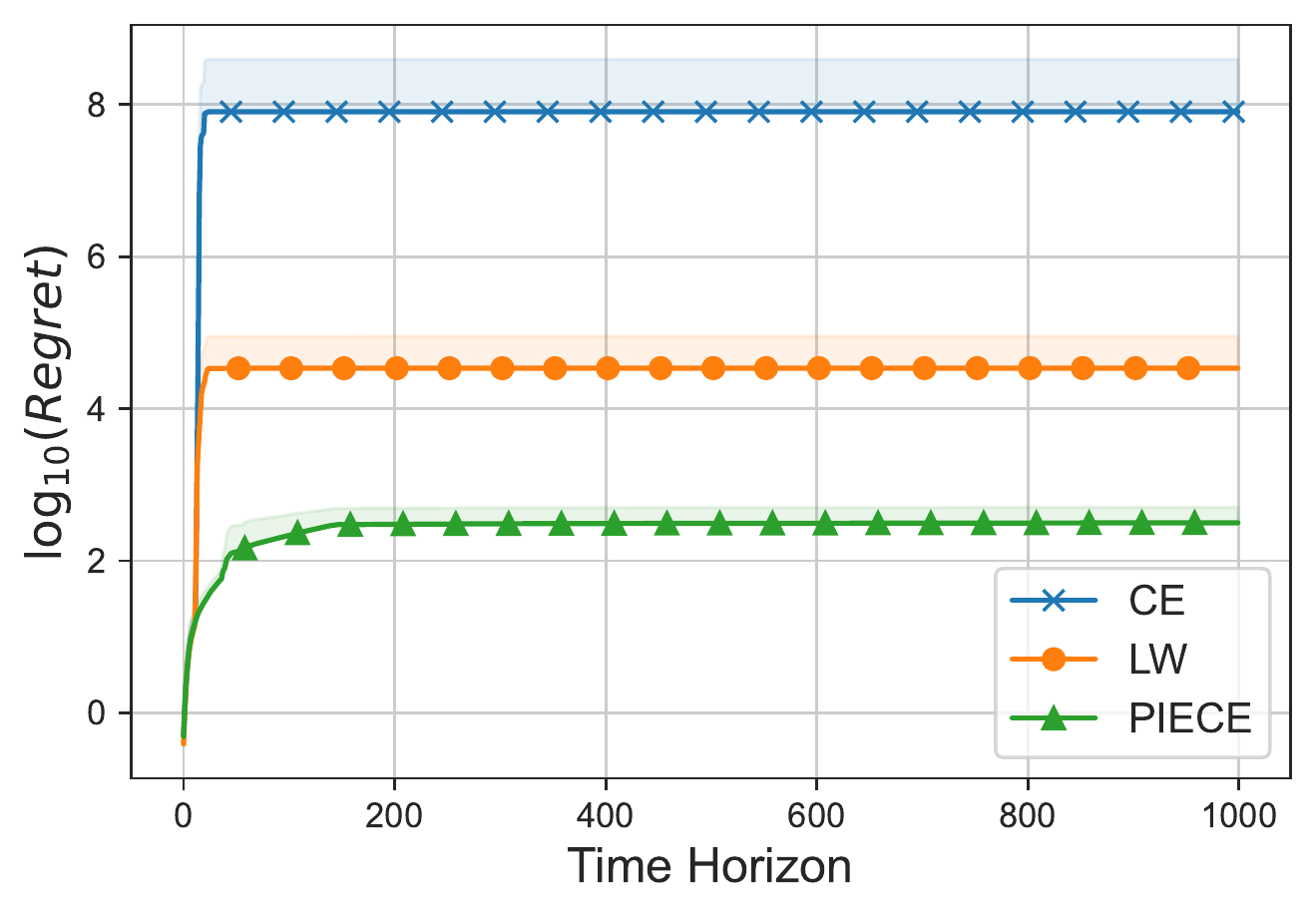}
        \caption{Example III}
        \label{fig:s3_lreg_g1.0}
    \end{subfigure}
    \caption{Log(Cumulative Regret) averaged over 50 runs for Gaussian noise with mean $0$ and standard deviation $1.0$.}
    \label{fig:cumulative_regret_iidG1.0}
\end{figure}
\begin{figure}[H]
    \centering
    \begin{subfigure}[b]{0.32\textwidth}
        \centering
        \includegraphics[width=\textwidth]{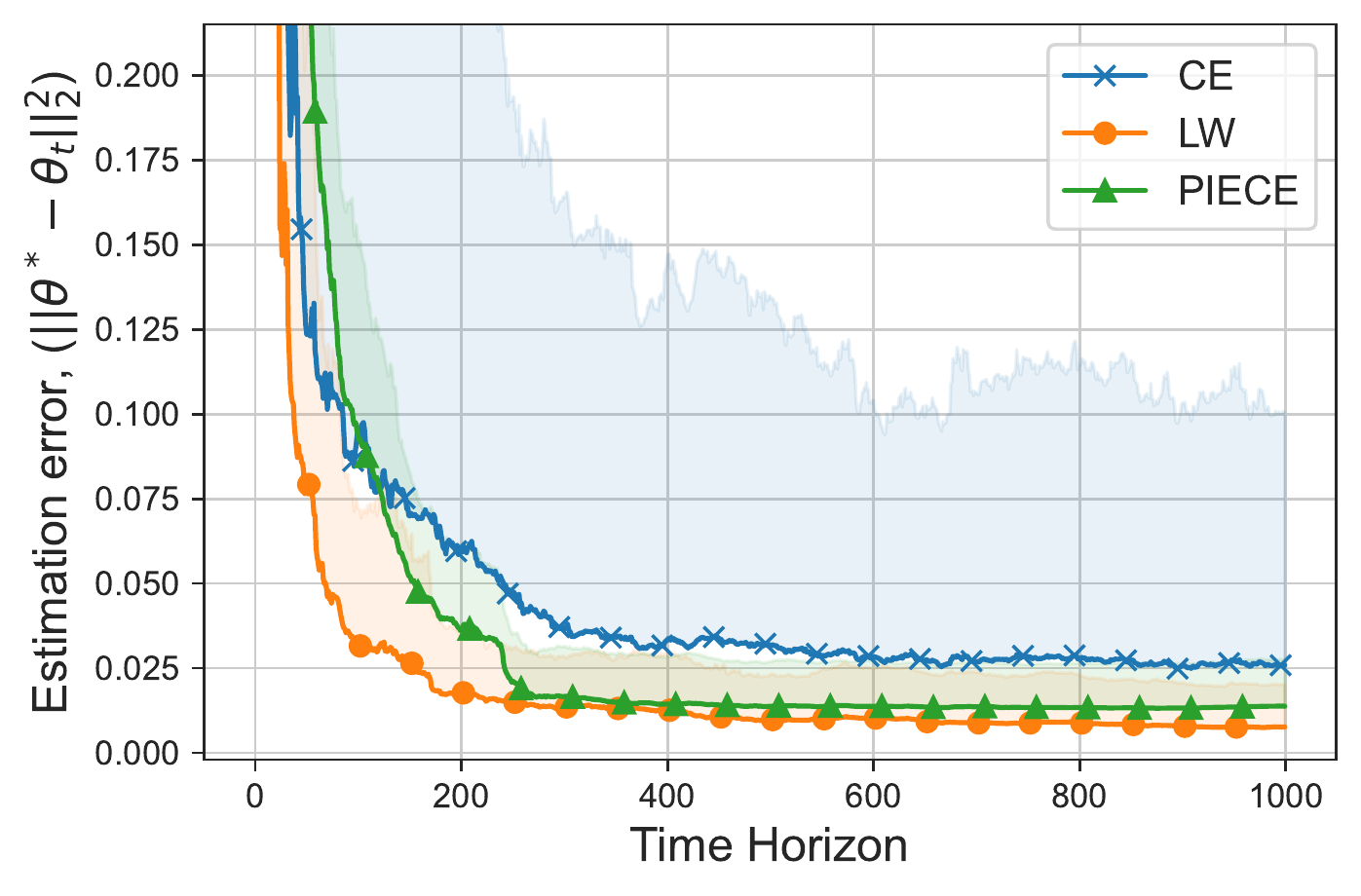}
        \caption{Example I}
        \label{fig:s1_err_g1.0}
    \end{subfigure}
    \hfill
    \begin{subfigure}[b]{0.32\textwidth}
        \centering
        \includegraphics[width=\textwidth]{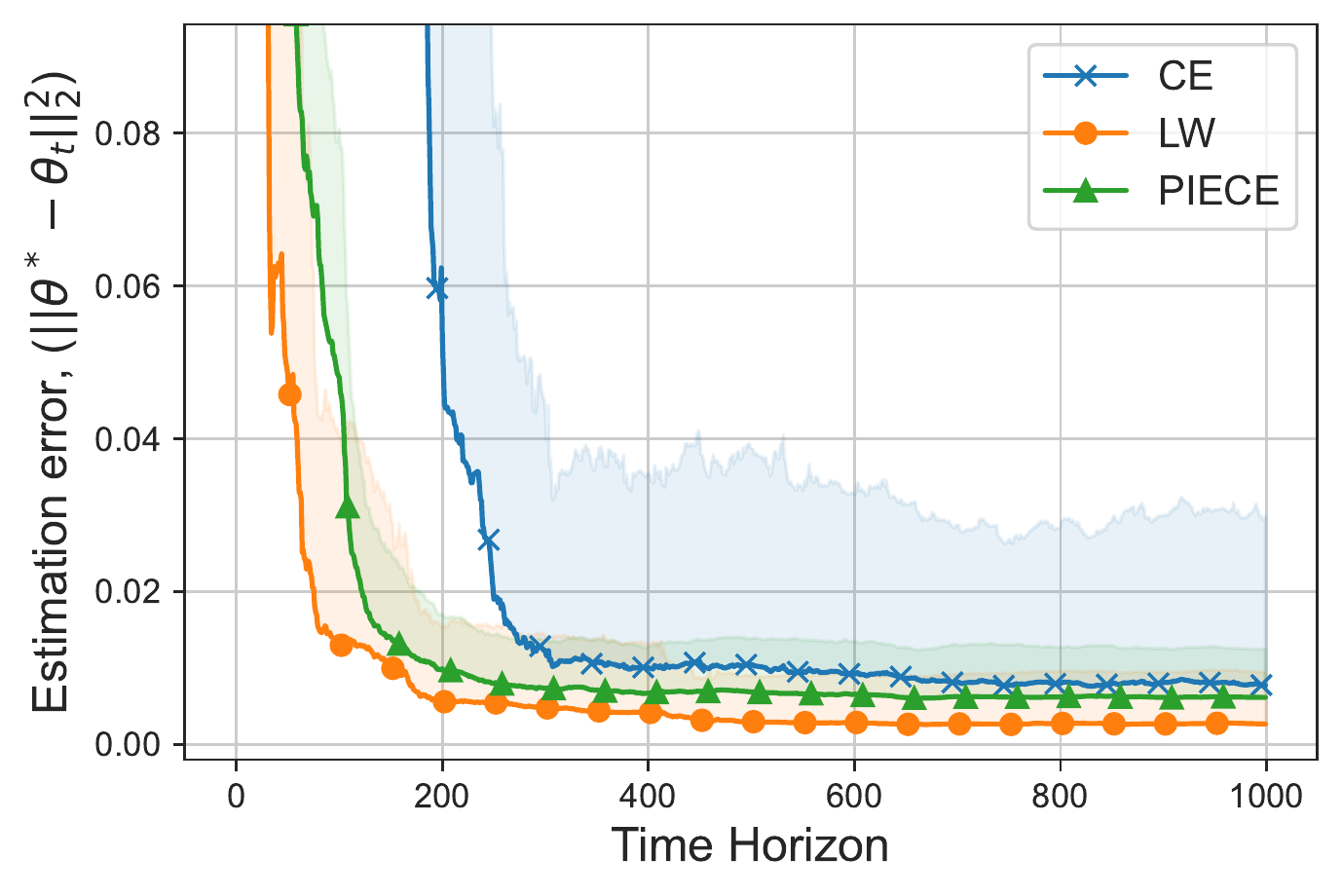}      
        \caption{Example II}         
        \label{fig:s2_err_g1.0}
    \end{subfigure}
    \hfill
    \begin{subfigure}[b]{0.32\textwidth}
        \includegraphics[width=\textwidth]{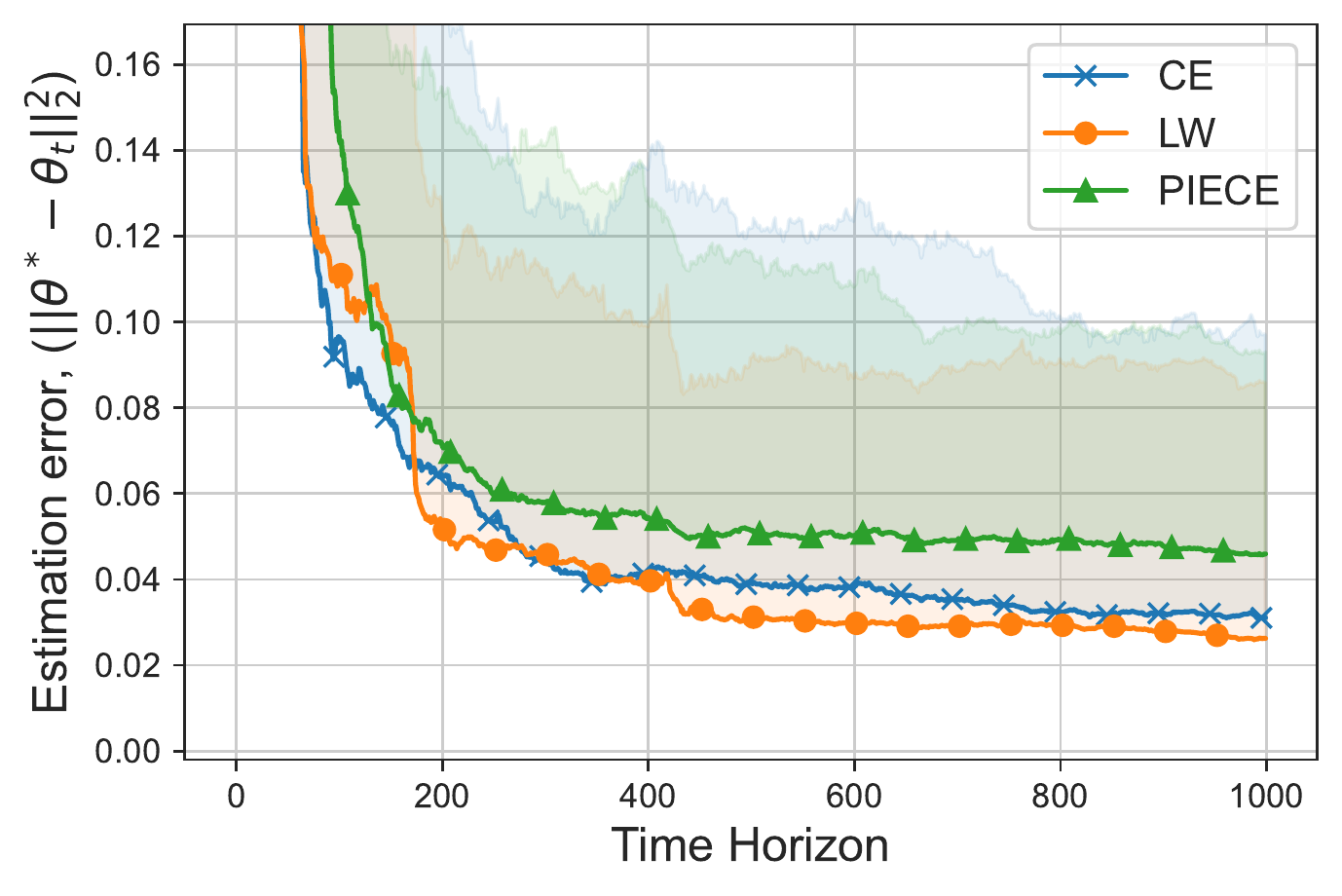}
        \caption{Example III}
        \label{fig:s3_err_g1.0}
    \end{subfigure}
    \caption{Estimation Error ($||\theta^\star-\theta_t||^2_2$) for Gaussian noise with mean $0$ and standard deviation $1.0$.}
    \label{fig:estimation_error_iidG1.0}
\end{figure}

\begin{table}[H]
    \centering
    \begin{tabular}{|c|c|c|c|}
        \hline
        & CE & LW & PIECE \\
        \hline
        Example I & 7135276 & 70694 & 1659\\
        \hline
        Example II & 3570341415 & 3833 & 400\\
        \hline
        Example III & 80269387 & 34165 & 316\\
        \hline
    \end{tabular}
    \caption{Average regret at $T=1000$ for Gaussian noise with mean $0$ and standard deviation $1.0$.}
    \label{tab:iidG1.0}
\end{table}

\begin{table}[H]
    \centering
    \begin{tabular}{|c|c|c|c|}
        \hline
        & $B_w$ & $B_u$ & $H$\\
        \hline
        Example I & 3.0 & 3931769.53 & 182\\
        \hline
        Example II & 3.0 & 27680.15 & 33\\
        \hline
        Example III & 3.0 & 299906.28 & 89\\
        \hline
    \end{tabular}
    \caption{PIECE hyper-parameters for Gaussian noise with mean $0$ and standard deviation $1.0$.}
    \label{tab:hpiidG1.0}
\end{table}
    \begin{figure}[H]
         \centering
              \begin{subfigure}[b]{0.32\textwidth}
             \centering
             \includegraphics[width=\textwidth]{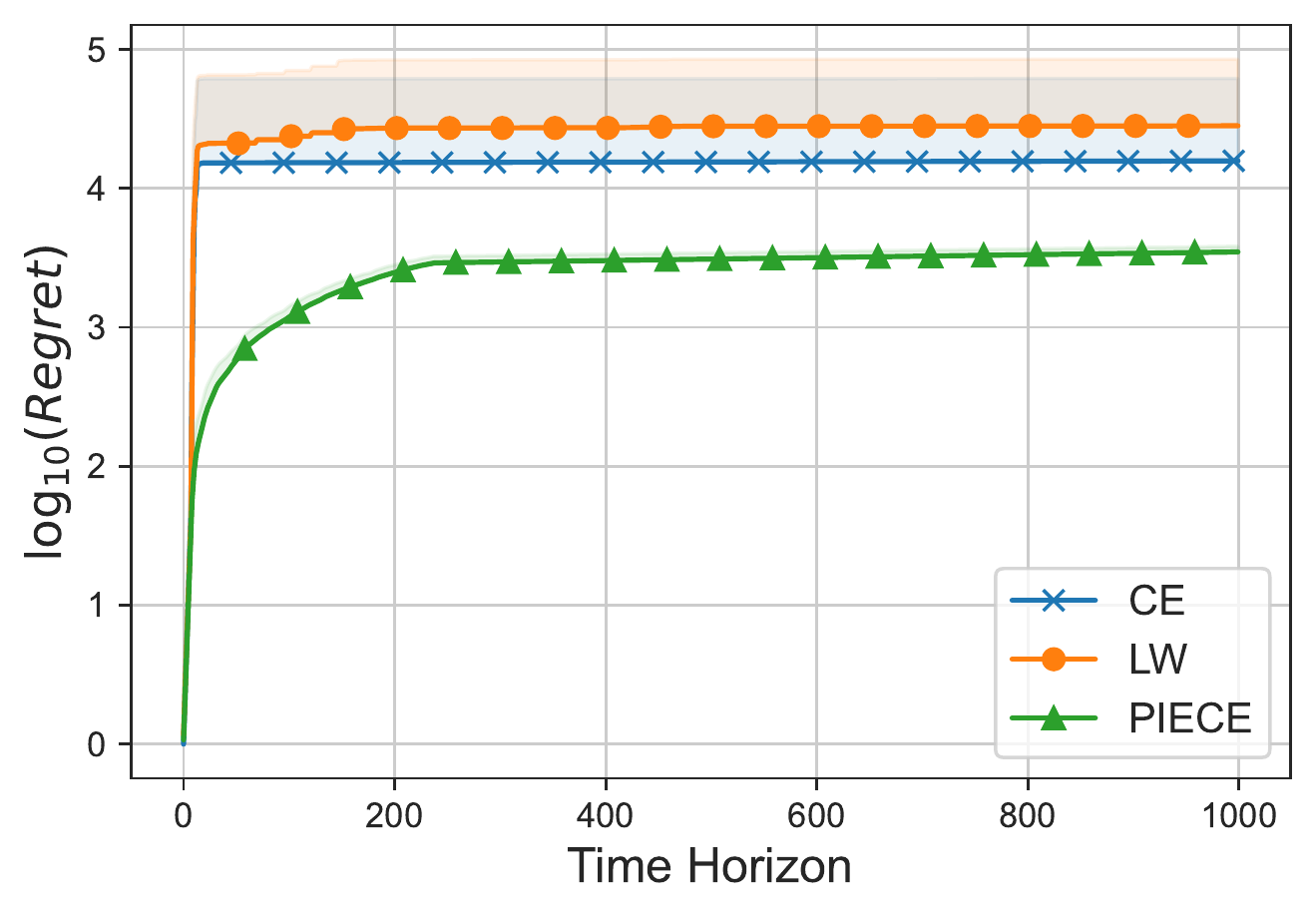}
             \caption{Example I}
             \label{fig:s1_lreg_rw0.5}
         \end{subfigure}
         \hfill
              \begin{subfigure}[b]{0.32\textwidth}
             \centering
             \includegraphics[width=\textwidth]{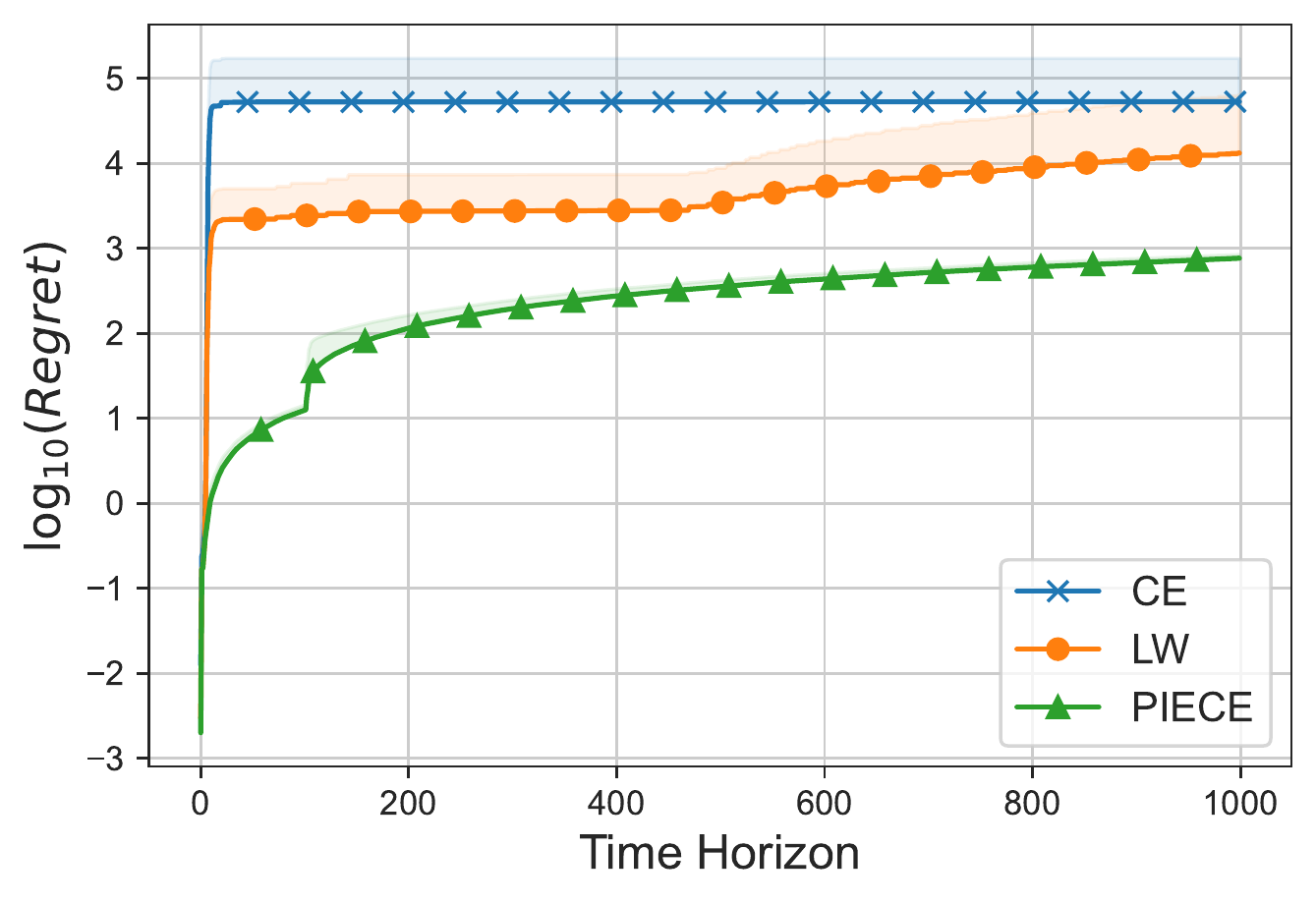} 
             \caption{Example II}         
             \label{fig:s2_lreg_rw0.5}
         \end{subfigure}
              \hfill
                   \begin{subfigure}[b]{0.32\textwidth}
             \centering
             \includegraphics[width=\textwidth]{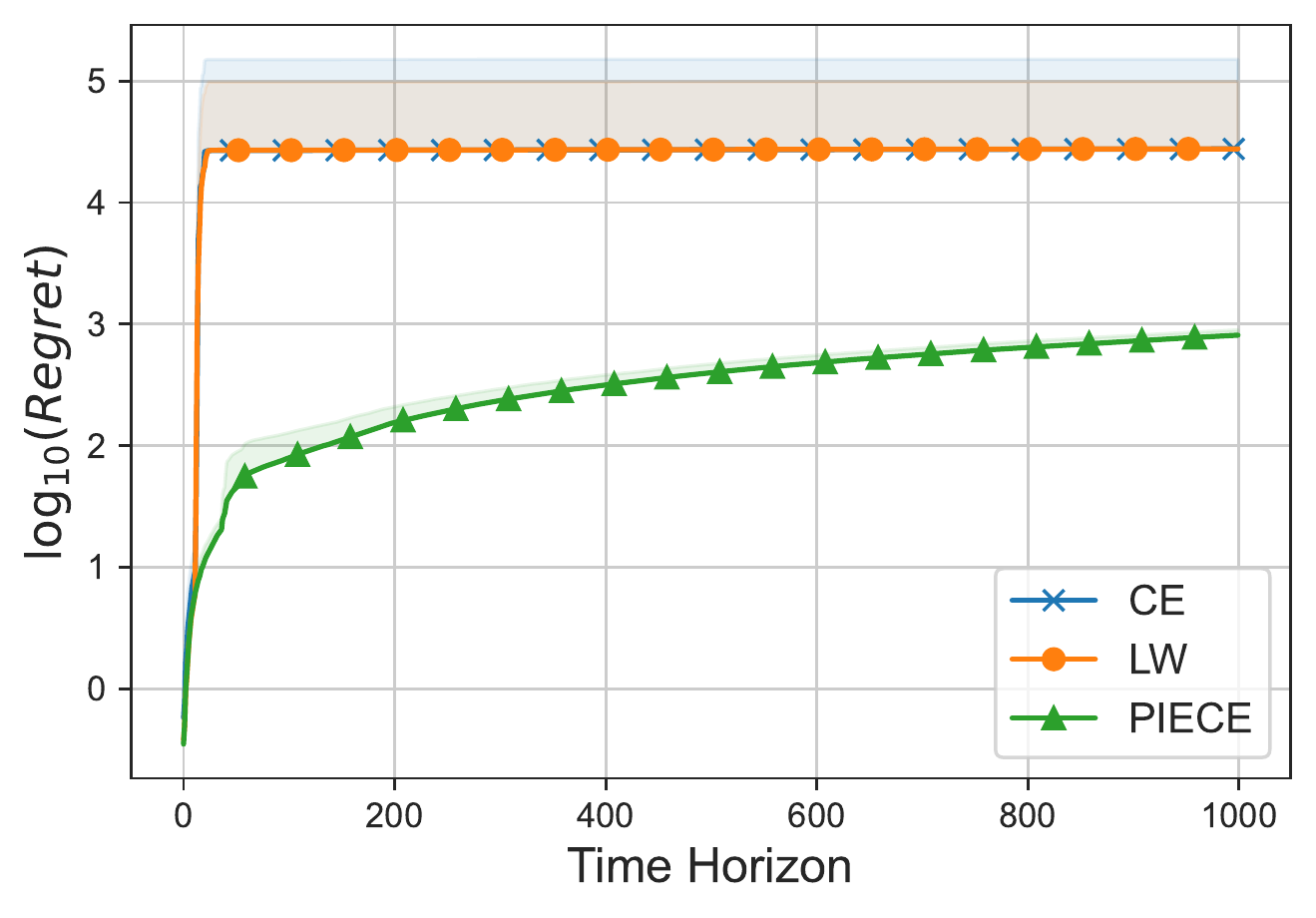}
             \caption{Example III}
             \label{fig:s3_lreg_rw0.5}
         \end{subfigure}
            \caption{Log(Cumulative Regret) averaged over 50 runs (Noise: Random walk with iid Gaussian steps, $\sigma=0.5 $).}
            \label{fig:cumulative_regret_rwalk0.5}
    \end{figure}
    \begin{figure}[H]
         \centering
              \begin{subfigure}[b]{0.32\textwidth}
             \centering
             \includegraphics[width=\textwidth]{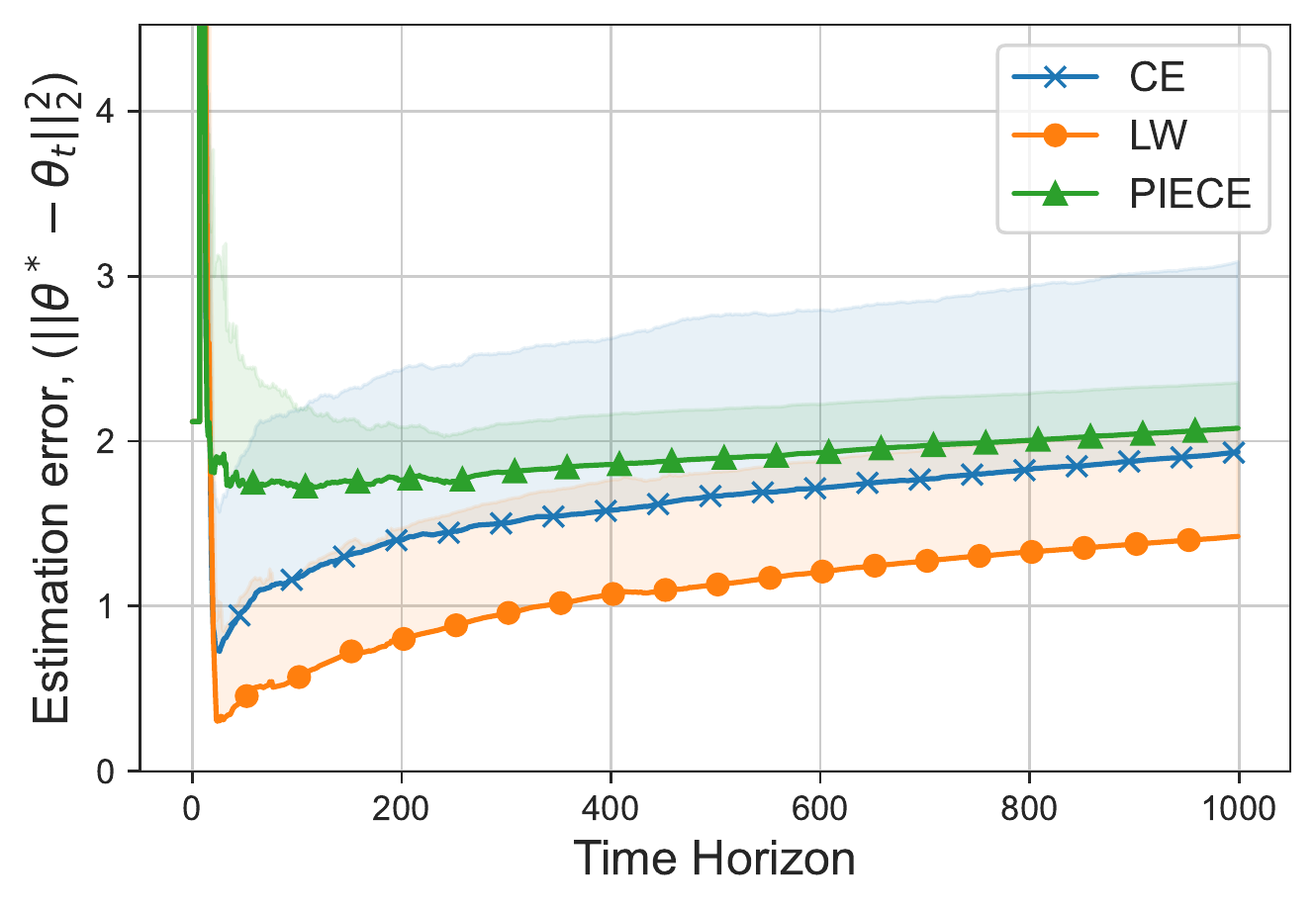}
             \caption{Example I}
             \label{fig:s1_err_rw0.5}
         \end{subfigure}
         \hfill
              \begin{subfigure}[b]{0.32\textwidth}
             \centering
             \includegraphics[width=\textwidth]{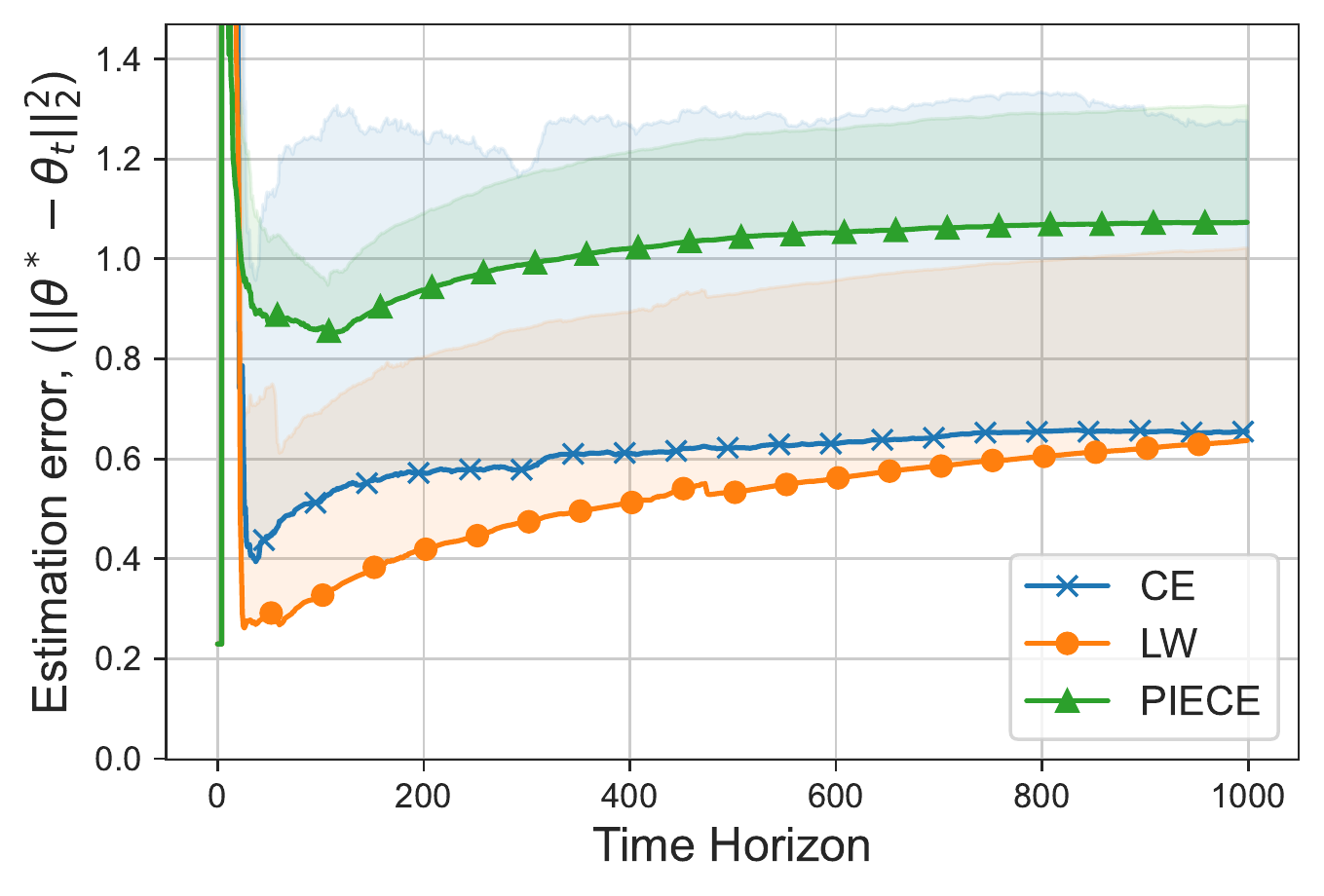}      
             \caption{Example II}         
             \label{fig:s2_err_rw0.5}
         \end{subfigure}
         \hfill
        \begin{subfigure}[b]{0.32\textwidth}
             \includegraphics[width=\textwidth]{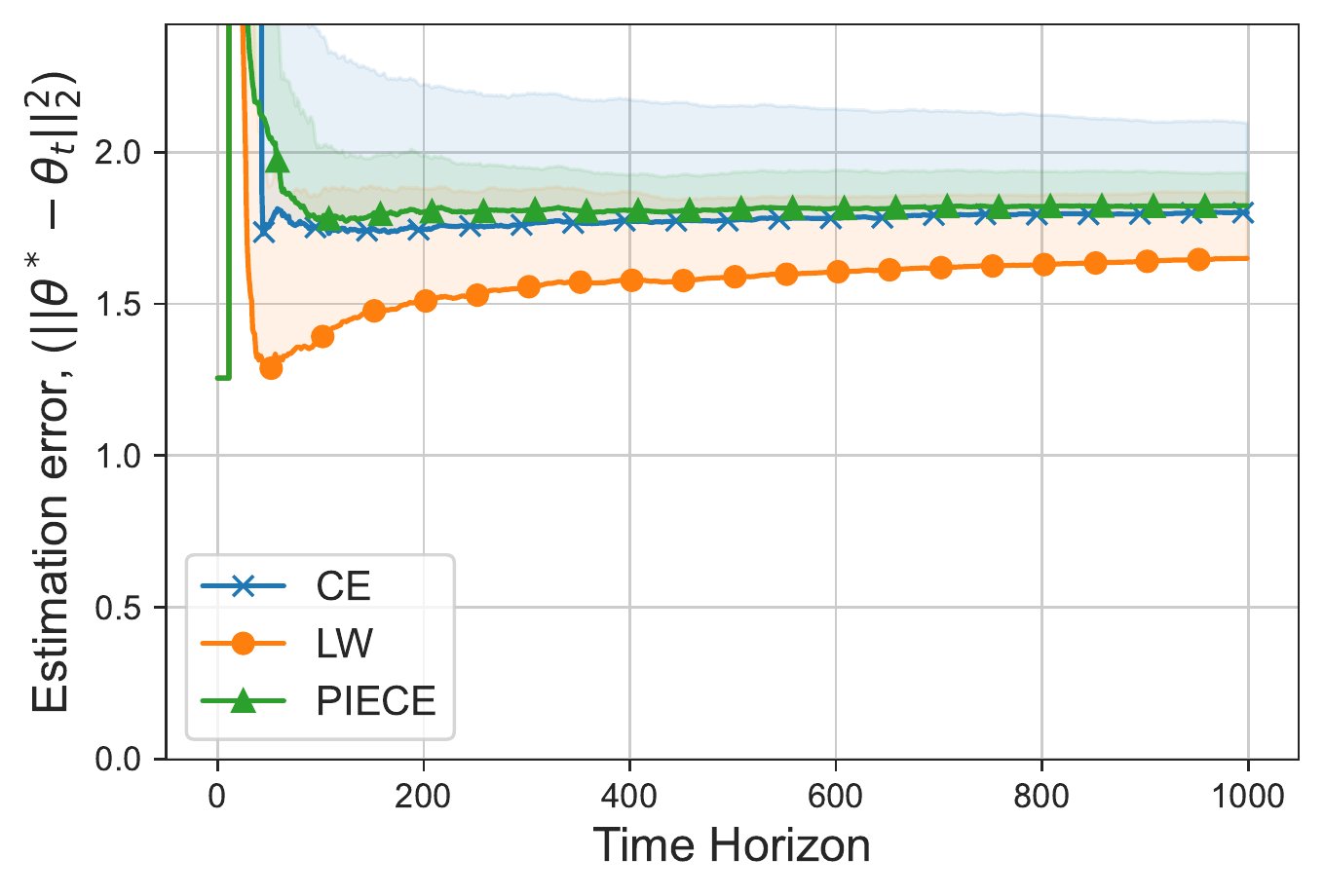}
             \caption{Example III}
             \label{fig:s3_err_rw0.5}
         \end{subfigure}
            \caption{Estimation Error ($||\theta^\star-\theta_t||^2_2$) (Noise: Random walk with iid Gaussian steps, $\sigma=0.5 $).}
            \label{fig:estimation_error_rwalk0.5}
    \end{figure}
    \begin{table}[H]
        \centering
        \begin{tabular}{|c|c|c|c|}
            \hline
             & CE & LW & PIECE \\
            \hline
            Example I & 15797 & 28249 & 3491\\
            \hline
            Example II & 52913 & 13177 & 763\\
            \hline
            Example III & 27607 & 27695 & 812\\
            \hline
        \end{tabular}
        \caption{Average regret at $T=1000$ (Noise: Random walk with iid Gaussian steps, $\sigma=0.5 $).}
        \label{tab:rwalk_iidG0.5}
    \end{table}
    \begin{table}[H]
        \centering
        \begin{tabular}{|c|c|c|c|}
            \hline
             & $B_w$ & $B_u$ & $H$\\
            \hline
            Example I & 1.5 & 1834199.5 & 181\\
            \hline
            Example II & 1.5 & 14204.29 & 33\\
            \hline
            Example III & 1.5 & 153899.34 & 90\\
            \hline
        \end{tabular}
        \caption{PIECE hyper-parameters (Noise: Random walk with iid Gaussian steps, $\sigma=0.5 $).}
        \label{tab:hprwG0.5}
    \end{table} 

\begin{figure}[H]
    \centering
    \begin{subfigure}[b]{0.32\textwidth}
        \centering
        \includegraphics[width=\textwidth]{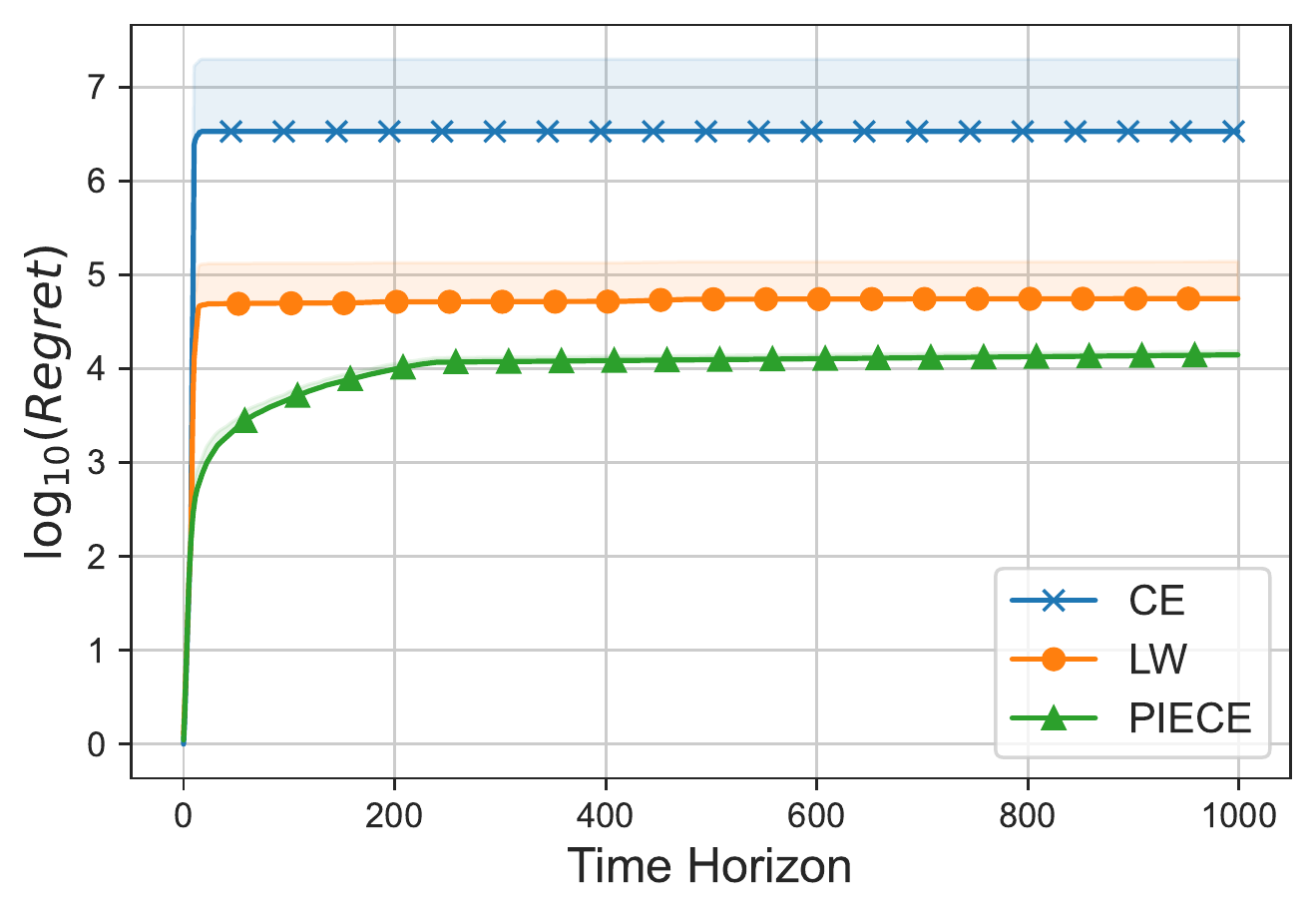}
        \caption{Example I}
        \label{fig:s1_lreg_rw1.0}
    \end{subfigure}
    \hfill
    \begin{subfigure}[b]{0.32\textwidth}
        \centering
        \includegraphics[width=\textwidth]{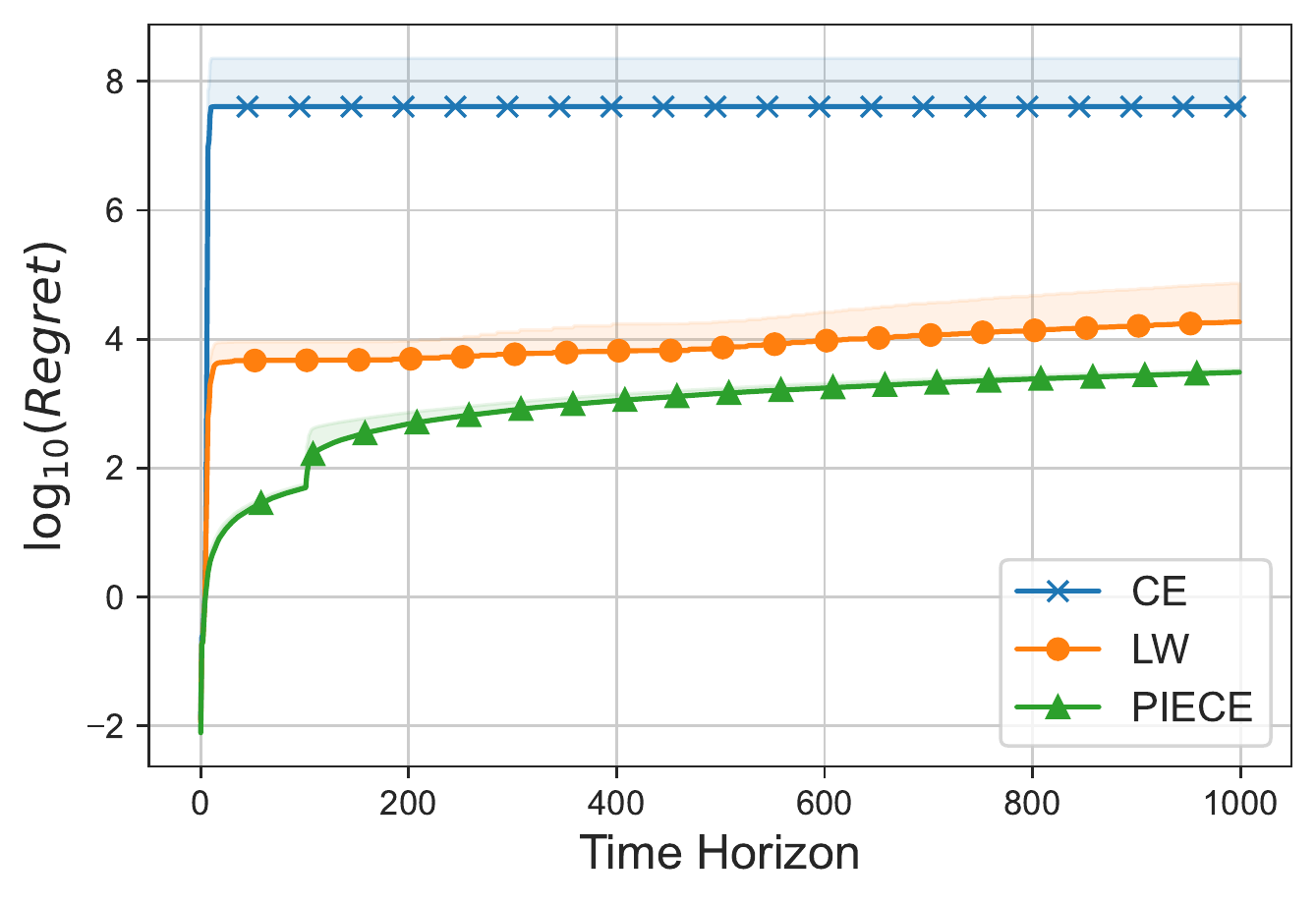} 
        \caption{Example II}         
        \label{fig:s2_lreg_rw1.0}
    \end{subfigure}
    \hfill
    \begin{subfigure}[b]{0.32\textwidth}
        \centering
        \includegraphics[width=\textwidth]{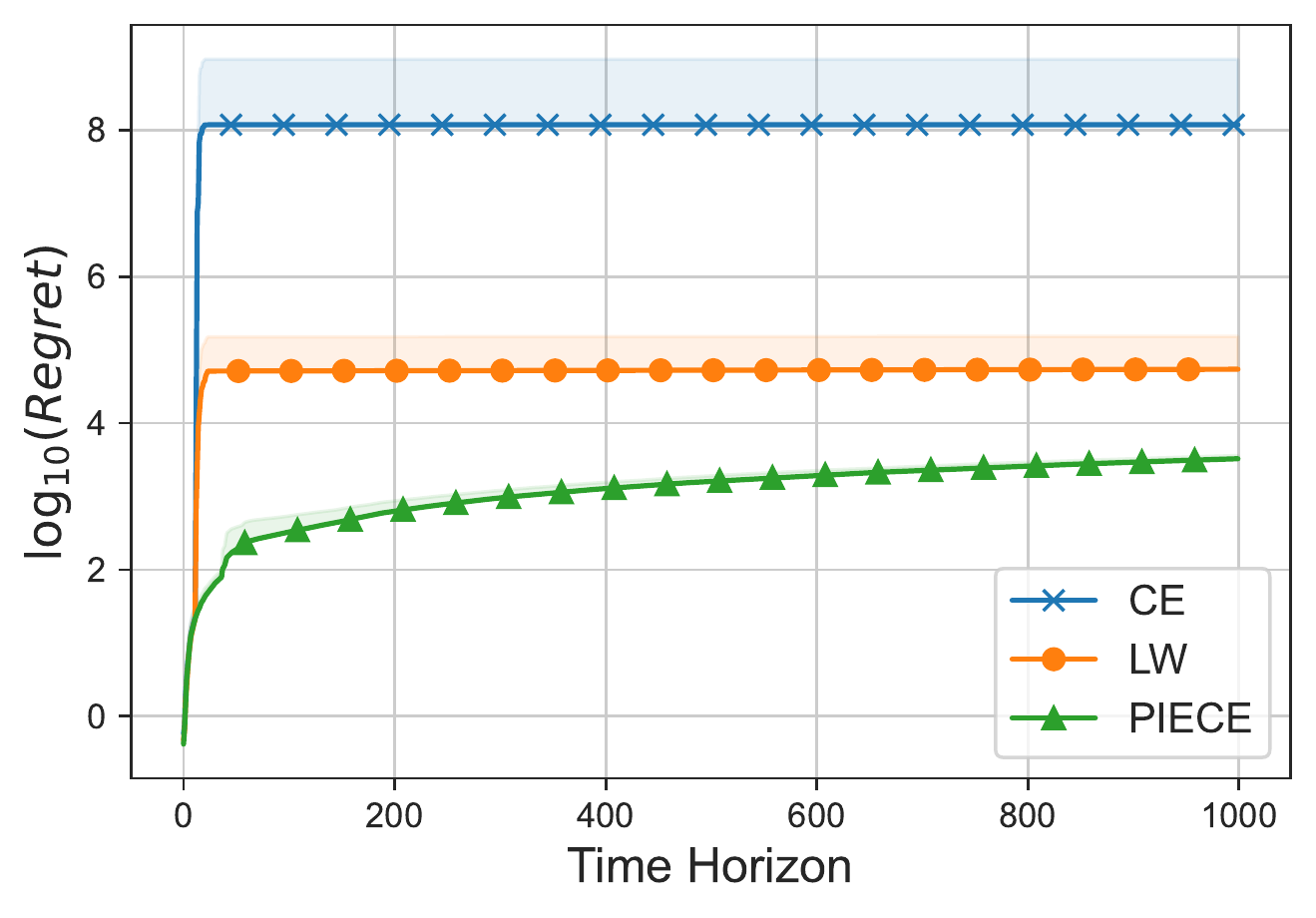}
        \caption{Example III}
        \label{fig:s3_lreg_rw1.0}
    \end{subfigure}
    \caption{Log(Cumulative Regret) averaged over 50 runs (Noise: Random walk sequence with iid Gaussian steps with mean $0$ and standard deviation $1.0$).}
    \label{fig:cumulative_regret_rwalk1.0}
\end{figure}

\begin{figure}[H]
    \centering
    \begin{subfigure}[b]{0.32\textwidth}
        \centering
        \includegraphics[width=\textwidth]{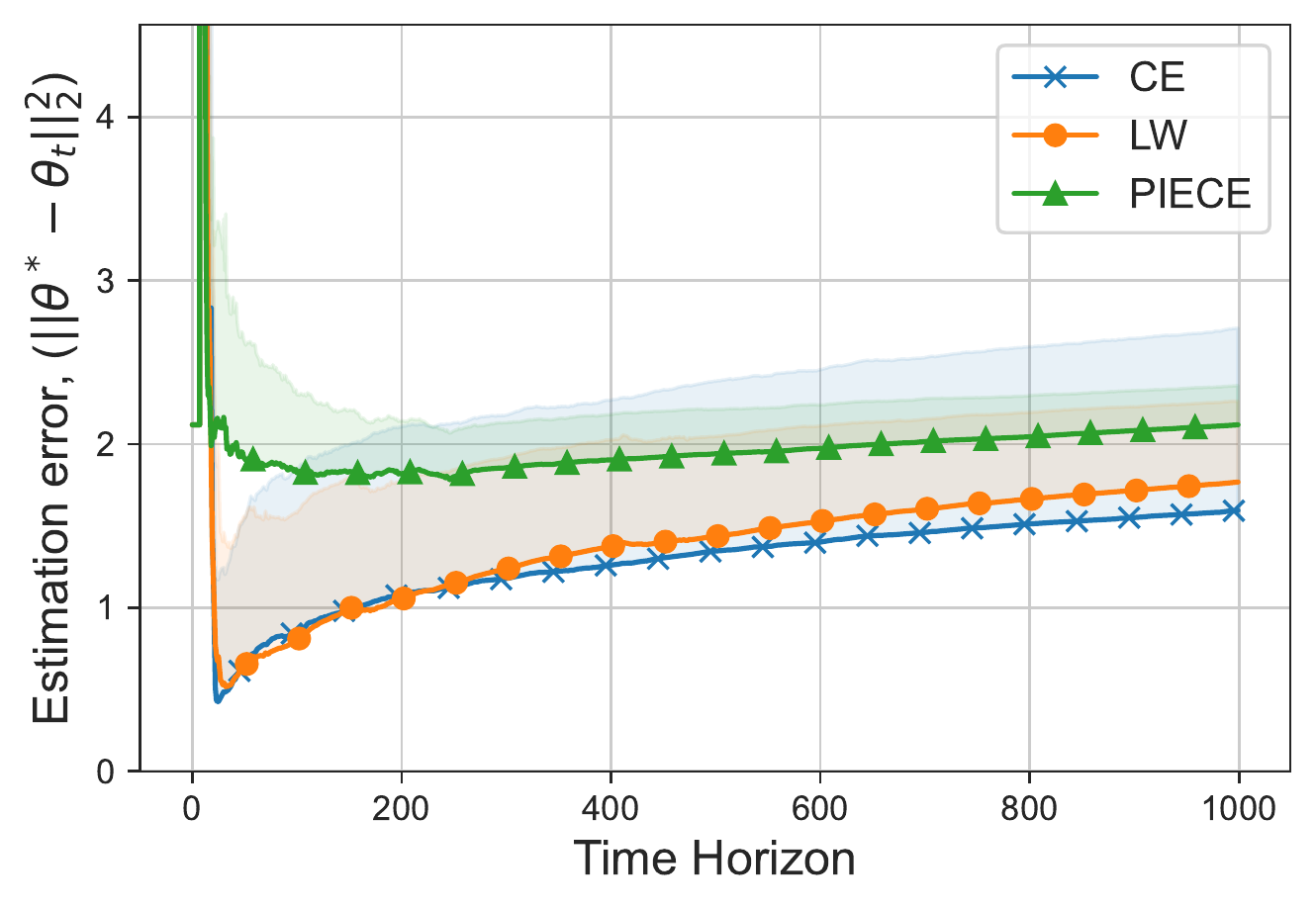}
        \caption{Example I}
        \label{fig:s1_err_rw1.0}
    \end{subfigure}
    \hfill
    \begin{subfigure}[b]{0.32\textwidth}
        \centering
        \includegraphics[width=\textwidth]{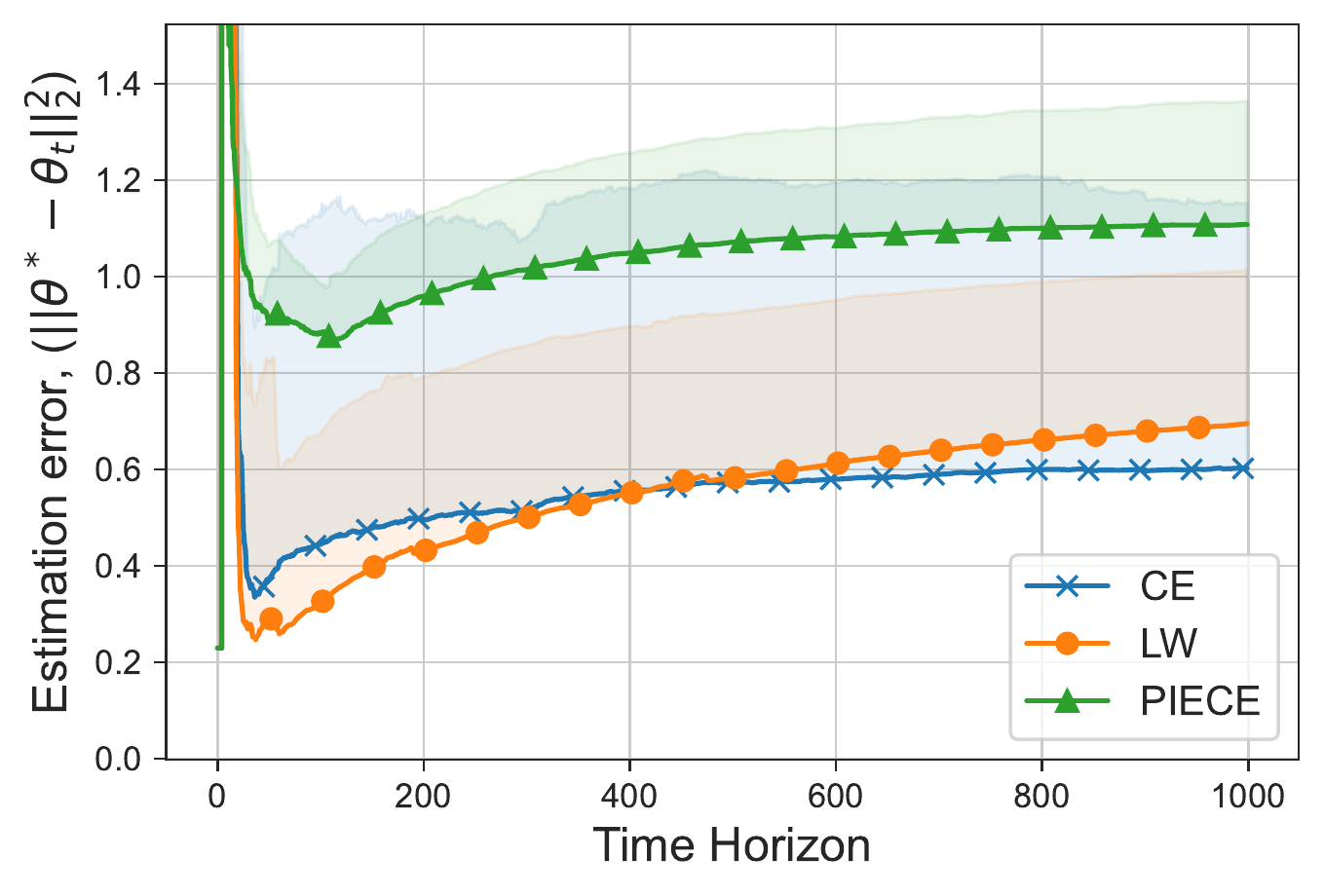}      
        \caption{Example II}         
        \label{fig:s2_err_rw1.0}
    \end{subfigure}
    \hfill
    \begin{subfigure}[b]{0.32\textwidth}
        \includegraphics[width=\textwidth]{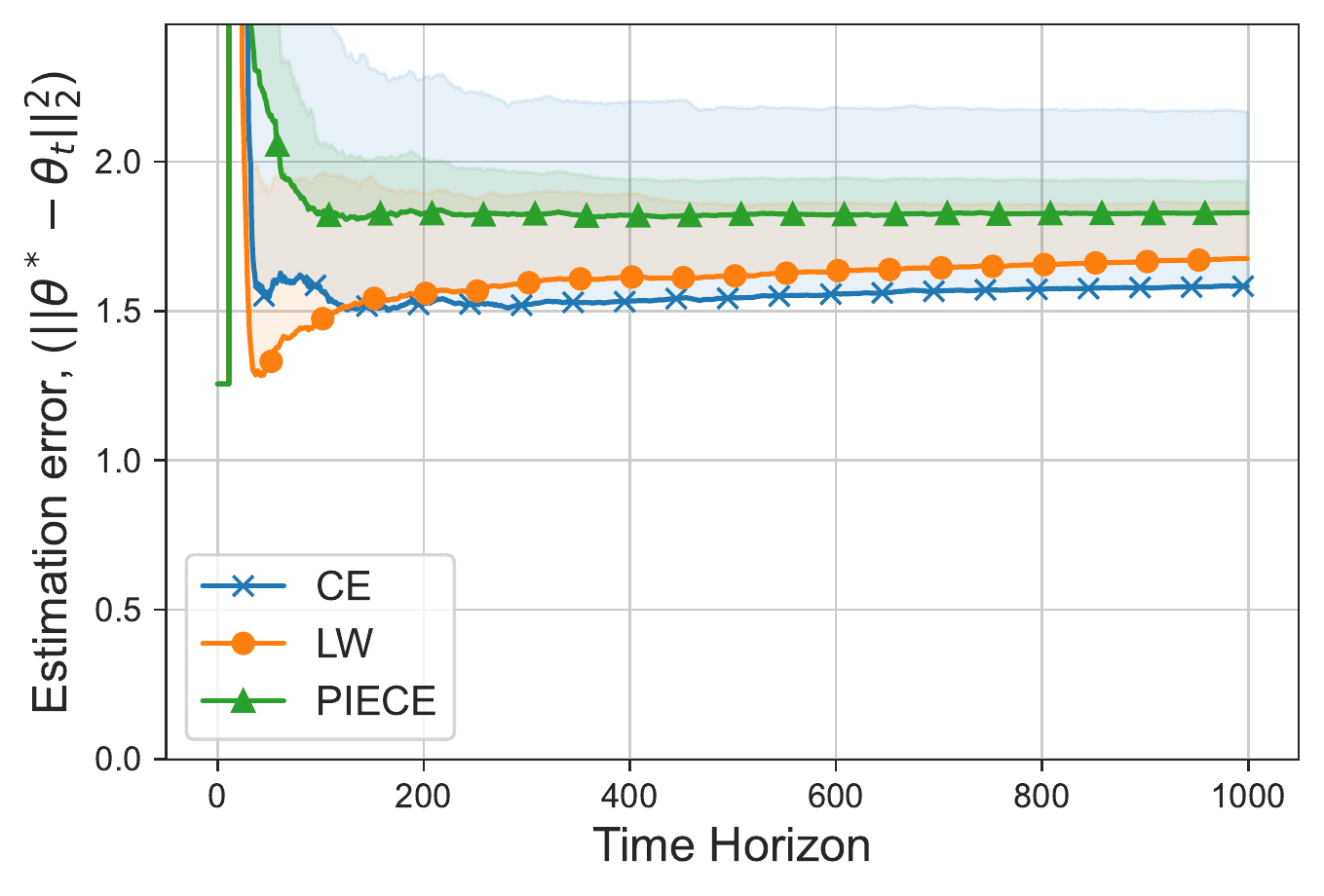}
        \caption{Example III}
        \label{fig:s3_err_rw1.0}
    \end{subfigure}
    \caption{Estimation Error ($||\theta^\star-\theta_t||^2_2$) (Noise: Random walk sequence with iid Gaussian steps with mean $0$ and standard deviation $1.0$).}
    \label{fig:estimation_error_rwalk1.0}
\end{figure}
         \begin{table}[H]
            \centering
            \begin{tabular}{|c|c|c|c|}
                \hline
                 & CE & LW & PIECE \\
                \hline
                Example I & 3361802 & 55812 & 14011\\
                \hline
                Example II & 40626507 & 18555 & 3071\\
                \hline
                Example III & 118066019 & 54318 & 3265\\
                \hline
            \end{tabular}
            \caption{Average regret at $T=1000$ (Noise: Random walk sequence with iid Gaussian steps with mean $0$ and standard deviation $1.0$).}
            \label{tab:rwalk_iidG1.0}
        \end{table}   
        \begin{table}[H]
            \centering
            \begin{tabular}{|c|c|c|c|}
                \hline
                & $B_w$ & $B_u$ & $H$\\
                \hline
                Example I & 3.0 & 3931769.53 & 182\\
                \hline
                Example II & 3.0 & 27680.15 & 33\\
                \hline
                Example III & 3.0 & 299906.28 & 89\\
                \hline
            \end{tabular}
            \caption{PIECE hyper-parameters (Noise: Random walk sequence with iid Gaussian steps with mean $0$ and standard deviation $1.0$).}
            \label{tab:hprwG1.0}
        \end{table}

\end{document}